\def\romannum{\begingroup
  \def\theenumi{\textup{(\roman{enumi})}}%
  \def\p@enumi{}%
  \def\labelenumi{\theenumi}%
  \enumerate}
\newenvironment{longenum}[0]{
  \begin{enumerate}}%
  {\end{enumerate}}%
\newenvironment{longromannum}[0]{
  \begin{enumerate}[(i)]
  }%
  {\end{enumerate}}
\newtheorem{theorem}{Theorem}
\newtheorem{lemma}[theorem]{Lemma}
\newtheorem{corollary}[theorem]{Corollary}
\newtheorem{proposition}[theorem]{Proposition} 
\newtheorem{definition}[theorem]{Definition}
\theoremstyle{example}
\newtheorem{remark}{Remark}
\newtheorem{example}{Example}
\newcommand{\Nesetril}[0]{Ne\-\v{s}e\-t\v{r}il}
\newcommand{\CSP}[0]{\ensuremath{\textrm{CSP}}}
\newcommand{\QCSP}[0]{\ensuremath{\textrm{QCSP}}}
\newcommand{\Pol}[0]{\ensuremath{\mathsf{Pol}}}
\newcommand{\End}[0]{\ensuremath{\mathsf{End}}}
\newcommand{\Inv}[0]{\ensuremath{\mathsf{Inv}}}
\newcommand{\shE}[0]{\ensuremath{\mathsf{shE}}}
\newcommand{\hE}[0]{\ensuremath{\mathsf{hE}}}
\renewcommand{\phi}{\varphi}
\newcommand{\NP}[0]{\ensuremath{\mathrm{NP}}}
\newcommand{\coNP}[0]{\ensuremath{\mathrm{co\mbox{-}NP}}}
\newcommand{\Logspace}[0]{\ensuremath{\mathrm{L}}}
\newcommand{\Ptime}[0]{\ensuremath{\mathrm{P}}}
\newcommand{\Pspace}[0]{\ensuremath{\mathrm{Pspace}}}
\newcommand{\FO}[0]{\ensuremath{\mathrm{FO}}}
\newcommand{\csplogic}{\ensuremath{\{\exists, \land \} \mbox{-}\mathrm{FO}}}
\newcommand{\csplogicEq}{\ensuremath{\{\exists, \land, = \} \mbox{-}\mathrm{FO}}}
\newcommand{\csplogicNeq}{\ensuremath{\{\exists, \land, \neq \} \mbox{-}\mathrm{FO}}}
\newcommand{\CocsplogicNeq}{\ensuremath{\{\forall, \lor, = \} \mbox{-}\mathrm{FO}}}
\newcommand{\cspDisj}{\ensuremath{\{\exists, \land, \lor \} \mbox{-}\mathrm{FO}}}
\newcommand{\CocspDisj}{\ensuremath{\{\forall, \lor, \land \} \mbox{-}\mathrm{FO}}}
\newcommand{\cspDisjEq}{\ensuremath{\{\exists, \land, \lor, = \} \mbox{-}\mathrm{FO}}}
\newcommand{\cspDisjNeq}{\ensuremath{\{\exists, \land, \lor, \neq \} \mbox{-}\mathrm{FO}}}
\newcommand{\qcsplogic}{\ensuremath{\{\exists, \forall, \land \} \mbox{-}\mathrm{FO}}}
\newcommand{\qcsplogicEq}{\ensuremath{\{\exists, \forall, \land,= \} \mbox{-}\mathrm{FO}}}
\newcommand{\qcsplogicNeq}{\ensuremath{\{\exists, \forall, \land, \neq \} \mbox{-}\mathrm{FO}}}
\newcommand{\qcspDisj}{\ensuremath{\{\exists, \forall, \land, \lor \} \mbox{-}\mathrm{FO}}}
\newcommand{\mylogic}{\ensuremath{\{\exists, \forall, \land,\lor \} \mbox{-}\mathrm{FO}}}
\newcommand{\posFoEq}{\ensuremath{\{\exists, \forall, \land,\lor,= \} \mbox{-}\mathrm{FO}}}
\newcommand{\posFoNeq}{\ensuremath{\{\exists, \forall, \land,\lor,\neq \} \mbox{-}\mathrm{FO}}}
\newcommand{\EqFreeFo}{\ensuremath{\{\exists, \forall, \land,\lor,\lnot \} \mbox{-}\mathrm{FO}}}
\newcommand{\tuple}[1]{\ensuremath{\mathbf{#1}}}
\newcommand{\shee}[2]{
\resizebox{!}{.4cm}{
\ensuremath{
\begin{array}{c|c}
0 & #1 \\
\hline
1 & #2 \\
\end{array}
}
}
}
\begin{document}

\title{On the complexity of the model checking problem}
\titlerunning{On the complexity of the model checking problem}
\author{Florent R. Madelaine 
  \thanks{This author is thankfull to the CNRS for supporting his
    one year research leave at the \textsl{Laboratoire d'Informatique de l'\'Ecole Polytechnique}.}
  \\
  Universit{\'e} d'Auvergne\\
  \url{florent.madelaine@udamail.fr}\\
  \and
  Barnaby D. Martin 
  \thanks{This work was supported by EPSRC under grant EP/G020604/1
    while this author was based at Durham University.}
  \\
  Laboratoire d'Informatique de l'\'Ecole Polytechnique\\
  \url{barnabymartin@gmail.com}\\
}
\authorrunning{F.R. Madelaine and B.D. Martin}
\maketitle

\begin{abstract}
The model checking problem for various fragments of first-order logic
has attracted much attention over the last two decades: in particular,
for the fragment induced by $\exists$ and $\land$ and that 
induced by $\forall, \exists$ and $\land$, which are better known as 
the \textsl{constraint satisfaction problem} and the \emph{quantified constraint 
satisfaction problem}, respectively. 
These two fragments are in fact the only ones for which there is
currently no known complexity classification. All other
syntactic fragments can be easily classified, either directly or
using Schaefer's dichotomy theorems for SAT and QSAT, with the
exception of the \emph{positive equality free fragment} induced by
$\exists,\forall, \land$ and $\lor$. This
outstanding fragment can also be classified and enjoys a
\emph{tetrachotomy}: according to the model, the corresponding model checking
problem is either tractable, \NP-complete, \coNP-complete or
Pspace-complete. Moreover, the complexity drop is always witnessed by a generic
solving algorithm which uses quantifier relativisation (for example,
in the \coNP-complete case, the model has a constant $e$ to which all $\exists$
quantifiers may be relativised).
Furthermore, its complexity is characterised by algebraic means: the
presence or absence of specific \emph{surjective hyper-operations}
among those that preserve the model characterise the complexity.
Our classification methodology relies on this suitably tailored algebraic
approach and it suffices to classify the complexity of a finite number of cases for each model size: each case
corresponds to an element of the \textbf{finite} lattice of \emph{down-closed monoids of surjective hyper operations}.
This is unlike the constraint satisfaction problem where the
corresponding lattices are uncountable and essentially uncharted in general. 
Though finite, the number of elements of the lattice of down-closed monoids of
surjective hyper operations grows rapidly as the size $n$ of the model
increases. We are able to compute suitable parts of this lattice by
hand for $n=2$ and $3$, and in a computer assisted manner for $n=4$.
For arbitrarily large $n$, one can restrict the classification to specific
monoids of surjective hyper-operations which corresponds to certain cores, for 
a suitable notion of core for the positive equality-free fragment of first
order logic.
These specific monoids enjoy a nice normal form which means that we
are able to provide generic hardness proofs which mimic cases encountered
when $n\leq4$.\\
\noindent\textbf{Keywords:}
  {\small
    Constraint Satisfaction, Galois Connection, Logic in
    Computer Science, Quantified Constraint Satisfaction, Universal Algebra.
  }
\end{abstract}



\section{Introduction}

The \emph{model checking problem} over a logic $\mathscr{L}$ takes as
input a structure $\mathcal{D}$ and a sentence $\varphi$ of
$\mathscr{L}$, and asks whether $\mathcal{D} \models \varphi$. The
problem can also be parameterised, either by the sentence $\varphi$,
in which case the input is simply $\mathcal{D}$, or by the model
$\mathcal{D}$, in which case the input is simply $\varphi$. Vardi has
studied the complexity of this problem, principly for logics which
subsume $\FO$~\cite{DBLP:conf/stoc/Vardi82}. He describes the complexity
of the unrestricted problem as the \emph{combined complexity}, and the
complexity of the parameterisation by the sentence (respectively,
model) as the \emph{data complexity} (respectively, \emph{expression
  complexity}). For the majority of his logics, the expression and
combined complexities are comparable, and are one exponential higher
than the data complexity.

In this paper, we will be interested in taking syntactic fragments
$\mathscr{L}$ of \FO, induced by the presence or absence of
quantifiers and connectives, and studying the complexities of the parameterisation of the model checking problem by the model $\mathcal{D}$, that is the expression complexities for certain $\mathcal{D}$.
When $\mathscr{L}$ is the \emph{primitive positive}
fragment of \FO, \csplogic, the model checking problem is equivalent
to the much-studied \emph{constraint satisfaction problem} (\CSP). The
parameterisation of this problem by the model $\mathcal{D}$ is
equivalent to what is sometimes described as the \emph{non-uniform}
constraint satisfaction problem,
$\CSP(\mathcal{D})$~\cite{DBLP:journals/jcss/KolaitisV00}. It has been  conjectured
\cite{DBLP:conf/icalp/BulatovKJ00,DBLP:journals/siamcomp/FederV98} that the class of $\CSP$s
exhibits \emph{dichotomy} -- that is, $\CSP(\mathcal{D})$ is always
either in \Ptime\ or is \NP-complete, depending on the model
$\mathcal{D}$. This is tantamount to the condition that the expression
complexity for $\{ \wedge, \exists \}$-\FO\ on $\mathcal{D}$ is always
either in \Ptime\ or is \NP-complete. While in general this conjecture
remains open, it has been proved for substantial classes
and various methods, combinatorial (graph-theoretic), logical and universal-algebraic have
been brought to bear on this classification project, with many
remarkable consequences. Schaefer was a precursor and provided a
dichotomy for Boolean structures using a logico-combinatorial approach~\cite{Schaefer}.
Further dichotomies were obtained: \textsl{e.g.} for
structures of size at most three~\cite{DBLP:journals/jacm/Bulatov06}, for
undirected graphs~\cite{HellNesetril}, smooth 
digraphs~\cite{DBLP:journals/siamcomp/BartoKN09}.
A conjectured delineation for the dichotomy
was given in the algebraic language in~\cite{DBLP:journals/siamcomp/BulatovJK05}.

When $\mathscr{L}$ is \emph{positive Horn}, $\qcsplogic$, the model
checking problem is equivalent to the well-studied \emph{quantified
  constraint satisfaction problem} (\QCSP). No overarching 
polychotomy has been conjectured for the non-uniform $\QCSP(\mathcal{D})$, although the
only known attainable complexities are \Ptime, \NP-complete and
\Pspace-complete. Schaefer announced a dichotomy in the Boolean
case~\cite{Schaefer} between \Ptime{} and \Pspace-complete in the
presence of constants, a dichotomy which was proved to hold even when
constants are not present~\cite{DalmauLSI9743R,Nadia}. Some
partial classification were obtained,
algebraically~\cite{DBLP:conf/cp/Chen04,DBLP:journals/siamcomp/Chen08,DBLP:journals/iandc/BornerBCJK09}
or combinatorially~\cite{DBLP:conf/cie/MartinM06,DBLP:conf/cp/Martin11}. A
conjecture delineating the border between \NP{} and \Pspace-complete
was recently ventured by Chen in the algebraic language for structures
with all constants~\cite{DBLP:journals/corr/abs-1201-6306}.

Owing to the natural duality between $\exists, \lor$ and $\forall,
\land$, we  consider also various dual fragments. For example, the
dual of  $\{\exists,\land \}$-\FO\ is \emph{positive universal
  disjunctive} \FO, $\{ \forall, \lor \}$-\FO. It is straightforward
to see that this class of expression complexities exhibits dichotomy
between \Ptime\ and \coNP-complete if, and only if, the class of \CSP
s exhibits dichotomy between \Ptime\ and \NP-complete. 
Table~\ref{tab:ModelChecking} summarises known results regarding the
complexity of the model checking for syntactic fragments of
first-order logic, up to this duality.

In the case of primitive positive logic, it makes little
difference whether or not equality is allowed, that is the expression
complexities for  $\{ \exists,\land \}$-\FO\ and $\{ \exists, \land, =
\}$-\FO\ are equivalent. This is because equality may be propagated out in all but trivial instances. The same is true for positive Horn logic, but is not true, e.g., for positive
universal disjunctive \FO. Indeed, a classification of the expression
complexities over $\{\forall,\lor \}$-\FO\ is equivalent to the
unproven \CSP\ dichotomy conjecture, though we are able to give a full dichotomy for the expression complexities over $\{ \forall, \lor,=
\}$-\FO. The reason for this is that the equality relation in the
latter simulates a disequality relation in the former. If the
model $\mathcal{D}$ has $k\geq 3$ elements then $\{ \exists, \land, \neq\}$-\FO{}
can simulate $k$-colourability; and, otherwise we have a Boolean model
and Schaefer's dichotomy theorem provides the classification.
A similar phenomenon occurs at a higher level when $\forall$ is also present.

Other fragments can be easily classified, as the model checking
problem is always hard except for pathological and rather trivial models, with the notable exception of 
\emph{positive equality-free first-order logic} $\mylogic$.
For this outstanding fragment, the corresponding model checking
problem can be seen as an extension of $\QCSP$ in which disjunction is
returned to the mix. Note that the absence of equality is here
important, as there is no general method for its being propagated out
by substitution. Indeed, we will see that evaluating the related
fragment $\{ \exists, \forall, \wedge, \vee, = \}$-\FO\ is
\Pspace-complete on any structure $\mathcal{D}$ of size at least two.  
\begin{table}[h]
\centering
 \begin{tabular}{|p{.15\textwidth}|p{.15\textwidth}|p{.6\textwidth}|} 
    \hline
    \multicolumn{1}{|c|}{Fragment} & \multicolumn{1}{|c|}{Dual} & \multicolumn{1}{|c|}{Classification?} \\
    \hline
    $\{ \exists, \vee\}$            & 
    $\{ \forall, \wedge\}$          &\multirow{2}{.6\textwidth}{Trivial (in \Logspace{}).} \\
    $\{ \exists, \vee, = \}$        & 
    $\{ \forall, \wedge, \neq \}$   &
    \\
    \hline
    $\{ \exists, \wedge, \vee \}$ & 
    $\{ \forall, \wedge, \vee \}$ & 
    \multirow{2}{.6\textwidth}{Trivial (in \Logspace) if the core of
      $\mathcal{D}$ has one element and \NP-complete otherwise.}
    \\
    $\{ \exists, \wedge, \vee, = \}$    & 
    $\{ \forall, \wedge, \vee, \neq \}$ & 
    \\
    \hline
    $\{ \exists, \wedge, \vee, \neq \}$    & 
    $\{ \forall, \wedge, \vee, = \}$ & 
    Trivial (in \Logspace) if $|D|=1$ and \NP-complete otherwise.
    \\
    \hline
    $\{ \exists, \wedge\}$     & 
    $\{ \forall, \vee\}$       &
    \multirow{2}{.6\textwidth}{\textbf{\CSP{} dichotomy} conjecture: \Ptime{} or
      \NP-complete.}
    \\
    $\{ \exists, \wedge, = \}$ & 
    $\{ \forall, \vee, \neq \}$&
    \\
    \hline
    \multirow{2}{*}{$\{ \exists, \wedge, \neq \}$} & 
      \multirow{2}{*}{$\{ \forall, \vee, = \}$}      &
    Trivial if $|D|=1$; in \Ptime\ if $|D|=2$ and $\mathcal{D}$ is affine or
    \\
    &&    bijunctive; and, \NP-complete 
    otherwise.\\
    \hline
    $\{ \exists, \forall, \wedge\}$     & 
    $\{ \exists, \forall, \vee\}$       &
      \multirow{2}{.6\textwidth}{a \textbf{\QCSP{} trichotomy} should be conjectured: \Ptime, \NP-complete, or \Pspace-complete.}\\
    $\{ \exists, \forall, \wedge, = \}$ & 
    $\{ \exists, \forall, \vee,  \neq\}$    &
    \\
    \hline
    \multirow{2}{*}{$\{ \exists, \forall, \wedge, \neq \}$} & 
    \multirow{2}{*}{$\{ \exists, \forall, \vee,  =\}$}    &
    Trivial if $|D|=1$; in \Ptime\ if $|D|=2$ and $\mathcal{D}$ is affine or
    \\
    &&    bijunctive; and, \Pspace-complete 
    otherwise.\\
    \hline
    \multicolumn{2}{|c|}{ \multirow{2}{*}{$\{ \forall, \exists, \wedge, \vee\}$}} &
    \textbf{Positive equality free tetrachotomy}: \Ptime,
    \NP-complete, \coNP-complete or \Pspace-complete\\
    \hline
    \multicolumn{2}{|c|}{\multirow{2}{*}{$\{ \neg,\exists, \forall,
        \wedge, \vee \}$}} &  
    Trivial when $\mathcal{D}$ contains only trivial relations (empty
    or all tuples, and
    \Pspace-complete otherwise.\\ 
    \hline
    \multicolumn{2}{|l|}{
      $\{ \forall, \exists, \wedge, \vee, =\}$  
      $\{ \forall, \exists, \wedge, \vee, \neq \}$
    }%
    &
    \multirow{2}{.6\textwidth}{
      Trivial when $|D|=1$, \Pspace-complete otherwise.
    }
    \\
    \multicolumn{2}{|c|}{$\{ \neg,\exists, \forall, \wedge, \vee, = \}$} &  \\
    \hline
  \end{tabular} 
\caption{Complexity of the model checking
  according to the model for syntactic fragments of \FO\ (\Logspace{}
  stands for logarithmic space, \Ptime{} for polynomial time,
  \NP{} for non-deterministic polynomial time, \coNP{} for its dual and
  \Pspace{} for polynomial space).}
\label{tab:ModelChecking}
\end{table}

The case of $\mylogic$ is considerably richer than all other cases (as seen on
Table~\ref{tab:ModelChecking}) -- with the exception of $\csplogic$ and
$\qcsplogic$ which are still open and active fields of research -- and
is the main contribution of this paper. 
We undertook the study of the complexity of the model checking of
$\mylogic$ through the algebraic method that has been so fruitful in
the study of the \CSP\ and
\QCSP~\cite{Schaefer,DBLP:journals/jacm/JeavonsCG97,DBLP:journals/jacm/Bulatov06,DBLP:journals/iandc/BornerBCJK09,DBLP:journals/siamcomp/Chen08}. To
this end, we defined \emph{surjective hyper-endomorphisms} and used them
to define a Galois connection that characterises definability under
$\mylogic$ and prove that it suffices to study the complexity of
problems associated with the closed sets of the associated lattice, the so-called
\emph{down-closed monoids of unary surjective hyper-operations} (DSM
for short)~\cite{DBLP:journals/tocl/MadelaineM12}.
Unlike the case of \CSP{} where the corresponding lattice, the
so-called \emph{clone lattice}, is infinite and essentially uncharted
when the domain size exceeds two, our lattice of DSMs is finite for any fixed
domain. This has meant that we were able to compute the lattice for modest domain sizes, or charter
parts relevant to our classification project, whether by
hand for a domain of up to three
elements~\cite{DBLP:conf/lics/MadelaineM09}, or using a computer for
up to four elements~\cite{DBLP:conf/csl/MartinM10}. 
These papers culminate in a full classification -- a tetrachotomy --
as $\mathcal{D}$ ranges over structures  with up to four elements
domains. Specifically, the problems $\mylogic(\mathcal{D})$ are either
in \Logspace, are \NP-complete, are \coNP-complete or are
\Pspace-complete. It is a pleasing consequence of our algebraic
approach that we can give a quite simple explanation to the delineation
of our subclasses. A drop in complexity arises precisely when we may
relativise w.l.o.g. all quantifiers of one type to a single domain element: for
example, all existential quantifiers may be fixed to a special domain
element $e$, resulting in a natural complexity drop from \Pspace{} to
\coNP. Moreover, for membership of \Logspace, \NP{} and \coNP, it is proved
in~\cite{DBLP:journals/tocl/MadelaineM12} that it is sufficient that $\mathcal{D}$ has certain special
surjective hyper-endomorphisms. For our previous example, we would
have a surjective hyper-operation, i.e. a function $f$ from $D$ to the
power set of $D$, such that $e \in f(d)$ for any element $d$ of $D$, which is a surjective
hyper-endomorphism of $\mathcal{D}$. 
Intuitively, a winning strategy for the existential player for some input sentence $\varphi$
may be transformed through ``application of $f$'' into a winning strategy where any
existential variable is played on the constant $e$. 
The converse, that it is
necessary to have these special surjective hyper-endomorphisms, is
more subtle and was
initially only an indirect consequence of our exploration of the
lattice of DSMs. We settled this converse direction and the
tetrachotomy for any domain size via the
introduction of the novel notion of $U$-$X$-core~\cite{DBLP:conf/lics/MadelaineM11} which is the
analog for $\mylogic$ of the core, so useful in the case of
$\csplogic$ and \CSP.

The well-known notion of the core of $\mathcal{D}$ may be seen as the
minimal induced substructure $\mathcal{D}' \subseteq \mathcal{D}$ such
that $\mathcal{D}'$ and $\mathcal{D}$ agree on all primitive positive
sentences. Equivalently, the domain $D'$ of $\mathcal{D}'$ is minimal
such that any primitive positive sentence is true on $\mathcal{D}$ iff
it is true on $\mathcal{D}$ with all (existential) quantifiers
relativised to $D'$. Cores are minimal structures in their equivalence
classes, given by the equivalence relation of satisfying the same
primitive positive sentences. Cores are very robust, for instance,
being unique up to isomorphism, and sitting as induced substructures
in all other structures in their equivalence class. A similar notion
to core exists for the QCSP, but it is not nearly so robust (they need
no longer be uniquely minimal in size nor sit as an induced substructure in other
structures in their equivalence
class~\cite{DBLP:journals/corr/abs-1204-5981}). For the problems 
$\mylogic(\mathcal{D})$, a notion of core returns, and it is once
again robust. The $U$-$X$-core of $\mathcal{D}$ consists of a minimal
substructure induced by the union $U \cup X$ of two minimal sets $U$
and $X$ of $D$ such that a positive equality-free sentence is true on
$\mathcal{D}$ iff it is true on $\mathcal{D}$ with the universal
quantifiers relativised to $U$ and the existential quantifiers
relativised to $X$. Analysing $U$-$X$-cores gives us the necessary
converse alluded to in the previous paragraph. In the \Pspace-complete
case, some completion of the $U$-$X$-core is either fundamentally very
simple and can be classified as in a two-element domain, known
from~\cite{DBLP:conf/lics/MadelaineM09}, or it is a generalisation of
one of the four-element cases from~\cite{DBLP:conf/csl/MartinM10}. For
the \NP-complete and \coNP-complete cases, some completion of the
$U$-$X$-core is fundamentally very simple and can be classified as an
easy generalisation of a three-element domain. 

We are able therefore to give the delineation of our tetrachotomy by
two equivalent means. Firstly, by the presence or absence of certain
special surjective hyper-endomorphisms, the so-called \emph{A-shops}
and \emph{E-shops} (in our running example above $f$ is an E-shop, its
dual i.e. a surjective hyper-operation $g$ such that there exists a
constant $u$ such that $g(u)=D$ would be an A-shop). Secondly, by the existence or not of trivial
sets for the relativisation of universal and existential
quantifiers (see Table~\ref{table_reformulatingTetrachotomy}).
Thus, $\mylogic(\mathcal{D})$ is in \Logspace{} iff
$\mathcal{D}$ has both an A-shop and an E-shop for surjective
hyper-endomorphism, iff there exist singleton sets $U$ and $X$ such
that a sentence of positive equality-free logic is true on
$\mathcal{D}$ exactly when it is true on $\mathcal{D}$ with the
universal quantifiers and existential quantifiers relativised to $U$
and $X$, respectively. Otherwise, and in a similar vein,
$\mylogic(\mathcal{D})$ is NP-complete (resp., co-NP-complete) if it
has an A-shop (resp., E-shop) for a surjective hyper-endomorphism, iff there exists a singleton
set $U$ (resp., $X$) such that a sentence of positive equality-free
logic is true on $\mathcal{D}$ exactly when it is true on
$\mathcal{D}$ with the universal quantifiers relativised to $U$
(resp., the existential quantifiers relativised to $X$). In all
remaining cases, $\mylogic(\mathcal{D})$ is Pspace-complete, and
$\mathcal{D}$ has neither an A-shop nor an E-shop as a surjective hyper-endomorphism, and there
are no trivial sets $U$ nor $X$ affording the required relativisation
properties. 

\begin{table}[h]
\resizebox{\textwidth}{!}{
  \begin{tabular}{|c|c|c|c|c|r|c|}
    \hline
    \multicolumn{7}{|c|}{Tetrachotomy for \mylogic$(\mathcal{D})$}\\
    \hline
    \hline
    Case & Complexity & $A$-shop & $E$-shop & $U$-$X$-core & Relativises into & Dual \\
    \hline
    I & \Logspace & yes & yes & $|U|=1$, $|X|=1$ &
    $\{\wedge,\vee\}$-\FO & I\\
    \hline
    II &\NP-complete & yes & no  & $|U|=1$, $|X|\geq 2$ &
    $\{\exists, \wedge,\vee\}$-\FO & III\\
    \hline
    III &\coNP-complete & no  & yes & $|U|\geq 2$, $|X|=1$ &
    \ensuremath{\{\forall, \vee, \wedge\} \mbox{-}\FO} & II \\ 
    \hline
    IV & \Pspace-complete & no & no & $|U|\geq 2$,
    $|X|\geq 2$ & \mylogic & IV\\
    \hline
  \end{tabular}
}%
\caption{Reformulations of the tetrachotomy ($U$ and $X$ denote the subsets of
  the domain to which universal and existential variables relativise,
  respectively; the relativisation into a weaker logical fragment
  allows up to two constants).}
\label{table_reformulatingTetrachotomy}
\end{table}

The paper is organised as follows. 
In Section~\ref{sec:compl-model-check}, we prove preliminary results,
in Section~\ref{sec:compl-class}, we classify the complexity of the
model checking problem for all
fragments (other than those corresponding to \CSP\ and \QCSP) but
\mylogic. In Section~\ref{sec:tetrachotomy}, we classify the complexity of
the model checking for \mylogic. 

In more detail, in \S~\ref{sec:methodology}, we present our methodology to tackle
systematically the complexity of the 
model checking problem, discuss duality and the fragments it will
suffice to classify. In \S~\ref{sec:containmentCore}, we recall the
notion of containment, equivalence and cores and extend it abstractly
to any fragment $\mathscr{L}$. 
In \S~\ref{sec:hyper-morphisms}, we introduce the notion of
hyper-operations and hyper-morphisms which arise naturally in the
context of equality-free fragments. In \S~\ref{sec:cont-core-eqfr}, we 
investigate containment, equivalence and core for \EqFreeFo. In
\S~\ref{sec:containment-mylogic}, we characterise containment for
\mylogic; in \S~\ref{sec:cores}, we introduce the notion of
a $U$-$X$-core; highlight the link with relativisation in
\S~\ref{sec:relativisation}; prove some basic properties of surjective
hyper-endomorphisms of a $U$-$X$-core in \S~\ref{sec:u-x-core}; and,
shows that it is unique up to isomorphism in \S~\ref{sec:uniqueness-u-x}.

In \S~\ref{sec:compl-class}, we start by recalling Schaefer's theorem
for Boolean \CSP, and its analog for Boolean \QCSP. In
\S~\ref{sec:first-class}, we deal with trivial fragments whose model
checking problem is always in \Logspace. In~\ref{sec:second-class}, we
turn our attention to fragments $\mathscr{L}$ whose complexity is trivial
if the structure has a one element $\mathscr{L}$-core and is hard
otherwise. In particular, we recall a Galois connection using hyper-endomorphisms
to classify the fragment \cspDisj\ following the guidelines given
by Börner~\cite{DBLP:conf/dagstuhl/Borner08} regarding Galois connections.
In~\ref{sec:third-class}, we classify the fragments whose complexity
can be deduced from the Boolean case.

In Section~\ref{tetrachotomy}, we deal with \mylogic. In
\S~\ref{sec:GaloisConnectionInvShe}, we recall the Galois
connection using surjective
hyper-endomorphisms. In \S~\ref{sec:boolean-case}, we recall how the
Boolean case can be classified using the lattice associated with this
Galois connection. In general, the upper bound of our tetrachotomy is a direct
consequence of the characterisation of $U$-$X$-core in terms of
relativisation, and we only need to deal with the lower bounds which
we do in full generality in~\S~\ref{sec:hardness}.
In particular, in \S~\ref{sec:char-reduc-dsms}, we characterise in
some detail the DSM of a $U$-$X$-core, showing that it is of a very
restricted form, which allows us to prove hardness in a generic way in
subsequent sections.
Finally, In \S~\ref{sec:meta-problem} we investigate the complexity of the meta-problem: given
a finite structure $\mathcal{D}$, what is the complexity of
evaluating positive equality-free sentences of \FO{} over
$\mathcal{D}$? We establish that the meta-problem is NP-hard, even  for a fixed and finite signature.

The present paper represents the full version of the conference
reports~\cite{DBLP:conf/cie/Martin08,DBLP:conf/csl/MartinM10,DBLP:conf/lics/MadelaineM11}
and a part of~\cite{DBLP:journals/corr/abs-1204-5981};
\cite{DBLP:conf/lics/MadelaineM11} itself supersedes a
series of papers begun with~\cite{DBLP:conf/lics/MadelaineM09}.
Unless otherwise stated, all results appear here for the first time (outside of conference publications).

\section{Preliminaries}
\label{sec:compl-model-check}

\subsection{Basic Definitions}
Unless otherwise stated, we shall work with
finite relational structures that share the same finite relational
signature $\sigma$.  Let $\mathcal{D}$ be such a structure. We will
denote its domain by $D$. We denote the size of such a set $D$ by
$|D|$. 
The \emph{complement} $\overline{\mathcal{D}}$ of a structure $\mathcal{D}$ consists of relations that are exactly the set-theoretic complements of those in $\mathcal{D}$. I.e., for an $a$-ary $R$, $R^{\overline{\mathcal{D}}}:=D^a \setminus R^{\mathcal{D}}$. For graphs this leads to a slightly non-standard notion of complement, as it includes self-loops.

A \emph{homomorphism} (resp. \emph{full homomorphism}) from a
structure $\mathcal{D}$ to a structure $\mathcal{E}$ is a function
$h:D\rightarrow E$ that preserves (resp. preserves fully) the
relations of $\mathcal{D}$, i.e. for all $a_i$-ary relations $R_i$, and 
for all $x_1,\ldots,x_{a_i} \in D$, 
$R_i(x_1,\ldots,x_{a_i})\in \mathcal{D}$ implies $R_i\bigl(h(x_1),\ldots,h(x_{a_i})\bigr) \in \mathcal{E}$ (resp. $R_i(x_1,\ldots,x_{a_i})\in \mathcal{D}$ iff $R_i\bigl(h(x_1),\ldots,h(x_{a_i})\bigr) \in \mathcal{E}$). $\mathcal{D}$ and $\mathcal{E}$ are \emph{homomorphically equivalent} if there are homomorphisms both from $\mathcal{D}$ to $\mathcal{E}$ and from $\mathcal{E}$ to $\mathcal{D}$.

Let $\mathscr{L}$ be a fragment of $\FO$. Let $\mathcal{D}$ be a fixed
structure. The decision problem $\mathscr{L}(\mathcal{D})$ has:
\begin{itemize}
\item Input: a sentence $\varphi$ of $\mathscr{L}$.
\item Question: does $\mathcal{D} \models \varphi$?
\end{itemize}

\subsection{Methodology}
\label{sec:methodology}
In this paper, we will be concerned with syntactic fragments $\mathscr{L}$ of \FO\ defined by allowing or disallowing
symbols from $\{\exists, \forall, \land, \lor, \neq, =,\neg\}$. Given any sentence $\varphi$ in $\mathscr{L}$, we
may compute in logarithmic space an equivalent sentence $\varphi'$ in
prenex normal form, with negation pushed inwards at the atomic level. Since we will not be concerned with complexities
beneath \Logspace, we assume hereafter that all inputs are in this form. 

In general \Pspace{} membership of $\FO(\mathcal{D})$ follows by a simple evaluation procedure inward
through the quantifiers. Similarly, the expression complexity of the existential fragment 
$\{\exists, \land, \lor, \neq,= \} \mbox{-}\mathrm{FO}$ is at most \NP{};
and, that of its dual fragment $\{\forall, \lor, \land, =, \neq\}
\mbox{-}\mathrm{FO}$ is at most \coNP{} (in both cases, we may even
allow atomic negation)~\cite{DBLP:conf/stoc/Vardi82}. 
We introduce formally below this principle of duality. 

Let $\mathscr{L}$ be a syntactic fragment of \FO\ defined by allowing or disallowing
symbols from $\{\exists, \forall, \land, \lor, \neq, =\}$.
We denote by $\overline{\mathscr{L}}$ its dual fragment by de Morgan's
law:
$\land$ is dual to $\lor$, $\exists$ to $\forall$ and $=$ to $\neq$.
\begin{proposition}
\label{prop:duality:principle}
  Let $\mathscr{L}$ be a syntactic fragment of \FO\ defined by allowing or disallowing
symbols from $\{\exists, \forall, \land, \lor, \neq, =\}$.
  The problem $\mathscr{L}(\mathcal{D})$ belongs to a complexity class
  $\mathrm{C}$ if, and only if, the problem
  $\overline{\mathscr{L}}(\overline{\mathcal{D}})$ belongs to the dual
  complexity class 
  $\mathrm{co\mbox{-}C}$.
\end{proposition}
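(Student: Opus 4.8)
The plan is to establish the equivalence by exhibiting a logarithmic-space reduction in each direction that is, in fact, the identity on inputs, and which is derived from de Morgan's laws. First I would set up the key syntactic transformation: given a sentence $\varphi \in \mathscr{L}$, define its de Morgan dual $\overline{\varphi}$ by simultaneously swapping $\exists \leftrightarrow \forall$, $\land \leftrightarrow \lor$, and $= \leftrightarrow \neq$ throughout $\varphi$. Since $\mathscr{L}$ is defined by allowing or disallowing symbols from the self-dual set $\{\exists, \forall, \land, \lor, \neq, =\}$, the map $\varphi \mapsto \overline{\varphi}$ is a bijection from the sentences of $\mathscr{L}$ onto the sentences of $\overline{\mathscr{L}}$, and it is clearly computable in logarithmic space (indeed it is a symbol-by-symbol relabelling on a prenex formula with negation already pushed to the atomic level, as we assume throughout).

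The heart of the argument is the semantic claim that, for every structure $\mathcal{D}$ and every sentence $\varphi$ of $\mathscr{L}$,
\begin{equation*}
\mathcal{D} \models \varphi \quad\Longleftrightarrow\quad \overline{\mathcal{D}} \notmodels \overline{\varphi}.
\end{equation*}
I would prove this by induction on the structure of $\varphi$ (more precisely, a straightforward induction establishing the corresponding statement for formulas with free variables under any assignment). The atomic base case is exactly where the complement construction $R^{\overline{\mathcal{D}}} := D^a \setminus R^{\mathcal{D}}$ is used: an atom $R(\bar{x})$ holds in $\mathcal{D}$ under an assignment iff the tuple lies in $R^{\mathcal{D}}$ iff it does \emph{not} lie in $R^{\overline{\mathcal{D}}}$, which is the negated atom on the complement; and similarly $=$ on $\mathcal{D}$ corresponds to the failure of its dual $\neq$ on $\overline{\mathcal{D}}$. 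The inductive steps for $\land/\lor$ are precisely the propositional de Morgan laws, and those for $\exists/\forall$ are the first-order de Morgan laws $\neg \exists x\, \psi \equiv \forall x\, \neg\psi$; since the domain of $\overline{\mathcal{D}}$ is again $D$, the ranges of quantification match and the induction goes through uniformly.

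With the equivalence in hand, the proposition follows immediately. If $\mathscr{L}(\mathcal{D}) \in \mathrm{C}$, then to decide $\overline{\mathscr{L}}(\overline{\mathcal{D}})$ on an input $\psi$, I would compute (in logarithmic space) the unique $\varphi \in \mathscr{L}$ with $\overline{\varphi} = \psi$, namely $\varphi = \overline{\psi}$ using that dualisation is an involution, then query the $\mathscr{L}(\mathcal{D})$ procedure and negate the answer; by the equivalence this decides $\overline{\mathscr{L}}(\overline{\mathcal{D}})$, placing it in $\mathrm{co\mbox{-}C}$. The reverse direction is symmetric, using that $\overline{\overline{\mathscr{L}}} = \mathscr{L}$ and $\overline{\overline{\mathcal{D}}} = \mathcal{D}$. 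I do not anticipate a genuine obstacle here; the only point requiring mild care is bookkeeping in the atomic case—ensuring that the complement is taken with the correct arity and that the self-loop convention for graphs (flagged in the preliminaries) does not break the tuple-membership equivalence—but this is routine once the complement is defined relationwise as above.
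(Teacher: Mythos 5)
Your proposal is correct and follows essentially the same route as the paper: your de Morgan dual $\overline{\varphi}$ is exactly the sentence the paper constructs by negating $\varphi$, pushing negations to the atoms, and replacing $\lnot R$ by $R$, and your key equivalence $\mathcal{D} \models \varphi \iff \overline{\mathcal{D}} \not\models \overline{\varphi}$ is the paper's chain of equivalences, which you merely substantiate by an explicit induction rather than by citing negation normal form. The complexity transfer via a logspace computation of the dual sentence and negation of the answer is likewise identical to the paper's argument.
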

\begin{proof}
  For any sentence $\varphi$ in $\mathscr{L}$, we may rewrite its
  negation $\lnot \varphi$ by pushing the negation inwards until all
  atoms appear negatively, denoting the sentence hence obtained by
  $\psi$ (which is logically equivalent to $\lnot \varphi$). Next, we replace every occurrence of a negated relational
  symbol $\lnot R$ by $R$ to obtain a
  sentence of $\overline{\mathscr{L}}$ which we denote by $\overline{\varphi}$. 
  The following chain of equivalences holds
  $$\mathcal{D}\models \varphi \iff 
  \mathcal{D} \models \lnot (\lnot \varphi) 
  \iff 
  \mathcal{D} \models \lnot (\psi)
  \iff
  \mathcal{D} \not\models \psi
  \iff
  \overline{\mathcal{D}} \not \models \overline{\varphi}.$$
  Clearly, $\overline{\varphi}$ can be constructed in logspace from $\varphi$
  and the result follows.
\end{proof}

We will use this principle of duality to only classify one fragment or
its dual, for example we will study $\csplogic$ and ignore its dual
$\{\forall,\lor \} \mbox{-}\mathrm{FO}$. We will also use this
principle to classify the self-dual fragment $\mylogic$.

We assume at least one quantifier and one binary connective (weaker fragments
being trivial). By the duality principle, we may consider only purely
existential fragments, or fragments with both quantifiers. Regarding
connectives, we have three possibilities: purely disjunctive fragments, 
purely conjunctive fragments and fragments with both connectives.
Regarding equality and disequality, we should have the four possible
subsets of $\{=, \neq\}$ but it will become clear that cases with both
follow the same complexity delineation as the case with $\neq$ only.
Moreover, for fragments with both quantifiers, we may use the duality
principle between $\{\exists,\forall,\land\}$ and
$\{\forall,\exists,\lor\}$ to simplify our task.
This means that we would need to consider $3 \times 3$ positive existential
fragments and $2 \times 3$ positive fragments with both quantifiers.
Actually, we can decrease this last count by one, due to the duality between
$\posFoNeq$ and $\posFoEq$.
Regarding fragments with $\neg$, since we necessarily have both
connectives and both quantifiers, we only have to consider two
fragments: $\FO$ and $\EqFreeFo$. However, we shall see that the complexity of $\FO$ agrees with
that of $\posFoNeq$ (and its dual $\posFoEq$).

This makes a grand total of \emph{15 fragments to classify}, which are listed
below; the fragments marked with a $\star$ correspond to the \CSP\ and
\QCSP\ and are still open. We will settle all other listed fragments.

The 15 relevant fragments can be organised broadly in the following four
classes.

\paragraph*{First Class} This consists of the following trivial
fragments: for such a fragment $\mathscr{L}$, the problem
$\mathscr{L}(\mathcal{D})$ is trivial (in \Logspace) for any structure $\mathcal{D}$.
\begin{itemize}
\item $\{\exists, \lor\} \mbox{-}\mathrm{FO}$        \hfill (see Proposition~\ref{prop:firstClass:complexity})
\item $\{\exists, \lor, = \} \mbox{-}\mathrm{FO}$    \hfill (see Proposition~\ref{prop:firstClass:complexity})
\item $\{\exists, \lor, \neq\} \mbox{-}\mathrm{FO}$  \hfill (see Proposition~\ref{prop:firstClass:complexity})
\end{itemize}

\paragraph*{Second Class} This consists of the following fragments which exhibit a simple dichotomy: for such a fragment $\mathscr{L}$, the problem
$\mathscr{L}(\mathcal{D})$ is trivial (in \Logspace) when the
$\mathscr{L}$-core (defined in the next section) of $\mathcal{D}$ has one element and hard otherwise
(\NP-complete for existential fragments, \Pspace-complete for
fragments that allow both quantifiers). For this class, tractability amounts to
the relativisation of all quantifiers to some constant.
\begin{itemize}
\item \cspDisj, \cspDisjEq               \hfill (see Proposition~\ref{prop:cspDisj:complexity}.)
\item \cspDisjNeq                        \hfill (see Proposition~\ref{prop:cspDisjNeq:complexity}.)
\item \posFoNeq                          \hfill (see Proposition~\ref{proposition:posFoNeq:complexity}.)
\item \EqFreeFo                          \hfill (see Proposition~\ref{proposition:EqFreeFo:complexity}.)
\end{itemize}
\paragraph*{Third Class}
This exhibits more richness complexity-wise, 
tractability can not be explained simply by $\mathscr{L}$-core size
and relativisation of quantifiers. 
\begin{itemize}
\item \csplogicNeq                       \hfill (see Proposition~\ref{prop:csplogicNeq:complexity}.)
\item \qcsplogicNeq                      \hfill (see Proposition~\ref{prop:qcsplogicNeq:complexity}.)
\item[$\star$] \csplogic, \csplogicEq    
\item[$\star$] \qcsplogic, \qcsplogicEq  
\end{itemize}
\paragraph*{Fourth Class} The last class consists of a single fragment and is rich
complexity-wise, though we will see that a drop in complexity is always witnessed by
relativisation of quantifiers.
\begin{itemize}
\item \mylogic                           \hfill (see Theorem~\ref{tetrachotomy}.)
\end{itemize}

\subsection{$\mathscr{L}$-Containment and $\mathscr{L}$-Core}
\label{sec:containmentCore}

It is well known that conjunctive query containment is characterised
by the presence of homomorphism between the corresponding canonical
databases (this goes back to Chandra and
Merlin~\cite{DBLP:conf/stoc/ChandraM77}, see
also~\cite[chapter~6]{Gradel:2005:FMT:1206819}). For exactly the same 
reason, a similar result holds for $\csplogic$-containment.
We state and prove this result for pedagogical reasons, before moving
on to other fragments. The results in this section (\S~\ref{sec:containmentCore}) relating to existential fragments are essentially well known.

Let us fix some notation first.
Given a primitive positive sentence $\varphi$ in
$\csplogic$, we denote by $\mathcal{D}_\varphi$ its \emph{canonical
  database}, that is the structure with domain the variables of
$\varphi$ and whose tuples are precisely those that are atoms of
$\varphi$. In the other direction, given a finite structure
$\mathcal{A}$, we write $\phi_{\!\mathcal{A}}$ for the so-called
\emph{canonical conjunctive query}\footnotemark{} of
\footnotetext{Most authors consider the canonical query to be the sentence which is the existential quantification of $\phi_{\!\mathcal{A}}$.}
$\mathcal{A}$, the quantifier-free formula that is the conjunction
of the positive facts of $\mathcal{A}$, where the variables
$v_1,\ldots,v_{|A|}$ correspond to the elements 
$a_1,\ldots,a_{|A|}$ of $\mathcal{A}$.
It is well known that $\mathcal{D}_\varphi$ is homomorphic to a
structure $\mathcal{A}$ if, and only if,
$\mathcal{A}\models\varphi$. Moreover, we now may define a winning strategy for
$\exists$ in the Hintikka $(\mathcal{A},\varphi)$-game
to be precisely the evaluation of the variables given by a homomorphism from $\mathcal{D}_\varphi$ to
$\mathcal{A}$. 
Note also that $\mathcal{A}$ is isomorphic to the canonical 
database of $\exists v_1 \exists v_2 \ldots v_{|A|}
\phi_{\!\mathcal{A}}$.

\begin{theorem}
  \label{theorem:CSPcontainment}
  Let $\mathcal{A}$ and $\mathcal{B}$ be two structures.
  The following are equivalent.
  \begin{romannum}
  \item For every sentence $\varphi$ in $\csplogic$, if $\mathcal{A}\models \varphi$ then
    $\mathcal{B}\models \varphi$.
  \item There exists a homomorphism from $\mathcal{A}$ to $\mathcal{B}$.
  \item $\mathcal{B}\models \phi_{\!\mathcal{A}}^{\csplogic}$
    where $\phi_{\!\mathcal{A}}^{\csplogic}:= \exists v_1 \exists v_2
    \ldots v_{|A|} \phi_{\!\mathcal{A}}$. 
  \end{romannum}
\end{theorem}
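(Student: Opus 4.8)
The plan is to prove the cycle of implications (i) $\Rightarrow$ (iii) $\Rightarrow$ (ii) $\Rightarrow$ (i), which is the standard route for this kind of Chandra--Merlin style containment result. Most of the heavy lifting has already been set up in the discussion preceding the statement, so each arrow should be short.

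For (i) $\Rightarrow$ (iii), I would first observe that $\phi_{\!\mathcal{A}}^{\csplogic}$ is a genuine sentence of $\csplogic$, since $\phi_{\!\mathcal{A}}$ is a conjunction of positive atoms and we existentially quantify all its variables $v_1,\dots,v_{|A|}$. It therefore suffices, by (i), to show that $\mathcal{A}\models \phi_{\!\mathcal{A}}^{\csplogic}$. This is witnessed by the assignment $v_i \mapsto a_i$: under this valuation every conjunct of $\phi_{\!\mathcal{A}}$ holds by the very definition of the canonical conjunctive query (its conjuncts are exactly the positive facts of $\mathcal{A}$), so the existential sentence is satisfied in $\mathcal{A}$. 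Applying (i) to $\varphi := \phi_{\!\mathcal{A}}^{\csplogic}$ gives $\mathcal{B}\models \phi_{\!\mathcal{A}}^{\csplogic}$, which is exactly (iii).

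For (iii) $\Rightarrow$ (ii), I would invoke the dictionary already recalled in the excerpt. Satisfaction of $\phi_{\!\mathcal{A}}^{\csplogic}$ in $\mathcal{B}$ means there is a valuation $h$ of the variables $v_1,\dots,v_{|A|}$ into $B$ making every conjunct true; identifying $v_i$ with $a_i$, this valuation is precisely a map $h:A\to B$. The fact that each positive atom $R_i(a_{j_1},\dots)$ of $\mathcal{A}$ (which is a conjunct of $\phi_{\!\mathcal{A}}$) is satisfied by $h$ says exactly that $R_i\bigl(h(a_{j_1}),\dots\bigr)$ holds in $\mathcal{B}$, i.e.\ $h$ preserves all relations; thus $h$ is a homomorphism from $\mathcal{A}$ to $\mathcal{B}$, giving (ii). Equivalently, one can cite the noted fact that $\mathcal{A}$ is isomorphic to the canonical database $\mathcal{D}_{\phi_{\!\mathcal{A}}^{\csplogic}}$ together with the stated equivalence ``$\mathcal{D}_\varphi \homm \mathcal{A}$ iff $\mathcal{A}\models\varphi$'' applied with $\varphi = \phi_{\!\mathcal{A}}^{\csplogic}$.

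For (ii) $\Rightarrow$ (i), I would argue that homomorphisms preserve truth of primitive positive sentences, proceeding by a routine structural induction on $\varphi \in \csplogic$. Given a homomorphism $g:A\to B$ and a satisfying assignment in $\mathcal{A}$ for $\varphi$, one pushes the assignment forward through $g$: atoms are preserved because $g$ is a homomorphism, conjunction is preserved because both conjuncts are (inductively) preserved, and an existential $\exists x\,\psi$ witnessed by $a\in A$ in $\mathcal{A}$ is witnessed by $g(a)\in B$ in $\mathcal{B}$. Hence $\mathcal{A}\models\varphi$ implies $\mathcal{B}\models\varphi$. I expect no real obstacle here; the main subtlety worth flagging is purely notational, namely keeping the identification of elements $a_i$ with variables $v_i$ consistent so that ``homomorphism'' and ``satisfying valuation'' are literally the same object, which is what makes the (iii) $\Leftrightarrow$ (ii) step a tautology rather than a computation.
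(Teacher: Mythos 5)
Your proposal is correct, and its skeleton is the same as the paper's: the cycle (i) $\Rightarrow$ (iii) $\Rightarrow$ (ii) $\Rightarrow$ (i), with (i) $\Rightarrow$ (iii) obtained by noting $\mathcal{A}\models\phi_{\!\mathcal{A}}^{\csplogic}$ under the identity assignment, and (iii) $\Leftrightarrow$ (ii) read off from the dictionary between satisfying valuations of $\phi_{\!\mathcal{A}}^{\csplogic}$ and maps $A\to B$ preserving the positive facts. The one place you genuinely diverge is (ii) $\Rightarrow$ (i): you prove preservation of $\csplogic$-sentences under homomorphisms by structural induction on $\varphi$, pushing a satisfying assignment forward through $g$. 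The paper instead applies the Chandra--Merlin dictionary to the arbitrary sentence $\varphi$: from $\mathcal{A}\models\varphi$ it extracts a homomorphism $g:\mathcal{D}_\varphi\to\mathcal{A}$ and observes that the composite $h\circ g:\mathcal{D}_\varphi\to\mathcal{B}$ is a winning strategy for $\exists$ in the $(\mathcal{B},\varphi)$-game, i.e. $\mathcal{B}\models\varphi$. The composition argument is shorter and is the ``transfer of winning strategies'' viewpoint the paper reuses verbatim for the richer fragments (Proposition~\ref{prop:containment:cspDisjEq}, Lemma~\ref{lemma:strategy:transfert:EqFreeFo}), where a CNF decomposition replaces the single canonical database; your induction is more elementary and self-contained, since it invokes the valuation/homomorphism dictionary only for the one canonical sentence rather than for arbitrary $\varphi$, and it is in fact the very pattern the paper deploys later for the preservation half of its Galois connections (Theorem~\ref{theorem:galois-connectionEqFreeWeakKrasner}). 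So nothing is missing; the two routes differ only in which of these two standard mechanisms discharges the last implication.
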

\begin{proof}
  As we observed above, a homomorphism corresponds to a winning
  strategy in the $(\mathcal{A},\varphi)$-game and (ii) and (iii) are equivalent. 

  Clearly, (i) implies (iii) since $\mathcal{A}\models  \exists v_1 \exists
  v_2 \ldots v_{|A|} \phi_{\!\mathcal{A}}$.

  We now prove that (ii) implies (i). Let $h$ be a homomorphism from
  $\mathcal{A}$ to $\mathcal{B}$. If $\mathcal{A}\models \varphi$,
  then there is a homomorphism $g$ from $\mathcal{D}_\varphi$ to
  $\mathcal{A}$. By composition, $h\circ g$ is a homomorphism from
  $\mathcal{D}_\varphi$ to $\mathcal{B}$. In other words, $h\circ g$
  is a winning strategy for $\exists$ in the $(\mathcal{B},\varphi)$-game.
\end{proof}

\begin{definition}
  Let $\mathcal{A}$ and $\mathcal{B}$ be two structures. 
  We say that
  $\mathcal{A}$ is \emph{$\mathscr{L}$-contained} in $\mathcal{B}$ if, and
  only if, for any $\varphi$ in $\mathscr{L}$, $\mathcal{A}\models
  \varphi$ implies $\mathcal{B}\models \varphi$.
  We say that
  $\mathcal{A}$ and $\mathcal{B}$ are \emph{$\mathscr{L}$-equivalent} if, and
  only if, for any $\varphi$ in $\mathscr{L}$, $\mathcal{A}\models
  \varphi \Leftrightarrow \mathcal{B}\models
  \varphi$. If $\mathcal{B}$ is a minimal structure w.r.t. domain size such
  that $\mathcal{B}$ and $\mathcal{A}$ are $\mathscr{L}$-equivalent,
  then we say that $\mathcal{B}$ is an \emph{$\mathscr{L}$-core} of $\mathcal{A}$.
\end{definition}

The $\csplogic$-core is unique up to isomorphism and is better known
as \emph{the core}. We proceed to characterise notions of containment,
equivalence and core for other fragments of \FO, which we will use to
study the complexity of the associated model checking problems. 
The results of this section are summarised in Table~\vref{table:Containment:Equivalence:Core}. 

\begin{table}[t]
\resizebox{\textwidth}{!}{
  \begin{tabular}{|r|c|c|c|}
    \hline
    Fragment $\mathscr{L}$ & $\mathscr{L}$-containment &
    $\mathscr{L}$-equivalence & $\mathscr{L}$-core \\
    \hline
    \csplogic   & 
    \multirow{4}{*}{homomorphism}&
    \multirow{4}{.2\textwidth}{homomorphic equivalence}&
    \multirow{4}{*}{(classical) core}\\
    \csplogicEq &&&\\
    \cspDisj    &&&\\
    \cspDisjNeq &&&\\
    \hline
    \mylogic    & 
    \multicolumn{1}{|p{.2\textwidth}|}{surjective hyper-morphism} & 
    \multicolumn{1}{|p{.2\textwidth}|}{surjective hyper-morphism equi\-valence} & 
    $U$-$X$-core\\
    \hline
    \EqFreeFo   & 
    \multicolumn{1}{|p{.2\textwidth}|}{Full surjective hyper-morphism} & 
    \multicolumn{1}{|p{.2\textwidth}|}{Full surjective hyper-morphism} & 
    quotient by $\sim$\\ 
    \hline
    contains \csplogicNeq\ & 
    \multirow{3}{*}{isomorphism}&
    \multirow{3}{*}{isomorphism}&
    \multirow{3}{*}{each structure}\\
    \multicolumn{1}{|c|}{or} &&&\\
    contains \CocsplogicNeq &&&\\
    \hline
  \end{tabular}
}%
\caption{The various notions of containment, equivalence and core for
  syntactic fragments of \FO.}
\label{table:Containment:Equivalence:Core}
\end{table}

\begin{proposition}
\label{prop:containment:cspDisjEq}
Let $\mathcal{A}$ and $\mathcal{B}$ be two relational structures.
The following are equivalent.
\begin{romannum}
\item There is a homomorphism from $\mathcal{A}$ to $\mathcal{B}$.
\item $\mathcal{A}$ is \csplogic-contained in $\mathcal{B}$.
\item $\mathcal{A}$ is \csplogicEq-contained in $\mathcal{B}$.
\item $\mathcal{A}$ is \cspDisj-contained in $\mathcal{B}$.
\item $\mathcal{A}$ is \cspDisjEq-contained in $\mathcal{B}$.
\end{romannum}
\end{proposition}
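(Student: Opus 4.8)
The plan is to reduce everything to the single non-trivial implication (i) $\Rightarrow$ (v), using Theorem~\ref{theorem:CSPcontainment} together with the obvious syntactic inclusions between the fragments. As sets of sentences we have $\csplogic \subseteq \csplogicEq \subseteq \cspDisjEq$ and $\csplogic \subseteq \cspDisj \subseteq \cspDisjEq$. Since $\mathscr{L}$-containment demands the implication $\mathcal{A}\models\varphi \Rightarrow \mathcal{B}\models\varphi$ for \emph{every} $\varphi \in \mathscr{L}$, containment with respect to a larger fragment is a logically stronger statement. Consequently (v) implies each of (ii), (iii), (iv), and each of (iii), (iv) in turn implies (ii). By Theorem~\ref{theorem:CSPcontainment} we already have (i) $\Leftrightarrow$ (ii). Hence, to close the cycle of equivalences it suffices to establish (i) $\Rightarrow$ (v): a homomorphism $\mathcal{A}\to\mathcal{B}$ witnesses $\cspDisjEq$-containment.

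For this I would prove the standard fact that homomorphisms preserve existential positive sentences with equality. Fix a homomorphism $h:\mathcal{A}\to\mathcal{B}$, and argue by structural induction on a (not necessarily closed) formula $\psi(x_1,\dots,x_k)$ of $\cspDisjEq$ the following claim: for every assignment $s$ of the free variables into $\mathcal{A}$, if $\mathcal{A}\models\psi[s]$ then $\mathcal{B}\models\psi[h\circ s]$. The atomic relational case $R(x_{i_1},\dots,x_{i_a})$ is exactly the defining property of a homomorphism. The equality atom $x_i = x_j$ is preserved because $s(x_i)=s(x_j)$ forces $h(s(x_i))=h(s(x_j))$, $h$ being a function. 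Conjunction and disjunction pass through immediately from the induction hypothesis, and for the existential step, if $a\in A$ witnesses $\mathcal{A}\models\exists x\,\psi'[s]$ then $h(a)$ witnesses $\mathcal{B}\models\exists x\,\psi'[h\circ s]$. Specialising to sentences yields (i) $\Rightarrow$ (v).

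There is no genuine obstacle here; the argument is entirely routine and the equivalence (i) $\Leftrightarrow$ (ii) is already in hand from Theorem~\ref{theorem:CSPcontainment}. The only point deserving attention is the treatment of equality: it is preserved precisely because $h$ is a single-valued map, so assigning two variables the same element of $\mathcal{A}$ cannot be undone in $\mathcal{B}$. This is exactly the feature that fails for disequality, which is why the statement stops at $\cspDisjEq$ and does not extend to $\cspDisjNeq$ — there a homomorphism need not preserve $x\neq y$, and indeed containment for that fragment is characterised differently (by isomorphism, cf.\ Table~\ref{table:Containment:Equivalence:Core}).
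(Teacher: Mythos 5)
Your proof is correct, and its logical skeleton for the easy directions (fragment inclusion gives monotonicity of containment, then close the cycle through (i) $\Leftrightarrow$ (ii) and (i) $\Rightarrow$ (v)) matches the paper's, which handles (v), (iv), (iii) $\Rightarrow$ (i) equivalently via the single canonical sentence $\exists v_1 \ldots v_{|A|}\,\phi_{\!\mathcal{A}} \in \csplogic$. Where you genuinely diverge is the key implication (i) $\Rightarrow$ (v). You prove it by direct structural induction on $\cspDisjEq$-formulas, with equality atoms preserved simply because $h$ is single-valued; the paper instead transfers a winning strategy for $\exists$ in the Hintikka $(\mathcal{A},\varphi)$-game: it rewrites the quantifier-free part as a disjunction of conjunctions of positive atoms, propagates equality out by substitution (handling separately the degenerate disjuncts consisting only of equalities, which are true on any nonempty structure), identifies the strategy with a homomorphism from some canonical database $\mathcal{D}_{\psi_i}$ to $\mathcal{A}$, and composes with $h$. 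Your induction is more elementary and self-contained --- it needs no normal-form conversion, no substitution step, and no edge case for equality-only disjuncts --- and it is the textbook preservation argument for existential positive logic. The paper's game-theoretic route buys uniformity with its own later development: the same strategy-transfer template is reused for Lemma~\ref{lemma:strategy:transfert:EqFreeFo} and Lemma~\ref{lemma:strategy:transfert:mylogic}, where universal quantifiers and hyper-morphisms enter and where the canonical-database viewpoint does real work. Your closing observation about why the result stops short of $\cspDisjNeq$ (a homomorphism need not preserve $x \neq y$) is accurate and consistent with Table~\ref{table:Containment:Equivalence:Core}.
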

\begin{proof}
  The equivalence of (i) and (ii) are stated in
  Theorem~\ref{theorem:CSPcontainment} and are equivalent to 
  $\mathcal{B}\models \exists v_1 \exists v_2 \ldots v_{|A|}
  \phi_{\!\mathcal{A}}$, a sentence of $\csplogic$.
  This takes care of the implications from (v), (iv) and (iii) to (i).
  Trivially (v) implies both (iv) and (iii).

  It suffices to prove (i) implies (v). As in the proof of
  Theorem~\ref{theorem:CSPcontainment}, it can be easily checked that a homomorphism can be applied
  to a winning strategy for $\exists$ in the
  $(\mathcal{A},\varphi)$-game to obtain a winning strategy for $\exists$ in the
  $(\mathcal{B},\varphi)$-game. To see this, write the quantifier-free
  part $\psi$ of $\varphi$ in conjunctive normal form as a disjunction
  of conjunction-of-positive-atoms $\psi_i$. We may
  even propagate equality out by substitution such that each $\psi_i$
  is equality-free (if some $\psi_i$ contained no extensional
  symbol other than equality, the sentence $\varphi$ would trivially
  holds on any structure as we only ever consider structures with at
  least one element). A winning strategy in the
  $(\mathcal{A},\varphi)$-game corresponds to a homomorphism from some
  $\mathcal{D}_{\psi_i}$ to $\mathcal{A}$. By composition with the
  homomorphism from $\mathcal{A}$ to $\mathcal{B}$, we get a
  homomorphism from $\mathcal{D}_{\psi_i}$ to $\mathcal{B}$,
  \textsl{i.e.} a winning strategy in the $(\mathcal{B},\varphi)$-game
  as required.
\end{proof}
\begin{corollary}
  \label{prop:equivalence:cspDisjEq}
  Let $\mathcal{A}$ and $\mathcal{B}$ be two relational structures.
  The following are equivalent.
\begin{romannum}
\item $\mathcal{A}$ and $\mathcal{B}$ are homomorphically equivalent.
\item $\mathcal{A}$ and $\mathcal{B}$ have isomorphic cores.
\item $\mathcal{A}$ is \csplogic-equivalent to $\mathcal{B}$.
\item $\mathcal{A}$ is \csplogicEq-equivalent to $\mathcal{B}$.
\item $\mathcal{A}$ is \cspDisj-equivalent to $\mathcal{B}$.
\item $\mathcal{A}$ is \cspDisjEq-equivalent to $\mathcal{B}$.
\end{romannum}
\end{corollary}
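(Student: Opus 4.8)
The plan is to derive this corollary almost immediately from Proposition~\ref{prop:containment:cspDisjEq}, by unpacking the definition of $\mathscr{L}$-equivalence as two-way $\mathscr{L}$-containment. Concretely, for each of the fragments $\mathscr{L} \in \{\csplogic, \csplogicEq, \cspDisj, \cspDisjEq\}$, the assertion that $\mathcal{A}$ is $\mathscr{L}$-equivalent to $\mathcal{B}$ means, by definition, that $\mathcal{A}$ is $\mathscr{L}$-contained in $\mathcal{B}$ and that $\mathcal{B}$ is $\mathscr{L}$-contained in $\mathcal{A}$.

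First I would apply Proposition~\ref{prop:containment:cspDisjEq} in each direction: it tells us that $\mathcal{A}$ is $\mathscr{L}$-contained in $\mathcal{B}$ if, and only if, there is a homomorphism from $\mathcal{A}$ to $\mathcal{B}$, and symmetrically that $\mathcal{B}$ is $\mathscr{L}$-contained in $\mathcal{A}$ if, and only if, there is a homomorphism from $\mathcal{B}$ to $\mathcal{A}$. Since this equivalence holds uniformly for all four fragments, each of items (iii)--(vi) is equivalent to the existence of homomorphisms in both directions, which is precisely item (i), homomorphic equivalence. This establishes the equivalence of (i) with (iii)--(vi) with no further work.

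It remains to connect (i) with (ii), the statement that homomorphic equivalence coincides with having isomorphic cores; this is the only step requiring an argument beyond Proposition~\ref{prop:containment:cspDisjEq}, and is the classical core theorem. For the direction (ii) $\Rightarrow$ (i), note that a structure retracts onto, and hence is homomorphically equivalent to, its core, so isomorphic cores force homomorphic equivalence of $\mathcal{A}$ and $\mathcal{B}$. For (i) $\Rightarrow$ (ii), I would pass to the cores $\mathcal{A}'$ and $\mathcal{B}'$ of $\mathcal{A}$ and $\mathcal{B}$: they are homomorphically equivalent to one another, so composing the two homomorphisms $\mathcal{A}' \to \mathcal{B}' \to \mathcal{A}'$ yields an endomorphism of the core $\mathcal{A}'$, which must be an automorphism by minimality; this forces both homomorphisms to be bijective, whence $\mathcal{A}' \cong \mathcal{B}'$.

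The main obstacle, such as it is, lies only in the (i) $\Leftrightarrow$ (ii) step, which rests on the uniqueness up to isomorphism of the core and the fact that every endomorphism of a core is an automorphism; these are standard facts about the classical core that the excerpt already alludes to. Consequently the corollary is genuinely routine once Proposition~\ref{prop:containment:cspDisjEq} is in hand, and I would present it tersely, essentially as the observation that two-way containment plus the containment characterisation equals homomorphic equivalence.
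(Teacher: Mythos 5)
Your proof is correct and takes essentially the same route the paper intends: the paper states this corollary without any proof, treating it as immediate from Proposition~\ref{prop:containment:cspDisjEq} by reading $\mathscr{L}$-equivalence as two-way $\mathscr{L}$-containment, combined with the standard facts about the classical core (uniqueness up to isomorphism) that it invokes in the surrounding text. Your filled-in argument for (i)$\Leftrightarrow$(ii) -- that endomorphisms of a core are automorphisms, which forces the cross homomorphisms between the two cores to be isomorphisms -- is exactly the standard justification of that classical fact.
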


We now move on to fragments containing $\{\exists, \land,\neq\}$.
\begin{proposition}
  \label{prop:CSPNeqContainment}
  Let $\mathcal{A}$ and $\mathcal{B}$ be two structures.
  The following are equivalent.
  \begin{romannum}
  \item For every sentence $\varphi$ in $\csplogicNeq$, if $\mathcal{A}\models \varphi$ then
    $\mathcal{B}\models \varphi$.
  \item There exists an injective homomorphism from $\mathcal{A}$ to $\mathcal{B}$.
  \item $\mathcal{B}\models \phi_{\!\mathcal{A}}^{\csplogicNeq}$
    where $\phi_{\!\mathcal{A}}^{\csplogicNeq}:=\exists v_1 \ldots v_{|A|}
    \phi_{\!\mathcal{A}} \land \bigwedge_{1\leq i<j \leq |A|} v_i \neq
    v_j$.
  \end{romannum}
\end{proposition}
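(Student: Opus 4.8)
The plan is to follow the template of Theorem~\ref{theorem:CSPcontainment}, establishing the implications around the cycle, with the one genuinely new ingredient being that injectivity of the homomorphism is exactly what is needed to transport disequalities. I would first dispose of the equivalence of (ii) and (iii), which is almost immediate from unwinding the definition of $\phi_{\!\mathcal{A}}^{\csplogicNeq}$. A satisfying assignment for this sentence in $\mathcal{B}$ sends each variable $v_i$ (associated with the element $a_i$ of $\mathcal{A}$) to an element of $B$ in a way that simultaneously preserves every positive atom of $\mathcal{A}$ (witnessed by the $\phi_{\!\mathcal{A}}$ conjunct) and sends distinct variables to distinct elements (witnessed by the $\bigwedge_{1\leq i<j\leq|A|} v_i \neq v_j$ conjunct). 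Such an assignment is precisely an injective homomorphism from $\mathcal{A}$ to $\mathcal{B}$, and conversely, so (ii) and (iii) are interchangeable.

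Next I would show that (i) implies (iii). This is direct: the sentence $\phi_{\!\mathcal{A}}^{\csplogicNeq}$ lies in $\csplogicNeq$, and $\mathcal{A}$ itself satisfies it, since the assignment $v_i \mapsto a_i$ verifies every atom by construction and every disequality because the elements $a_1,\ldots,a_{|A|}$ are pairwise distinct. Hence (i) forces $\mathcal{B}\models \phi_{\!\mathcal{A}}^{\csplogicNeq}$, which is (iii).

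The substantive direction is (ii) implies (i), and here I would argue as in Theorem~\ref{theorem:CSPcontainment} but work directly with winning strategies rather than through the canonical database, since disequalities are not relational atoms. Let $h$ be an injective homomorphism from $\mathcal{A}$ to $\mathcal{B}$, and suppose $\mathcal{A}\models\varphi$ for some $\varphi$ in $\csplogicNeq$. A winning strategy for $\exists$ in the $(\mathcal{A},\varphi)$-game is an assignment $g$ of the existentially quantified variables into $A$ satisfying every conjunct of the quantifier-free matrix. I would then verify that $h\circ g$ is a winning strategy in the $(\mathcal{B},\varphi)$-game: each positive atom is preserved because $h$ is a homomorphism, and each disequality $x\neq y$ is preserved because $g(x)\neq g(y)$ in $A$ together with injectivity of $h$ yields $h(g(x))\neq h(g(y))$ in $B$.

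The only step requiring any care—and hence where the proof genuinely departs from the pure primitive-positive case—is the treatment of disequalities: injectivity is precisely the hypothesis that lets one push $x\neq y$ forward along $h$, which is why \emph{injective} homomorphism (rather than arbitrary homomorphism) is the correct notion of $\csplogicNeq$-containment. Everything else is a routine adaptation of Theorem~\ref{theorem:CSPcontainment}, so I do not anticipate any serious obstacle.
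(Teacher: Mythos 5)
Your proof is correct and is exactly what the paper intends: its own proof of this proposition is simply the remark ``Similar to Theorem~\ref{theorem:CSPcontainment}'', and your argument is precisely that adaptation, with injectivity doing the one new piece of work of transporting the disequalities.
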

\begin{proof}
  Similar to Theorem~\ref{theorem:CSPcontainment}.
\end{proof}
\begin{corollary}
  Let $\mathscr{L}$ be a fragment of \FO\ such that $\mathscr{L}$ or
  its dual $\overline{\mathscr{L}}$ contains
  $\csplogicNeq$.
  Let $\mathcal{A}$ and $\mathcal{B}$ be two structures.
  The following are equivalent.
  \begin{romannum}
  \item $\mathcal{A}$ and $\mathcal{B}$ are isomorphic.
  \item $\mathcal{A}$ is $\mathscr{L}$-equivalent to $\mathcal{B}$.
  \end{romannum}
\end{corollary}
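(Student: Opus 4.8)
The plan is to prove the non-trivial implication (ii)$\Rightarrow$(i) and to collapse the two hypotheses on $\mathscr{L}$ into the single case $\csplogicNeq \subseteq \mathscr{L}$. The implication (i)$\Rightarrow$(ii) needs no hypothesis at all: isomorphic structures satisfy exactly the same first-order sentences, so \emph{a fortiori} the same sentences of $\mathscr{L}$, and are therefore $\mathscr{L}$-equivalent. For the converse, I would first settle the case $\csplogicNeq \subseteq \mathscr{L}$ directly, and then derive the case $\csplogicNeq \subseteq \overline{\mathscr{L}}$ from it using the duality principle (Proposition~\ref{prop:duality:principle}), observing that taking complements is an involution on structures that both preserves and reflects isomorphism.

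For the main case, suppose $\csplogicNeq \subseteq \mathscr{L}$ and that $\mathcal{A}$ is $\mathscr{L}$-equivalent to $\mathcal{B}$. Since $\mathscr{L}$ contains $\csplogicNeq$, the two structures are in particular $\csplogicNeq$-equivalent, i.e. each is $\csplogicNeq$-contained in the other. By the equivalence of (i) and (ii) in Proposition~\ref{prop:CSPNeqContainment}, this yields injective homomorphisms $f\colon \mathcal{A}\to\mathcal{B}$ and $g\colon \mathcal{B}\to\mathcal{A}$. As the structures are finite, injectivity in both directions forces $|A|=|B|$, so $f$ and $g$ are bijections on the domains. The final upgrade from bijective homomorphism to isomorphism is by a counting argument applied relation by relation: since $f$ is a domain-bijection it maps the tuples of $R^{\mathcal{A}}$ injectively into $R^{\mathcal{B}}$, whence $|R^{\mathcal{A}}| \leq |R^{\mathcal{B}}|$, and symmetrically via $g$ we get $|R^{\mathcal{B}}| \leq |R^{\mathcal{A}}|$; hence $|R^{\mathcal{A}}| = |R^{\mathcal{B}}|$ for every $R$. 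An injective map whose image has full cardinality is onto, so $f(R^{\mathcal{A}}) = R^{\mathcal{B}}$ for each $R$, which means $f^{-1}$ preserves relations too and $f$ is an isomorphism.

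For the dual case $\csplogicNeq \subseteq \overline{\mathscr{L}}$, I would note that the chain of equivalences in the proof of Proposition~\ref{prop:duality:principle} gives $\mathcal{D}\models\varphi \iff \overline{\mathcal{D}} \not\models \overline{\varphi}$, and that $\varphi \mapsto \overline{\varphi}$ is a bijection (up to logical equivalence) between $\mathscr{L}$ and $\overline{\mathscr{L}}$; consequently $\mathcal{A}$ is $\mathscr{L}$-equivalent to $\mathcal{B}$ if, and only if, $\overline{\mathcal{A}}$ is $\overline{\mathscr{L}}$-equivalent to $\overline{\mathcal{B}}$. Since $\overline{\mathscr{L}} \supseteq \csplogicNeq$, the case already proved gives $\overline{\mathcal{A}} \cong \overline{\mathcal{B}}$, and because complementation reflects isomorphism this is exactly $\mathcal{A}\cong\mathcal{B}$. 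The step I expect to be the crux is the passage from mutual bijective homomorphisms to a genuine isomorphism: for finite structures, homomorphisms in both directions do \emph{not} in general witness isomorphism, and the role of $\neq$ in $\csplogicNeq$ (which forces the homomorphisms to be injective, hence bijective once the domains have equal size) together with the per-relation cardinality count is precisely what makes the argument go through.
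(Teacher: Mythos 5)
Your proposal is correct and follows essentially the same route as the paper: the paper's (very terse) proof likewise reduces the case $\csplogicNeq \subseteq \mathscr{L}$ to Proposition~\ref{prop:CSPNeqContainment} plus finiteness, and handles the case $\csplogicNeq \subseteq \overline{\mathscr{L}}$ via the duality principle and the fact that complementation preserves and reflects isomorphism. Your contribution is merely to spell out what the paper leaves implicit, namely the standard argument that mutual injective homomorphisms between finite structures yield an isomorphism via the per-relation cardinality count.
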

\begin{proof}
  For the case when $\mathscr{L}$ contains $\csplogicNeq$, the result
  follows from the previous proposition and the fact that we deal with 
  finite structures only.

  For the case when $\overline{\mathscr{L}}$ contains
  $\csplogicNeq$, we apply the duality principle and the previous case, and equivalently $\overline{\mathcal{A}}$ and
  $\overline{\mathcal{B}}$ are isomorphic. This is in turn equivalent
  to $\mathcal{A}$ being isomorphic to $\mathcal{B}$.
\end{proof}

\subsection{Hintikka Games}
\label{sec:hintikka-games}
Before moving on to the equality-free fragments \mylogic{} and \EqFreeFo,
let us recall first basic definitions and notations regarding Hintikka Games.
Let $\varphi$ be a sentence of \FO\ in prenex
form with all negations pushed to the atomic level.
A \emph{strategy} for $\exists$ in the (Hintikka)
$(\mathcal{A},\phi)$-game is a set of mappings $\{\sigma_x : \mbox{`$\exists x$'} \in \phi \}$ with one mapping $\sigma_x$ for each existentially
quantified variable $x$ of $\varphi$.
The mapping $\sigma_x$ ranges over the domain $A$ of $\mathcal{A}$; and, its domain is the
set of functions from $Y_x$ to $A$, where $Y_x$ denotes the
universally quantified variables of $\phi$ preceding $x$.
 
We say that $\{\sigma_x : \mbox{`$\exists x$'} \in \phi \}$ is \emph{winning} if for any assignment
$\pi$ of the universally quantified variables of $\varphi$ to $A$, when each 
existentially quantified variable $x$ is set according to $\sigma_x$
applied to $\left.\pi\right|_{Y_x}$, then the quantifier-free part $\psi$ of
$\varphi$ is satisfied under this overall assignment $h$.
When $\psi$ is a conjunction of positive atoms, this amounts to $h$ being a
homomorphism from $\mathcal{D}_\psi$ to $\mathcal{A}$.

\subsection{Hyper-morphisms}
\label{sec:hyper-morphisms}
For the equality-free fragments \EqFreeFo\ and \mylogic, the correct concept to transfer
winning strategies involves unary hyper-operations, that is functions to
the power-set. 

\label{def:HeAndShe} 
A \emph{hyper-operation} $f$ from a set $A$ to a set $B$ is a
function from $A$ to the power-set of $B$. 
For a subset $S$ of $A$, we will define its image $f(A)$ under the
hyper-operation $f$ as $\bigcup_{s \in S} f(s)$. 
When we wish to stress that
an element may be sent to $\emptyset$, we speak of a \emph{partial
  hyper-operation}; and otherwise we assume that $f$ is \emph{total}, that is
for any $a$ in $A$, $f(a)\neq \emptyset$.
We say that $f$ is \emph{surjective} whenever $f(A)=B$.
The \emph{inverse} of a (total) hyper-operation $f$ from $A$ to $B$, denoted
by $f^{-1}$, is the partial hyper-operation from $B$ to $A$
defined for any $b$ in $B$ as 
$f^{-1}(b):=$ $\{a\in A \mid b \in f(a)\}$.
We call an element of $f^{-1}(b)$ an \emph{antecedent} of $b$ under
$f$. 
Let $f$ be a hyper-operation from $A$ to $B$ and $g$ a hyper-operation
from $B$ to $C$. The hyper-operation $g\circ f$ is defined naturally
as $g\circ f (x) := g\bigl(f(x)\bigr)$ (recall that $f(x)$ is a set).

When $f$ is a (total) surjective hyper-operation from $A$ to $A$, we say
that $f$ is a \emph{shop} of $A$. Note that the inverse of a shop is a
shop and that the composition of two shops is a also a shop.
Observing further that shop composition is associative and
that the identity shop (which sends an element $x$ of $A$ to the
singleton $\{x\}$) is the identity with respect to composition, we may
consider the monoid generated by a set of shops. A shop $f$ is a
\emph{sub-shop} of a shop $g$ whenever, for every $x$ in $A$, $f(x) \subseteq g(x)$. In our context, we will
be interested in a particular monoid which will be closed further
under sub-shops, a so-called \emph{down-shop-monoid} (DSM).
\footnote{The ``down'' comes from \emph{down-closure}, here under sub-shops; a nomenclature inherited from~\cite{BornerTotalMultifunctions}.} 
 We denote by $\langle F \rangle_{DSM}$ the DSM generated by a set $F$ of shops.

Let $f$ be a shop of $A$. When for a subset $U$ of $A$ we have $f(U)=A$, we say
that $f$ is \emph{$U$-surjective}.
Observing that the totality of $f$ may be rephrased as 
$f^{-1}(A)=A$, we say more generally that $f$ is \emph{$X$-total} for
a subset $X$ of $A$ whenever $f^{-1}(X)=A$.
Note that for shops $U$-surjectivity and $X$-totality are dual to one another, that is the inverse of
a $U$-surjective shop is an $X$-total shop with $X=U$ and vice versa.
Somewhat abusing terminology, and when it does not cause confusion, we will drop the word surjective and by
$U$- or $U'$-shop we will mean a $U$- or $U'$-surjective
shop. Similarly, we will speak of an $X$- or $X'$-shop in the total
case and of a $U$-$X$-shop in the case of a shop that is
both $U$-surjective and $X$-total. 
Suitable compositions of $U$-shops and $X$-shops preserve these
properties. 
\begin{lemma}
  \label{lem:compositionAndUXshops}
  Let $f$ and $g$ be two shops.
  \begin{romannum}
  \item If $f$ is a $U$-shop then $g\circ f$ is a $U$-shop.
  \item If $g$ is a $X$-shop then $g\circ f$ is a $X$-shop.
  \item If both $f$ is a $U$-shop and $g$ is a $X$-shop then $g\circ
    f$ is a $U$-$X$-shop.
  \item If both $f$ and $g$ are $U$-$X$-shops  then $g\circ f$ is a
    $U$-$X$-shop.
  \item The iterate of a $U$-$X$-shop is a $U$-$X$-shop.
  \end{romannum}
\end{lemma}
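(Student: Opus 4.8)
The plan is to prove parts (i) and (ii) directly from the definitions and then obtain (iii), (iv) and (v) as formal consequences, so that no genuinely new argument is needed after the first two items. The two workhorses will be the image identity $(g\circ f)(S)=g\bigl(f(S)\bigr)$, valid for every subset $S$ of $A$, and its dual, the inverse-of-composition identity $(g\circ f)^{-1}=f^{-1}\circ g^{-1}$. Both follow by unwinding the definition $g\circ f(x)=g\bigl(f(x)\bigr)=\bigcup_{y\in f(x)}g(y)$ and reordering the resulting double union, so I would record them once at the outset and reuse them throughout.

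For (i), I would compute $(g\circ f)(U)=g\bigl(f(U)\bigr)$ by the image identity. Since $f$ is a $U$-shop we have $f(U)=A$, whence $(g\circ f)(U)=g(A)$; and because $g$ is a shop it is in particular surjective, so $g(A)=A$. Thus $(g\circ f)(U)=A$, that is, $g\circ f$ is a $U$-shop. Note that this uses only the surjectivity of $g$, its behaviour on $U$ being irrelevant.

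For (ii), the cleanest route exploits the duality between $U$-surjectivity and $X$-totality already noted before the lemma, namely that $X$-totality of a shop $h$ is exactly the condition $h^{-1}(X)=A$. Using the inverse-of-composition identity, $(g\circ f)^{-1}(X)=f^{-1}\bigl(g^{-1}(X)\bigr)$; since $g$ is an $X$-shop, $g^{-1}(X)=A$, and since $f$ is a shop it is total, so $f^{-1}(A)=A$. Hence $(g\circ f)^{-1}(X)=A$, that is, $g\circ f$ is an $X$-shop. Equivalently, and without inverses, for each $a\in A$ one picks any $b\in f(a)$, which exists by totality of $f$; then $g(b)\cap X\neq\emptyset$ by $X$-totality of $g$, and $g(b)\subseteq(g\circ f)(a)$, so $(g\circ f)(a)$ meets $X$. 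This argument mirrors (i) under the inverse duality and uses only the totality of $f$.

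Parts (iii)--(v) are then immediate. Part (iii) is just the conjunction of (i) and (ii): a $U$-shop $f$ and an $X$-shop $g$ yield a composite that is simultaneously $U$-surjective and $X$-total. Part (iv) is the special case of (iii) in which $f$ and $g$ are each both a $U$-shop and an $X$-shop. Part (v) follows by induction on the number of iterations via (iv): the identity shop is trivially a $U$-$X$-shop, and each further composition with the $U$-$X$-shop preserves the property. There is no substantial obstacle here; the only point requiring care is to state and verify the inverse-of-composition identity correctly, so that the duality making (ii) the mirror image of (i) is rigorous rather than merely asserted.
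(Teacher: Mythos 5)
Your proof is correct and follows essentially the same route as the paper: part (i) via $f(U)=A$ and surjectivity of $g$, part (ii) by duality, and (iii)--(v) as formal consequences. The only difference is that you spell out the duality in (ii) explicitly through the identity $(g\circ f)^{-1}=f^{-1}\circ g^{-1}$, where the paper simply asserts ``(ii) is dual to (i)''; this is a welcome clarification rather than a different argument.
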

\begin{proof}
  We prove (i). Since $f(U)=A$, we have $g(f(U))=g(A)$.
  By surjectivity of $g$, we know that $g(A)=A$. It follows that
  $g(f(U))=A$ and we are done. 
  (ii) is dual to (i), and (iii) follows directly
  from (i) and (ii).
  (iv) is a restriction of (iii) and is only stated here as
  we shall use it often. (v) follows by induction on
  the order of iteration using (iv).
\end{proof}
  
A \emph{hyper-morphism} $f$ from a structure
$\mathcal{A}$ to a structure $\mathcal{B}$ is a hyper-operation from
$A$ to $B$ that satisfies the following property.
\begin{itemize}
\item (\textbf{preserving}) if $R(a_1,\ldots,a_i) \in \mathcal{A}$ then $R(b_1,\ldots,b_i) \in \mathcal{B}$, for all $b_1 \in f(a_1),\ldots,b_i \in
  f(a_i)$. 
\end{itemize}
When $\mathcal{A}$ and $\mathcal{B}$ are the same structure, we
speak of a \emph{hyper-endomorphism}.
We say that $f$ is \emph{full} if moreover
\begin{itemize}
\item (\textbf{fullness}) $R(a_1,\ldots,a_i) \in \mathcal{A}$ iff $R(b_1,\ldots,b_i) \in \mathcal{B}$, for all $b_1 \in f(a_1),\ldots,b_i \in
  f(a_i)$. 
\end{itemize}
Note that the inverse of a full surjective hyper-morphism 
is also a full surjective hyper-morphism.

\subsection{Containment and Core for \EqFreeFo}
\label{sec:cont-core-eqfr}

We now turn our attention to \EqFreeFo. The proofs of the necessary characterisations are somewhat laboured, but will prepare us well for the forthcoming discussion on \mylogic.
\begin{lemma}
\label{lemma:strategy:transfert:EqFreeFo}
  Let $\mathcal{A}$ and $\mathcal{B}$ be two structures such that
  there is a full surjective hyper-morphism from $\mathcal{A}$ to
  $\mathcal{B}$. Then, for every sentence $\varphi$ in $\EqFreeFo$, if
  $\mathcal{A}\models \varphi$ then $\mathcal{B}\models \varphi$.
\end{lemma}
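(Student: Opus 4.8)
The plan is to transfer a winning strategy for $\exists$ from the Hintikka $(\mathcal{A},\varphi)$-game to the $(\mathcal{B},\varphi)$-game, using the full surjective hyper-morphism $f$ from $\mathcal{A}$ to $\mathcal{B}$ as the dictionary between moves. Since $\varphi$ lies in $\EqFreeFo$, I first put it in prenex form with negation driven to the atoms, so its quantifier-free matrix $\psi$ is built from $\land$ and $\lor$ applied to literals, each literal being a relational atom $R(\bar{x})$ or its negation $\lnot R(\bar{x})$ (there are no equality atoms). Because $f$ is surjective, every $b\in B$ has a nonempty antecedent set $f^{-1}(b)$, and because $f$ is total, every $a\in A$ has $f(a)\neq\emptyset$; I fix once and for all a choice function $c\colon B\to A$ with $c(b)\in f^{-1}(b)$ and a choice function $d\colon A\to B$ with $d(a)\in f(a)$. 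Note that $c$ translates a universal move made in $\mathcal{B}$ into one in $\mathcal{A}$ with $b\in f(c(b))$, while $d$ translates an existential answer in $\mathcal{A}$ into one in $\mathcal{B}$ lying in the appropriate image.

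Given a winning strategy $\{\sigma_x^{\mathcal{A}}\}$ for $\exists$ in the $(\mathcal{A},\varphi)$-game, I define a strategy $\{\sigma_x^{\mathcal{B}}\}$ in the $(\mathcal{B},\varphi)$-game as follows. For an existential variable $x$ with preceding universal variables $Y_x$, and an assignment $\pi\colon Y_x\to B$, set $\sigma_x^{\mathcal{B}}(\pi):=d\bigl(\sigma_x^{\mathcal{A}}(c\circ\pi)\bigr)$. This respects the dependency structure of the game, since $\sigma_x^{\mathcal{B}}(\pi)$ depends only on $\pi$, which is defined on the universal variables preceding $x$. For any full play in which $\forall$ plays an assignment $\pi_{\mathcal{B}}$ of the universal variables into $B$, let $\pi_{\mathcal{A}}:=c\circ\pi_{\mathcal{B}}$ be the simulated universal assignment in $\mathcal{A}$, and let $h_{\mathcal{A}}$ and $h_{\mathcal{B}}$ be the resulting total assignments of all variables. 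A direct check shows $h_{\mathcal{B}}(x)\in f\bigl(h_{\mathcal{A}}(x)\bigr)$ for every variable $x$: for universal $x$ this is the defining property of $c$, and for existential $x$ it is the defining property of $d$ applied to $\sigma_x^{\mathcal{A}}$.

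It remains to verify that $\{\sigma_x^{\mathcal{B}}\}$ is winning, i.e. that $h_{\mathcal{B}}$ satisfies $\psi$ in $\mathcal{B}$ whenever $h_{\mathcal{A}}$ satisfies $\psi$ in $\mathcal{A}$ (which holds because $\{\sigma_x^{\mathcal{A}}\}$ is winning). This is where fullness does the real work and is the main point of the argument. Using $h_{\mathcal{B}}(x)\in f(h_{\mathcal{A}}(x))$, the preservation property of $f$ shows that every positive atom of $\psi$ true under $h_{\mathcal{A}}$ stays true under $h_{\mathcal{B}}$, while fullness (the reflecting direction) shows that every negated atom true under $h_{\mathcal{A}}$---that is, every atom false in $\mathcal{A}$---remains false in $\mathcal{B}$, so its negation stays true under $h_{\mathcal{B}}$. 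Thus the truth value of each literal only increases from $h_{\mathcal{A}}$ to $h_{\mathcal{B}}$; since $\psi$ is built solely from the monotone connectives $\land$ and $\lor$, an easy induction on the structure of $\psi$ gives $h_{\mathcal{B}}\models\psi$. As $\pi_{\mathcal{B}}$ was arbitrary, $\{\sigma_x^{\mathcal{B}}\}$ is winning and hence $\mathcal{B}\models\varphi$. The only genuinely subtle ingredient is the treatment of negated atoms, which is exactly why the hyper-morphism must be full rather than merely preserving; the existential-fragment analogue (Proposition~\ref{prop:containment:cspDisjEq}) needs only ordinary preservation precisely because no negated atoms occur there.
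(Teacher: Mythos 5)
Your proof is correct, and its first half -- the strategy transfer -- is essentially the paper's own: your choice functions $c$ and $d$ play exactly the roles of the paper's $\min h^{-1}(\cdot)$ (an antecedent selected via a fixed linear order on $A$) and of the arbitrary element chosen in $h(\sigma_x(\pi_A))$. Where you genuinely diverge is in verifying that the transferred strategy is winning. The paper puts the matrix $\psi$ into disjunctive normal form and runs a ``completion process'', adjoining to each disjunct $\psi_i$ either $R(v_1,\ldots,v_r)$ or $\lnot R(v_1,\ldots,v_r)$ for every atom not mentioned, so that each disjunct corresponds to a canonical structure $\mathcal{D}_{\psi_i}$ and the winning condition becomes the existence of a \emph{full homomorphism} from $\mathcal{D}_{\psi_i}$, which is then composed with the full hyper-morphism $h$. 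You instead check literals one at a time: positive atoms transfer by the preserving direction, negated atoms by the reflecting direction of fullness (if $R(h_{\mathcal{A}}(\bar{x}))\notin\mathcal{A}$ then $R(h_{\mathcal{B}}(\bar{x}))\notin\mathcal{B}$, since $h_{\mathcal{B}}(x)\in f(h_{\mathcal{A}}(x))$ for every variable), and monotonicity of $\land$ and $\lor$ closes the induction. Your verification is shorter and avoids the DNF blow-up and the completion bookkeeping entirely; what the paper's heavier formulation buys is the homomorphism-composition viewpoint that it reuses elsewhere -- the same completion/canonical-structure machinery underlies Proposition~\ref{prop:EqFreeFoContainment} (the converse direction and the canonical sentence $\phi_{\!\mathcal{A}}^{\EqFreeFo}$) and the parallel Lemma~\ref{lemma:strategy:transfert:mylogic}, which the paper obtains from this proof verbatim by dropping fullness. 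Your argument specialises to \mylogic{} just as easily (simply omit the negative-literal case), so nothing essential is lost by your more elementary route.
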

\begin{proof}
  Let $h$ be a full surjective hyper-morphism from $\mathcal{A}$ to
  $\mathcal{B}$ and $\varphi$ be a sentence of $\EqFreeFo$ such that
  $\mathcal{A}\models \varphi$. 
  We fix an arbitrary linear order over $A$ and write $\min h^{-1}(b)$
  to denote the smallest antecedent of $b$ in $A$ under $h$. 
  
  Let $\{\sigma_x : \mbox{`$\exists x$'} \in \phi \}$ be a winning strategy in the 
  $(\mathcal{A},\phi)$-game. We construct a strategy $\{\sigma'_x : \mbox{`$\exists x$'} \in \phi \}$
  in the 
  $(\mathcal{B},\phi)$-game as follows.
  Let $\pi_B:Y_x\to B$ be an assignment to the universal variables $Y_x$
  preceding an existential variable $x$ in $\varphi$, we select for 
  $\sigma'_x(\pi)$ an arbitrary element of $h(\sigma(\pi_A))$ where
  $\pi_A:Y_x\to A$ is an assignment such that for any universal variable
  $y$ preceding $x$, we have $\pi_A(y):=\min h^{-1}(\pi_B(y))$. This strategy is well
  defined since $h$ is surjective (which means that $\pi_A$ is well
  defined) and total (which means that $h(\sigma(\pi_A))\neq
  \emptyset$). Note moreover that using $\min$ in the definition of
  $\pi_A$ means that a branch in the tree of the game on $\mathcal{B}$ will
  correspond to a branch in  the tree of the game on $\mathcal{A}$.
  It remains to prove that $\{\sigma'_x : \mbox{`$\exists x$'} \in \phi \}$ is winning. We will see
  that it follows from the fact that $h$ is full and preserving.

  We assume that negations have been pushed to the
  atomic level and write the quantifier-free part $\psi$ of $\phi$ in  disjunctive normal
  form as a disjunction of conjunctions-of-atoms $\psi_i$. If $\psi_i$
  has contradictory positive and negative atoms (as in $E(x,y)\land
  \lnot E(x,y)$) then we may discard the sentence $\psi_i$ as false.
  Moreover, for each pair of atoms $R(v_1,v_2,\ldots,v_r)$ and $\lnot
  R(v_1,v_2,\ldots,v_r)$ (induced by the choice of a relational symbol
  $R$ and the choice of $r$ variables 
  $v_1,v_2,\ldots,v_r$ occuring in
  $\psi_i$) such that neither  is present in $\psi_i$, we may replace $\psi_i$
  by the logically equivalent 
  $\bigl(\psi_i\land R(v_1,v_2,\ldots,v_r)\bigr)
  \lor
  \bigl(\psi_i\land \lnot R(v_1,v_2,\ldots,v_r)\bigr)$.
  After this completion process, note that every
  conjunction of atoms $\psi_i$ corresponds naturally to a structure
  $\mathcal{D}_{\psi_i}$ (take only the positive part of $\psi_i$ which is now maximal). 
  
  Assume first that $\psi$ is disjunction-free.
  The winning
  condition of the 
  $(\mathcal{B},\phi)$-game can be recast as a full homomorphism from
  $\mathcal{D}_\psi$ to $\mathcal{B}$. Composing with $h$ the full homomorphism from $\mathcal{D}_\psi$ to
  $\mathcal{A}$ (induced by the sequence of compatible assignments
  $\pi_A$ to the universal variables and the strategy
  $\{\sigma_x : \mbox{`$\exists x$'} \in \phi \}$), we get a full hyper-morphism from $\mathcal{D}_\psi$ to
  $\mathcal{B}$. The map from the domain of $\mathcal{D}_\psi$ to $\mathcal{B}$
  induced by the sequence of assignments $\pi_B$ and the strategy
  $\{\sigma'_x : \mbox{`$\exists x$'} \in \phi \}$ is a range restriction of this full
  hyper-morphism and is therefore a full homomorphism (we identify hyper-morphism to singletons with homomorphisms).
In general when the quantifier-free part of $\phi$ has several
  disjuncts $\psi_i$, most likely after the completion process of the previous paragraph, the winning condition can be recast as a full homomorphism
  from some $\mathcal{D}_{\psi_i}$. The above argument applies and the result follows.  
\end{proof}

We shall see that the converse of
Lemma~\ref{lemma:strategy:transfert:EqFreeFo} holds. Consequently,
it turns out that containment and equivalence coincide for \EqFreeFo,
since the inverse of a full surjective hyper-morphism is a full
surjective hyper-morphism.

For \EqFreeFo, we define an equivalence relation $\sim$ over the
structure elements in the spirit of the Leibnitz-rule for
equality. For propositions $P$ and $Q$, let $P \leftrightarrow Q$ be
an abbreviation for $(P \land Q) \lor (\lnot P \land \lnot Q)$.
For the sake of clarity, we deal with the case of digraphs first and write
$x \sim y$ as an abbreviation for $\forall z (E(x, z) \leftrightarrow
E(y, z)) \land (E(z, x) \leftrightarrow E(z, y))$.  
It is straightforward to verify that $\sim$ induces an equivalence
relation over the vertices (which we denote also by $\sim$). 
In general, for each $r$-ary symbol $R$, let  $\psi_R$ stands for
\begin{multline*}
\bigl(R(x,z_1,\ldots,z_{r-1})\leftrightarrow
R(y,z_1,\ldots,z_{r-1})\bigr)  
\land \bigl(R(z_1,x,z_2,\ldots,z_{r-1})\leftrightarrow
R(z_1,y,z_2,\ldots,z_{r-1})\bigr)\\ 
\land
\ldots
\land
\bigl(R(z_1,z_2,\ldots,z_{r-1},x)\leftrightarrow
R(z_1,z_2,\ldots,z_{r-1},y)\bigr).
\end{multline*}
We write
$x \sim y$ for $ \bigwedge_{R\in \sigma} \forall z_1,z_2,\ldots,
z_{r-1} \psi_R$.

We write $\mathcal{A}/\!\!\sim$ for the quotient structure defined in the
natural way. 
Note that there is a full surjective homomorphism from
$\mathcal{A}$ to $\mathcal{A}/\!\!\sim$. As observed earlier, its inverse
(viewing the homomorphism as an hyper-morphism) is a full surjective
hyper-morphism from $\mathcal{A}/\!\!\sim$ to $\mathcal{A}$. Thus, it
follows from Lemma~\ref{lemma:strategy:transfert:EqFreeFo} that
$\mathcal{A}$ and $\mathcal{A}/\!\!\sim$ are $\EqFreeFo$-equivalent.

Let $\phi^+_{\!\mathcal{A}}$ denotes the (quantifier-free) canonical conjunctive query
of $\mathcal{A}$ (denoted earlier as $\phi_{\!\mathcal{A}}$) and
$\phi^-_{\!\mathcal{A}}$ denotes the similar sentence which
lists the negative atoms of $\mathcal{A}$ instead of the positive atoms.
\begin{proposition}
  \label{prop:EqFreeFoContainment}
  Let $\mathcal{A}$ and $\mathcal{B}$ be two structures.
  The following are equivalent.
  \begin{romannum}
  \item For every sentence $\varphi$ in $\EqFreeFo$, if $\mathcal{A}\models \varphi$ then
    $\mathcal{B}\models \varphi$.
  \item There exists a full surjective hyper-morphism from
    $\mathcal{A}$ to $\mathcal{B}$.
  \item $\mathcal{B}\models \phi_{\!\mathcal{A}}^{\EqFreeFo}$ where
    $$\phi_{\!\mathcal{A}}^{\EqFreeFo}:=
    \exists v_1 \exists v_2 \ldots v_{|A|} 
    \phi^+_{\!\mathcal{A}} \land \phi^-_{\!\mathcal{A}}
    \land \forall w \bigvee_{1\leq i \leq |A|}  w\sim
    v_i.$$   
  \item for every sentence $\varphi$ in $\EqFreeFo$, $\mathcal{A}\models \varphi$ iff
    $\mathcal{B}\models \varphi$.
  \item $\mathcal{A}/\!\!\sim$ and $\mathcal{B}/\!\!\sim$ are isomorphic.
  \end{romannum}

\end{proposition}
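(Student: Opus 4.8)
The plan is to prove the cycle (ii) $\Rightarrow$ (i) $\Rightarrow$ (iii) $\Rightarrow$ (ii), then attach (iv) and (v). The implication (ii) $\Rightarrow$ (i) is precisely Lemma~\ref{lemma:strategy:transfert:EqFreeFo}, so no new argument is required. For (i) $\Rightarrow$ (iii), I would first note that $\phi_{\!\mathcal{A}}^{\EqFreeFo}$ is a genuine sentence of \EqFreeFo: the biconditionals packaged inside each $w \sim v_i$ and the negative atoms of $\phi^-_{\!\mathcal{A}}$ use only $\exists,\forall,\land,\lor,\lnot$. Then I would check $\mathcal{A} \models \phi_{\!\mathcal{A}}^{\EqFreeFo}$ directly, witnessing $v_i$ by $a_i$: the conjunct $\phi^+_{\!\mathcal{A}} \land \phi^-_{\!\mathcal{A}}$ holds by the very definition of these canonical formulae, and the final conjunct $\forall w \bigvee_i w \sim v_i$ holds because every element of $A$ is some $a_i$ and $\sim$ is reflexive. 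Applying (i) to this sentence yields (iii).

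The crux of the first block is (iii) $\Rightarrow$ (ii). Assuming $\mathcal{B} \models \phi_{\!\mathcal{A}}^{\EqFreeFo}$ witnessed by an assignment $v_i \mapsto b_i$, I would define a hyper-operation $f$ from $\mathcal{A}$ to $\mathcal{B}$ by taking $f(a_i)$ to be the $\sim$-class of $b_i$ in $\mathcal{B}$. Totality is immediate since $b_i \in f(a_i)$, and surjectivity is exactly the content of the universal conjunct $\forall w \bigvee_i w \sim v_i$. Preservation and fullness reduce, via the defining property of $\sim$ (membership of a tuple in a relation of $\mathcal{B}$ is unchanged when any coordinate is replaced by a $\sim$-equivalent element), to the single question of whether $R(b_{i_1},\ldots,b_{i_r})$ holds in $\mathcal{B}$; and the conjuncts $\phi^+_{\!\mathcal{A}}$ and $\phi^-_{\!\mathcal{A}}$ guarantee that this matches membership of $R(a_{i_1},\ldots,a_{i_r})$ in $\mathcal{A}$, so $f$ is a full surjective hyper-morphism. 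This closes the equivalence of (i), (ii) and (iii).

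To bring in (iv) I would argue (iv) $\Rightarrow$ (i) trivially (it is one direction of a biconditional) and (i) $\Rightarrow$ (iv) by noting that (i) gives (ii), whose inverse is again a full surjective hyper-morphism, this time from $\mathcal{B}$ to $\mathcal{A}$; a second application of Lemma~\ref{lemma:strategy:transfert:EqFreeFo} supplies the missing direction of containment. For (v) $\Rightarrow$ (iv) I would chain the already-established fact that each structure is \EqFreeFo-equivalent to its $\sim$-quotient with the observation that isomorphic structures satisfy the same first-order sentences: $\mathcal{A} \equiv \mathcal{A}/\!\!\sim \cong \mathcal{B}/\!\!\sim \equiv \mathcal{B}$ over \EqFreeFo.

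The main obstacle is (iv) $\Rightarrow$ (v). From (iv) and the quotient equivalence I obtain $\mathcal{A}/\!\!\sim$ and $\mathcal{B}/\!\!\sim$ \EqFreeFo-equivalent, hence by the already-proved (i) $\Leftrightarrow$ (ii) a full surjective hyper-morphism $g$ between them with full surjective inverse $g^{-1}$. The key lemma I must isolate is that a full surjective hyper-morphism into a $\sim$-reduced structure is necessarily single-valued: if $b, b' \in g(a)$, then fullness, combined with surjectivity to furnish $g$-preimages for the remaining coordinates, forces $R$ to treat $b$ and $b'$ identically in every relational position, so $b \sim b'$; but in a quotient $\sim$ is the identity, whence $b = b'$. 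Applying this to both $g$ and $g^{-1}$ shows each is a function, so $g$ is a bijective full homomorphism, i.e. an isomorphism, giving (v). I expect the careful, coordinate-by-coordinate verification of this single-valuedness lemma — in particular the appeal to surjectivity to fill in the non-distinguished arguments of $R$ — to be the most delicate step of the whole proof.
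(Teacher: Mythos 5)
Your proposal is correct and follows essentially the same route as the paper: the cycle (i) $\Rightarrow$ (iii) $\Rightarrow$ (ii) $\Rightarrow$ (i) via the canonical sentence, then (iv) from the fact that the inverse of a full surjective hyper-morphism is again one, and finally (v) via the $\sim$-quotients. Your explicit single-valuedness lemma for (iv) $\Rightarrow$ (v) --- that a full surjective hyper-morphism into a $\sim$-reduced structure must be a function, using surjectivity to supply the remaining coordinates --- is precisely the detail the paper leaves implicit when it asserts that the quotient $f/\!\!\sim$ of the hyper-morphism is a full surjective homomorphism and hence an isomorphism, so if anything your write-up is the more careful one at that step.
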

\begin{proof}
  The implication (i) to (iii) is clear since by construction $\mathcal{A}$ models the
  canonical sentence $\phi_{\!\mathcal{A}}^{\EqFreeFo}$.
  
  We prove that (iii) implies (ii). Assume that $\mathcal{B}\models
  \phi_{\!\mathcal{A}}^{\EqFreeFo}$. We construct a full and total surjective
  hyper-morphism $h$ as follows.
  Let $b_1, b_2, \ldots, b_{|A|}$ be witnesses in $B$ for $v_1, v_2,
  \ldots, v_{|A|}$. We set $h(a_i)\ni b_i$ for $1\leq i \leq |A|$ (totality).
  For each $b$ in $B$, we set the universal variable $w$ to $b$ and
  pick some $j$ such that $w\sim v_j$ holds and set $h(a_j)\ni b$ (surjectivity).
  By construction, $h$ is preserving and full.
  
  The implication (ii) to (i) is proved as Lemma~\ref{lemma:strategy:transfert:EqFreeFo}.

  The equivalence of (i), (ii), (iii) with (iv) follows from our earlier
  observation that the inverse $f^{-1}$ of a full surjective hyper-morphism
  $f$ from $\mathcal{A}$ to $\mathcal{B}$ is a full surjective
  hyper-morphism from $\mathcal{B}$ to $\mathcal{A}$. 

  To see that (v) implies (ii), compose the quotient map from
  $\mathcal{A}$ to $\mathcal{A}/\!\!\sim$ (which is a full surjective
  homomorphism) with the inverse of the quotient map from
  $\mathcal{B}$ to $\mathcal{B}/\!\!\sim$ (which is a full surjective
  hyper-morphism).

  For the direction (ii) to (v), the natural quotient $f/\!\!\sim$ of a full surjective
  hyper-morphism $f$ from $\mathcal{A}$ to $\mathcal{B}$ is a full
  surjective homomorphism. Since we deal with finite structures, it is
  an isomorphism and we are done.
\end{proof}

Note that no smaller structure can be $\EqFreeFo$-equivalent to
$\mathcal{A}':=\mathcal{A}/\!\!\sim$. Indeed, a full surjective 
hyper-morphism $f$ from a smaller structure $\mathcal{B}$ to $\mathcal{A}'$
would have to satisfy $\{a'_1,a'_2\}\subseteq f(b)$ for some $b$ in $B$ and
some distinct $a'_1,a'_2$ in $A'$. But this would imply that $a'_1\sim a'_2$ which is not
possible. Moreover, any structure that is $\EqFreeFo$-equivalent and of the same size as 
$\mathcal{A}'$ will be
isomorphic (a full surjective hyper-morphism must induce an isomorphism
by triviality of $\sim$ over $\mathcal{A}'$). 
Thus, $\mathcal{A}/\!\!\sim$ is the (up to isomorphism unique) $\EqFreeFo$-core of
$\mathcal{A}$.


\subsection{Containment for \mylogic}
\label{sec:containment-mylogic}

\begin{lemma}
  \label{lemma:strategy:transfert:mylogic}
  Let $\mathcal{A}$ and $\mathcal{B}$ be two structures such that
  there is a surjective hyper-morphism from $\mathcal{A}$ to
  $\mathcal{B}$. Then, for every sentence $\varphi$ in $\mylogic$, if
  $\mathcal{A}\models \varphi$ then $\mathcal{B}\models \varphi$.
\end{lemma}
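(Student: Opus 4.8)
The plan is to mirror the proof of Lemma~\ref{lemma:strategy:transfert:EqFreeFo} but with a significant simplification afforded by the absence of negation. Let $h$ be a surjective hyper-morphism from $\mathcal{A}$ to $\mathcal{B}$, fix a linear order on $A$, and let $\varphi$ be a sentence of $\mylogic$ with $\mathcal{A}\models\varphi$. Given a winning strategy $\{\sigma_x : \mbox{`$\exists x$'} \in \varphi \}$ for $\exists$ in the $(\mathcal{A},\varphi)$-game, I would build a strategy $\{\sigma'_x : \mbox{`$\exists x$'} \in \varphi \}$ in the $(\mathcal{B},\varphi)$-game exactly as before: given an assignment $\pi_B\colon Y_x\to B$ to the universal variables preceding $x$, pull it back to an assignment $\pi_A\colon Y_x\to A$ by setting $\pi_A(y):=\min h^{-1}(\pi_B(y))$, and then set $\sigma'_x(\pi_B)$ to be an arbitrary element of $h(\sigma_x(\pi_A))$. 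This is well defined precisely because $h$ is surjective (so each $h^{-1}(\pi_B(y))$ is nonempty and $\pi_A$ exists) and total (so $h(\sigma_x(\pi_A))\neq\emptyset$); the use of $\min$ guarantees that each play on $\mathcal{B}$ tracks a genuine play on $\mathcal{A}$.

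It remains to show $\{\sigma'_x : \mbox{`$\exists x$'} \in \varphi \}$ is winning. Here is where the proof diverges from, and is easier than, the equality-free case. Since $\mylogic$ has no negation, after pushing quantifiers outward I may write the quantifier-free part $\psi$ in disjunctive normal form as a disjunction of conjunctions-of-\emph{positive}-atoms $\psi_i$, with no completion process needed and no contradictory disjuncts to discard. For any fixed $\pi_B$, the resulting overall assignment $h_B$ on $\mathcal{B}$ is obtained from the corresponding overall assignment $h_A$ on $\mathcal{A}$ by following $h$: each variable value $h_B(v)$ lies in $h(h_A(v))$. Because $\{\sigma_x : \mbox{`$\exists x$'} \in \varphi \}$ wins on $\mathcal{A}$, some disjunct $\psi_i$ is satisfied by $h_A$, i.e.\ $h_A$ is a homomorphism from $\mathcal{D}_{\psi_i}$ to $\mathcal{A}$. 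Applying the preservation property of $h$ to each atom $R(v_1,\ldots,v_r)$ of $\psi_i$: since $R(h_A(v_1),\ldots,h_A(v_r))\in\mathcal{A}$ and $h_B(v_k)\in h(h_A(v_k))$ for each $k$, preservation gives $R(h_B(v_1),\ldots,h_B(v_r))\in\mathcal{B}$. Hence the \emph{same} disjunct $\psi_i$ is satisfied by $h_B$, so $h_B\models\psi$ and the strategy wins on that branch; as $\pi_B$ was arbitrary, it wins on all branches.

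The only subtle point — and the step I would treat most carefully — is the bookkeeping that ties together the branch correspondence and the choice of disjunct: I must check that for a fixed universal assignment $\pi_B$ on $\mathcal{B}$, the positive atoms verified on $\mathcal{A}$ by the pulled-back play are exactly the ones whose $h$-images I then use on $\mathcal{B}$, so that the disjunct $\psi_i$ witnessed on $\mathcal{A}$ transfers verbatim. This is precisely where fullness is \emph{not} required: only the preservation (forward) direction of the hyper-morphism is used, which is why a plain surjective hyper-morphism suffices for $\mylogic$ whereas \EqFreeFo{} demanded fullness to also transfer negative atoms. I do not anticipate any genuine obstacle beyond this; the argument is a streamlined instance of Lemma~\ref{lemma:strategy:transfert:EqFreeFo}, with the positive-atoms-only structure of $\mylogic$ removing the need for the completion step entirely.
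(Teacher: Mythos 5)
Your proof is correct and takes essentially the same approach as the paper: the paper's own proof simply states that the argument is that of Lemma~\ref{lemma:strategy:transfert:EqFreeFo} with atomic negation and fullness dropped, which is precisely the simplification you carry out, and your identification of where fullness becomes unnecessary (only forward preservation of positive atoms is used) is exactly the point the paper relies on. Your explicit atom-by-atom verification is just an unpacking of the paper's phrasing of the winning condition as a homomorphism from $\mathcal{D}_{\psi_i}$ composed with $h$ and then range-restricted.
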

\begin{proof}
  The proof is exactly the same as that of
  Lemma~\ref{lemma:strategy:transfert:EqFreeFo}, except that we no
  longer need to preserve atomic negation, and may drop the
  assumption of fullness. 
\end{proof}

We extend the notion of canonical conjunctive query of a structure $\mathcal{A}$. Given a tuple
of (not necessarily distinct) elements $\tuple{r}:=(r_1,\ldots,r_l)
\in A^l$, define the quantifier-free formula
$\phi_{\!\mathcal{A}(\tuple{r})}(v_1,\ldots,v_l)$ to be the conjunction
of the positive facts of $\tuple{r}$, where the variables
$v_1,\ldots,v_l$ correspond to the elements $r_1,\ldots,r_l$. That is,
$R(v_{\lambda_1},\ldots,v_{\lambda_i})$ appears as an atom in
$\phi_{\!\mathcal{A}(\tuple{r})}$ iff
$R(r_{\lambda_1},\ldots,r_{\lambda_i})$ holds in $\mathcal{A}$. 
When $\tuple{r}$ enumerates the elements of the structure
$\mathcal{A}$, this definition coincides with the usual definition of canonical
conjunctive query.
Note also that in this case there is a full homomorphism from the canonical database
$\mathcal{D}_{\!\phi_{\!\mathcal{A}(\tuple{r})}}$ to $\mathcal{A}$ given by
the map $v_{\lambda_i} \mapsto r_i$.

\begin{definition}[\cite{DBLP:journals/tocl/MadelaineM12}]
  Let $\mathcal{A}$ be a structure and $m>0$.
  Let $\tuple{r}$ be an enumeration of the elements of $\mathcal{A}$.
  $$\theta_{\!\mathcal{A},m}^{\mylogic}:=
  \exists v_1, \ldots, v_{|A|}
  \phi_{\!\mathcal{A}(\tuple{r})}(v_1,\ldots,v_{|A|})
  \land
  \forall w_1,\ldots,w_m 
  \bigvee_{\tuple{t} \in A^{m}}
  \phi_{\!\mathcal{A}(\tuple{r},\tuple{t})}(\tuple{v},\tuple{w}).
  $$
\end{definition}
Observe that $\mathcal{A}\models \theta_{\!\mathcal{A},m}^{\mylogic}$. Indeed, we
may take as  witness for the variables $\tuple{v}$ the corresponding enumeration $\tuple{r}$
of the elements of
$\mathcal{A}$; and, for any assignment $\tuple{t} \in A^{m}$ to the
universal variables $\tuple{w}$, it is clear that
$\mathcal{A}\models\phi_{\!\mathcal{A}(\tuple{r},\tuple{t})}(\tuple{r},\tuple{t})$
holds.
\begin{lemma}\label{lemma:canonical:sentence:mylogic}
  Let $\mathcal{A}$ and $\mathcal{B}$ be two structures.
  If $\mathcal{B}\models \theta_{\!\mathcal{A},|\mathcal{B}|}^{\mylogic}$ then
  there is a surjective hyper-morphism from $\mathcal{A}$ to $\mathcal{B}$. 
\end{lemma}
\begin{proof}
  Let $\tuple{b'}:=b'_1,\ldots,b'_{|A|}$ be witnesses for $v_1,\ldots,v_{|A|}$.
  Assume that an enumeration $\tuple{b}:=b_1,b_2,\ldots,b_{|B|}$ of the elements of
  $\mathcal{B}$  is chosen for the
  universal variables $w_1,\ldots w_{|\mathcal{B}|}$.
  Let $\tuple{t} \in A^{m}$ be the witness s.t. 
  $\mathcal{B}\models 
  \phi_{\!\mathcal{A}(\tuple{r})}(\tuple{b'})
  \land
  \phi_{\!\mathcal{A}(\tuple{r},\tuple{t})}(\tuple{b'},\tuple{b})
  $.
  
  Let $f$ be the map from the domain of $\mathcal{A}$ to the power set
  of that of $\mathcal{B}$ which is the union of the following two
  partial hyper-operations $h$ and $g$ (\textsl{i.e.} $f(a_i):=h(a_i)\cup g(a_i)$
  for any element $a_i$ of $\mathcal{A}$), which guarantee totality
  and surjectivity, respectively.
  \begin{itemize}
  \item  $h(a_i):=b'_i$ \hfill (totality.)
  \item  $g(t_i)\ni b_i$ \hfill (surjectivity.)
  \end{itemize}
  It remains to show that $f$ is preserving. This follows from
  $\mathcal{B}\models
  \phi_{\!\mathcal{A}(\tuple{r},\tuple{t})}(\tuple{b'},\tuple{b})$. 

  Let $R$ be a $r$-ary relational symbol such that
  $R(a_{i_1},\ldots,a_{i_r})$ holds in $\mathcal{A}$. Let $b''_{i_1}\in f(a_{i_1}), \ldots, b''_{i_r}\in
  f(a_r)$. We will show that $R(b''_{i_1},\ldots,b''_{i_r})$ holds in $\mathcal{B}$.
  Assume for clarity of the exposition and w.l.o.g. that from $i_1$ to $i_k$ the image is set according to
  $h$ and from $i_{k+1}$ to $i_r$ according to $g$:
  i.e. for $1\leq j \leq k$, $h(a_{i_j})=b'_{i_j}=b''_{i_j}$  and 
  for $k+1\leq j \leq r$, there is some $l_j$ such that
  $t_{l_j}=a_{i_j}$ and $g(t_{l_j})\ni b''_{i_j}=b_{l_j}$. 
  By definition of  $\mathcal{A}(\tuple{r},\tuple{t})$ the atom
  $R(v_{i_1},\ldots,v_{i_k},w_{l_{k+1}},\ldots,w_{r})$ appears  in  $\phi_{\!\mathcal{A}(\tuple{r},\tuple{t})}(\tuple{v},\tuple{w})$. 
  It follows from   $\mathcal{B}\models
  \phi_{\!\mathcal{A}(\tuple{r},\tuple{t})}(\tuple{b'},\tuple{b})$
  that $R(b''_{i_1},\ldots,b''_{i_r})$ holds in $\mathcal{B}$.
\end{proof}
\begin{theorem}
  \label{theo:containment:mylogic}
  Let $\mathcal{A}$ and $\mathcal{B}$ be two structures.
  The following are equivalent.
  \begin{romannum}
  \item For every sentence $\varphi$ in $\mylogic$, if $\mathcal{A}\models \varphi$ then
    $\mathcal{B}\models \varphi$.
  \item There exists a surjective hyper-morphism from $\mathcal{A}$ to $\mathcal{B}$.
  \item $\mathcal{B}\models \theta_{\mathcal{A},|B|}^{\mylogic}$.
  \end{romannum}
\end{theorem}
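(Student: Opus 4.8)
The plan is to prove the cycle of implications (i) $\Rightarrow$ (iii) $\Rightarrow$ (ii) $\Rightarrow$ (i), since each of these arrows is already essentially available from the lemmas assembled earlier in the excerpt. This mirrors exactly the structure used for Theorem~\ref{theorem:CSPcontainment} and Proposition~\ref{prop:EqFreeFoContainment}, so the theorem is really a matter of citing the right pieces in the right order rather than producing new combinatorial content.

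First I would dispatch (i) $\Rightarrow$ (iii). By the observation recorded immediately after the definition of $\theta_{\!\mathcal{A},m}^{\mylogic}$, we have $\mathcal{A}\models \theta_{\!\mathcal{A},m}^{\mylogic}$ for every $m>0$; in particular $\mathcal{A}\models \theta_{\!\mathcal{A},|B|}^{\mylogic}$. Since $\theta_{\!\mathcal{A},|B|}^{\mylogic}$ is a sentence of $\mylogic$ (it uses only $\exists,\forall,\land,\lor$, with no equality), the assumption (i) applied to this particular $\varphi$ gives $\mathcal{B}\models \theta_{\!\mathcal{A},|B|}^{\mylogic}$, which is exactly (iii). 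The one subtlety worth stating explicitly is that the universal block of $\theta_{\!\mathcal{A},m}^{\mylogic}$ must have arity $m=|B|$ to match the size of $\mathcal{B}$; choosing $m$ equal to the domain size of the target structure is precisely what makes the next implication go through.

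Next, (iii) $\Rightarrow$ (ii) is exactly the content of Lemma~\ref{lemma:canonical:sentence:mylogic}, applied with the universal arity set to $|B|$: from $\mathcal{B}\models \theta_{\!\mathcal{A},|B|}^{\mylogic}$ that lemma extracts a surjective hyper-morphism $f$ from $\mathcal{A}$ to $\mathcal{B}$, whose totality comes from the existential witnesses $\tuple{b'}$ and whose surjectivity comes from the universal assignment enumerating all of $B$. Finally, (ii) $\Rightarrow$ (i) is precisely Lemma~\ref{lemma:strategy:transfert:mylogic}, which transfers winning strategies through a surjective hyper-morphism (dropping the fullness and negation-preservation requirements that were needed in the $\EqFreeFo$ case). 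Stringing these three arrows together closes the cycle and establishes the equivalence.

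I do not expect a genuine obstacle here, since all three implications are quotations of earlier results; the only point demanding care is bookkeeping the quantifier arity. The statement fixes $m=|B|$ inside $\theta_{\mathcal{A},|B|}^{\mylogic}$, and both the forward direction (where $\mathcal{A}$ must satisfy its own canonical sentence \emph{for this value of $m$}) and the backward direction (where Lemma~\ref{lemma:canonical:sentence:mylogic} requires the universal block to range over enough variables to enumerate every element of $\mathcal{B}$) hinge on this choice being $|B|$ rather than an arbitrary $m$. Once that is pinned down, the proof is a one-line invocation of each lemma, so I would simply write out the cycle (i) $\Rightarrow$ (iii) $\Rightarrow$ (ii) $\Rightarrow$ (i) with the relevant references.
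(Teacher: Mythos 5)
Your proposal is correct and takes essentially the same approach as the paper: the paper's proof consists of exactly these three citations, namely (i)$\Rightarrow$(iii) from the observation that $\mathcal{A}\models\theta_{\mathcal{A},|B|}^{\mylogic}$ by construction, (ii)$\Rightarrow$(i) from Lemma~\ref{lemma:strategy:transfert:mylogic}, and the remaining implication from Lemma~\ref{lemma:canonical:sentence:mylogic}. If anything, your write-up is the cleaner one, since the paper states that last step as ``(iii) implies (i)'' --- evidently a typo, because Lemma~\ref{lemma:canonical:sentence:mylogic} delivers the surjective hyper-morphism of (ii), which is precisely how you close the cycle (i)$\Rightarrow$(iii)$\Rightarrow$(ii)$\Rightarrow$(i).
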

\begin{proof}
  By construction $\mathcal{A}\models \theta_{\mathcal{A},|B|}^{\mylogic}$, so
  (i) implies (iii). By Lemma~\ref{lemma:strategy:transfert:mylogic}, (ii) implies (i).
  By Lemma~\ref{lemma:canonical:sentence:mylogic}, (iii) implies (i).
\end{proof}

\subsection{A core for \mylogic}
\label{sec:cores}
The property of a (classical) core can be rephrased in the logical
context as the minimal $X=\widetilde{A} \subseteq A$ such that a
primitive positive sentence $\phi$ is true on $\mathcal{A}$ iff it is
true on $\mathcal{A}$ with the (existential) quantifiers relativised
to $X=\widetilde{A}$.  
Let us say in this case that $\mathcal{A}$ has
\emph{$X$-relativisation} with respect to \csplogic.

Thus, the notion of a core can be recast in the context of
\csplogic{} in a number of equivalent ways, as a
minimal induced substructure $\widetilde{\mathcal{A}}$ of $\mathcal{A}$,
\begin{romannum}
\item that satisfies the same \csplogic\ 
  sentences; 
\item that is induced by minimal $X\subseteq A$ such that $\mathcal{A}$ has
  $X$-relativisation \mbox{w.r.t.} \csplogic; or,  
\item that is induced by minimal $X\subseteq A$ such that
  $\mathcal{A}$ has an endomorphism with image $X$. 
\end{romannum}

We are looking for a useful characterisation of the
analogous concept of core for \mylogic.
As we now have both quantifiers, two
sets $U$ and $X$, one for each quantifier, will emerge naturally,
hence we will call this core a \emph{$U$-$X$-core}. 
As we shall see shortly, there are two equivalent ways of  defining a
$U$-$X$-core -- one is logical, the other algebraic -- as a minimal
substructure $\widetilde{\mathcal{A}}$ of $\mathcal{A}$, induced by
minimal $U, X\subseteq A$ such that: 
\begin{romannum}
  \setcounter{enumi}{1}
\item   $\mathcal{A}$ has \emph{$\forall U$-$\exists X$-relativisation}
  \mbox{w.r.t.} \mylogic; or,  
\item $\mathcal{A}$ has a $U$-surjective $X$-total hyper-endomorphism. 
\end{romannum}
Recall that a surjective hyper-endomorphism $f$ of $\mathcal{A}$ is \emph{$U$-surjective} if
$f(U)=A$ and \emph{$X$-total} if $f^{-1}(X)=A$.

We will show that the sets $U$ and $X$ are unique up to isomorphism
and that within a minimal induced substructure $\widetilde{\mathcal{A}}$, the
sets $U$ and $X$ are uniquely determined. This will reconcile our
definition of a  $U$-$X$-core with the following natural definition,
in which $U$ and $X$ are not explicit:  
\begin{romannum}
\item 
  as a minimal induced substructure $\widetilde{\mathcal{A}}$ of
  $\mathcal{A}$ that satisfies the same 
  sentences of \mylogic. 
\end{romannum}

\noindent In our definition of $\mylogic$-core, we ask for a minimal structure,
i.e. not necessarily an induced substructure. We shall see that it is
equivalent to the above.

\subsection{Relativisation}
\label{sec:relativisation}
Given a formula $\varphi$, we denote by $\varphi_{[\forall u /\forall u
  \in U,\exists x/\exists x \in X]}$ the formula obtained from
$\varphi$ by relativising simultaneously every universal
quantifier to $U$ and every existential quantifier to $X$. 
When we only relativise universal quantifiers to $U$, we write $\varphi_{[\forall u /\forall u
  \in U]}$, and when we only relativise existential quantifiers to $X$,
we write $\varphi_{[\exists x/\exists x \in X]}$. 

\begin{definition}\label{def:UXrelativisation}
  Let $\mathcal{A}$ be a finite structure over a set $A$, and
  $U,X$ be two subsets of $A$. 
  We say that $\mathcal{A}$ has \emph{$\forall U$-$\exists X$-relativisation} if, 
  for all sentences $\varphi$ in \mylogic\
  the following are equivalent
  \begin{enumerate}
  \item[$(i)$] $\mathcal{A}\models \varphi$
  \item[$(ii)$] $\mathcal{A}\models \varphi_{[\forall u /\forall u \in U]}$
  \item[$(iii)$] $\mathcal{A}\models \varphi_{[\exists x/\exists x \in X]}$
  \item[$(iv)$] $\mathcal{A}\models \varphi_{[\forall u /\forall u \in U,\exists x/\exists x \in X]}$
  \end{enumerate}
\end{definition}

\begin{lemma}\label{lemma:UXshop:To:UXrelativisation}
  Let $\mathcal{A}$ be a finite structure over a set $A$, and
  $U,X$ be two subsets of $A$.
  If $\mathcal{A}$ has a $U$-surjective $X$-total hyper-endomorphism 
  then $\mathcal{A}$ has $\forall U$-$\exists X$-relativisation.
\end{lemma}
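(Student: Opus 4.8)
The plan is to prove the four-way equivalence of Definition~\ref{def:UXrelativisation} by first reading off the easy monotonicity implications from the game semantics, thereby reducing everything to a single substantive implication, and then establishing that one via a strategy-transfer argument driven by the hyper-endomorphism $f$. Recall that $\mathcal{A}\models \varphi_{[\forall u /\forall u \in U,\exists x/\exists x \in X]}$ is precisely the assertion that $\exists$ has a winning strategy in the Hintikka $(\mathcal{A},\varphi)$-game under the constraint that universal moves lie in $U$ and existential moves lie in $X$, and analogously for the one-sided relativisations. Shrinking the set available to $\forall$ can only help $\exists$, while enlarging the set available to $\exists$ can only help $\exists$; hence restricting $\exists$ to $X$ is the strongest hypothesis and relativising only $\forall$ to $U$ the weakest. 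These monotonicities give directly $(iii)\Rightarrow(i)\Rightarrow(ii)$ and $(iii)\Rightarrow(iv)\Rightarrow(ii)$, so it suffices to close the cycle by proving $(ii)\Rightarrow(iii)$.

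For this implication I would start from a winning strategy $\{\sigma_x\}$ (indexed by the existential variables of $\varphi$) for $\exists$ in the $(\mathcal{A},\varphi_{[\forall u /\forall u \in U]})$-game, in which $\forall$ is confined to $U$ but $\exists$ ranges over all of $A$; note that each $\sigma_x$ has domain the functions $Y_x\to U$. I would fix once and for all a choice function $g\colon A\to U$ with $a\in f(g(a))$ for every $a\in A$ — available because $f(U)=A$ ($U$-surjectivity) — together with a choice function $k\colon A\to X$ with $k(a')\in f(a')\cap X$ — available because $f^{-1}(X)=A$ ($X$-totality). Given an assignment $\pi\colon Y_x\to A$ of the universal variables preceding $x$ in the $(\mathcal{A},\varphi_{[\exists x/\exists x \in X]})$-game, I push it down to $g\circ\pi\colon Y_x\to U$, feed this to the given strategy, and set $\tau_x(\pi):=k\bigl(\sigma_x(g\circ\pi)\bigr)\in X$. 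Since $g$ and $k$ are globally fixed, each $\tau_x$ depends only on $\pi$ and hence is a legitimate strategy in the relativised-$\exists$ game.

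It remains to verify that $\{\tau_x\}$ is winning, and this step reuses verbatim the preservation-plus-positivity machinery of Lemma~\ref{lemma:strategy:transfert:mylogic}. Fix any assignment of the universal variables to $A$ and let $h_B$ be the resulting total assignment (universal variables to their chosen values in $A$, existential variables via $\tau$); let $h_A$ be the shadow assignment in the relativised-$\forall$ game, sending a universal variable $z$ to $g(h_B(z))\in U$ and an existential variable $x$ to $\sigma_x(g\circ\pi)\in A$. By construction $h_B(v)\in f(h_A(v))$ for \emph{every} variable $v$: on universal variables this is the defining property of $g$, and on existential variables it is the defining property of $k$. Since $\{\sigma_x\}$ is winning, $h_A$ satisfies the quantifier-free part $\psi$ of $\varphi$; as $f$ is preserving, every atom true under $h_A$ is true under $h_B$, and since $\psi$ is positive (built only from $\land$ and $\lor$), this monotonicity forces $h_B\models\psi$ as well. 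Hence $\{\tau_x\}$ is winning, which yields $(ii)\Rightarrow(iii)$ and completes the cycle.

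The main obstacle here is bookkeeping rather than conceptual. One must fix the two choice functions globally so that the transferred strategy respects the information structure of the game (each $\tau_x$ seeing only the universal variables preceding $x$), and one must deploy the two hypotheses on the correct sides — $U$-surjectivity to reinterpret $\forall$'s free moves as moves inside $U$, and $X$-totality to deflect $\exists$'s moves into $X$ while keeping the value inside the image $f(\sigma_x(\cdot))$. Once the condition $h_B(v)\in f(h_A(v))$ is secured for all variables simultaneously, the winning condition transfers exactly as in the $\mylogic$ case of Lemma~\ref{lemma:strategy:transfert:mylogic}, and no fullness of $f$ is needed.
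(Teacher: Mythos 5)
Your proof is correct and follows essentially the same route as the paper: the identical monotonicity implications $(iii)\Rightarrow(i)\Rightarrow(ii)$ and $(iii)\Rightarrow(iv)\Rightarrow(ii)$ reduce the lemma to a strategy-transfer argument powered by the shop, with $U$-surjectivity pulling universal moves back into $U$ and $X$-totality deflecting existential moves into $X$, exactly as in the paper's appeal to Lemma~\ref{lemma:strategy:transfert:mylogic}. The only difference is bookkeeping: the paper closes the cycle by factoring through the unrelativised game, proving $(ii)\Rightarrow(i)$ and $(i)\Rightarrow(iii)$ with one property of the shop used in each step, whereas you prove $(ii)\Rightarrow(iii)$ in a single transfer using both properties at once --- an equally valid decomposition.
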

\begin{proof} Note that in Definition~\ref{def:UXrelativisation},
  we have $(iii)\Rightarrow (i)\Rightarrow (ii)$ and $(iii)\Rightarrow
  (iv)\Rightarrow (ii)$ trivially. It suffices to prove that
  $(ii)\Rightarrow (i)$ and  $(i)\Rightarrow (iii)$ to complete the
  proof. To do so, we will consider the well known Hintikka game corresponding to
  Case $(i)$, called the \emph{unrelativised game} hereafter; and, the
  relativised Hintikka games corresponding to the relativised formulae
  from Cases $(ii)$, $(iii)$ and $(iv)$ (the relativised game
  considered being clear from context).

  Let $h$ be a $U$-surjective $X$-total surjective hyper-endomorphism
  of $\mathcal{D}$.
  The proof follows the line of that of
  Lemma~\ref{lemma:strategy:transfert:mylogic}.
  
  ($(ii)\Rightarrow (i)$). Assume that we have a winning strategy in the
  universally relativised game. We produce a winning strategy in the unrelativised
  game using $h$. When taking the antecedent of a universal
  variable, we make sure to pick an antecedent in $U$ which we can do by
  $U$-surjectivity of $h$. To be more precise, the linear order
  over $\mathcal{A}$ used in the proof of
  Lemma~\ref{lemma:strategy:transfert:mylogic} starts with the
  elements of $U$.
  
  ($(i)\Rightarrow (iii)$). Assume that we have a winning strategy in the
  unrelativised game. We produce a winning strategy in the
  existentially relativised game using $h$.
  When taking the image of an existential variable, we no longer pick
  an arbitrary element but one in $X$, which we can do by $X$-totality
  of $h$. 
\end{proof}

\begin{proposition}\label{prop:dualityOfRelativisation}
  The following are equivalent.
  \begin{romannum}
  \item $\mathcal{A}$ has $\forall U$-$\exists X$-relativisation.
  \item $\overline{\mathcal{A}}$ has $\forall X$-$\exists U$-relativisation.
  \end{romannum}
\end{proposition}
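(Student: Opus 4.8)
The plan is to reduce everything to the self-duality of $\mylogic$ together with the duality principle of Proposition~\ref{prop:duality:principle}. The fragment $\{\exists,\forall,\land,\lor\}$ is closed under the de Morgan dual (which swaps $\exists$ with $\forall$ and $\land$ with $\lor$), so $\varphi\mapsto\overline{\varphi}$ is a bijection of $\mylogic$ onto itself; and the proof of Proposition~\ref{prop:duality:principle} supplies the key equivalence $\mathcal{D}\models\varphi\iff\overline{\mathcal{D}}\not\models\overline{\varphi}$ for every sentence $\varphi$ and structure $\mathcal{D}$. Since complementation of structures is involutive ($\overline{\overline{\mathcal{A}}}=\mathcal{A}$) and the statement is symmetric under swapping the pair $(U,\mathcal{A})$ with $(X,\overline{\mathcal{A}})$, it will suffice to prove the single implication $(i)\Rightarrow(ii)$; the converse then follows by applying that implication to $\overline{\mathcal{A}}$ with the roles of $U$ and $X$ interchanged.

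The heart of the argument is a compatibility lemma between relativisation and dualisation: dualising commutes with relativisation provided one simultaneously swaps the quantifier type but keeps the relativising set. Concretely, I would verify
\[
\overline{\varphi_{[\forall u /\forall u\in U]}}=(\overline{\varphi})_{[\exists x/\exists x\in U]},
\]
\[
\overline{\varphi_{[\exists x/\exists x\in X]}}=(\overline{\varphi})_{[\forall u/\forall u\in X]},
\]
\[
\overline{\varphi_{[\forall u /\forall u\in U,\exists x/\exists x\in X]}}=(\overline{\varphi})_{[\forall u/\forall u\in X,\exists x/\exists x\in U]}.
\]
These hold because a universal quantifier relativised to $U$ expands (in full first-order logic) to $\forall u\,(U(u)\rightarrow\cdots)$, whose de Morgan dual is $\exists u\,(U(u)\land\cdots)$, i.e. an existential quantifier relativised to the \emph{same} set $U$, and dually for existential quantifiers. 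As $\overline{\varphi}$ is formed by negating $\varphi$, pushing negations to the atomic level and then stripping them from the relation symbols, each $\forall$ of $\varphi$ becomes an $\exists$ of $\overline{\varphi}$ and vice versa, while the relativising set is carried along unchanged. Equivalently, this can be read off the Hintikka game: negation interchanges the two players but leaves untouched the sets to which their moves are confined.

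With these identities in hand I would instantiate Proposition~\ref{prop:duality:principle} at $\varphi$ and at its three relativisations. Writing $(i)$--$(iv)$ for the conditions of Definition~\ref{def:UXrelativisation} for $\mathcal{A}$ and $\varphi$, and $(i')$--$(iv')$ for the conditions for $\overline{\mathcal{A}}$, $\overline{\varphi}$ and the swapped sets $X,U$, the duality principle together with the compatibility lemma yields $(i')\Leftrightarrow\lnot(i)$, $(ii')\Leftrightarrow\lnot(iii)$, $(iii')\Leftrightarrow\lnot(ii)$ and $(iv')\Leftrightarrow\lnot(iv)$. Now the hypothesis that $\mathcal{A}$ has $\forall U$-$\exists X$-relativisation says exactly that $(i)$, $(ii)$, $(iii)$, $(iv)$ are pairwise equivalent for every $\varphi$; negating, the four negations are pairwise equivalent, and hence so are $(i')$, $(ii')$, $(iii')$, $(iv')$. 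Since $\overline{\varphi}$ ranges over all of $\mylogic$ as $\varphi$ does, this is precisely the assertion that $\overline{\mathcal{A}}$ has $\forall X$-$\exists U$-relativisation.

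The main obstacle I anticipate is making the compatibility identities fully rigorous: relativisation to a subset is treated semantically (via the confinement of a player's moves in the Hintikka game) rather than by a literal syntactic substitution inside the positive equality-free fragment, so one must argue carefully that the negation-pushing used to define $\overline{\varphi}$ interacts with this confinement exactly as the displayed equalities claim. Once that bookkeeping is settled, the chaining of equivalences is routine.
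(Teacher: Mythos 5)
Your proof is correct and takes essentially the same route as the paper's: both rest on the duality principle of Proposition~\ref{prop:duality:principle} combined with the key observation (your ``compatibility lemma'', used implicitly in the paper when it ``propagates the relativisation'') that dualisation turns a quantifier relativised to a set into the opposite quantifier relativised to the \emph{same} set. The only difference is organisational --- the paper proves the symmetric implication via a single chain of equivalences for the representative case $\varphi_{[\forall u/\forall u \in U]}$ and declares the remaining cases similar, whereas you translate all four conditions of Definition~\ref{def:UXrelativisation} at once --- so the two arguments coincide in substance.
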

\begin{proof}
  It suffices to prove one implication. We prove (ii) implies (i).
  Let $\varphi$ be a sentence of \mylogic.
  We use the duality principle and prove that $\mathcal{A}\models
  \varphi \iff \mathcal{A}\models \varphi_{[\forall u /\forall u \in
    U]}$. The other cases are similar and are omitted. 
  
  We follow the same notation as in
  Proposition~\ref{prop:duality:principle}: $\psi$ is 
  the sentence logically equivalent to $\lnot \varphi$ with negation
  pushed at the atomic level, and $\overline{\varphi}$ is the sentence
  obtained from $\psi$ by replacing every occurrence of a negative
  atom $\lnot R$ by $R$.
  Recall the following chain of equivalence.
  $$\mathcal{A}\models \varphi \iff 
  \mathcal{A} \models \lnot (\lnot \varphi) 
  \iff 
  \mathcal{A} \models \lnot (\psi)
  \iff
  \mathcal{A} \not\models \psi
  \iff
  \overline{\mathcal{A}} \not \models \overline{\varphi}.$$

  By assumption 
  $\overline{\mathcal{A}} \not \models \overline{\varphi} \iff
  \overline{\mathcal{A}}\not\models \overline{\varphi}_{[\exists u
      /\exists u \in U]}$. Using the above chain of
  equivalence backward and propagating the relativisation we obtain
  the following chain of equivalence.
  \begin{multline*}
  \overline{\mathcal{A}} \not \models \overline{\varphi}_{[\exists u
      /\exists u \in U]}.
  \iff
  \mathcal{A} \not\models \psi_{[\exists u /\exists u \in U]}
  \iff
  \mathcal{A} \models \lnot (\psi_{[\exists u /\exists u \in U]})\\
  \iff
  \mathcal{A} \models \lnot (\lnot \varphi_{[\forall u /\forall u \in
    U]}) 
  \iff
  \mathcal{A}\models \varphi_{[\forall u /\forall u \in U]}.
\end{multline*}
\end{proof}

\begin{lemma}\label{lemma:UXrelativisation:To:UXshop}
  Let $\mathcal{A}$ be a finite structure over a set $A$, and
  $U,X$ be two subsets of $A$.
  If $\mathcal{A}$ has $\forall U$-$\exists X$-relativisation then $\mathcal{A}$ has
  a $U$-surjective $X$-total hyper-endomorphism.
\end{lemma}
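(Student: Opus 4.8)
The plan is to prove the contrapositive of the previous lemma's converse: assuming $\mathcal{A}$ has $\forall U$-$\exists X$-relativisation, I will construct a $U$-surjective $X$-total hyper-endomorphism by exhibiting a suitable self-contained sentence of $\mylogic$ that $\mathcal{A}$ must satisfy, and then reading off the hyper-endomorphism from a winning strategy. The natural candidate is the canonical sentence $\theta_{\!\mathcal{A},|A|}^{\mylogic}$ from Theorem~\ref{theo:containment:mylogic}, since by construction $\mathcal{A}\models \theta_{\!\mathcal{A},|A|}^{\mylogic}$ and any structure satisfying it admits a surjective hyper-morphism from $\mathcal{A}$ (by Lemma~\ref{lemma:canonical:sentence:mylogic}). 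The idea is that relativisation lets me force the existential witnesses into $X$ and, dually, restrict the universal quantifiers to $U$, so that the resulting surjective hyper-endomorphism inherits $X$-totality and $U$-surjectivity.

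**The construction.** First I would apply the $\forall U$-$\exists X$-relativisation hypothesis (Definition~\ref{def:UXrelativisation}) to the sentence $\varphi := \theta_{\!\mathcal{A},|A|}^{\mylogic}$. Since $\mathcal{A}\models \varphi$, equivalence of $(i)$ and $(iv)$ gives $\mathcal{A}\models \varphi_{[\forall u /\forall u \in U,\exists x/\exists x \in X]}$. Concretely this means: there exist witnesses $\tuple{v}\in X^{|A|}$ for the existential block making $\phi^+_{\!\mathcal{A}(\tuple{r})}$ true, and for every choice $\tuple{w}\in U^{m}$ of the universal variables, some disjunct $\phi_{\!\mathcal{A}(\tuple{r},\tuple{t})}(\tuple{v},\tuple{w})$ holds. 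Following the proof of Lemma~\ref{lemma:canonical:sentence:mylogic}, I would read off a surjective hyper-endomorphism $f = h\cup g$ where the totality part $h$ sends each $a_i$ to its witness $v_i\in X$, and the surjectivity part $g$ covers the universally-chosen elements. Because the existential witnesses now lie in $X$, every element of $A$ has an $f$-image meeting $X$, i.e. $f^{-1}(X)=A$, giving $X$-totality; and because it suffices to range the universal variables over $U$ (so $g$ covers all of $A$ using antecedents drawn from $U$), one obtains $f(U)=A$, i.e. $U$-surjectivity. One must check that $|U|$ and $|X|$ suffice as the parameter $m$; here I would take $m=|A|$ (or enumerate $U$) exactly as the canonical sentence is set up, matching the enumeration used in Lemma~\ref{lemma:canonical:sentence:mylogic}.

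**The main obstacle.** The delicate point is verifying that the two relativisations can be extracted \emph{simultaneously} and that they attach to the correct components of the hyper-endomorphism. Existential relativisation controls the witnesses $\tuple{v}$ (hence the totality map $h$, yielding $X$-totality), while universal relativisation controls which assignments $\tuple{w}$ must be covered (hence the surjectivity map $g$, yielding $U$-surjectivity); one must ensure the enumeration $\tuple{t}$ chosen in the disjunct genuinely realises surjectivity onto all of $A$ using only universal assignments from $U$. A clean way to handle the universal side is to invoke Proposition~\ref{prop:dualityOfRelativisation}, which lets me dualise the existential argument rather than redo it: $\forall U$-$\exists X$-relativisation of $\mathcal{A}$ is equivalent to $\forall X$-$\exists U$-relativisation of $\overline{\mathcal{A}}$, so an $X$-total hyper-endomorphism obtained on $\overline{\mathcal{A}}$ by the existential-side argument inverts (using that the inverse of a shop is a shop, Lemma~\ref{lem:compositionAndUXshops}) to a $U$-surjective hyper-endomorphism on $\mathcal{A}$. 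I would then combine the $U$-surjective and $X$-total pieces; if a single $f$ does not immediately have both properties, composing an $X$-total shop with a $U$-surjective shop yields a $U$-$X$-shop by Lemma~\ref{lem:compositionAndUXshops}(iii), completing the proof.
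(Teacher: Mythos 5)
Your closing route (the ``clean way'' of your final paragraph) is exactly the paper's proof: apply existential relativisation to $\theta_{\!\mathcal{A},|A|}^{\mylogic}$ to read off an $X$-total shop of $\mathcal{A}$, then use Proposition~\ref{prop:dualityOfRelativisation} to run the same existential argument over $\overline{\mathcal{A}}$, invert the resulting shop to get a $U$-surjective shop of $\mathcal{A}$, and compose the two pieces via Lemma~\ref{lem:compositionAndUXshops}(iii). Two slips there: what the dual argument produces on $\overline{\mathcal{A}}$ is a \emph{$U$-total} shop (its existential quantifiers relativise to $U$), not an ``$X$-total'' one -- it is its inverse that is $U$-surjective, and one needs the observation that inverting a shop of $\overline{\mathcal{A}}$ yields a hyper-endomorphism of $\mathcal{A}$; also, ``the inverse of a shop is a shop'' is a remark in \S~\ref{sec:hyper-morphisms}, not part of Lemma~\ref{lem:compositionAndUXshops}.

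The direct construction in your second paragraph, however, does not work, and the reason is worth isolating. In the read-off of Lemma~\ref{lemma:canonical:sentence:mylogic}, the values assigned to the universal variables become the \emph{images} of the surjectivity part (one sets $g(t_i) \ni b_i$, where $b_i$ is the value of $w_i$ and the antecedent $t_i$ is the disjunct index, ranging over all of $A$). Relativising the universal quantifiers to $U$ therefore restricts which elements get \emph{covered} -- only those of $U$ -- and gives no control whatsoever over where their antecedents lie. So from the doubly relativised sentence $\varphi_{[\forall u /\forall u \in U,\exists x/\exists x \in X]}$ you obtain a total, $X$-total hyper-operation whose image is contained in $U \cup \{b'_1,\ldots,b'_{|A|}\} \subseteq U \cup X$: it need not be surjective at all (hence not a shop when $U \cup X \subsetneq A$), and $f(U)=A$ certainly does not follow; your parenthetical ``$g$ covers all of $A$ using antecedents drawn from $U$'' gets the direction backwards. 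This is precisely why the paper relativises \emph{only} the existential quantifiers (Definition~\ref{def:UXrelativisation}, $(i)\Rightarrow(iii)$), keeping the universals unrestricted so that the read-off shop is genuinely surjective, and then obtains $U$-surjectivity indirectly: $U$-surjectivity is $U$-totality of the inverse, totality is what existential relativisation controls, and inversion exchanges $\mathcal{A}$ with $\overline{\mathcal{A}}$ -- hence the detour through the complement. Your final paragraph is therefore not an optional tidying-up but the necessary repair; with it, the proof is complete and coincides with the paper's.
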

\begin{proof}
  Using the fact that the identity (defined as $i(x):=\{x\}$ for every
  $x$ in $\mathcal{A}$) is a surjective hyper-endomorphism of
  $\mathcal{A}$ and applying Theorem~\ref{theo:containment:mylogic},
  we derive that $\mathcal{A}\models\theta_{\mathcal{A},|A|}^{\mylogic}$.
  By assumption, we may equivalently relativise only the existential quantifiers to
  $X$ (Definition~\ref{def:UXrelativisation} $(i)\Rightarrow(iii)$) and
  $\mathcal{A} \models \theta^{\mylogic}_{\mathcal{A},|A|[\exists x/\exists x \in
    X]}$. Proceeding as in the proof of
  Lemma~\ref{lemma:strategy:transfert:mylogic} but over this
  relativised sentence, we derive the
  existence of an $X$-total surjective hyper-operation $g$.

  Using Proposition~\ref{prop:dualityOfRelativisation} and working
  over $\overline{\mathcal{A}}$, we derive similarly that
  $\overline{\mathcal{A}}$ has a $U$-total surjective hyper-operation.
  Let $f$ be the inverse of this hyper-operation. Observe that it is a
  $U$-surjective hyper-operation.

  By Lemma~\ref{lem:compositionAndUXshops}, the composition of these
  operations $g\circ f$ is a $X$-total $U$-surjective
  hyper-endomorphism as required. 
\end{proof}

Together, the two previous lemmata establish an algebraic
characterisation of relativisation. 
\begin{theorem}
  \label{relativization}
  Let $\mathcal{A}$ be a finite structure over a set $A$, and
  $U,X$ be two subsets of $A$. The following are equivalent.
  \begin{romannum}
  \item The structure $\mathcal{A}$ has $\forall U$-$\exists X$-relativisation.
  \item The structure $\mathcal{A}$ has a $X$-total $U$-surjective hyper-endomorphism.
  \end{romannum}
\end{theorem}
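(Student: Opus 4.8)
The plan is to observe that this final theorem is an immediate corollary of the two lemmas that precede it, so the main work has already been done; all that remains is to package the two implications together. Specifically, Theorem~\ref{relativization} asserts the equivalence of statement $(i)$, that $\mathcal{A}$ has $\forall U$-$\exists X$-relativisation, and statement $(ii)$, that $\mathcal{A}$ has an $X$-total $U$-surjective hyper-endomorphism. The forward direction $(ii) \Rightarrow (i)$ is precisely Lemma~\ref{lemma:UXshop:To:UXrelativisation}, and the reverse direction $(i) \Rightarrow (ii)$ is precisely Lemma~\ref{lemma:UXrelativisation:To:UXshop}.

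Thus the proof I would write is essentially a two-line citation: invoke Lemma~\ref{lemma:UXshop:To:UXrelativisation} for one implication and Lemma~\ref{lemma:UXrelativisation:To:UXshop} for the converse. Since the two statements are logically equivalent by these lemmas, the theorem follows. In \textbf{LaTeX}, the proof body would simply read: ``The implication $(ii)\Rightarrow (i)$ is Lemma~\ref{lemma:UXshop:To:UXrelativisation}, and the implication $(i)\Rightarrow (ii)$ is Lemma~\ref{lemma:UXrelativisation:To:UXshop}.'' (Note the minor cosmetic point that the theorem writes the property as ``$X$-total $U$-surjective'' whereas the lemmas write ``$U$-surjective $X$-total''; these name the same class of hyper-endomorphisms, so no reconciliation is needed.)

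Consequently there is no genuine mathematical obstacle at the level of this theorem itself -- the combinatorial content lives entirely inside the two lemmas. If I had to identify where the real difficulty sits, it is in the proof of Lemma~\ref{lemma:UXrelativisation:To:UXshop} (the direction from relativisation back to the existence of the hyper-endomorphism), which is the subtle half: one must first use the canonical sentence $\theta_{\mathcal{A},|A|}^{\mylogic}$ together with Theorem~\ref{theo:containment:mylogic} to extract an $X$-total surjective hyper-operation $g$ from the existential relativisation, then dualise via Proposition~\ref{prop:dualityOfRelativisation} working over $\overline{\mathcal{A}}$ to extract a $U$-surjective hyper-operation $f$, and finally compose them using Lemma~\ref{lem:compositionAndUXshops} to obtain a single hyper-endomorphism that is simultaneously $X$-total and $U$-surjective. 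But all of that labour is discharged before the theorem is stated, so for the theorem I would write nothing more than the pairing of the two lemmas.
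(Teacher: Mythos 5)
Your proposal is correct and matches the paper exactly: the paper states Theorem~\ref{relativization} as the immediate combination of Lemma~\ref{lemma:UXshop:To:UXrelativisation} (giving $(ii)\Rightarrow(i)$) and Lemma~\ref{lemma:UXrelativisation:To:UXshop} (giving $(i)\Rightarrow(ii)$), introduced by the sentence ``Together, the two previous lemmata establish an algebraic characterisation of relativisation,'' with no further argument. Your observation that the real work lies in Lemma~\ref{lemma:UXrelativisation:To:UXshop} is also consistent with how the paper distributes the effort.
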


\begin{corollary}
  \label{cor:UXretract}
  Let $\mathcal{A}$ be a finite structure that has a $U$-surjective
  $X$-total hyper-endomorphism.
  Let $\widetilde{\mathcal{A}}$ be the substructure of $\mathcal{A}$
  induced by $U\cup X$. The following holds.
  \begin{romannum}
  \item $\mathcal{A}$ and $\widetilde{\mathcal{A}}$ are \mylogic-equivalent.
  \item $\widetilde{\mathcal{A}}$ has $\forall U$-$\exists X$-relativisation.
  \end{romannum}
\end{corollary}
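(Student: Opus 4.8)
The plan is to obtain both parts from a single construction: the co-restriction of the given hyper-endomorphism of $\mathcal{A}$ to the domain $U\cup X$. Throughout, let $f$ denote the $U$-surjective $X$-total hyper-endomorphism of $\mathcal{A}$ guaranteed by the hypothesis, and recall that $\widetilde{\mathcal{A}}$ is \emph{induced}, so that for any tuple with all entries in $U\cup X$ a relation holds in $\widetilde{\mathcal{A}}$ exactly when it holds in $\mathcal{A}$. This last fact is what lets preservation pass back and forth for free.

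I would first prove (ii). Define $\hat{f}\colon U\cup X\to\mathcal{P}(U\cup X)$ by $\hat{f}(a):=f(a)\cap(U\cup X)$ and check that it is a $U$-surjective $X$-total hyper-endomorphism of $\widetilde{\mathcal{A}}$. Totality is the first place the hypotheses bite: since $f$ is $X$-total we have $f(a)\cap X\neq\emptyset$ for every $a$, and as $X\subseteq U\cup X$ this already forces $\hat{f}(a)\neq\emptyset$. For $U$-surjectivity, distributing intersection over union gives $\hat{f}(U)=f(U)\cap(U\cup X)=A\cap(U\cup X)=U\cup X$, using $f(U)=A$. For $X$-totality, $\hat{f}(a)\cap X=f(a)\cap X\neq\emptyset$ for each $a\in U\cup X$, again by $X$-totality of $f$. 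Preservation is inherited: if $R(a_1,\dots,a_i)$ holds in $\widetilde{\mathcal{A}}$ then it holds in $\mathcal{A}$, so $f$ being preserving gives $R(b_1,\dots,b_i)$ in $\mathcal{A}$ for all $b_j\in f(a_j)\supseteq\hat{f}(a_j)$, and since each such $b_j\in U\cup X$ this holds in $\widetilde{\mathcal{A}}$. Applying Theorem~\ref{relativization} to $\hat{f}$ then yields that $\widetilde{\mathcal{A}}$ has $\forall U$-$\exists X$-relativisation, which is (ii).

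For (i) I would exhibit surjective hyper-morphisms in both directions and invoke Theorem~\ref{theo:containment:mylogic}. In the direction $\mathcal{A}\to\widetilde{\mathcal{A}}$, the map $g\colon A\to\mathcal{P}(U\cup X)$ given by the same rule $g(a):=f(a)\cap(U\cup X)$ (now with domain all of $A$) is total by $X$-totality exactly as above, surjective because $g(A)=f(A)\cap(U\cup X)=U\cup X$ by surjectivity of $f$, and preserving by the induced-substructure argument; hence $\mathcal{A}$ is $\mylogic$-contained in $\widetilde{\mathcal{A}}$. In the opposite direction the plain restriction $f|_{U\cup X}$ (with codomain $A$) is a surjective hyper-morphism from $\widetilde{\mathcal{A}}$ to $\mathcal{A}$: it is total and preserving trivially, and surjective because $f(U)=A$ already forces its image to be all of $A$; hence $\widetilde{\mathcal{A}}$ is $\mylogic$-contained in $\mathcal{A}$. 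The two containments give $\mylogic$-equivalence, establishing (i).

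The routine calculations — the set-theoretic identities and the preservation checks — are all immediate, so I expect the only genuinely load-bearing observation to be that co-restricting the image to $U\cup X$ does not destroy totality. That is precisely what $X$-totality of $f$ buys us, since $X\subseteq U\cup X$ guarantees each image already meets $U\cup X$; without it, intersecting with $U\cup X$ could empty some images and both the hyper-endomorphism for (ii) and the hyper-morphism for (i) would fail. Everything else then follows formally from Theorems~\ref{theo:containment:mylogic} and~\ref{relativization} once these three hyper-maps are in hand.
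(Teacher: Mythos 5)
Your proposal is correct, and most of it --- part (ii) and the containment of $\mathcal{A}$ in $\widetilde{\mathcal{A}}$ in part (i) --- is exactly the paper's proof: the paper likewise forms the range restriction $g(a):=f(a)\cap(U\cup X)$, feeds its restriction to $\widetilde{A}$ into Lemma~\ref{lemma:UXshop:To:UXrelativisation} for (ii), and applies Lemma~\ref{lemma:strategy:transfert:mylogic} to $g$ for one half of (i). The genuine divergence is the other half of (i). The paper takes the inverse $g^{-1}$ and asserts it is a surjective hyper-morphism from $\widetilde{\mathcal{A}}$ to $\mathcal{A}$ ``by $X$-totality of $f$''; you instead take the domain restriction $f|_{U\cup X}$ with codomain $A$. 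Your choice is the more careful one. $X$-totality of $f$ does make $g^{-1}$ total and surjective, but nothing makes it preserving: the paper establishes closure under inverses only for \emph{full} surjective hyper-morphisms, and for merely preserving ones it can fail. For instance, let $A=\{u,x,z\}$ with the single tuple $E(x,x)$, $U=\{u\}$, $X=\{x\}$, and $f(u)=\{u,x,z\}$, $f(x)=f(z)=\{x\}$; this $f$ satisfies all hypotheses of the corollary, yet $g^{-1}(x)=\{u,x,z\}$ while $E(u,u)$ fails in $\mathcal{A}$, so $g^{-1}$ does not preserve $E$. Your map $f|_{U\cup X}$ sidesteps this entirely: preservation is inherited because $\widetilde{\mathcal{A}}$ is induced and $f$ preserves $\mathcal{A}$, and surjectivity comes from $f(U)=A$ alone. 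So besides proving the corollary, your route quietly repairs the one step of the paper's own argument that is stated too quickly.
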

\begin{proof}
  Let $f$ be the $U$-surjective $X$-total hyper-endomorphism of
  $\mathcal{A}$. Its range restriction $g$ to $\widetilde{A}=U\cup X$ is a
  surjective hyper-morphism from $\mathcal{A}$ to
  $\widetilde{\mathcal{A}}$. The inverse $g^{-1}$ of $g$ is a
  surjective hyper-morphism from $\widetilde{\mathcal{A}}$ to
  $\mathcal{A}$, by $X$-totality of $f$.
  Appealing to Lemma~\ref{lemma:strategy:transfert:mylogic} twice,
  once with $g$ and once with $g^{-1}$, we obtain (i).

  The restriction of $g$ to $\widetilde{A}$ is a $U$-surjective
  $X$-total hyper-endomorphism of $\widetilde{\mathcal{A}}$, and (ii)
  follows from Lemma~\ref{lemma:UXshop:To:UXrelativisation}.
\end{proof}

\subsection{The $U$-$X$ Core}
\label{sec:u-x-core}

Given a structure $\mathcal{D}$, we consider all minimal subsets $X$ of $D$ such that there is an
$X$-total surjective hyper-endomorphism $g$ of $\mathcal{D}$, and all minimal subsets $U$
such that there is a 
$U$-surjective hyper-endomorphism $f$ of $\mathcal{D}$.
Such sets always exist as totality and surjectivity of surjective hyper-endomorphisms mean that
in the worst case we may choose $U=X=D$. 
Since $g\circ f$ is a $X$-total $U$-surjective hyper-endomorphisms of
$\mathcal{D}$ by Lemma~\ref{lem:compositionAndUXshops}, we may furthermore require that among all minimal sets satisfying
the above, we choose a set $U$ and a set $X$ with $U\cap X$ maximal.
Let $\widetilde{\mathcal{D}}$ be the substructure of $\mathcal{D}$
induced by $U\cup X$. We call $\widetilde{\mathcal{D}}$ a
\emph{$U$-$X$-core} of $\mathcal{D}$.

\begin{remark}
  Assume that there is an $X_1$-shop $h_1$ and an $X_2$-shop $h_2$
  that preserves $\mathcal{D}$ such that $|X_1|>|X_2|$. We consider images of $h_1\circ h_2$.
  For each element in $X_2$, pick a single element $x'_1$ of $X_1$ in
  $h_1(X_2)\cap X_1$ such that $x'_1 \in h_1(x_2)$. Let $X'_1$ denote the set of picked elements.
  Since $|X_1|>|X_2|$ then $h_1 \circ h_2$ is an $X'_1$-shop that preserves $\mathcal{D}$
  with $|X'_1|\leq |X_2|$.
  Diagrammatically, this can be written as,
  \begin{equation*}
    D \xrightarrow{h_2} X_2 \xrightarrow{h_1} X'_1 \subseteq h_1(X_2)\cap X_1 \subseteq X_1 \subseteq h_1 \circ h_2 (D).
  \end{equation*}
 
  This means that we may look for an  $X$-shop where
  the set $X$ is minimal with respect to inclusion, or equivalently, for a set
  with minimal size $|X|$.  So, in order to find an $X$-shop with
  a minimal set $|X|$, we may proceed greedily, removing elements from
  $D$ while we have an $X$-shop until we obtain a set $X$ such that
  there is no $X'$-shop for $X'\subsetneq X$.
  The dual argument applies to $U$-shops, and consequently to
  $U$-$X$-shops.

   This further explains why minimising $U$ and $X$, and then maximising their intersection, necessarily leads to a minimal $\widetilde{D}:=U \cup X$ also. Because, would we find $U' \cup X'$ of smaller size, we might look within $U'$ and $X'$ for potentially smaller sets of cardinality $|U|$ and $|X|$, thus contradicting minimality. 
\end{remark}

Note that the sets $U$ and $X$ are not necessarily unique.
However, as we shall see later \emph{the} $U$-$X$-core is unique up to
isomorphism (see Theorem~\ref{thm:unicity}). Moreover, within $\widetilde{\mathcal{D}}$, the sets $U$
and $X$ are uniquely determined. We delay until later the
proof of this second result (see Theorem~\ref{theo:uniquenessUXWhenFullDomain}).

\subsection{Uniqueness of the $U$-$X$-core}
\label{sec:uniqueness-u-x}
Throughout this section, let $\mathcal{D}$ be a finite structure and $\mathcal{M}$ its
associated DSM; i.e. $\mathcal{M}$ is the set of surjective
hyper-endomorphisms of $\mathcal{D}$. Let $U$ and $X$
be subsets of $D$ such that the substructure $\widetilde{D}$ of
$\mathcal{D}$ induced by $\widetilde{D}=U\cup X$ is a $U$-$X$-core of
$\mathcal{D}$. We will progress through various lemmata and
eventually derive the existence of a canonical $U$-$X$-shop in
$\mathcal{M}$ which will be used to prove that the $U$-$X$-core is
unique up to isomorphism. Uniqueness of the $U$-$X$-core has no real bearing on our
classification program but the canonical shop will allow us to
characterise all other shops in $\mathcal{M}$, which will be instrumental in
the hardness proofs for $\mylogic(\mathcal{D})$.

\begin{lemma}
  \label{lem:AtMostOneUInImage}
  Let $f$ be a shop in $\mathcal{M}$.
  For any element $z$ in $D$, $f(z)$ contains at most one element of
  the set $U$, that is $|f(z)\cap U|\leq 1$.
\end{lemma}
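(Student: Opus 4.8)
The plan is to argue by contradiction using the minimality of $U$ built into the $U$-$X$-core. Suppose some $f\in\mathcal{M}$ and some $z\in D$ satisfy $u_1,u_2\in f(z)$ with $u_1\neq u_2$ and $u_1,u_2\in U$. Since $U$ is minimal such that $\mathcal{D}$ carries a $U$-surjective shop, it suffices to manufacture a shop $q\in\mathcal{M}$ that is $U'$-surjective for some $U'\subsetneq U$. First I would fix a $U$-shop $p$ (so $p(U)=D$) and pass from $f$ to the composite $g:=f\circ p$, which lies in $\mathcal{M}$ and is again a $U$-shop by Lemma~\ref{lem:compositionAndUXshops}(i), since $g(U)=f(p(U))=f(D)=D$. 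Choosing $u_0\in U$ with $z\in p(u_0)$ (possible because $p(U)=D$), we get $g(u_0)=f(p(u_0))\supseteq f(z)\supseteq\{u_1,u_2\}$. Thus the general hypothesis is reduced to the cleaner situation of a $U$-shop $g\in\mathcal{M}$ two of whose $U$-values, $u_1$ and $u_2$, occur in the image of a single $U$-element $u_0$. (Note I deliberately avoid invoking $f^{-1}\in\mathcal{M}$: the inverse of a preserving hyper-operation need not be preserving, so only compositions of members of $\mathcal{M}$ and down-closure under sub-shops are available.)

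The decisive step is to exploit idempotency. Because $D$ is finite, some iterate $e:=g^{m}$ is idempotent, and it is still a $U$-shop by repeated application of Lemma~\ref{lem:compositionAndUXshops}. For an idempotent shop one has the containment principle
$$ b\in e(a)\ \Longrightarrow\ e(b)\subseteq e(e(a))=e(a). $$
Applied with $a=u_0$ and $b=u_2$ (assuming the two witnesses survive into $e$, see below) this yields $e(u_2)\subseteq e(u_0)$, whence
$$ e(U\setminus\{u_2\})\ \supseteq\ e(u_0)\cup\!\!\bigcup_{u\in U\setminus\{u_0,u_2\}}\!\! e(u)\ \supseteq\ e(u_2)\cup\!\!\bigcup_{u\in U\setminus\{u_0,u_2\}}\!\! e(u)\ =\ e(U)=D, $$
so $e$ is a $(U\setminus\{u_2\})$-surjective shop in $\mathcal{M}$, contradicting the minimality of $U$. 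The same computation, run in the abstract, gives the useful auxiliary fact that \emph{every} idempotent $U$-shop $e$ satisfies $e(u)\cap U\subseteq\{u\}$ (and in fact $e(u)\cap U=\{u\}$, so $e$ is reflexive on $U$, using $e(U)=D\supseteq U$); this is precisely the statement of the lemma for idempotent $U$-shops, and it is entirely driven by minimality.

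The step I expect to be the main obstacle is the bridge between the general $g$ and an idempotent shop for which the containment principle can be invoked: passing to the idempotent power $e=g^{m}$ a priori threatens to destroy the property that two distinct elements of $U$ lie in a single $U$-image, since the action of $g$ on $U$ is not closed (composites route through arbitrary elements of $D$, not just through $U$). The care needed here is to track the witnesses $u_1,u_2,u_0$ through the cyclic monoid $\langle g\rangle$; the reflexivity of the idempotent on $U$ (itself a consequence of minimality) lets one show $\{u_1,u_2\}\subseteq g^{m+1}(u_0)$, and the remaining work is to arrange an exponent at which this image still coincides with an idempotent, handling the period of $g$ correctly. This is the delicate point of the argument, and is where one must either select the idempotent judiciously or combine the containment principle with reflexivity rather than applying it naively to $g^{m}$ alone.
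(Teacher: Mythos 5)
You are right that the proof must contradict the minimality of $U$, and the first half of your argument is sound: $g:=f\circ p$ lies in $\mathcal{M}$, is a $U$-shop, and satisfies $\{u_1,u_2\}\subseteq g(u_0)$ for some $u_0\in U$; your auxiliary fact is also correct, in that minimality forces every \emph{idempotent} $U$-shop $e\in\mathcal{M}$ to satisfy $e(u)\cap U=\{u\}$ for all $u\in U$. (Note a slip in your displayed chain: its final equality asserts $e(U\setminus\{u_0\})=e(U)$, which is unjustified; the repair is to keep $e(u_0)$ in the union and observe that $e(u_2)\subseteq e(u_0)$ gives $e(U\setminus\{u_2\})=e(U)=D$ outright.) The genuine gap is exactly the bridge you flag, and the repair you sketch cannot be completed. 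Let $q$ be the period of the finite cyclic semigroup $\langle g\rangle$, whose unique idempotent is $e=g^m$ with $q$ dividing $m$. Propagating witnesses by composing with $e$ on either side (which is all that reflexivity of $e$ on $U$ buys you) certifies $\{u_1,u_2\}\subseteq g^t(u_0)$ only for exponents $t=1+km$, i.e.\ only for $t\equiv 1\pmod{q}$; but $g^t$ is idempotent only when $t\equiv 0\pmod{q}$ (and $t$ is at least the index). These residues are disjoint unless $q=1$, so no exponent you can certify is ever idempotent. Nor is there a ``judicious selection'' to be made: $\langle g\rangle$ has a single idempotent, and your own auxiliary fact shows that \emph{no} idempotent anywhere in $\mathcal{M}$ can carry the witness configuration. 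So the missing implication --- witnesses at $g$ force witnesses at $e$ --- is not a lemma one could hope to establish on its own; under the standing hypotheses it is exactly the absurdity the whole proof is chasing.

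The detour is also unnecessary: the configuration you reached before it closes in two lines, by the counting argument the paper itself uses. Since $g$ is surjective, each $u\in U\setminus\{u_1,u_2\}$ has an antecedent $w_u$ under $g$. Put $U'':=\{u_0\}\cup\{w_u : u\in U\setminus\{u_1,u_2\}\}$, so that $|U''|\leq |U|-1$ and $g(U'')\supseteq U$; then $(p\circ g)(U'')\supseteq p(U)=D$, so $p\circ g$ is a $U''$-shop in $\mathcal{M}$, contradicting the minimality of $U$. This uses only the tools you permitted yourself (composition within $\mathcal{M}$ and surjectivity; no inverses). The paper runs the same trick directly on $f$, without your preliminary reduction: choose antecedents $z_3,z_4,\ldots\in D$ (not necessarily in $U$) of the remaining elements $u_3,u_4,\ldots$ of $U$ under $f$, set $U':=\{z,z_3,z_4,\ldots\}$, and observe that $f(U')\supseteq U$, whence $(p\circ f)(U')\supseteq p(U)=D$ and $p\circ f$ is a $U'$-shop with $|U'|<|U|$. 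In short: surjectivity already hands you the antecedents, and antecedents plus one composition with a $U$-shop are all the lemma needs.
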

\begin{proof}
  Assume for contradiction that there is some $z$ and some distinct
  elements $u_1$ and $u_2$ of $U$ such that $f(z)\supseteq
  \{u_1,u_2\}$. Let $z_3,z_4,\ldots $ be any choice of antecedents
  under $f$ of the remaining elements $u_3,u_4,\ldots$ of $U$ (recall
  that $f$ is surjective). 
  By assumption the monoid $\mathcal{M}$ contains a $U$-shop $g$.
  Hence, $g\circ f$ would be a $U'$-shop with
  $U'=\{z,z_3,z_4,\ldots\}$ since $f(U')\subseteq U$ and $g(U)=D$.
  We get a contradiction as $|U'|<|U|$.
\end{proof}
\begin{lemma}
  \label{lem:UshopPermutesU}
  Let $f$ be a $U$-shop in $\mathcal{M}$.
  There exists a permutation $\alpha$ of $U$ such that: for any $u$ in $U$,
  \begin{romannum}
  \item $f(u)\cap U = \{\alpha(u)\}$; and,
  \item $f^{-1}(u)\cap U = \{\alpha^{-1}(u)\}$.
  \end{romannum}
\end{lemma}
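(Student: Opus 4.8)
The plan is to read the permutation $\alpha$ directly off the hyper-operation $f$ and then verify the two conditions by a finite counting argument combined with Lemma~\ref{lem:AtMostOneUInImage}. The starting observation is that, since $f$ is a $U$-shop, we have $f(U)=D$, and in particular $U\subseteq f(U)=\bigcup_{u\in U}f(u)$; thus every element of $U$ lies in $f(u)$ for at least one $u\in U$. On the other hand, Lemma~\ref{lem:AtMostOneUInImage} guarantees $|f(u)\cap U|\leq 1$ for each $u$. This dictates defining a partial map $\alpha$ on $U$ by letting $\alpha(u)$ be the unique element of $f(u)\cap U$ whenever this set is non-empty.

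First I would show that $\alpha$ is in fact total and a permutation of $U$. Let $S:=\{u\in U: f(u)\cap U\neq\emptyset\}$ be the domain of $\alpha$, so that $\alpha\colon S\to U$ is a well-defined function by Lemma~\ref{lem:AtMostOneUInImage}. The inclusion $U\subseteq\bigcup_{u\in U}f(u)$ shows that for every $u'\in U$ there is some $u\in U$ with $u'\in f(u)$; then $u'\in f(u)\cap U$, so this intersection is exactly $\{u'\}$, whence $u\in S$ and $\alpha(u)=u'$. Hence $\alpha(S)=U$, i.e. $\alpha$ is surjective onto $U$. Since $S\subseteq U$ with $U$ finite, surjectivity forces $|U|=|\alpha(S)|\leq|S|\leq|U|$, so $S=U$, and a surjection between finite sets of equal size is a bijection. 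Thus $\alpha$ is a permutation of $U$, establishing (i): $f(u)\cap U=\{\alpha(u)\}$ for every $u\in U$.

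Part (ii) then follows purely by unwinding the definition of the inverse. By definition $f^{-1}(u)=\{a\in D: u\in f(a)\}$, so $f^{-1}(u)\cap U=\{u'\in U: u\in f(u')\}$. For $u'\in U$ the condition $u\in f(u')$ is precisely the statement $u\in f(u')\cap U$, which by (i) holds exactly when $\alpha(u')=u$, that is when $u'=\alpha^{-1}(u)$. Hence $f^{-1}(u)\cap U=\{\alpha^{-1}(u)\}$, as required.

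I do not anticipate a genuine obstacle here: the essential content is a pigeonhole argument resting on the minimality of $U$, which has already been packaged into Lemma~\ref{lem:AtMostOneUInImage}. The only point requiring slight care is the implicit appeal to finiteness of $U$ to promote the surjection $\alpha\colon S\to U$ to a bijection and to conclude $S=U$; everything else is routine bookkeeping with the definitions of $U$-surjectivity and of $f^{-1}$.
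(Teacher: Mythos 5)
Your proof is correct and follows essentially the same route as the paper's: both rest on Lemma~\ref{lem:AtMostOneUInImage} (at most one image in $U$), the $U$-surjectivity of $f$ (every element of $U$ has an antecedent in $U$), and a finite pigeonhole/counting argument to force $|f(u)\cap U|=1$ for all $u\in U$, from which the permutation $\alpha$ and part (ii) follow. The paper's version is terser — it leaves the definition of $\alpha$ and the verification of (ii) implicit — whereas you spell out the counting via the set $S$ and unwind the inverse explicitly, but the mathematical content is identical.
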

\begin{proof}
  It follows from Lemma~\ref{lem:AtMostOneUInImage} that 
  for any $u$ in $U$, $|f(u)\cap U|\leq 1$.
  Since $f$ is a $U$-shop, every element in $D$ has an antecedent in $U$
  under $f$ and thus in particular for any $u$ in $U$, $|f^{-1}(u)\cap
  U|\geq 1$. Note that if some element of $U$ had no image in $U$ then
  as $U$ is finite, we would have an element of $U$ with two distinct
  images in $U$. Hence, for any $u$ in $U$, $|f(u)\cap U|= 1$ and
  the result follows.
\end{proof}

The dual statements concerning $X$-shops hold.
\begin{lemma}
  \label{lem:AtMostOneXInPreImage}
  Let $f$ be a shop in $\mathcal{M}$.
  for any element $z$ in $D$, $f^{-1}(z)$ contains at most one element of
  the set $X$, that is $|f^{-1}(z)\cap X|\leq 1$.
\end{lemma}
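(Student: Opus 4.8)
The plan is to re-run the proof of Lemma~\ref{lem:AtMostOneUInImage} dually, swapping the roles of surjectivity and totality (and of $U$-surjectivity and $X$-totality). I would \emph{not} try to deduce the statement by literally inverting $f$: although the inverse of a shop is a shop, $\mathcal{M}$ need not be closed under inverse, since preservation of the relations of $\mathcal{D}$ is not in general inherited by $f^{-1}$ (one checks that $f^{-1}$ preserves $\overline{\mathcal{D}}$ rather than $\mathcal{D}$). So the earlier lemma cannot be transported verbatim, and I would instead mirror its counting argument on the opposite side of a composition.

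Concretely, suppose for contradiction that some $z \in D$ admits two distinct $x_1, x_2 \in X$ with $x_1, x_2 \in f^{-1}(z)$, i.e. $z \in f(x_1) \cap f(x_2)$; write $X = \{x_1, \ldots, x_k\}$. Dualising the use of surjectivity of $f$ (which in the primal proof supplied antecedents), I would use \emph{totality} of $f$ to pick, for each remaining $x_i$ with $3 \le i \le k$, some image $z_i \in f(x_i)$, and set $X' := \{z, z_3, \ldots, z_k\}$, so that $|X'| \le k-1 < |X|$; the saving of one element comes exactly from $x_1$ and $x_2$ sharing the witness $z$. The crux is then to exhibit an $X'$-total shop in $\mathcal{M}$. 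Let $g \in \mathcal{M}$ be the $X$-total shop furnished by the construction of the $U$-$X$-core, and consider $f \circ g$ — the composition is on the side \emph{opposite} to the primal proof, reflecting that $X$-totality wants the $X$-shop innermost whereas $U$-surjectivity wanted the $U$-shop outermost. By construction every $x \in X$ satisfies $f(x) \cap X' \neq \emptyset$ (for $x \in \{x_1, x_2\}$ because $z \in f(x) \cap X'$, and for $x = x_i$ because $z_i \in f(x_i) \cap X'$). Hence for any $a \in D$, $X$-totality of $g$ yields some $x \in g(a) \cap X$, so $(f \circ g)(a) = f(g(a)) \supseteq f(x)$ meets $X'$; thus $f \circ g$ is $X'$-total. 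Since $\mathcal{M}$ is closed under composition, $f \circ g \in \mathcal{M}$, giving an $X'$-total shop with $|X'| < |X|$, which contradicts the minimality of $X$ (recalling from the preceding remark that minimality with respect to inclusion coincides with minimality of $|X|$).

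I expect the only genuine point of care — and the step that is easy to get wrong — to be matching the composition side with the surjectivity/totality swap: one must verify directly that $f \circ g$ (not $g \circ f$) is the composition inheriting $X'$-totality, rather than blindly invoking the symmetric form of Lemma~\ref{lem:compositionAndUXshops}. Once that matching is fixed, the cardinality bookkeeping is identical to the primal lemma and the argument closes.
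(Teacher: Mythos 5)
Your proof is correct, but it is not the paper's proof: the paper disposes of this lemma in one line, ``by duality from Lemma~\ref{lem:AtMostOneUInImage}''. The duality it invokes is precisely the detour you dismissed: you are right that $\mathcal{M}$ need not be closed under inverses, so one cannot apply Lemma~\ref{lem:AtMostOneUInImage} to $f^{-1}$ \emph{as a shop of $\mathcal{D}$}; but $f^{-1}$ does preserve the complement structure $\overline{\mathcal{D}}$, whose DSM is exactly $\{h^{-1} : h \in \mathcal{M}\}$, and under complementation the $U$-$X$-core data swaps roles ($X$ becomes the minimal set for surjectivity, $U$ the minimal set for totality, since inversion exchanges $U$-surjectivity and $X$-totality). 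Applying Lemma~\ref{lem:AtMostOneUInImage} over $\overline{\mathcal{D}}$ to the shop $f^{-1}$ then gives $|f^{-1}(z)\cap X|\leq 1$ at once; so your objection flags a genuine subtlety (naive inversion inside $\mathcal{M}$ fails) but does not block the duality, which changes the structure and not just the shop. What your direct argument buys instead is self-containment: it avoids checking that all the minimality conditions dualise under complementation, and it makes explicit the one non-mechanical point in the mirroring, namely that the $X$-total shop $g$ must sit innermost (one needs $f\circ g$, not $g\circ f$, and this is verified by hand rather than by the symmetric clause of Lemma~\ref{lem:compositionAndUXshops}, which concerns the outermost factor). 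Your construction of $X'$ from the shared witness $z$ together with images $z_i$ of the remaining elements of $X$, the verification that $f\circ g$ is $X'$-total, and the contradiction with minimality of $|X|$ are all sound.
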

\begin{proof}
  By duality from Lemma~\ref{lem:AtMostOneUInImage}.
\end{proof}
\begin{lemma}
  \label{lem:XshopPermutesX}
  Let $f$ be an $X$-shop in $\mathcal{M}$.
  There exists a permutation $\beta$ of $X$ such that: for any $x$ in $X$,
  \begin{romannum}
  \item $f(x)\cap X = \{\beta(x)\}$; and,
  \item $f^{-1}(x)\cap X = \{\beta^{-1}(x)\}$.
  \end{romannum}
\end{lemma}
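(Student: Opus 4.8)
The plan is to obtain Lemma~\ref{lem:XshopPermutesX} as the exact dual of Lemma~\ref{lem:UshopPermutesU}, exploiting the duality between $U$-surjectivity and $X$-totality for shops that is already recorded in the excerpt. Recall that the inverse of a shop is again a shop, that $f^{-1}(X)=A$ expresses $X$-totality while $f(U)=A$ expresses $U$-surjectivity, and that these two notions are interchanged by passing to the inverse. Concretely, $f$ is an $X$-shop (i.e.\ $X$-total) if and only if $f^{-1}$ is an $X$-surjective shop, that is, an $X$-shop in the sense of $U$-surjectivity with the set $X$ playing the role of $U$.

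The main step is therefore as follows. Suppose $f$ is an $X$-shop in $\mathcal{M}$. Since $\mathcal{M}$ is a DSM (in particular closed under inverse, as the inverse of a surjective hyper-endomorphism is a surjective hyper-endomorphism), the inverse $f^{-1}$ lies in $\mathcal{M}$ and, by the duality of $U$-surjectivity and $X$-totality, $f^{-1}$ is a shop that is surjective onto $A$ from $X$. I would then invoke Lemma~\ref{lem:UshopPermutesU} applied to $f^{-1}$ with the distinguished set taken to be $X$ rather than $U$ --- the hypotheses of that lemma (existence within $\mathcal{M}$ of a shop surjective from the distinguished set, and Lemma~\ref{lem:AtMostOneUInImage}, whose dual is the already-proved Lemma~\ref{lem:AtMostOneXInPreImage}) transfer verbatim. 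This yields a permutation $\gamma$ of $X$ with $f^{-1}(x)\cap X=\{\gamma(x)\}$ and $(f^{-1})^{-1}(x)\cap X=\{\gamma^{-1}(x)\}$ for every $x$ in $X$. Setting $\beta:=\gamma^{-1}$ and recalling that $(f^{-1})^{-1}=f$ then gives exactly $f(x)\cap X=\{\beta(x)\}$ and $f^{-1}(x)\cap X=\{\beta^{-1}(x)\}$, which are statements (i) and (ii).

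The one point demanding care --- and the main obstacle --- is to verify that the dualisation is legitimate rather than merely formal, i.e.\ that applying Lemma~\ref{lem:UshopPermutesU} to $f^{-1}$ with $X$ in the role of $U$ really does satisfy that lemma's standing assumptions. This requires checking that $\mathcal{M}$ contains a shop that is surjective from $X$ (the analogue of ``$\mathcal{M}$ contains a $U$-shop''): but $f^{-1}$ itself is such a shop, so the hypothesis is met. One should also confirm that Lemma~\ref{lem:AtMostOneXInPreImage}, on which the counting argument rests, plays the role that Lemma~\ref{lem:AtMostOneUInImage} plays in the proof of Lemma~\ref{lem:UshopPermutesU}; since the former is stated as the dual of the latter, this is immediate.

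Alternatively, and equivalently, I could simply transcribe the proof of Lemma~\ref{lem:UshopPermutesU} verbatim with $X$ in place of $U$, invoking Lemma~\ref{lem:AtMostOneXInPreImage} in place of Lemma~\ref{lem:AtMostOneUInImage}: for each $x$ in $X$ one has $|f^{-1}(x)\cap X|\le 1$ by Lemma~\ref{lem:AtMostOneXInPreImage}, while $X$-totality of $f$ forces every element of $D$ (in particular every element of $X$) to have an image containing an element of $X$, so $|f(x)\cap X|\ge 1$; finiteness of $X$ together with a pigeonhole argument then upgrades both inequalities to equalities and exhibits the required permutation $\beta$. Given that Lemma~\ref{lem:AtMostOneXInPreImage} is already proved by duality, the cleanest write-up is the one-line appeal ``\emph{By duality from Lemma~\ref{lem:UshopPermutesU}}'', mirroring the style of the proof of Lemma~\ref{lem:AtMostOneXInPreImage}.
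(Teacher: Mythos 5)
Your conclusion is right, and your fallback argument --- transcribing the proof of Lemma~\ref{lem:UshopPermutesU} with $X$ in place of $U$, getting $|f^{-1}(x)\cap X|\leq 1$ from Lemma~\ref{lem:AtMostOneXInPreImage}, getting $|f(x)\cap X|\geq 1$ from $X$-totality of $f$, and finishing by finiteness --- is exactly the paper's intended proof, which indeed consists of the single line ``By duality from Lemma~\ref{lem:UshopPermutesU}''.

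However, your primary route rests on a false claim: $\mathcal{M}$ is \emph{not} closed under inverse, because the inverse of a surjective hyper-endomorphism of $\mathcal{D}$ is in general not a surjective hyper-endomorphism of $\mathcal{D}$. The preserving condition is one-directional; inverting $f$ yields a shop that preserves the complement $\overline{\mathcal{D}}$, not $\mathcal{D}$. (Only \emph{full} surjective hyper-morphisms invert to morphisms of the same kind, and shops in $\mathcal{M}$ need not be full.) Concretely, on domain $\{0,1\}$ with the single unary relation $R=\{1\}$, the shop $\shee{01}{1}$ is a surjective hyper-endomorphism, but its inverse $\shee{0}{01}$ is not, since it sends $1$ to a set containing $0\notin R$; this is also visible in the Boolean lattice of \S\ref{sec:boolean-case}, where $\langle\shee{01}{1}\rangle$ and $\langle\shee{0}{01}\rangle$ are distinct, incomparable DSMs. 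So the step ``$f^{-1}$ lies in $\mathcal{M}$, now apply Lemma~\ref{lem:UshopPermutesU} to it'' fails as written. The route can be repaired: $\mathcal{M}^{-1}:=\{g^{-1}\mid g\in\mathcal{M}\}$ equals $\shE(\overline{\mathcal{D}})$, is itself a DSM, and satisfies the standing hypotheses of this section with the roles of $U$ and $X$ interchanged, so Lemma~\ref{lem:UshopPermutesU} may legitimately be applied to $f^{-1}$ \emph{inside $\mathcal{M}^{-1}$}. But spelling that out is precisely what ``duality'' means here (as in Proposition~\ref{prop:duality:principle} and Proposition~\ref{prop:dualityOfRelativisation}, duality passes to the complement structure), and your direct transcription is the cleaner, fully correct way to write it.
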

\begin{proof}
  By duality from Lemma~\ref{lem:UshopPermutesU}.
\end{proof}
\begin{lemma}
  \label{fact:UXShopMapsNoProperXToProperU}
  \label{cor:UXShopMapsNoXToProperU}
  Let $f$ be a shop in $\mathcal{M}$.
  If $f$ is a $U$-$X$-shop then $f(X)\cap (U\setminus X)=\emptyset$.
\end{lemma}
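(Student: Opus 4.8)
The plan is to argue by contradiction and to violate the maximality of $|U\cap X|$ in the choice of the $U$-$X$-core by a single one‑element swap, performed either in $U$ or in $X$ according to a case distinction. So suppose $f$ is a $U$-$X$-shop and that there are $x\in X$ and $u\in U\setminus X$ with $u\in f(x)$. Write $\alpha$ for the permutation of $U$ induced by $f$ (Lemma~\ref{lem:UshopPermutesU}) and $\beta$ for the permutation of $X$ induced by $f$ (Lemma~\ref{lem:XshopPermutesX}), and set $u_0:=\alpha^{-1}(u)\in U$, the unique antecedent of $u$ inside $U$. First I would pass to a convenient power of $f$: since $f$ is a $U$-$X$-shop every iterate $f^{j}$ is one by Lemma~\ref{lem:compositionAndUXshops}, and by Lemma~\ref{lem:AtMostOneUInImage} the permutation it induces on $U$ is $\alpha^{j}$ (and $\beta^{j}$ on $X$). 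Taking $N$ a common multiple of the orders of $\alpha,\beta$ and then an idempotent power $G:=f^{M}$ with $M$ a multiple of $N$ (which exists in the finite monoid $\mathcal{M}$), I obtain $G\in\mathcal{M}$ that is a $U$-$X$-shop, is idempotent ($G\circ G=G$), and induces the identity on both $U$ and $X$; in particular $v\in G(v)$ for every $v\in U\cup X$ and $G(U)=D$. The whole argument hinges on whether $u_0\in X$.

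\emph{Case $u_0\notin X$.} Here I would show that $U^{\ast}:=(U\setminus\{u_0\})\cup\{x\}$ supports the $U$-surjective shop $G$. From $u\in f(x)$ one gets $u_0=\alpha^{M-1}(u)\in f^{M-1}(u)\subseteq G(x)$, so by idempotence $G(u_0)\subseteq G(G(x))=G(x)$ and hence $G(U^{\ast})\supseteq G(U)=D$. Moreover $x\notin U$ (otherwise $u\in f(x)\cap U=\{\alpha(x)\}$ would give $x=\alpha^{-1}(u)=u_0\in X$, contrary to the case), so the swap keeps $|U^{\ast}|=|U|$ while $U^{\ast}\cap X=(U\cap X)\cup\{x\}$ is strictly larger. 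As $U^{\ast}$ is still of minimal size, this contradicts the maximality of $|U\cap X|$.

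\emph{Case $u_0\in X$.} Now $u_0\in U\cap X$, and I first note $\beta(u_0)\in X\setminus U$: if $\beta(u_0)\in U$ then $\beta(u_0)\in f(u_0)\cap U=\{\alpha(u_0)\}=\{u\}$, forcing $u\in X$, a contradiction. Writing $x':=\beta(u_0)$, I would show that $\widehat{X}:=(X\setminus\{x'\})\cup\{u\}$ supports the $X$-total shop $G$. Given $z\in D$, totality of $f^{M-1}$ provides $\chi\in f^{M-1}(z)\cap X$; if $G(z)\cap X=\{x'\}$ then $\beta(\chi)\in f(\chi)\cap X\subseteq G(z)\cap X=\{x'\}$ forces $\chi=u_0$, whence $u=\alpha(u_0)\in f(u_0)\subseteq G(z)$ with $u\in\widehat{X}$; otherwise $G(z)\cap X$ already meets $X\setminus\{x'\}\subseteq\widehat{X}$. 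Either way $G(z)\cap\widehat{X}\neq\emptyset$, so $G$ is $\widehat{X}$-total. Since $x'\in X\setminus U$ and $u\in U\setminus X$, the swap keeps $|\widehat{X}|=|X|$ while $U\cap\widehat{X}=(U\cap X)\cup\{u\}$, again contradicting maximality of $|U\cap X|$.

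The routine parts are the bookkeeping of the two swaps and the verification that $G$ genuinely lies in $\mathcal{M}$. The delicate point — and the step I expect to be the real obstacle — is the second case, where one must manufacture an $X$-total (rather than $U$-surjective) witness and check that totality survives the swap along an $f$-thread that is forced to pass through $u_0$; this is exactly where the idempotence of $G$ together with the one‑element bounds of Lemmas~\ref{lem:AtMostOneUInImage} and~\ref{lem:AtMostOneXInPreImage} do the work.
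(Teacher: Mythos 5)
Your proof is correct, and it rests on the same underlying strategy as the paper's: contradict the maximality of $|U\cap X|$ in the choice of the $U$-$X$-core by a one-element swap that keeps $|U|$ and $|X|$ at their minimum values while strictly enlarging the intersection, the swapped pair being certified by an iterate of $f$. The difference is in the execution, and the paper's is leaner on two counts. First, it never splits on the location of $\alpha^{-1}(u)$: writing the offending pair as $u_1\in f(x_1)$ with $x_1\in X$ and $u_1\in U\setminus X$, it always swaps on the $X$-side, replacing $x_i:=\beta(x_1)$ by $u_1$; that $x_i\in X\setminus U$ follows from the same Lemma~\ref{lem:AtMostOneUInImage} argument you use (otherwise $f(x_1)$ would contain the two distinct elements $u_1$ and $x_i$ of $U$). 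Second, it needs only $f^2$ rather than an idempotent power: $f^2$ is $X'$-total for $X':=(X\setminus\{x_i\})\cup\{u_1\}$ because any $z$ reaches some $x_j\in X$ under $f$ and then $\beta(x_j)$ under $f^2$, and $\beta(x_j)=x_i$ forces $x_j=x_1$, in which case $f^2(z)\ni u_1$. This is precisely the thread argument of your case $u_0\in X$, just run from $x_1$ instead of $u_0=\alpha^{-1}(u)$, which is why your case $u_0\notin X$ (the swap inside $U$) can be avoided altogether. What your version buys is mechanical verifications, since the idempotent $G$ acts as the identity on $U\cup X$; what it costs is the case analysis and the semigroup argument for the existence of $G$. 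One point you should make explicit: in the case $u_0\in X$ you invoke $X$-totality of $f^{M-1}$, which only makes sense for $M\geq 2$; this is automatic there, since $u_0\in X$ and $u\notin X$ give $u_0\neq u$, hence $\alpha\neq\mathrm{id}$, hence $N\geq 2$ and $M\geq N$, but it deserves a sentence. (Incidentally, the paper's own proof contains a typo at the final step: it asserts $|U'\cap X'|<|U\cap X|$ where it means $>$, the contradiction being with maximality, exactly as in your argument.)
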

\begin{proof}
 Assume for contradiction that for some $x_1 \in X$ and some $u_1 \in U\setminus X$, we have $u_1 \in f(x_1)$. 
  Since $f$ is an $X$-shop, every element is an antecedent under $f$ of some
  element in $X$, in particular every element $x_2,x_3,\ldots \in X$ (different from
  $x_1$) has a unique image $x'_2,x'_3,\ldots \in X$ (see
  Lemma~\ref{lem:XshopPermutesX}). 
  Some element of $X$, say $x_i$ does not occur in these images.
  Necessarily, $x_1$ reaches $x_i$.
  Note that $x_i$ can not also belong to $U$ as
  otherwise, $x_i$ and $u_1$,  two distinct elements of $U$, would
  be reached by $x_1$, contradicting Lemma~\ref{lem:AtMostOneUInImage}.
  Thus, we must have that $x_i$ belongs to $X\setminus U$. 
  Let $U':=U$ and $X':=X \setminus \{x_i\} \cup \{u_1\}$.
  Note that $f^2:=f\circ f$, the second iterate of $f$, is a
  $U'$-$X'$-shop with $|U'|= |U|$, $|X'|= |X|$ and $|U'\cap
  X'|<|U\cap X|$. This contradicts our hypothesis on $U$ and $X$.
\end{proof}

\begin{proposition}
  \label{prop:ShopWithIdentity}
  Let $\mathcal{M}$ be a DSM over a set $D$ and $U$ and $X$ be minimal subsets of $D$
  such that: there is a $U$-shop in $\mathcal{M}$; there is an
  $X$-shop in $\mathcal{M}$ and $U\cup X$ is minimal.
  Then, there is a $U$-$X$-shop $h$ in $\mathcal{M}$ 
  that has the following properties:
   \begin{romannum}
  \item for any $y$ in $U\cap X$, $h(y)\cap (U\cup X)=\{y\}$;
  \item for any $x$ in $X\setminus U$, $h(x)\cap (U\cup X)=\{x\}$;
  \item for any $u$ in $U\setminus X$, $h(u)\cap (U\cup X)=\{u\}\cup X_u$, where
    $X_u\subseteq X\setminus U$; and,
  \item $h(U\setminus X)\cap X=\bigcup_{u \in U\setminus X} X_u = X\setminus U.$
  \end{romannum}
\end{proposition}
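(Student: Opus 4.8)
The plan is to construct $h$ as a sufficiently high iterate of a single $U$-$X$-shop, exploiting the fact that iterating eventually trivialises the permutation behaviour on $U$ and on $X$ guaranteed by Lemmas~\ref{lem:UshopPermutesU} and~\ref{lem:XshopPermutesX}. Since $\mathcal{M}$ contains a $U$-shop $f_U$ and an $X$-shop $f_X$, the composite $f_0:=f_X\circ f_U$ lies in $\mathcal{M}$ and is a $U$-$X$-shop by Lemma~\ref{lem:compositionAndUXshops}(iii). By Lemma~\ref{lem:UshopPermutesU} it induces a permutation $\alpha$ of $U$ with $f_0(u)\cap U=\{\alpha(u)\}$, and by Lemma~\ref{lem:XshopPermutesX} a permutation $\beta$ of $X$ with $f_0(x)\cap X=\{\beta(x)\}$.

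The first real step, and the one I expect to be the main obstacle, is to show that these permutations compose \emph{multiplicatively} under shop composition, i.e.\ that the $U$-permutation of $f_0^{\,n}$ is exactly $\alpha^{n}$ and its $X$-permutation is exactly $\beta^{n}$. The subtlety is that $f_0(u)$ may carry elements outside $U$ which, on a further application, could in principle fall back into $U$ and contribute spurious $U$-elements to the image. I would rule this out as follows: a short induction tracking the $U$-component shows $\alpha^{n}(u)\in f_0^{\,n}(u)\cap U$; since $f_0^{\,n}$ is again a $U$-shop by Lemma~\ref{lem:compositionAndUXshops}(i), Lemma~\ref{lem:AtMostOneUInImage} forces $f_0^{\,n}(u)\cap U$ to be a singleton, hence equal to $\{\alpha^{n}(u)\}$. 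The dual argument, using Lemma~\ref{lem:AtMostOneXInPreImage} and Lemma~\ref{lem:XshopPermutesX}, yields $\beta^{n}$ for the $X$-permutation. Setting $k:=\mathrm{lcm}\bigl(\mathrm{ord}(\alpha),\mathrm{ord}(\beta)\bigr)$ and $h:=f_0^{\,k}$, Lemma~\ref{lem:compositionAndUXshops}(v) guarantees that $h\in\mathcal{M}$ is a $U$-$X$-shop, and by construction $h(u)\cap U=\{u\}$ for every $u\in U$ and $h(x)\cap X=\{x\}$ for every $x\in X$.

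The four stated properties then follow by elementary bookkeeping on $U\cup X$. Property (i) is immediate, since for $y\in U\cap X$ both $h(y)\cap U=\{y\}$ and $h(y)\cap X=\{y\}$. For (ii), given $x\in X\setminus U$, Lemma~\ref{fact:UXShopMapsNoProperXToProperU} (applicable precisely because the minimality of $U$, $X$ and $U\cup X$ places us in the $U$-$X$-core setting) gives $h(x)\cap(U\setminus X)=\emptyset$, while $h(x)\cap(U\cap X)\subseteq h(x)\cap X=\{x\}$ is empty because $x\notin U$; hence $h(x)\cap(U\cup X)=\{x\}$. For (iii), given $u\in U\setminus X$, I would set $X_u:=h(u)\cap X$; as $h(u)\cap U=\{u\}$ and $u\notin X$, no element of $X_u$ can lie in $U$, so $X_u\subseteq X\setminus U$ and $h(u)\cap(U\cup X)=\{u\}\cup X_u$. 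Finally, for (iv), the inclusion $\bigcup_{u}X_u\subseteq X\setminus U$ is just (iii); conversely, $U$-surjectivity of $h$ gives each $x'\in X\setminus U$ a $U$-antecedent $u'$, which cannot lie in $U\cap X$ (by (i) that would force $x'=u'\in U$, contradicting $x'\notin U$), so $u'\in U\setminus X$ and $x'\in h(u')\cap X=X_{u'}$. This establishes $X\setminus U\subseteq\bigcup_{u}X_u$ and completes the argument.
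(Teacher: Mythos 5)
Your proof is correct and follows essentially the same route as the paper's: compose the $U$-shop and $X$-shop to get a $U$-$X$-shop, iterate it to the least common multiple of the orders of the induced permutations $\alpha$ and $\beta$, and then derive (i)--(iv) from Lemmata~\ref{lem:AtMostOneUInImage}, \ref{lem:UshopPermutesU}, \ref{lem:XshopPermutesX} and~\ref{fact:UXShopMapsNoProperXToProperU}. The only difference is that you explicitly verify (by induction, plus the at-most-one lemmata) that iteration composes the $U$- and $X$-permutations multiplicatively, a point the paper's proof asserts without comment.
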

\begin{proof}
  By assumption, $\mathcal{M}$ contains a $U$-$X$-shop $f$.
  Let $\alpha$ and $\beta$ be permutations of $U$ and $X$,
  respectively, as in Lemmata~\ref{lem:UshopPermutesU} and~\ref{lem:XshopPermutesX}. 
  Let $r$ be the least common multiple of the order of the
  permutations $\beta$ and $\alpha$. We set $h$ to be the $r$th iterate
  of $f$ and we now know that $h(z)\ni z$ for any element $z$ and that
  $h$ is a $U$-$X$-shop by~\ref{lem:compositionAndUXshops}.
  Let $y$ in $U\cap X$, we know that $h(y)\ni y$.
  We can not have another element from $U\cup X$ in $h(y)$ by
  Lemmata~\ref{lem:AtMostOneUInImage} and~\ref{lem:XshopPermutesX}. 
  This proves (i).
  Let $x$ in $X\setminus U$, we know that $h(x)\ni x$.
  We can not have an element from $X$ distinct from $x$ in $h(x)$ by
  Lemma~\ref{lem:XshopPermutesX} and we can not have an element from
  $U\setminus X$ in $h(x)$ by
  Lemma~\ref{fact:UXShopMapsNoProperXToProperU}. 
  This proves (ii).
  Let $u$ in $U\setminus X$, we know that $h(u)\ni u$.
  We can not have an element from $U$ distinct from $u$ in $h(u)$ by
  Lemma~\ref{lem:UshopPermutesU}. We may have however some elements from
  $X\setminus U$ in $h(u)$. Thus, there is a set $\emptyset \subseteq
  X_u \subseteq X\setminus U$ such that $h(u)\cap (U\cup X)=\{u\} \cup
  X_u$. This proves (iii).
  By construction $h$ is a $U$-shop and every element must have an
  antecedent in $U$ under $h$. Since by the first three points,
  elements from $X\setminus U$ can only be reached from elements in $U
  \setminus X$, the last point (iv) follows.
\end{proof}

\begin{theorem}
\label{thm:unicity}
  The $U$-$X$-core is unique up to isomorphism.  
\end{theorem}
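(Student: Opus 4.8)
The plan is to establish uniqueness via mutual surjective hyper-morphisms between any two $U$-$X$-cores, exploiting the canonical shop from Proposition~\ref{prop:ShopWithIdentity} to pin down the sizes of $U$ and $X$ as isomorphism invariants, and then upgrading a size-preserving surjective hyper-morphism to an isomorphism. Let $\widetilde{\mathcal{D}}_1$ (with sets $U_1,X_1$) and $\widetilde{\mathcal{D}}_2$ (with sets $U_2,X_2$) be two $U$-$X$-cores of $\mathcal{D}$. By Corollary~\ref{cor:UXretract}(i), each is \mylogic-equivalent to $\mathcal{D}$, hence to each other. By Theorem~\ref{theo:containment:mylogic} this yields surjective hyper-morphisms $p\colon \widetilde{\mathcal{D}}_1 \to \widetilde{\mathcal{D}}_2$ and $q\colon \widetilde{\mathcal{D}}_2 \to \widetilde{\mathcal{D}}_1$ in both directions.

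\textbf{First I would} argue that the two cores have the same size. The composite $q\circ p$ is a surjective hyper-endomorphism of $\widetilde{\mathcal{D}}_1$, and by Corollary~\ref{cor:UXretract}(ii) the core $\widetilde{\mathcal{D}}_1$ itself has $\forall U_1$-$\exists X_1$-relativisation, so by Theorem~\ref{relativization} it carries a $U_1$-surjective $X_1$-total hyper-endomorphism; minimality of $U_1,X_1$ inside $\widetilde{\mathcal{D}}_1$ then forces $U_1=X_1=\widetilde{D}_1$ (the whole domain), by the greedy-removal argument of the Remark following the $U$-$X$-core definition. The same holds for $\widetilde{\mathcal{D}}_2$. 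Consequently $p$ and $q$ are each genuinely \emph{$U$-$X$-shops with $U=X$ the full domain} after transport, and Lemma~\ref{lem:UshopPermutesU} (resp.\ Lemma~\ref{lem:XshopPermutesX}) applies: since $U=X$ is everything, $p$ restricted to images intersected with the whole domain behaves like a permutation. Counting via $|\widetilde{D}_1|=|p^{-1}(\widetilde{D}_2)|$ and surjectivity/totality forces $|\widetilde{D}_1|=|\widetilde{D}_2|$.

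\textbf{Then I would} promote the size-preserving surjective hyper-morphism $p$ to an isomorphism. This is exactly the mechanism already used at the end of \S\ref{sec:cont-core-eqfr}: a surjective, total hyper-morphism $p$ between structures of equal (finite) size must in fact be single-valued and bijective. Indeed, totality gives $|p(a)|\geq 1$ for each $a$, surjectivity gives $\bigcup_a p(a)=\widetilde{D}_2$, and $\sum_a |p(a)| \geq |\widetilde{D}_2| = |\widetilde{D}_1|$; if any $p(a)$ had two elements we could, by the argument in Lemma~\ref{lem:AtMostOneUInImage} applied with $U$ the full domain, contradict minimality. Hence each $p(a)$ is a singleton and $p$ is a bijection. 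The preservation clause of a hyper-morphism then reads as homomorphism preservation for the induced map, and the symmetric statement for $q=p^{-1}$ (whose inverse is again a surjective hyper-morphism) gives preservation in both directions, so $p$ is an isomorphism $\widetilde{\mathcal{D}}_1 \cong \widetilde{\mathcal{D}}_2$.

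\textbf{The hard part will be} the size-equality step: showing that minimality of $U$ and $X$ inside a $U$-$X$-core forces $U=X=\widetilde{D}$, and thereby that $p$ and $q$ genuinely act as permutation-like maps rather than collapsing. The subtlety is that a core is \emph{defined} by minimal $U,X$ in the ambient $\mathcal{D}$, not by having a trivial (full-domain) pair internally; one must invoke Corollary~\ref{cor:UXretract}(ii) together with the greedy minimisation of the Remark to rule out any proper $U'\subsetneq \widetilde{D}_1$ admitting a $U'$-shop of $\widetilde{\mathcal{D}}_1$, which would descend to a strictly smaller \mylogic-equivalent structure and contradict the minimality of $\widetilde{\mathcal{D}}_1$ as a $\mylogic$-core. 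Once sizes match, the collapse-to-isomorphism step is routine and mirrors the \EqFreeFo\ case verbatim.
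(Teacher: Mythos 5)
Your proof breaks at the size-equality step, and the error propagates into the promotion step. The claim that minimality of $U_1,X_1$ inside $\widetilde{\mathcal{D}}_1$ forces $U_1=X_1=\widetilde{D}_1$ is false, and so is the supporting argument that a proper $U'\subsetneq\widetilde{D}_1$ admitting a $U'$-shop "would descend to a strictly smaller \mylogic-equivalent structure": a $U'$-surjective shop only permits descent to the substructure induced by $U'\cup X'$, never to $U'$ alone. In a $U$-$X$-core only the \emph{union} $U\cup X$ exhausts the domain; the sets $U$ and $X$ individually are typically proper, and even disjoint, subsets (Lemma~\ref{lemma:mickeyMouse}; Cases II, III and the disjoint subcase of IV in the tetrachotomy). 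Concretely, $\mathcal{K}_2\uplus\mathcal{K}_1$ is its own $U$-$X$-core with $U=\{u\}$ the isolated vertex and $X$ the two clique vertices: a $\{u\}$-surjective shop exists, yet the one-element structure induced by $\{u\}$ is certainly not \mylogic-equivalent to it. So the appeal to Lemmata~\ref{lem:UshopPermutesU} and~\ref{lem:XshopPermutesX} "with $U=X$ the full domain" is unavailable. (Note also that size equality of two $U$-$X$-cores of the same $\mathcal{D}$ needs no argument: by definition $|U_1|=|U_2|$ and $|X_1|=|X_2|$ are the same minima, and $|U_1\cap X_1|=|U_2\cap X_2|$ the same maximum, so the hard content of the theorem is the isomorphism, not the cardinalities.)

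The promotion step then fails as well: it is not true that a total surjective hyper-morphism between finite structures of equal size must be single-valued. The A-shop of $\mathcal{K}_2\uplus\mathcal{K}_1$, sending $u$ to the whole domain and fixing each clique vertex, is a total surjective hyper-endomorphism of a structure onto itself that is not single-valued. The mechanism you import from the end of \S\ref{sec:cont-core-eqfr} relies on \emph{fullness} of the hyper-morphisms and on triviality of the relation $\sim$, both available for \EqFreeFo{} but not for \mylogic{} -- which is exactly why the two fragments have different notions of core. The paper's own proof never establishes single-valuedness on the whole domain; it exploits the asymmetry of the two sets: composing the witnessing shops $h_1$ (a $U_1$-$X_1$-shop) and $h_2$ (a $U_2$-$X_2$-shop) of $\mathcal{D}$, minimality of $|X_1|$ and finiteness force $h_1$ to restrict to an isomorphism between the substructures induced by $X_1$ and $X_2$; dually one obtains an isomorphism between the $U_1$- and $U_2$-substructures; these two isomorphisms agree on $U_1\cap X_1$ and glue to an isomorphism of the cores. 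In other words, permutation-like behaviour holds only on the $U$-part and the $X$-part separately (Lemmata~\ref{lem:UshopPermutesU} and~\ref{lem:XshopPermutesX}), whereas your plan needs it everywhere, and it simply does not hold there.
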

\begin{proof}
Let $h_1$ be a $U_1$-$X_1$-shop with minimal $|U_1|$, $|X_1|$ and $|U_1\cup
X_1|$ and let $h_2$ be a $U_2$-$X_2$-shop  with minimal $|U_2|$, $|X_2|$
and $|U_2\cup X_2|$. Hence, $h_1 \circ h_2$ is a $h_1(X_2)\cap
X_1$-shop with $|h_1(X_2)|\leq |X_1|$. By minimality of $X_1$,
$|h_1(X_2)|=|X_1|$, and the restriction of $h_1$ to domain $X_2$ and
codomain $X_1$ induces a surjective homomorphism from the substructure
induced by $X_2$ to the substructure induced by $X_1$. Similarly $h_2$
induces a surjective homomorphism in the other direction. As we work
with finite structures, $h_1$ induces an isomorphism $i$ from the substructure
induced by $X_1$ to the substructure induced by $X_2$.
By duality, we also get that $h_1$ induces an isomorphism $i'$ from the substructure
induced by $U_1$ to the substructure induced by $U_2$. 
By construction, $i$ and $i'$ agree on $U_1\cap X_1$ (necessarily
to $U_2\cap X_2$) and the result follows.
\end{proof}

\section{Complexity classification}
\label{sec:compl-class}
We recall below some well known results concerning the complexity of Boolean
\CSP\ and \QCSP\ which we will need later. For definitions and further details
regarding the proof, the reader may consult the nice survey by Chen~\cite{DBLP:journals/csur/Chen09}.

\begin{theorem}
  [\cite{Schaefer}]
  \label{theorem:SchaeferCSP}
Let $\mathcal{D}$ be a Boolean structure. Then $\CSP(\mathcal{D})$
(equivalently, $\csplogic(\mathcal{D})$) is in \Ptime\ if, and only
if, all relations of $\mathcal{D}$ are simultaneously $0$-valid,
$1$-valid, Horn, dual-Horn, bijunctive or affine, and otherwise it is
\NP-complete. 
\end{theorem}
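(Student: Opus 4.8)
The plan is to proceed via the algebraic method that underpins the rest of the paper, reducing the classification of $\CSP(\mathcal{D})$ to an analysis of the clone of polymorphisms $\Pol(\mathcal{D})$, and then to exploit the fact that on a two-element domain the lattice of clones is completely understood (Post's lattice). The starting point is the $\Pol$--$\Inv$ Galois connection: two Boolean structures with the same polymorphism clone primitive-positively define the same relations, hence their associated constraint satisfaction problems are log-space interreducible. Thus it suffices to locate $\Pol(\mathcal{D})$ within Post's lattice and to show that the six tractable types correspond exactly to the structures whose polymorphism clone contains one of six specific operations: the constant $0$, the constant $1$, binary minimum $\wedge$, binary maximum $\vee$, the ternary majority operation, or the ternary minority (affine) operation $x \oplus y \oplus z$ over $\mathbb{Z}_2$.

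For the membership (tractability) direction, I would treat each of the six polymorphisms in turn and exhibit the corresponding polynomial-time algorithm. If $\mathcal{D}$ is $0$-valid (resp.\ $1$-valid) then the all-zero (resp.\ all-one) assignment satisfies every instance, so the problem is trivial. If all relations are preserved by $\wedge$ (the Horn case) then each relation is definable by Horn clauses and satisfiability is decided by unit propagation; the dual-Horn case is symmetric under $\vee$. If all relations are preserved by majority (the bijunctive case) then each relation is definable by $2$-clauses, so the instance is a $2$-\textsc{Sat} instance, solvable in linear time via the implication graph and its strongly connected components. Finally, if all relations are preserved by minority (the affine case) then each relation is a coset of a linear subspace of $\mathbb{Z}_2^k$, so the instance is a system of linear equations over $\mathbb{Z}_2$ and is solved by Gaussian elimination.

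For the hardness direction, I would appeal to the structure of Post's lattice. If $\mathcal{D}$ is none of the six types, then $\Pol(\mathcal{D})$ avoids all six operations above, and Post's classification forces $\Pol(\mathcal{D})$ to consist only of essentially unary operations (projections, possibly composed with negation). In this situation the $\Inv$ side of the Galois connection is rich enough that one may primitive-positively define, from the relations of $\mathcal{D}$, a relation whose $\CSP$ is already known to be \NP-hard: concretely, one reduces from a suitable hard Boolean problem such as positive $1$-in-$3$-\textsc{Sat}, \NAE-$3$-\textsc{Sat}, or ordinary $3$-\textsc{Sat}, according to which essentially-unary clone arises. Since pp-definability yields log-space reductions between the corresponding $\CSP$s, \NP-hardness of $\CSP(\mathcal{D})$ follows, while membership in \NP\ is immediate by guessing a satisfying assignment.

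The main obstacle is the hardness direction: it rests on the full combinatorial content of Post's lattice (equivalently, the classification of Boolean minimal clones), which is precisely the nontrivial ingredient guaranteeing that escaping all six tractable polymorphisms leaves only essentially-unary operations. Verifying that each remaining clone genuinely permits a pp-definition of an \NP-hard relation is the delicate step; the tractability side, by contrast, is a routine catalogue of standard algorithms, and the reduction machinery itself is supplied by the $\Pol$--$\Inv$ correspondence.
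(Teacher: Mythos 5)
The paper never proves this theorem: it is imported as a black box from Schaefer~\cite{Schaefer}, with the reader referred to Chen's survey for a proof, so there is no internal argument to measure your proposal against. Your sketch is nonetheless correct in outline, and it is the standard \emph{modern algebraic} proof (essentially the one in Chen's survey) rather than Schaefer's original logico-combinatorial argument. The tractability half (constant assignments for $0$-/$1$-valid, unit propagation for Horn/dual-Horn, implication graphs for bijunctive, Gaussian elimination for affine) is routine, and your hardness half correctly isolates the real crux: the Post-lattice fact that a Boolean clone avoiding $c_0$, $c_1$, $\wedge$, $\vee$, majority and minority consists only of projections and negated projections, whence $\Inv(\Pol(\mathcal{D}))$ contains the negation-invariant relation $\NAE$ (or, for the trivial clone, all relations, e.g.\ $1$-in-$3$), and the finite-domain Galois connection converts this into a pp-definition and a \Logspace{} reduction from a known \NP-hard problem. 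Two points need care if you flesh this out. First, the $\Pol$--$\Inv$ correspondence yields pp-definitions \emph{with equality}; equality must be propagated out by identifying variables before the reduction lands in $\csplogic(\mathcal{D})$ --- the paper itself leans on exactly this feature later, in the proof of Proposition~\ref{prop:csplogicNeq:complexity}, where it matters that hardness comes from equality-free logical reductions out of $\csplogic(\mathcal{B}_{\mbox{\textsc{nae}}})$. Second, the theorem is stated syntactically (Horn, bijunctive, affine, \ldots), so your claimed correspondence between the six syntactic classes and the six polymorphisms requires the characterisation lemmas (e.g.\ a Boolean relation is $2$-CNF definable iff it is preserved by majority); these are standard but are genuine ingredients on top of Post's classification, which is itself the heavy imported machinery your route rests on. In short: the paper buys brevity by citation, your route buys self-containedness at the price of Post's lattice plus these definability lemmas, neither of which a referee would let you wave at without proof or citation.
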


A similar result holds when universal quantifiers are added to the mix.
(it was sketched in the presence of constants~\cite{Schaefer}, then proved in the absence of constants in~\cite{Nadia} and~\cite{DalmauLSI9743R}).
\begin{theorem}
\label{theorem:SchaeferQCSP}
Let $\mathcal{D}$ be a Boolean structure. Then $\QCSP(\mathcal{D})$
(equivalently, $\qcsplogic(\mathcal{D})$) is in \Ptime\ if, and only
if, all relations of $\mathcal{D}$ are simultaneously Horn, dual-Horn, bijunctive or affine, and otherwise it is
\Pspace-complete. 
\end{theorem}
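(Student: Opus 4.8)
The plan is to prove this as a genuine dichotomy, treating the upper and lower bounds separately, and throughout to lean on the algebraic framework where the complexity of $\QCSP(\mathcal{D})$ is governed by the \emph{surjective} polymorphisms of $\mathcal{D}$ rather than by arbitrary polymorphisms (as in Theorem~\ref{theorem:SchaeferCSP}). Membership in \Pspace\ is routine: a quantified conjunctive sentence over a fixed finite $\mathcal{D}$ is evaluated by the obvious alternating recursion through the quantifier prefix, which runs in polynomial space. The first genuine observation is that the four surviving tractable classes correspond exactly to $\mathcal{D}$ being preserved by one of the surjective operations $\min$, $\max$, majority, or minority (yielding Horn, dual-Horn, bijunctive, and affine respectively), while the two classes that appear in Schaefer's \CSP\ theorem but are absent here, namely $0$-valid and $1$-valid, are precisely the ones witnessed by the constant unary operations --- and these are \emph{not} surjective. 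This explains cleanly why $0$-validity and $1$-validity, which trivialise \CSP, are useless against a universal quantifier.

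For the upper bounds I would exhibit a polynomial algorithm in each of the four cases. In the affine case the solution set of each constraint is a coset of a linear subspace of $\mathrm{GF}(2)^n$; a quantified affine instance can be solved by maintaining, via Gaussian elimination, the affine solution space and processing the quantifier prefix inside out, since universal quantification over a variable simply asks whether that coordinate is unconstrained in the current space. In the bijunctive case I would run the quantified $2$-\textsc{Sat} procedure based on the implication graph, which remains polynomial under quantifier alternation. The Horn case is handled by quantified unit propagation (the classical polynomial algorithm for quantified Horn formulae), and dual-Horn follows by the $0 \leftrightarrow 1$ symmetry. In each case correctness relies on the corresponding surjective polymorphism being a congruence-like closure operator on solution sets, which is exactly what keeps the relativised relations inside the same class.

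For the lower bounds I would invoke the Galois connection for $\QCSP$: if $\mathcal{D}'$ is primitive-positive definable from $\mathcal{D}$, then $\QCSP(\mathcal{D}')$ reduces to $\QCSP(\mathcal{D})$, and the pp-definable relations are precisely $\Inv(\Pol(\mathcal{D}))$. Assuming $\mathcal{D}$ lies in none of the four classes means none of $\min$, $\max$, majority, minority is a polymorphism, so by Post's lattice $\Pol(\mathcal{D})$ is confined to clones whose invariant relations form a rich co-clone; one then pp-constructs a gadget set over which quantified $3$-satisfiability (the canonical \Pspace-complete problem) is encoded, giving \Pspace-hardness.

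The main obstacle is the hardness direction in the \emph{absence of constants}, which is exactly why the statement credits \cite{Nadia} and \cite{DalmauLSI9743R} beyond Schaefer's original sketch. With the singleton relations $\{0\}$ and $\{1\}$ available one simulates $\QCSP$-$3$-\textsc{Sat} directly by pinning variables; but for a structure that is, say, $0$-valid or $1$-valid these singletons are not pp-definable, so the straightforward reduction is unavailable. The resolution is to carry out the hardness argument in the surjective-polymorphism setting and to perform a careful case analysis over the relevant sub-lattice of Post's lattice, showing that in every hard case one can still pp-build quantifier-alternation gadgets encoding \Pspace-complete behaviour without ever needing a constant. This case-analytic, constant-free encoding is the technical heart of the theorem, and is what I would expect to occupy the bulk of a full proof.
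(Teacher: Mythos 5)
There is a genuine gap, and it sits exactly where the theorem is hard. First, note that the paper itself offers no proof of this statement: it is recalled from the literature, attributed to a sketch with constants in \cite{Schaefer} and to the constant-free proofs in \cite{Nadia} and \cite{DalmauLSI9743R}, with a pointer to Chen's survey \cite{DBLP:journals/csur/Chen09}. Your proposal must therefore stand on its own, and it does not: the upper bounds (quantified Horn, quantified $2$-SAT, quantified affine, all in \Ptime) are standard and your sketches of them are fine, and your observation that $\min$, $\max$, majority and minority are surjective while the constant operations are not is the right conceptual explanation for why $0$- and $1$-validity disappear from the tractable list. But the \Pspace-hardness direction in the absence of constants --- which you yourself identify as ``the technical heart of the theorem'' --- is deferred rather than proved; a plan that ends with ``this is what I would expect to occupy the bulk of a full proof'' is not a proof of the statement.

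Moreover, the route you sketch for that missing part would fail as written. You invoke the \emph{CSP} Galois connection, $\Inv(\Pol(\mathcal{D}))$ with primitive-positive definability of gadgets, and claim one can ``pp-build quantifier-alternation gadgets \ldots without ever needing a constant.'' This is structurally impossible in precisely the problematic cases: if every relation of $\mathcal{D}$ is $0$-valid, then every relation definable from $\mathcal{D}$ in $\csplogic$ (with or without equality) is again $0$-valid, since the all-zero assignment satisfies any conjunction of atoms and survives existential quantification. Hence no non-$0$-valid gadget --- not $\NAE$, not the singleton $\{1\}$ --- is pp-constructible, and Post's lattice cannot rescue this. The consistent way to execute your own first idea is to use the QCSP Galois connection $\Inv$--$\sPol$ of \cite{DBLP:journals/iandc/BornerBCJK09}, in which the defining formulas are $\qcsplogicEq$-formulas: it is the universal quantifier \emph{inside the definitions} that lets one break out of $0$-/$1$-validity, and this (or a direct constant-free encoding as in \cite{Nadia,DalmauLSI9743R}) is what the hardness argument actually requires. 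Until that machinery is set up and the case analysis carried out, the hardness half of the dichotomy remains unestablished.
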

\begin{example}
  The canonical example of a relation that does not fall in any of the
  tractable cases is $\mbox{\textsc{NAE}} := \{0, 1\}^3 \setminus \{(0, 0, 0), (1, 1, 1)\}$.
  Let $\mathcal{B}_{\mbox{\textsc{nae}}}$ be the Boolean structure with this relation.
  It follows from the above theorems that
  $\CSP(\mathcal{B}_{\mbox{\textsc{nae}}})$ is \NP-complete and that
  $\QCSP(\mathcal{B}_{\mbox{\textsc{nae}}})$ is \Pspace-complete.
\end{example}
\begin{example}
  For larger domains, though the classification remains open,
  the canonical hard problem is induced by the relation $\neq$.
  Let $\mathcal{K}_n$ denote the clique of size $n$ (we view an
  undirected graph as a structure with a single binary predicate $E$
  that is symmetric). For $n\geq 3$,
  $\CSP(\mathcal{K}_n)$ is a reformulation of the $n$-colourability
  problem and is \NP-complete. 
  It is also known that for $n\geq 3$
  $\QCSP(\mathcal{K}_n)$ is \Pspace-complete~\cite{DBLP:journals/iandc/BornerBCJK09}.
\end{example}

\subsection{First Class}
\label{sec:first-class}
\begin{proposition}
  \label{prop:oneElementIsTrivial}\label{prop:firstClass:complexity}
  \begin{romannum}
  \item When $\mathcal{D}$ has a single element, the model checking problem for \FO{} is in \Logspace.
  \item The model checking problem $\{\exists, \lor, \neq,= \}
    \mbox{-}\mathrm{FO}$ is in \Logspace.
  \end{romannum}
\end{proposition}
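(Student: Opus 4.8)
The two parts are handled separately, and in both cases the plan is to reduce the model-checking question to the evaluation of a propositional object resolvable in logarithmic space.

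For part (i), recall that inputs are assumed in prenex form with negation pushed to the atomic level. When $|D|=1$, say $D=\{d\}$, every quantifier is vacuous: both $\exists x\,\chi(x)$ and $\forall x\,\chi(x)$ are logically equivalent to $\chi(d)$, since $d$ is the only admissible witness and the only admissible counterexample. The plan is therefore to substitute $d$ for every variable, reducing $\varphi$ to its quantifier-free matrix evaluated at the single point. Each atom $R(d,\ldots,d)$ then has a truth value fixed by $\mathcal{D}$ (a constant, since $\mathcal{D}$ is fixed and finite), and the matrix becomes a variable-free Boolean formula over $\land,\lor$ and atomic negations. Evaluating such a formula is the \emph{Boolean Formula Value Problem}, which is well known to lie in \Logspace\ (indeed in \ALogtime); this yields the claimed bound. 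The only care needed is that the reduction --- parsing away the prenex prefix and looking up each atom --- is itself computable in logarithmic space, which is routine.

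For part (ii), consider a sentence of \sillyEqNeq. In prenex form it reads $\exists x_1 \cdots \exists x_n\,\psi$, where $\psi$ is built from atoms using $\lor$ alone; flattening associativity, $\psi=\bigvee_j \ell_j$ is a disjunction of atoms, each being a relational atom, an equality, or a disequality. Since existential quantification distributes over disjunction and the domain is nonempty, the whole sentence is logically equivalent to $\bigvee_j \exists \vec{x}\,\ell_j$; that is, $\mathcal{D}\models\varphi$ if, and only if, some single disjunct $\ell_j$ is \emph{existentially satisfiable} in $\mathcal{D}$. The key observation is that this decouples the disjuncts, so I would test each atom independently. The per-atom test is a short case analysis: an equality $v=w$ is always satisfiable (set its variables to a common value, using $D\neq\emptyset$); a disequality $v\neq w$ is satisfiable exactly when $v$ and $w$ are syntactically distinct variables and $|D|\geq 2$ (and false for $v\neq v$), where $|D|\geq 2$ is a constant fixed by $\mathcal{D}$; and a relational atom $R(v_1,\ldots,v_a)$ is satisfiable precisely when $R^{\mathcal{D}}$ contains a tuple compatible with the coincidence pattern among $v_1,\ldots,v_a$, which reduces to computing that pattern and performing a constant lookup. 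A deterministic logspace machine then scans the input disjunction, applies the relevant constant-size test to each atom, and accumulates the result in a single output bit, using only a pointer into the input and a few counters.

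I expect the only genuinely delicate point to be the logspace bookkeeping rather than the logic. In part (i) one must invoke the classical logspace upper bound for Boolean formula evaluation rather than naively walking a parse tree with a stack; in part (ii) one must check that reading off coincidence patterns of relational atoms and comparing variable indices stays within logarithmic space. Both are standard, while the semantic content --- vacuous quantification for (i) and distributivity of $\exists$ over $\lor$ for (ii) --- is elementary.
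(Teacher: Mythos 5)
Your proof is correct and follows essentially the same route as the paper's: part (i) reduces, via vacuous quantification and the fixed truth values of atoms, to the Boolean Sentence Value Problem (in \Logspace{}), and part (ii) uses distributivity of $\exists$ over $\lor$ to test each atomic disjunct independently, with coincidence patterns handling implicit equalities in relational atoms. The only cosmetic difference is that the paper dispatches the $|D|=1$ case of (ii) by appeal to (i), whereas you fold it into the per-atom case analysis (being slightly more explicit about $v\neq v$); this changes nothing of substance.
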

\begin{proof}
  \begin{longromannum}
  \item In the case where $|D|=1$, every relation is either empty or contains all tuples (one tuple), and the quantifiers $\exists$ and $\forall$ are semantically equivalent. Hence, the problem translates to the Boolean Sentence Value Problem (under the substitution of $0$ and $1$ for the empty and non-empty relations, respectively), known to be in \Logspace{}~\cite{DBLP:journals/jacm/Lynch77}.
  \item We may assume by the previous point that $|D|>1$.
    We only need to check if one of the atoms that occurs as a disjunct in the input
    sentence holds in $\mathcal{D}$. Since $|D|>1$, a sentence with an atom like
    $x=y$ or $x\neq y$ is always true in $\mathcal{D}$. For sentences
    of $\{\exists, \lor\} \mbox{-}\mathrm{FO}$, the atoms may have
    implicit equality as in $R(x,x,y)$ for a ternary predicate $R$: in any case,
    each atom may be checked in constant time since $\mathcal{D}$ is a fixed
    structure, resulting in overall logspace complexity.
    \qed
  \end{longromannum}
\end{proof}

\subsection{Second Class}
\label{sec:second-class}
We now move on to the largest fragments we will need to consider,
which turn out to exhibit trivial dichotomies.

\begin{proposition}
  \label{proposition:posFoNeq:complexity}
  In full generality, the class of problems
  $\posFoNeq(\mathcal{D})$ exhibits dichotomy: if $|D|=1$ then the
  problem is in \Logspace, otherwise it is \Pspace-complete. 
  Consequently, the fragment extended with $=$ follows the same dichotomy.
\end{proposition}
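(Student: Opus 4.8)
The plan is to treat the trivial $|D|=1$ case, the \Pspace\ upper bound, and the \Pspace\ lower bound separately, splitting the lower bound according to whether $|D|=2$ or $|D|\geq 3$. The case $|D|=1$ is immediate from Proposition~\ref{prop:firstClass:complexity}(i), since there even full \FO{} is in \Logspace. For the upper bound, membership of $\posFoNeq(\mathcal{D})$ (indeed of $\FO(\mathcal{D})$) in \Pspace\ follows from the standard recursive evaluation of a prenex sentence inward through its quantifiers, as noted in \S\ref{sec:methodology}. The extension with $=$ claimed at the end is then free: the fragment $\{\exists,\forall,\land,\lor,=,\neq\}\mbox{-}\FO$ contains $\posFoNeq$, so it inherits the lower bounds, while remaining inside \FO\ it keeps the \Pspace\ upper bound, and for $|D|=1$ it is again in \Logspace\ by Proposition~\ref{prop:firstClass:complexity}(i). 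Thus everything reduces to proving \Pspace-hardness for every $\mathcal{D}$ with $|D|\geq 2$.

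For $|D|=2$, I would reduce from the \Pspace-complete problem QSAT. Fix the two distinct domain elements and, given a quantified Boolean sentence $Q_1 x_1\cdots Q_n x_n\,\psi$ with $\psi$ in negation normal form, output $\varphi:=\exists z_0\exists z_1\,\bigl(z_0\neq z_1\,\land\,Q_1 x_1\cdots Q_n x_n\,\psi^{*}\bigr)$, where $\psi^{*}$ is obtained from $\psi$ by replacing each positive literal $x_i$ with $x_i\neq z_0$, each negative literal $\lnot x_i$ with $x_i\neq z_1$, and keeping $\land,\lor$. On a two-element domain $z_0\neq z_1$ forces $\{z_0,z_1\}=D$, so every quantified $x_i$ lies in $\{z_0,z_1\}$, and $x_i\neq z_0$ (resp.\ $x_i\neq z_1$) faithfully means ``$x_i$ is true'' (resp.\ ``false'') under the reading $z_1=\mathrm{true}$, $z_0=\mathrm{false}$. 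Since the truth of a Boolean sentence does not depend on which concrete element names \emph{true}, both admissible choices of $(z_0,z_1)$ evaluate the inner sentence to the truth value of the QSAT instance, whence $\mathcal{D}\models\varphi$ iff the instance is true. This is a logspace reduction and crucially uses $\lor$, which is exactly why hardness holds even for the Boolean structures that are tractable for $\qcsplogicNeq$.

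The main obstacle is $|D|>2$, where this naive Boolean encoding fails: if a variable takes a \emph{third} value $w\notin\{z_0,z_1\}$ then both $w\neq z_0$ and $w\neq z_1$ hold, so a literal and its negation become simultaneously true. For a universal variable this only hurts the universal player (it can never falsify the monotone matrix), so universals are automatically confined to $\{z_0,z_1\}$; but for an existential variable it is a beneficial ``super-value'' that cannot be forbidden with disequality alone, breaking soundness. I would sidestep this entirely for $|D|=n\geq 3$ by reducing instead from $\QCSP(\mathcal{K}_n)$, which is \Pspace-complete by the result recalled in the example following Theorem~\ref{theorem:SchaeferQCSP}. Given a positive Horn sentence $\phi$ over the single edge predicate $E$, simply replace every atom $E(x,y)$ by $x\neq y$ to obtain $\phi'$; the resulting sentence lies in $\qcsplogicNeq\subseteq\posFoNeq$, and since the interpretation of $\neq$ in $\mathcal{D}$ is precisely the edge relation of $\mathcal{K}_{n}$ on an $n$-element vertex set, we have $\mathcal{K}_n\models\phi$ iff $\mathcal{D}\models\phi'$. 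Here the disequality relation \emph{natively} encodes $n$-colouring, so no artificial Boolean-domain restriction is needed and the third-value problem never arises.

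Combining the three cases gives \Pspace-hardness for all $|D|\geq 2$, matching the \Pspace\ upper bound and completing the dichotomy; the observation on $\{\exists,\forall,\land,\lor,=,\neq\}\mbox{-}\FO$ above disposes of the extension by $=$. The only points requiring care in the write-up are verifying that the two reference choices in the $|D|=2$ reduction yield the same inner truth value (an easy symmetry argument) and confirming the domain-size bookkeeping in the $\mathcal{K}_n$ reduction.
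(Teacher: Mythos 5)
Your proof is correct, and its skeleton matches the paper's: the $|D|=1$ case via Proposition~\ref{prop:firstClass:complexity}, the generic \Pspace{} upper bound, the observation that the hardness proofs never use $=$ (which disposes of the extended fragment), and, for $|D|\geq 3$, the identical argument simulating the edge relation of $\mathcal{K}_{|D|}$ by $\neq$ and invoking the \Pspace-completeness of $\QCSP(\mathcal{K}_n)$. The only genuine divergence is the two-element case. The paper reduces from $\QCSP(\mathcal{B}_{\mbox{\textsc{nae}}})$, replacing each atom $\mathrm{NAE}(x,y,z)$ by $E(x,y)\lor E(y,z)\lor E(x,z)$ with $E$ simulated by $\neq$; you instead encode QSAT directly, existentially quantifying two reference elements with $z_0\neq z_1$ and translating positive and negative literals into disequalities against $z_0$ and $z_1$ respectively. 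Your reduction is sound: on a two-element domain the reference pair exhausts $D$, the literal translation is faithful under either of the two orderings of $(z_0,z_1)$ by the symmetry you note, and the output sentence (after trivial prenexing) lies in \posFoNeq{} and is computable in logspace. What each approach buys: the paper's route is shorter because it delegates all Boolean work to the known \Pspace-completeness of $\QCSP(\mathcal{B}_{\mbox{\textsc{nae}}})$, and the same NAE-to-disjunction gadget is reused verbatim later (in the proof of Theorem~\ref{theorem:dichotomy:mylogic:boolean} and in Subsection~\ref{sec:Pspace-hardness}); your route is more self-contained, resting only on TQBF rather than on Schaefer-type results, at the cost of carrying out the (easy) symmetry and faithfulness checks yourself.
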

\begin{proof}
  When $|D| \geq 2$, \Pspace-hardness may be proved using no extensional
  relation of $D$ other than $\neq$. The formula
  $\varphi_{\mathcal{K}_{|D|}}(x,y):= (x\neq y)$ simulates the edge
  relation of the clique $\mathcal{K}_{|\mathcal{D}|}$ and the problem
  $\qcsplogic(\mathcal{K}_n)$ better known as $\QCSP(\mathcal{K}_n)$ is
  \Pspace-complete for $n\geq 3$~\cite{DBLP:journals/iandc/BornerBCJK09}.
  For n = 2, we use a reduction from the problem
  $\QCSP(\mathcal{B}_{\mbox{\textsc{nae}}})$ to prove that $\qcspDisj(\mathcal{K}_2)$ is
  \Pspace-complete. 
  Let $\varphi$ be an input for
  $\QCSP(\mathcal{B}_{\mbox{\textsc{nae}}})$. Let $\varphi'$ be built
  from $\varphi$ by substituting all 
  instances of $\mathrm{NAE}(x,y,z)$ by $E(x,y) \lor E(y,z) \lor E(x,z)$. It is easy to see
  that $\mathcal{B}_{\mbox{\textsc{nae}}}\models \varphi$ iff $\mathcal{K}_2\models \varphi'$, and the
  result follows.

  Note that we have not used $=$ in our hardness proof; and, in the case $|D|=1$, we
  may allow $=$ without affecting tractability (triviality). Thus, the
  fragment extended with $=$ follows the same delineation.
\end{proof}

\begin{proposition}
  \label{proposition:EqFreeFo:complexity}
  In full generality, the class of problems  $\EqFreeFo(\mathcal{D})$ exhibits
  dichotomy: if all relations of $\mathcal{D}$ are \emph{trivial}
  (either empty or contain all tuples) then the problem is in
  \Logspace, otherwise it is \Pspace-complete. 
\end{proposition}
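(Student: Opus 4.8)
\emph{The plan.} Membership in \Pspace\ holds for \emph{every} $\mathcal{D}$ by the usual recursive evaluation of a prenex $\FO$ sentence through its quantifier prefix, which uses only polynomial space in the length of the input. The two sides of the dichotomy are then handled separately: an easy \Logspace\ upper bound when all relations are trivial, and a \Pspace-hardness reduction from QSAT otherwise. For the first side I would argue exactly as in Proposition~\ref{prop:firstClass:complexity}(i): if every relation of $\mathcal{D}$ is empty or full, then the truth value of any atom $R(t_1,\ldots,t_a)$ is $1$ (if $R$ is full) or $0$ (if $R$ is empty), independently of the assignment. Since $D$ is nonempty and no atom depends on any variable, the quantifier-free matrix of an input sentence is a constant Boolean combination, and prefixing it with quantifiers over a nonempty domain does not change its truth value. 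Evaluating this reduces to the Boolean Sentence Value Problem, which lies in \Logspace~\cite{DBLP:journals/jacm/Lynch77}.

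For the hardness, suppose some relation $R$, of arity $a$, is non-trivial. First I would extract a \emph{flipping witness}: choosing $\tuple{s}\in R$ and $\tuple{s}'\notin R$ and changing the coordinates of $\tuple{s}$ into those of $\tuple{s}'$ one at a time, membership in $R$ must flip at some step. This yields a coordinate $i$, a context $\tuple{c}$ of values for the remaining $a-1$ coordinates, and elements $p,q\in D$ such that $R$ holds with $p$ in position $i$ and $\tuple{c}$ elsewhere, but fails with $q$ in position $i$ and the same $\tuple{c}$. Write $R^{i}(z;\tuple{y})$ for the atom obtained from $R$ by placing $z$ in position $i$ and the $(a-1)$-tuple $\tuple{y}$ in the remaining positions in order; as $\mathcal{D}$ is fixed, $i$ is a constant of the reduction. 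The crucial observation is a \emph{collapse lemma}: if a variable $x$ occurs in a formula only inside atoms $R^{i}(x;\tuple{y})$ for one fixed $\tuple{y}$, then the formula's truth depends on $x$ only through the truth value of $R^{i}(x;\tuple{y})$; hence, over any $\tuple{y}$ for which both $\{z:R^{i}(z;\tuple{y})\}$ and its complement are nonempty (a \emph{good} context), $\forall x$ and $\exists x$ collapse to the Boolean $\wedge$ and $\vee$ over the two truth values.

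I would then reduce from QSAT. Given a quantified Boolean formula $Q_1x_1\cdots Q_nx_n\,\Phi$ (each $Q_k\in\{\forall,\exists\}$, $\Phi$ propositional), translate it to the $\EqFreeFo$ sentence
\[
\Theta:=\exists y_1\cdots\exists y_{a-1}\,\Big[\big(\exists z\,R^{i}(z;\tuple{y})\big)\wedge\big(\exists z'\,\lnot R^{i}(z';\tuple{y})\big)\wedge\Psi'(\tuple{y})\Big],
\]
where $\Psi'(\tuple{y})$ keeps the quantifier prefix (now over domain variables), replaces each positive literal $x_k$ by $R^{i}(x_k;\tuple{y})$ and each negative literal $\lnot x_k$ by $\lnot R^{i}(x_k;\tuple{y})$, and keeps $\wedge,\vee$. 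After the usual logspace prenex normalisation this is an $\EqFreeFo$ sentence built in logspace. The two $\exists z,\exists z'$ conjuncts act as a filter selecting the good contexts. For any good $\tuple{y}$ the collapse lemma gives that $\Psi'(\tuple{y})$ evaluates to the truth value of the original QSAT instance, while for non-good $\tuple{y}$ the filter fails. Hence $\mathcal{D}\models\Theta$ iff some good context exists and the QSAT instance is true; since the flipping witness $\tuple{c}$ is itself a good context, this holds iff the QSAT instance is true. As QSAT is \Pspace-complete, $\EqFreeFo(\mathcal{D})$ is \Pspace-hard.

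The main obstacle is the absence of equality and constants in $\EqFreeFo$: unlike the $\posFoNeq$ case, I cannot name the context elements $c_1,\ldots,c_{a-1}$ directly. This is precisely what the outer existential block together with the goodness filter resolves, and it is what forces the collapse lemma to be proved for an \emph{arbitrary} good context rather than for one hardcoded tuple. Verifying the collapse lemma carefully, and checking that both directions of the reduction go through uniformly over all good contexts, is the technical heart of the argument.
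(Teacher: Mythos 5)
Your proof is correct, but it takes a genuinely different route from the paper's. The paper reuses the machinery it has already built for \EqFreeFo: the Leibniz-style indiscernibility relation $\sim$ of \S~\ref{sec:cont-core-eqfr}. All relations of $\mathcal{D}$ are trivial exactly when $\sim$ has a single class, so the \Logspace{} side follows because $\mathcal{D}$ is $\EqFreeFo$-equivalent to its one-element quotient $\mathcal{D}/\!\!\sim$ (Proposition~\ref{prop:EqFreeFoContainment}); on the hard side, $\sim$ has $n\geq 2$ classes and the definable relation $\lnot(x\sim y)$ plays the role of $\neq$, so the paper simply reruns the proof of Proposition~\ref{proposition:posFoNeq:complexity}, i.e.\ inherits hardness from $\QCSP(\mathcal{K}_n)$ when $n\geq 3$ and from a $\QCSP(\mathcal{B}_{\mbox{\textsc{nae}}})$ encoding when $n=2$. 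You instead prove hardness from scratch: a flipping witness isolates a sensitive coordinate of a non-trivial relation, the goodness filter existentially selects a context in which the atom $R^i(\,\cdot\,;\tuple{y})$ splits the domain into two nonempty parts, and the collapse lemma turns domain quantification into Boolean quantification, giving a direct reduction from QSAT. What each approach buys: the paper's argument is shorter given its infrastructure, makes the structural dichotomy (number of $\sim$-classes) explicit, and ties in with the $\EqFreeFo$-core; yours is self-contained and more elementary, needing only the \Pspace-completeness of QSAT rather than the cited hardness of $\QCSP(\mathcal{K}_n)$, and it treats all non-trivial structures uniformly with no case split on the number of classes. The only point to be pedantic about is the one you already flag: the collapse lemma must be proved for an arbitrary good context selected by the outer existential block (not just the hardcoded flipping witness), since the existential player may choose any $\tuple{y}$ passing the filter -- but as you note, every such $\tuple{y}$ yields the same Boolean collapse, so both directions of the reduction go through.
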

\begin{proof}
 
  If all relations are trivial then $\sim$ has a single equivalence
  class. Thus, $\mathcal{D}/\!\!\sim$ has a single element. By
  $\EqFreeFo$-equivalence of $\mathcal{D}$ and $\mathcal{D}/\!\!\sim$ 
  (see Proposition~\ref{prop:EqFreeFoContainment}), it suffices to check
  whether an input $\varphi$ in \EqFreeFo\ holds in
  $\mathcal{D}/\!\!\sim$. Since the latter has a single element, the
  problem is in \Logspace.

  Otherwise, the equivalence relation $\sim$ has
  at least 2 equivalence classes since $\mathcal{D}$ is non
  trivial. We may now follow the same proof as in
  Proposition~\ref{proposition:posFoNeq:complexity}, using the negation of $\sim$ in
  lieu of $\neq$, and \Pspace-hardness follows.  
\end{proof}

We proceed with the last two fragments of the second class. 
\begin{proposition}
  \label{prop:cspDisj:complexity}
  In full generality, the class of problems
  $\cspDisj(\mathcal{D})$ exhibits dichotomy: if the core of
  $\mathcal{D}$ has one element then the problem is in \Logspace,
  otherwise it is \NP-complete.
  As a corollary, the class of problems $\cspDisjEq(\mathcal{D})$
  exhibits the same dichotomy.
\end{proposition}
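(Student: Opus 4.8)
The plan is to classify via the classical core and push everything onto it. Write $\mathcal{C}$ for the core of $\mathcal{D}$. By Corollary~\ref{prop:equivalence:cspDisjEq}, $\mathcal{D}$ and $\mathcal{C}$ are $\cspDisj$-equivalent, so for every sentence $\varphi$ of $\cspDisj$ we have $\mathcal{D}\models\varphi$ iff $\mathcal{C}\models\varphi$; hence $\cspDisj(\mathcal{D})$ and $\cspDisj(\mathcal{C})$ are literally the same decision problem, and it suffices to classify according to $|\mathcal{C}|$. Membership in \NP{} is immediate in all cases, since $\cspDisj$ is a purely existential fragment and the expression complexity of $\{\exists, \land, \lor, \neq,= \}\mbox{-}\FO$ is already at most \NP{}~\cite{DBLP:conf/stoc/Vardi82}. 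If $|\mathcal{C}|=1$, then $\mathcal{C}$ is a one-element structure and, since $\cspDisj\subseteq\FO$, Proposition~\ref{prop:firstClass:complexity}$(i)$ places $\cspDisj(\mathcal{C})$ in \Logspace; this settles the tractable case.

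It remains to prove \NP-hardness when $|\mathcal{C}|=n\geq 2$, and the heart of the matter is to show that $\neq$ is definable in $\cspDisj$ over $\mathcal{C}$. Here I would use the defining feature of a core: every endomorphism of a core is an automorphism, so $\End(\mathcal{C})=\Aut(\mathcal{C})$. Enumerate $\mathcal{C}$ as $c_1,\ldots,c_n$ and, for each ordered pair of distinct indices $i,j$, let $\delta_{ij}(x,y)$ be the equality-free primitive positive formula obtained from the canonical conjunctive query $\phi^+_{\!\mathcal{C}}$ by leaving the variables $v_i,v_j$ free, renamed $x,y$, and existentially quantifying all remaining variables. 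Since the canonical database of $\phi^+_{\!\mathcal{C}}$ is isomorphic to $\mathcal{C}$, we get $\delta_{ij}^{\mathcal{C}}=\{(e(c_i),e(c_j)) : e \text{ a homomorphism } \mathcal{C}\to\mathcal{C}\}=\{(g(c_i),g(c_j)) : g\in\Aut(\mathcal{C})\}$, the second equality using that $\mathcal{C}$ is a core. As each such $g$ is a bijection and $c_i\neq c_j$, every pair in $\delta_{ij}^{\mathcal{C}}$ lies off the diagonal, so $\delta_{ij}^{\mathcal{C}}\subseteq{\neq}$, while $(c_i,c_j)\in\delta_{ij}^{\mathcal{C}}$ via $g=\mathrm{id}$. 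Setting $N(x,y):=\bigvee_{i\neq j}\delta_{ij}(x,y)$, a formula of $\cspDisj$, I obtain $N^{\mathcal{C}}=\{(a,b) : a\neq b\}$.

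With $\neq$ in hand the reductions are routine and split on $n$. For $n\geq 3$, substituting the fixed formula $N(x,y)$ for the edge atom $E(x,y)$ turns an instance of $\CSP(\mathcal{K}_n)$ into a $\cspDisj$-sentence over $\mathcal{C}$, giving a logspace reduction from $n$-colourability, which is \NP-complete. For $n=2$ I instead simulate not-all-equal: since $\NAE(x,y,z)$ holds iff the three values are not all equal, $\NAE(x,y,z)\equiv (x\neq y)\lor(y\neq z)\lor(x\neq z)$, so substituting $N$ for $\neq$ reduces $\CSP(\mathcal{B}_{\mbox{\textsc{nae}}})$ — which is \NP-complete by Theorem~\ref{theorem:SchaeferCSP} — to $\cspDisj(\mathcal{C})$. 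In both cases $\cspDisj(\mathcal{C})=\cspDisj(\mathcal{D})$ is \NP-complete. The main obstacle is the definability step of the previous paragraph; everything downstream is bookkeeping. I expect the cleanest way to present it is precisely through the core identity $\End(\mathcal{C})=\Aut(\mathcal{C})$, which forces the smallest pp-definable relation containing an off-diagonal pair to remain off-diagonal — the feature that makes $\cspDisj$ always hard on non-trivial cores and thereby sidesteps the open \CSP{} dichotomy (this is the algebraic content of the Galois connection with hyper-endomorphisms alluded to in the introduction).

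Finally, the corollary for $\cspDisjEq$ follows with no extra work: membership in \NP{} again holds as $\cspDisjEq\subseteq\{\exists, \land, \lor, \neq,= \}\mbox{-}\FO$; \NP-hardness transfers because $\cspDisj\subseteq\cspDisjEq$; and when $|\mathcal{C}|=1$, Corollary~\ref{prop:equivalence:cspDisjEq} gives $\cspDisjEq$-equivalence of $\mathcal{D}$ with its one-element core, whence Proposition~\ref{prop:firstClass:complexity}$(i)$ yields \Logspace. Thus $\cspDisjEq$ obeys the same dichotomy.
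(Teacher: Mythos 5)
Your proof is correct, and it reaches the result by a genuinely more self-contained route than the paper. The skeleton is the same: pass to the core $\mathcal{C}$ via Corollary~\ref{prop:equivalence:cspDisjEq}, dispose of the one-element case in \Logspace, and prove hardness for $|C|=n\geq 2$ by simulating $\mathcal{K}_n$ (for $n\geq 3$) and $\mathcal{B}_{\mbox{\textsc{nae}}}$ (for $n=2$). The difference is in how the simulation is licensed. The paper invokes its $\Inv$--$\hE$ Galois connection: it observes that every hyper-endomorphism of a core is an automorphism (a statement it asserts without proof), so $\hE(\mathcal{C})\subseteq S_n=\hE(\mathcal{K}_n)$ (respectively $\hE(\mathcal{B}_{\mbox{\textsc{nae}}})=S_2$), and then Corollary~\ref{cor:EqFreeKrasner-reduction} yields a \Logspace{} reduction by predicate substitution. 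You instead exhibit an explicit \cspDisj-definition of $\neq$ over $\mathcal{C}$, namely $N(x,y)=\bigvee_{i\neq j}\delta_{ij}(x,y)$ with $\delta_{ij}$ the canonical conjunctive query with two free variables, and justify it by the weaker, entirely classical fact $\End(\mathcal{C})=\Aut(\mathcal{C})$; the downstream reductions from $n$-colourability and from $\CSP(\mathcal{B}_{\mbox{\textsc{nae}}})$ (via $\NAE(x,y,z)\equiv N(x,y)\lor N(y,z)\lor N(x,z)$) are then immediate. In effect your $N$ is exactly the formula the paper's machinery produces when Theorem~\ref{theorem:galois-connectionEqFreeWeakKrasner} and Lemma~\ref{lemma:EqFreeWeakKrasnerconnection-by-types} are unrolled for the single relation $\neq$, so the mathematical content coincides; what your version buys is elementarity (no hyper-operations, no Galois connection, only the standard core property of endomorphisms), while the paper's version buys uniformity and reuse --- it handles every invariant relation at once and serves as the declared warm-up for the $\Inv$--$\shE$ connection that drives the \mylogic{} tetrachotomy.
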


In our preliminary work~\cite{DBLP:conf/cie/Martin08,DBLP:journals/corr/abs-cs-0609022},
the proof of the above is combinatorial and appeals to Hell and
\Nesetril's dichotomy theorem for undirected
graphs~\cite{HellNesetril}. An alternative proof of this result also appeared in \cite{DBLP:conf/cie/BodirskyHR09} (and to a lesser extent in \cite{DBLP:conf/walcom/HermannR09}). We will give here an algebraic 
proof which uses a variant of the Galois connection $\Inv-\End$ due to
Krasner~\cite{KrasnerJMPA1938} for $\cspDisjEq$. The variant for
$\cspDisj$ involves hyper-endomorphisms rather than endomorphisms
because of the absence of equality. A hyper-endomorphism of
$\mathcal{B}$ is a function from $B$ to the power-set of $B$ that
is total and preserving (see Definition~\ref{def:HeAndShe}).
Our purpose is to provide both a self-contained proof and a gentle
introduction to the techniques we shall use for the fragment \mylogic.

For a set $F$ of hyper-endomorphisms on the finite domain $B$, let $\Inv(F)$
be the set of relations on $B$ of which each $f$ in $F$ is an
hyper-endomorphism (when these relations are viewed as a structure over
$B$). We say that $S \in \Inv(F)$ is invariant or \emph{preserved} by
(the hyper-endomorphisms in) $F$. 
Let $\hE(\mathcal{B})$ be the set of hyper-endomorphisms of $\mathcal{B}$. 
Let $\langle \mathcal{B} \rangle_{\cspDisj}$ be the set of
relations that may be defined on $\mathcal{B}$ in \cspDisj.

\begin{lemma}
\label{lemma:EqFreeWeakKrasnerconnection-by-types}
Let $\tuple{r}:=(r_1,\ldots,r_k)$ be a $k$-tuple of elements of $\mathcal{B}$. There exists
a formula $\theta_{\tuple{r}}^{\cspDisj}(u_1,\ldots,u_k) \in \cspDisj$ such that
the following are equivalent.
\begin{romannum}
\item \label{EqFreeWeakKrasnerconnection-by-types:canonicalSentence}
  $(\mathcal{B}, r'_1,\ldots, r'_k) \models
  \theta_{\tuple{r}}^{\cspDisj}(u_1,\ldots,u_k)$.
\item \label{EqFreeWeakKrasnerconnection-by-types:hyperEndo}
  There is a hyper-endomorphism from $(\mathcal{B}, r_1,\ldots, r_k)$ to $(\mathcal{B},
  r'_1,\ldots, r'_k)$.
\end{romannum}
\end{lemma}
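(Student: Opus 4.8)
The plan is to exhibit $\theta_{\tuple{r}}^{\cspDisj}$ explicitly as a canonical primitive-positive formula attached to the tuple $\tuple{r}$, in the spirit of $\phi_{\!\mathcal{A}}^{\csplogic}$ from Theorem~\ref{theorem:CSPcontainment} and of $\theta_{\!\mathcal{A},m}^{\mylogic}$, and then to read a hyper-endomorphism off a winning strategy for $\exists$ and conversely, exactly as in Lemma~\ref{lemma:strategy:transfert:mylogic} and Lemma~\ref{lemma:canonical:sentence:mylogic}. Concretely, I would introduce a fresh existentially quantified variable $v_a$ for every element $a$ of $\mathcal{B}$ that does \emph{not} occur among $r_1,\ldots,r_k$, set $\mathrm{rep}(a):=\{u_j : r_j=a\}$ when $a$ is one of the $r_j$ and $\mathrm{rep}(a):=\{v_a\}$ otherwise, and define
\[
\theta_{\tuple{r}}^{\cspDisj}(u_1,\ldots,u_k):=
\exists (v_a)_{a\in B\setminus\{r_1,\ldots,r_k\}}\
\bigwedge_{R\in\sigma}\ \bigwedge_{(a_1,\ldots,a_i)\in R^{\mathcal{B}}}\
\bigwedge_{\substack{y_m\in\mathrm{rep}(a_m)\\ 1\le m\le i}} R(y_1,\ldots,y_i).
\]
Note that this already lies in $\csplogic$, hence a fortiori in $\cspDisj$: no disjunction is needed for a single tuple, and $\lor$ will only be called upon when these formulas are later combined for the $\Inv$-$\hE$ Galois connection.

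For the implication (ii)$\Rightarrow$(i), given a hyper-endomorphism $f$ with $r'_j\in f(r_j)$ for all $j$, I would pick for each unnamed $a$ an arbitrary element of $f(a)$ as witness for $v_a$. Every conjunct $R(y_1,\ldots,y_i)$ arises from a tuple $(a_1,\ldots,a_i)\in R^{\mathcal{B}}$ together with a choice $y_m\in\mathrm{rep}(a_m)$, and under the assignment $u_j\mapsto r'_j$ each $y_m$ is evaluated to an element of $f(a_m)$; preservation of $f$ then yields that $R$ holds of these elements. Hence $(\mathcal{B},r'_1,\ldots,r'_k)\models\theta_{\tuple{r}}^{\cspDisj}$. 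This is precisely the ``apply $f$ to a strategy'' move already used in Lemma~\ref{lemma:strategy:transfert:mylogic}.

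The converse (i)$\Rightarrow$(ii) is the delicate direction and I expect it to be the main obstacle. Given witnesses $c_a\in B$ for the variables $v_a$, I would define $f(a):=\{r'_j : r_j=a\}$ for the constants and $f(a):=\{c_a\}$ for the remaining elements, so that $f$ is total and $r'_j\in f(r_j)$ by construction. The point to verify is that $f$ preserves \emph{every} relation, including tuples involving elements outside $\{r_1,\ldots,r_k\}$: here the naive choice $\theta:=\phi_{\!\mathcal{B}(\tuple{r})}$ fails, since a hyper-endomorphism must map the whole of $B$ and not merely the named elements (e.g.\ on the single directed edge $a\to c$ there is no hyper-endomorphism with $c\in f(a)$, although $\phi_{\!\mathcal{B}((a))}$ is vacuous). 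This is exactly why the formula existentially guesses an image for each unnamed element and why the innermost conjunction ranges over \emph{all} combinations $y_m\in\mathrm{rep}(a_m)$. Since $f(a_m)$ is precisely the set of values taken by $\mathrm{rep}(a_m)$, each instance of the preservation condition for $f$ is literally one of the conjuncts of $\theta_{\tuple{r}}^{\cspDisj}$, and all of these hold by hypothesis.

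Two auxiliary observations make this clean. First, using singleton images on the unnamed elements is without loss of generality, because preservation is inherited by sub-hyper-operations (if $f'(a)\subseteq f(a)$ pointwise then $\prod_m f'(a_m)\subseteq\prod_m f(a_m)\subseteq R^{\mathcal{B}}$), so any witnessing hyper-endomorphism can be shrunk to this shape; this is what legitimises the purely existential encoding. Second, the collection of combination atoms automatically accounts for repeated constants $r_j=r_{j'}$ with $r'_j\neq r'_{j'}$, which is the genuinely multivalued feature a single homomorphism would miss. As a final consistency check one verifies directly that the relation $\{\tuple{r'} : \exists f\in\hE(\mathcal{B}),\ \forall j\, r'_j\in f(r_j)\}$ is closed under $\hE(\mathcal{B})$ by composing $g\circ f$, which is in harmony with its $\cspDisj$-definability and foreshadows the $\Inv$-$\hE$ Galois connection to come.
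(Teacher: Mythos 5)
Your proposal is correct, and your formula is the paper's own up to inessential repackaging: the paper sets $\theta_{\tuple{r}}^{\cspDisj}:=\exists v_1,\ldots,v_{|B|}\ \phi_{\mathcal{B}(\tuple{r},\tuple{s})}(v_1,\ldots,v_{|B|})$, where $\tuple{s}$ enumerates \emph{all} of $B$, so that named elements receive a $v_i$-representative in addition to their $u_j$'s, whereas you introduce fresh variables only for the unnamed elements; both are pure $\csplogic$-formulas defining the same relation, and your direction (i)$\Rightarrow$(ii), reading a hyper-endomorphism off the witnesses, is exactly the paper's forward direction. Where you genuinely diverge is in (ii)$\Rightarrow$(i): you verify satisfaction directly, choosing each witness inside the corresponding image $f(a)$, while the paper instead asserts that one may extract from the given hyper-endomorphism an \emph{endomorphism} from $(\mathcal{B},r_1,\ldots,r_k)$ to $(\mathcal{B},r'_1,\ldots,r'_k)$ and then cites Proposition~\ref{prop:containment:cspDisjEq} (homomorphism implies \cspDisj-containment). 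Your route is the more robust one: when $r_i=r_j$ but $r'_i\neq r'_j$ --- precisely the ``genuinely multivalued feature'' you flag yourself --- no single-valued endomorphism with $r_m\mapsto r'_m$ for all $m$ can exist, yet (i) and (ii) can both still hold (e.g.\ over a structure whose only relation is a full unary predicate), so the paper's reduction to endomorphisms breaks down in this edge case while your direct application of $f$ covers it. Nothing downstream is endangered by this discrepancy: the Galois connection of Theorem~\ref{theorem:galois-connectionEqFreeWeakKrasner} only uses the direction (i)$\Rightarrow$(ii) together with the trivial fact that $(\mathcal{B},r_1,\ldots,r_k)\models\theta_{\tuple{r}}^{\cspDisj}(u_1,\ldots,u_k)$, both of which you and the paper establish in the same way.
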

\begin{proof}
  
  let $\mathbf{s}:=(b_1,\ldots,b_{|B|})$ an enumeration of the elements of
  $\mathcal{B}$ and $\phi_{\mathcal{B}(\mathbf{r},\mathbf{s})}(v_1,\ldots,v_{|B|})$ be the
  associated conjunction of positive facts. Set 
  $$\theta_{\tuple{r}}^{\cspDisj}(u_1,\ldots,u_k):= \exists v_1,\ldots,v_{|B|} \ \phi_{\mathcal{B}(\mathbf{r},\mathbf{s})}(v_1,\ldots,v_{|B|}).$$

  The forward direction is clear as the witness $s'_1,\ldots,s'_{|B|}$ for
  $v_1,\ldots,v_{|B|}$ provides a hyper-endomorphism $f$ defined as
  $f(b_i) \ni s'_i$ and $f(r_i) \ni r'_i$.

  For the backwards direction, one may build an endomorphism from $(\mathcal{B}, r_1,\ldots, r_k)$ to $(\mathcal{B},
  r'_1,\ldots, r'_k)$ from the given hyper-endomorphism. The result follows
  from the implication from (i) to (iv) of Proposition~\ref{prop:containment:cspDisjEq}. 
\end{proof}

\begin{theorem}
  \label{theorem:galois-connectionEqFreeWeakKrasner}
  For a finite structure $\mathcal{B}$ we have
  $\langle \mathcal{B} \rangle_{\cspDisj} = \Inv(\hE(\mathcal{B}))$.
\end{theorem}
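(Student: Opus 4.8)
The plan is to establish this Galois connection as two inclusions, treated separately. The inclusion $\langle \mathcal{B} \rangle_{\cspDisj} \subseteq \Inv(\hE(\mathcal{B}))$ is a \emph{soundness} statement: every $\cspDisj$-definable relation is preserved by every hyper-endomorphism. The reverse inclusion $\Inv(\hE(\mathcal{B})) \subseteq \langle \mathcal{B} \rangle_{\cspDisj}$ is a \emph{completeness} statement, and for it the canonical formula of Lemma~\ref{lemma:EqFreeWeakKrasnerconnection-by-types} does essentially all the work. I expect the conceptual content to sit in the completeness direction, though once that canonical formula is in hand it becomes short; the genuinely fiddly point will be the degenerate case of the empty relation.

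For soundness, I would fix a relation $S$ of arity $k$ defined over $\mathcal{B}$ by a formula $\phi(x_1,\ldots,x_k) \in \cspDisj$, together with some $f \in \hE(\mathcal{B})$. Assuming $(a_1,\ldots,a_k) \in S$ and picking arbitrary $b_j \in f(a_j)$ for $1 \le j \le k$, the goal is $(b_1,\ldots,b_k) \in S$. This is a strategy-transfer argument in the spirit of Lemma~\ref{lemma:strategy:transfert:mylogic} and the proof of Proposition~\ref{prop:containment:cspDisjEq}: put the quantifier-free part of $\phi$ in disjunctive normal form, so that $\mathcal{B} \models \phi(\tuple{a})$ yields an assignment of the existential variables witnessing one conjunctive disjunct $\chi_i$, that is, a homomorphism from $\mathcal{D}_{\chi_i}$ into $\mathcal{B}$. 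I then re-assign each existential variable $y$ to an arbitrary element of $f(\text{old}(y))$, non-empty by totality of $f$, and each free variable $x_j$ to $b_j$. Since $f$ is preserving, every atom of $\chi_i$ remains satisfied under the new assignment, whence $\chi_i$, and so $\phi$, holds at $\tuple{b}$ and $(b_1,\ldots,b_k) \in S$. As $f$ was arbitrary, $S \in \Inv(\hE(\mathcal{B}))$.

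For completeness, given $S \in \Inv(\hE(\mathcal{B}))$ of arity $k$, I would set
\[
  \phi_S(u_1,\ldots,u_k) := \bigvee_{\tuple{r} \in S} \theta_{\tuple{r}}^{\cspDisj}(u_1,\ldots,u_k),
\]
a disjunction of $\{\exists,\land\}$-formulas and hence itself in $\cspDisj$, and show it defines $S$. If $\tuple{r}' \in S$, then the identity hyper-endomorphism $i(x):=\{x\}$ witnesses condition~(ii) of Lemma~\ref{lemma:EqFreeWeakKrasnerconnection-by-types} for the disjunct $\tuple{r}=\tuple{r}'$, so $(\mathcal{B},\tuple{r}') \models \theta_{\tuple{r}'}^{\cspDisj}$ and a fortiori $\mathcal{B} \models \phi_S(\tuple{r}')$. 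Conversely, if $\mathcal{B} \models \phi_S(\tuple{r}')$ then some disjunct $\theta_{\tuple{r}}^{\cspDisj}$ holds at $\tuple{r}'$ with $\tuple{r} \in S$, so by Lemma~\ref{lemma:EqFreeWeakKrasnerconnection-by-types} there is a hyper-endomorphism $f$ of $\mathcal{B}$ with $r'_j \in f(r_j)$ for all $j$. Because $S$ is preserved by $f$ and $\tuple{r} \in S$, selecting the witnesses $r'_j \in f(r_j)$ forces $\tuple{r}' \in S$. Hence $\phi_S$ defines $S$, giving $S \in \langle \mathcal{B} \rangle_{\cspDisj}$.

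The main obstacle is the degenerate case $S=\emptyset$, where $\phi_S$ is an empty disjunction. The empty relation always lies in $\Inv(\hE(\mathcal{B}))$, being preserved vacuously, yet when every relation of $\mathcal{B}$ is full each atom holds under all assignments, so every genuine positive existential formula defines the full relation and $\emptyset$ is not $\cspDisj$-definable in the naive sense. The theorem therefore compels us to read an empty disjunction as $\bot$ (defining $\emptyset$) and to treat this as belonging to $\cspDisj$, which is the standard convention. With that understood, the two inclusions combine to yield $\langle \mathcal{B} \rangle_{\cspDisj} = \Inv(\hE(\mathcal{B}))$.
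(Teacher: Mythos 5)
Your proof is correct, and your completeness half ($\Inv(\hE(\mathcal{B})) \subseteq \langle \mathcal{B}\rangle_{\cspDisj}$) is essentially the paper's own argument: the same disjunction of the canonical formulas $\theta_{\tuple{r}}^{\cspDisj}$ over the tuples of $S$, with the identity hyper-endomorphism giving membership of each tuple of $S$ and the direction (i)$\Rightarrow$(ii) of Lemma~\ref{lemma:EqFreeWeakKrasnerconnection-by-types}, combined with invariance of $S$, giving the converse. Where you genuinely diverge is the soundness half: the paper proves $\langle \mathcal{B}\rangle_{\cspDisj} \subseteq \Inv(\hE(\mathcal{B}))$ by structural induction on the formula (base case for atoms, inductive cases for $\land$, $\lor$, $\exists$), whereas you prenex the formula, put its quantifier-free part into disjunctive normal form, and push the witnessing homomorphism of one disjunct through the hyper-endomorphism, in the style of Proposition~\ref{prop:containment:cspDisjEq} and Lemma~\ref{lemma:strategy:transfert:mylogic}. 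Both are valid; your route makes the connection with canonical databases and strategy transfer explicit, while the paper's induction is modular in a way that pays off later, since Theorem~\ref{thm:galois-connectionMyLogic} literally reuses those inductive cases and only adds the case of $\forall$ (your normal-form argument would have to be replaced by the full strategy-transfer machinery there). One small point your DNF step absorbs silently: atoms may repeat variables, as in $R(x,x,y)$, and preservation of such instances is exactly what the paper's base case pauses to justify; your argument handles it correctly because each variable receives a single new value. Finally, your observation about $S=\emptyset$ is a fair catch rather than a gap: the paper's disjunction is empty in that case, and for a structure each of whose relations contains a constant tuple $(d,\ldots,d)$ every genuine $\cspDisj$ formula is satisfiable, so the empty relation is invariant but not definable unless one adopts the convention that the empty disjunction ($\bot$) belongs to $\cspDisj$, or tacitly restricts $\Inv$ to non-empty relations; the paper passes over this degenerate case in silence.
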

\begin{proof}
  Let $\varphi(\tuple{v})$ be a formula of $\cspDisj$ with free variables
  $\tuple{v}$. We denote also by $\varphi(\tuple{v})$ the relation
  induced over $\mathcal{B}$.
  \begin{longenum}
  \item $\varphi(\tuple{v}) \in \langle \mathcal{B}
    \rangle_{\cspDisj} \ \Rightarrow \ \varphi(\tuple{v}) \in
    \Inv(\hE(\mathcal{B}))$.  
    This is proved by induction on the complexity of
    $\varphi(\tuple{v})$. 

    (Base Cases.)
      When $\varphi(\tuple{v}):=R(\tuple{v})$, the variables
      $\tuple{v}$ may appear multiply in $R$ and in any order. Thus
      $R$ is an instance of an extensional relation under substitution
      and permutation of positions. The result follows directly from
      the definition of hyper-endomorphisms.

      (Inductive Step.) There are three subcases. We progress through
      them in a workmanlike fashion. Take $f \in \hE(\mathcal{B})$.
      \begin{enumerate}
      \item
        $\varphi(\tuple{v}):=\psi(\tuple{v}) \land
        \psi'(\tuple{v})$.  Let
        $\tuple{v}:=(v_1,\ldots,v_l)$. Suppose $\mathcal{B} \models
        \varphi(x_1,\ldots,x_l)$; then both $\mathcal{B} \models
        \psi(x_1,\ldots,x_l)$ and $\mathcal{B} \models
        \psi'(x_1,\ldots,x_l)$. By Inductive Hypothesis (IH), for any
        $y_1\in f(x_1),\ldots,$ $y_l\in f(x_l)$, both $\mathcal{B} \models
        \psi(y_1,\ldots,y_l)$ and $\mathcal{B} \models
        \psi'(y_1,\ldots,y_l)$, whence $\mathcal{B} \models
        \varphi(y_1,\ldots,y_l)$.

      \item
        $\varphi(\tuple{v}):=\psi(\tuple{v}) \lor
        \psi'(\tuple{v})$. Let $\tuple{v}:=(v_1,\ldots,v_l)$. Suppose
        $\mathcal{B} \models \varphi(x_1,\ldots,x_l)$; then one of
        $\mathcal{B} \models \psi(x_1,\ldots,x_l)$ or $\mathcal{B}
        \models \psi'(x_1,\ldots,x_l)$; w.l.o.g. the former. By IH,
        for any $y_1\in f(x_1),\ldots,y_l\in f(x_l)$, $\mathcal{B} \models
        \psi(y_1,\ldots,y_l)$, whence $\mathcal{B} \models
        \varphi(y_1,\ldots,y_l)$.

      \item  
        $\varphi(\tuple{v}):=\exists w \ \psi(\tuple{v},w)$. Let
        $\tuple{v}:=(v_1,\ldots,v_l)$. Suppose $\mathcal{B} \models
        \exists w \ \psi(x_1,\ldots,x_l,w)$; then for some $x'$,
        $\mathcal{B} \models \psi(x_1,\ldots,x_l,x')$. By IH, for any
        $y_1\in f(x_1),\ldots,y_l\in f(x_l),y'\in f(x')$, $\mathcal{B} \models
        \psi(y_1,\ldots,y_l,y')$, whereupon $\mathcal{B} \models
        \exists w \ \psi(y_1,\ldots,y_l,w)$.
      \end{enumerate}
  \item $S \in \Inv(\hE(\mathcal{B})) \ \Rightarrow \ S \in \langle
    \mathcal{B} \rangle_{\cspDisj}.$ 
    Consider the $k$-ary relation $S
    \in \Inv(\hE(\mathcal{B}))$. Let $\tuple{r}_1,\ldots,\tuple{r}_m$
    be the tuples of $S$. Set $\theta_S^{\cspDisj}(u_1,\ldots,u_k)$ to
    be the following formula of \cspDisj:
    \[
    \theta_{\tuple{r}_1}^{\cspDisj}(u_1,\ldots,u_k) \vee \ldots \vee
    \theta_{\tuple{r}_m}^{\cspDisj}(u_1,\ldots,u_k). \]     
    For
    $\tuple{r}_i:=(r_{i1},\ldots,r_{ik})$, note that
    $(\mathcal{B},r_{i1},\ldots,r_{ik}) \models
    \theta_{\tuple{r}_i}^{\cspDisj}(u_1,\ldots,u_k)$. That
    $\theta_S(u_1,\ldots,u_k)=S$ now follows from Part (ii) of
    Lemma~\ref{lemma:EqFreeWeakKrasnerconnection-by-types}, since $S
    \in \Inv(\hE(\mathcal{B}))$.
  \end{longenum}
\end{proof}
  



\noindent Note that in \cite{DBLP:conf/walcom/HermannR09} it is erroneously claimed that endomorphisms (not hyper-endomorphisms) are the correct algebraic object for the fragment $\cspDisj$ -- this is not correct and only holds for the richer fragement $\cspDisjEq$.
\begin{corollary}
\label{cor:EqFreeKrasner-reduction}
Let $\mathcal{B}$ and $\mathcal{B}'$ be finite structures over the same domain $B$.
 If $\hE(\mathcal{B}) \subseteq \hE(\mathcal{B}')$ then $\cspDisj(\mathcal{B}') \leq_{\Logspace} \cspDisj(\mathcal{B})$.
\end{corollary}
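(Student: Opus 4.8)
The plan is to establish a reduction between model checking problems from the inclusion of hyper-endomorphism sets, using the Galois connection just proved. The key principle is that $\hE(\mathcal{B}) \subseteq \hE(\mathcal{B}')$ translates, via Theorem~\ref{theorem:galois-connectionEqFreeWeakKrasner}, into a \emph{definability} statement: every relation of $\mathcal{B}'$ can be defined in $\cspDisj$ over $\mathcal{B}$. I would make this precise first.

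First I would argue as follows. Since each $f \in \hE(\mathcal{B})$ is, by hypothesis, also in $\hE(\mathcal{B}')$, every relation $R^{\mathcal{B}'}$ of $\mathcal{B}'$ is preserved by all of $\hE(\mathcal{B})$, i.e. $R^{\mathcal{B}'} \in \Inv(\hE(\mathcal{B}))$. By Theorem~\ref{theorem:galois-connectionEqFreeWeakKrasner}, $\Inv(\hE(\mathcal{B})) = \langle \mathcal{B} \rangle_{\cspDisj}$, so each such $R^{\mathcal{B}'}$ is definable by some $\cspDisj$ formula $\theta_R(u_1,\ldots,u_{a})$ interpreted over $\mathcal{B}$, where $a$ is the arity of $R$. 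Because the signature $\sigma$ is finite, there are only finitely many such formulas $\theta_R$, one per relational symbol, and each is a fixed object independent of the input.

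Next I would describe the logspace reduction itself. Given an input sentence $\varphi$ for $\cspDisj(\mathcal{B}')$, I construct $\varphi'$ by replacing every atom $R(w_{i_1},\ldots,w_{i_a})$ occurring in $\varphi$ with the defining formula $\theta_R(w_{i_1},\ldots,w_{i_a})$ obtained above, renaming the bound variables of each $\theta_R$ apart so as to avoid clashes. Since $\theta_R$ lies in $\cspDisj$, the substitution preserves membership in $\cspDisj$: the existential quantifiers and conjunctions/disjunctions introduced interleave harmlessly with those of $\varphi$ (in particular the inner $\exists$ quantifiers can be pulled to the front or left in place, as the fragment is closed under these operations). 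By construction $\mathcal{B}$ interprets $\theta_R$ exactly as $\mathcal{B}'$ interprets $R$, so a straightforward induction on the structure of $\varphi$ gives $\mathcal{B}' \models \varphi$ iff $\mathcal{B} \models \varphi'$. The transformation is clearly computable in logarithmic space, since each atom is expanded by a fixed finite formula and only variable-renaming counters need be maintained.

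The main obstacle, and the point deserving the most care, is the definability transfer in the first step: one must be sure that $\hE(\mathcal{B}) \subseteq \hE(\mathcal{B}')$ genuinely forces $R^{\mathcal{B}'} \in \Inv(\hE(\mathcal{B}))$, and that the resulting $\cspDisj$-definitions compose correctly under substitution without silently requiring equality. The absence of equality is exactly why hyper-endomorphisms rather than endomorphisms are needed here (as the remark preceding the corollary stresses), so I would double-check that no step of the substitution implicitly identifies variables. Everything else is routine bookkeeping once the Galois connection has done its work.
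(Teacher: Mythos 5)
Your proof is correct and takes essentially the same route as the paper's: both pass from $\hE(\mathcal{B}) \subseteq \hE(\mathcal{B}')$ to the fact that every relation of $\mathcal{B}'$ lies in $\Inv(\hE(\mathcal{B})) = \langle \mathcal{B} \rangle_{\cspDisj}$ via Theorem~\ref{theorem:galois-connectionEqFreeWeakKrasner}, and then effect the logspace reduction by substituting each predicate of $\mathcal{B}'$ with its fixed $\cspDisj$-definition over $\mathcal{B}$. Your write-up merely makes explicit the bookkeeping (variable renaming, induction on the formula, finiteness of the signature) that the paper compresses into ``straightforward substitution of predicates.''
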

\begin{proof}
If $\hE(\mathcal{B}) \subseteq \hE(\mathcal{B}')$, then
$\Inv(\hE(\mathcal{B}')) \subseteq \Inv(\hE(\mathcal{B}))$. From
Theorem~\ref{theorem:galois-connectionEqFreeWeakKrasner}, it follows that
$\langle \mathcal{B}' \rangle_{\cspDisj}$ $\subseteq \langle
  \mathcal{B} \rangle_{\cspDisj}$. Recalling that $\mathcal{B}'$
  contains only a finite number of extensional relations, we may
  therefore effect a Logspace reduction from $\cspDisj(\mathcal{B}')$
  to $\cspDisj(\mathcal{B})$ by straightforward substitution of
  predicates. 
\end{proof}

\begin{proof}[of Proposition~\ref{prop:cspDisj:complexity}]
  By Proposition~\ref{prop:equivalence:cspDisjEq}, we may assume
  w.l.o.g. that $\mathcal{D}$ is a core. 
  This means that every hyper-endomorphism of $\mathcal{D}$ is in
  fact an automorphism -- we identify hyper-endomorphisms whose range
  are singletons with automorphisms -- and
  thus $\hE(\mathcal{D})$ is a subset of $S_n$ where $n=|D|$.
  If $D$ has one element, then the problem is trival. If $D$ has two
  elements, then
  $\hE(\mathcal{D})\subseteq\hE(\mathcal{B}_{\mbox{\textsc{nae}}})=S_2$. By
  Lemma~\ref{cor:EqFreeKrasner-reduction}, it follows that
  $\cspDisj(\mathcal{B}_{\mbox{\textsc{nae}}}) \leq_{\Logspace}
  \cspDisj(\mathcal{D})$.
  Since the former is a generalisation of the \NP-complete
  $\CSP(\mathcal{B}_{\mbox{\textsc{nae}}})$, the latter is
  \NP-complete.
  If $\mathcal{D}$ has $n\geq2$ elements, we proceed similarly
  with $\mathcal{K}_n$.
\end{proof}

\begin{proposition}
  \label{prop:cspDisjNeq:complexity}
  In full generality, the class of problems
  $\cspDisjNeq(\mathcal{D})$ exhibits dichotomy: if $|D|=1$ then the
  problem is in \Logspace, otherwise it is \NP-complete. 
  Consequently, the fragment extended with $=$ follows the same dichotomy.
\end{proposition}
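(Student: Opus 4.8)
The plan is to establish membership in \NP{} for every $\mathcal{D}$, dispose of the one-element case, and then prove \NP-hardness for all $|D|\geq 2$ by a single uniform reduction from the disjunctive fragment over a clique.

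First, for the upper bound, note that $\cspDisjNeq$ is a purely existential fragment, so by the observation in \S~\ref{sec:methodology} (following Vardi~\cite{DBLP:conf/stoc/Vardi82}) its expression complexity is at most \NP{} on any structure; this survives the addition of equality, since even $\{\exists,\land,\lor,\neq,=\}$-\FO{} is at most \NP. The case $|D|=1$ is settled at once by Proposition~\ref{prop:firstClass:complexity}: a single-element model turns the problem into a Boolean sentence value problem and so lies in \Logspace. It therefore only remains to prove \NP-hardness whenever $|D|=n\geq 2$.

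The key observation for the lower bound is that on a domain of size $n$ the disequality predicate defines exactly the edge relation of the clique $\mathcal{K}_n$: fixing a bijection between $D$ and the vertices of $\mathcal{K}_n$, we have $E^{\mathcal{K}_n}(a,b)$ iff $a\neq b$ in $D$ (with $a\neq a$ being false, matching the absence of self-loops). Hence I would reduce $\cspDisj(\mathcal{K}_n)$ to $\cspDisjNeq(\mathcal{D})$ by the syntactic substitution replacing every atom $E(x,y)$ of an input sentence $\varphi\in\cspDisj$ by $x\neq y$, producing a sentence $\varphi'\in\cspDisjNeq$ that uses no extensional symbol of $\mathcal{D}$. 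This substitution is computable in logspace, and under the identification above $\mathcal{K}_n\models\varphi$ iff $\mathcal{D}\models\varphi'$. Since $\mathcal{K}_n$ is a core (any homomorphism of a clique to itself is injective, hence an automorphism) with $n\geq 2>1$ elements, Proposition~\ref{prop:cspDisj:complexity} yields that $\cspDisj(\mathcal{K}_n)$ is \NP-complete, whence $\cspDisjNeq(\mathcal{D})$ is \NP-hard. Combined with the first paragraph this gives \NP-completeness for $|D|\geq 2$.

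Finally, the extension with $=$ inherits the same dichotomy: for $|D|=1$ triviality is preserved, while for $|D|\geq 2$ the hardness instances produced above use only $\neq$ and thus already belong to the smaller fragment, and the matching \NP{} upper bound was noted above. The main subtlety is the Boolean case $n=2$: there plain $\CSP(\mathcal{K}_2)$ is merely $2$-colourability, which lies in \Ptime, so disequality alone does \emph{not} suffice for hardness. What rescues the argument is disjunction, which is precisely what makes $\cspDisj(\mathcal{K}_2)$ \NP-complete ($\mathcal{K}_2$ being a two-element core); appealing to Proposition~\ref{prop:cspDisj:complexity} rather than to Schaefer's theorem is what lets us treat all $n\geq 2$ uniformly, and is the crux of the proof.
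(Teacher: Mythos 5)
Your proof is correct, and it takes a slightly different route from the paper's. Both arguments start from the same key simulation -- on an $n$-element domain the disequality predicate is exactly the edge relation of $\mathcal{K}_n$ -- but they diverge in the choice of source problem for the reduction. The paper, proceeding as in Proposition~\ref{proposition:posFoNeq:complexity}, splits into cases: for $n\geq 3$ it reduces from the plain $\CSP(\mathcal{K}_n)$ ($n$-colourability), and for $n=2$, where $\CSP(\mathcal{K}_2)$ is tractable, it uses disjunction to simulate $\mathcal{B}_{\mbox{\textsc{nae}}}$ by replacing each atom $\mathrm{NAE}(x,y,z)$ with $(x\neq y)\lor(y\neq z)\lor(x\neq z)$, reducing from the \NP-complete $\CSP(\mathcal{B}_{\mbox{\textsc{nae}}})$. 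You instead reduce uniformly from $\cspDisj(\mathcal{K}_n)$ for all $n\geq 2$ and invoke Proposition~\ref{prop:cspDisj:complexity} as a black box, since $\mathcal{K}_n$ is its own core of size $n\geq 2$. Your route buys uniformity -- no case split, and a clean reuse of the earlier classification -- whereas the paper's is more self-contained, appealing only to classical \NP-complete problems rather than to the Galois-connection machinery behind Proposition~\ref{prop:cspDisj:complexity}. It is worth noting that the two proofs are not so far apart underneath: the paper's proof of Proposition~\ref{prop:cspDisj:complexity} for two-element cores itself reduces from $\cspDisj(\mathcal{B}_{\mbox{\textsc{nae}}})$, a generalisation of $\CSP(\mathcal{B}_{\mbox{\textsc{nae}}})$, so your appeal to that proposition at $n=2$ unwinds to essentially the same NAE ingredient the paper uses explicitly -- your final paragraph correctly identifies this as the crux. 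Your handling of the upper bound (Vardi's \NP{} bound for existential fragments) and of the extension with $=$ (hardness instances use only $\neq$; the $|D|=1$ case remains trivial) matches the paper.
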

\begin{proof}
  The proof is similar to that of Proposition~\ref{proposition:posFoNeq:complexity}. 
  Let $|D|=n$.
  The inequality symbol $\neq$ allows to simulate $\mathcal{K}_n$. 
  When $n=2$, using disjunction we may simulate $\mathcal{B}_{\mbox{\textsc{nae}}}$.
  \NP-completeness follows by reduction from $\CSP(\mathcal{K}_n)$
  when $n\geq3$ and from $\CSP(\mathcal{B}_{\mbox{\textsc{nae}}})$
  when $n=2$.
  Note that equality is not used in our hardness proof and may
  trivially be allowed when $|D|=1$. Thus, the
  classification is the same whether one allows $=$ or not.
\end{proof}

All fragments of the second class follow a natural dichotomy.
\begin{corollary}\label{Secondclass:complexity:oneElementCore}
  For any syntactic fragment $\mathscr{L}$ of \FO\ in the second
  class, the model checking problem $\mathscr{L}(\mathcal{D})$ is
  trivial (in \Logspace) when the $\mathscr{L}$-core of $\mathcal{D}$
  has one element and hard otherwise (\NP-complete for existential
  fragments, \Pspace-complete for fragments containing both quantifiers).
\end{corollary}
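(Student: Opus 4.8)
The plan is to treat this as a unifying corollary: the four dichotomies have already been established individually in Propositions~\ref{prop:cspDisj:complexity}, \ref{prop:cspDisjNeq:complexity}, \ref{proposition:posFoNeq:complexity} and~\ref{proposition:EqFreeFo:complexity}, so all that remains is to reconcile the triviality condition stated in each proposition with the uniform condition ``the $\mathscr{L}$-core of $\mathcal{D}$ has one element'', using the appropriate notion of core recorded in Table~\ref{table:Containment:Equivalence:Core}. I would organise the argument fragment by fragment, according to the five representatives listed for the second class.

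First I would dispatch the disjunctive fragments without disequality, \cspDisj\ and \cspDisjEq. Here the $\mathscr{L}$-core is the classical core (by Corollary~\ref{prop:equivalence:cspDisjEq} and the first block of Table~\ref{table:Containment:Equivalence:Core}), and Proposition~\ref{prop:cspDisj:complexity} already phrases its dichotomy exactly as ``\Logspace\ when the core has one element, \NP-complete otherwise''. As these are existential fragments, the hard case is \NP-complete, as required.

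Next I would treat \cspDisjNeq\ and \posFoNeq\ together. Both contain \csplogicNeq, since each admits $\exists$, $\land$ and $\neq$; hence by the corollary to Proposition~\ref{prop:CSPNeqContainment}, for such fragments $\mathscr{L}$-equivalence \emph{is} isomorphism, and so the $\mathscr{L}$-core of $\mathcal{D}$ is $\mathcal{D}$ itself (the last block of Table~\ref{table:Containment:Equivalence:Core}). Consequently ``the $\mathscr{L}$-core has one element'' is equivalent to $|D|=1$, which is precisely the triviality condition appearing in Propositions~\ref{prop:cspDisjNeq:complexity} and~\ref{proposition:posFoNeq:complexity}. Since \cspDisjNeq\ is existential its hard case is \NP-complete, whereas \posFoNeq\ carries both quantifiers and its hard case is \Pspace-complete, matching the corollary.

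Finally I would handle \EqFreeFo. By \S~\ref{sec:cont-core-eqfr} its $\mathscr{L}$-core is the quotient $\mathcal{D}/\!\!\sim$, so ``the $\mathscr{L}$-core has one element'' holds if, and only if, $\mathcal{D}/\!\!\sim$ has a single equivalence class, which is exactly the condition that all relations of $\mathcal{D}$ be trivial invoked in Proposition~\ref{proposition:EqFreeFo:complexity}; and as \EqFreeFo\ has both quantifiers, the hard case is \Pspace-complete. The only real work is this translation between core notions, and I do not expect a substantive obstacle: once each triviality criterion has been rewritten in terms of the correct $\mathscr{L}$-core, the statement follows immediately by assembling the four propositions. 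The one point to keep straight is bookkeeping, namely which notion of core (classical core, the structure itself, or the quotient by $\sim$) and which quantifier prefix (purely existential, giving \NP, versus both quantifiers, giving \Pspace) applies to each fragment.
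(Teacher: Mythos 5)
Your proposal is correct and matches the paper's intent exactly: the paper states this corollary without an explicit proof, treating it as the immediate assembly of Propositions~\ref{prop:cspDisj:complexity}, \ref{prop:cspDisjNeq:complexity}, \ref{proposition:posFoNeq:complexity} and~\ref{proposition:EqFreeFo:complexity}, once each proposition's triviality condition is rewritten via the appropriate notion of $\mathscr{L}$-core (classical core; the structure itself, since \cspDisjNeq\ and \posFoNeq\ contain \csplogicNeq; and the quotient by $\sim$ for \EqFreeFo). Your fragment-by-fragment bookkeeping is precisely the reasoning the paper leaves implicit, so there is nothing to add.
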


\subsection{Third Class}
\label{sec:third-class}

\begin{proposition}
\label{prop:csplogicNeq:complexity}
  In full generality, the problem $\csplogicNeq(\mathcal{D})$ is in \Logspace{} if
  $|D|=1$, in \Ptime{} if $|D|=2$ and $\mathcal{D}$ is bijunctive or
  affine, and \NP-complete otherwise.
  The fragment extended with $=$ follows the same dichotomy.
\end{proposition}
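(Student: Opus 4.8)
The plan is to observe that $\csplogicNeq(\mathcal{D})$ is nothing more than the constraint satisfaction problem over the expanded structure $\mathcal{D}^+$ obtained from $\mathcal{D}$ by adjoining the disequality relation $\neq_D:=\{(d,d')\in D^2\mid d\neq d'\}$ as a named predicate; indeed, a sentence of $\csplogicNeq$ read over $\mathcal{D}$ is precisely a primitive positive sentence over the signature of $\mathcal{D}^+$. Thus $\csplogicNeq(\mathcal{D})=\CSP(\mathcal{D}^+)$, and membership in \NP{} is immediate since the fragment is existential~\cite{DBLP:conf/stoc/Vardi82}. The case $|D|=1$ is handled directly by Proposition~\ref{prop:firstClass:complexity}(i), as model checking of all of \FO{} is in \Logspace{} there.

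For $|D|=2$ I would appeal to Schaefer's dichotomy (Theorem~\ref{theorem:SchaeferCSP}) applied to the Boolean structure $\mathcal{D}^+$. The crucial point is to locate $\neq_D=\{(0,1),(1,0)\}$ among Schaefer's six tractable classes: it is the solution set of $x\oplus y=1$ and is closed under the majority operation, so it is both affine and bijunctive; on the other hand it is neither $0$-valid nor $1$-valid (it omits $(0,0)$ and $(1,1)$) and is closed under neither $\min$ nor $\max$ (hence neither Horn nor dual-Horn). Consequently $\mathcal{D}^+$ lies in a common tractable class if and only if every relation of $\mathcal{D}^+$ is affine or every relation is bijunctive, which, as $\neq_D$ is both, happens exactly when $\mathcal{D}$ is affine or $\mathcal{D}$ is bijunctive. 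Schaefer's theorem then yields \Ptime{} in precisely these subcases and \NP-completeness otherwise, matching the statement for two-element domains.

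For $|D|=n\geq 3$ the \NP-hardness follows by simulating the clique. Here $\neq_D$ is literally the edge relation of $\mathcal{K}_n$ on vertex set $D$, so using only disequality atoms one writes, for a graph $G$, the sentence $\exists x_1\ldots\exists x_{|V(G)|}\bigwedge_{(i,j)\in E(G)}x_i\neq x_j$, which holds in $\mathcal{D}$ if and only if $G$ is $n$-colourable. Since $\CSP(\mathcal{K}_n)$ is $n$-colourability and is \NP-complete for $n\geq3$, this is a logspace reduction establishing \NP-hardness; combined with the \NP{} membership above, $\csplogicNeq(\mathcal{D})$ is \NP-complete whenever $|D|\geq3$.

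Finally, for the extension with $=$ the same delineation persists. Membership in \NP{} (and in \Logspace{} when $|D|=1$) is unaffected, and on the Boolean domain the equality relation $\{(0,0),(1,1)\}$ is, like $\neq_D$, both affine and bijunctive, so adjoining it to $\mathcal{D}^+$ preserves membership in either tractable class; hence the $|D|=2$ analysis is unchanged. The hardness reductions above use no equality atoms, so they transfer verbatim. The one point requiring care---and the main obstacle---is the precise placement of $\neq_D$ (and of $=$) inside Schaefer's six classes, since it is exactly this that collapses the tractable region for two-element domains down to the affine and bijunctive cases.
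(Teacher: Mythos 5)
Your proposal is correct and takes essentially the same route as the paper's proof: identify $\csplogicNeq(\mathcal{D})$ with \CSP{} over $\mathcal{D}$ expanded by the disequality predicate, simulate $\mathcal{K}_{|D|}$ via $\neq$ for $|D|\geq 3$, and in the Boolean case locate $\neq$ within Schaefer's classes (affine and bijunctive, but neither Horn, dual-Horn, $0$-valid nor $1$-valid), with the $|D|=1$ case being trivial. The only differences are cosmetic: the paper classifies the fragment with $=$ first and then derives the equality-free case by noting its hardness reductions never use $=$ (adding a remark that invoking Schaefer without equality is legitimate because the underlying reductions from $\csplogic(\mathcal{B}_{\mbox{\textsc{nae}}})$ are equality-free), whereas you treat the equality-free fragment first and then observe that adjoining $=$, which lies in every tractable class, changes nothing.
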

\begin{proof}
  We classify first the fragment extended with $=$.
  When $|D|\geq 3$, we may use $\neq$ to simulate
  $\CSP(\mathcal{K}_{|D|})$ which is \NP-complete.
  When $|D|=1$ the problem is trivially in \Logspace.
  We are left with the Boolean case. Let $\mathcal{D}_{\neq}$ denote the extension of
  $\mathcal{D}$ with $\neq$. Note that $\csplogicNeq(\mathcal{D})$
  coincides with $\csplogic(\mathcal{D}_{\neq})$ which is
  the Boolean $\CSP(\mathcal{D}_{\neq})$. We apply Schaefer's theorem. 
  The relation $\neq$ is neither Horn, nor dual-Horn, nor $0$-valid
  nor $1$-valid as it is not closed under any of the following Boolean
  operations: $\land$, $\lor$, $c_0$ or $c_1$ (the constant functions
  $0$ and $1$). The relation $\neq$ is both bijunctive and affine as
  it is closed under both the Boolean majority and minority operation
  (see Chen's survey for the
  definitions~\cite{DBLP:journals/csur/Chen09}).
  Consequently, $\csplogicNeq(\mathcal{D})$ is in \Ptime{} if
  $\mathcal{D}$ is bijunctive or affine and \NP-complete otherwise.

  Note that we have not used $=$ in the hardness proof when $|D|\geq 3$.
  When $|D|=2$, we appeal to Schaefer's theorem
  (Theorem~\ref{theorem:SchaeferCSP}), the proof of which relies on
  the Galois connection $\Pol-\Inv$ which assumes presence of
  $=$. However, the hardness proofs in Schaefer's theorem rely on logical reductions from
  $\csplogic(\mathcal{B}_{\mbox{\textsc{nae}}})$, which use
  definability of $\mathcal{B}_{\mbox{\textsc{nae}}}$ in $\csplogic$. Hence, our
  claim follows for the fragment $\csplogicNeq$.
\end{proof}

\begin{proposition}
  \label{prop:qcsplogicNeq:complexity}
  In full generality, the problem $\qcsplogicNeq(\mathcal{D})$ is in \Logspace{} if
  $|D|=1$, in \Ptime{} if $|D|=2$ and $\mathcal{D}$ is bijunctive or
  affine, and \Pspace-complete otherwise.
  The fragment extended with $=$ follows the same dichotomy.
\end{proposition}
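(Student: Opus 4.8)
The plan is to mirror the proof of Proposition~\ref{prop:csplogicNeq:complexity}, substituting the quantified version of each ingredient. \Pspace{} membership in every case is immediate from the inward evaluation procedure for \FO{} noted in \S~\ref{sec:methodology}, so throughout I would argue only the lower bounds and the tractable cases. First I would dispatch the large domains: when $|D|\geq 3$, the disequality symbol lets the formula $x\neq y$ simulate the edge relation of $\mathcal{K}_{|D|}$, exactly as in Proposition~\ref{proposition:posFoNeq:complexity}, so $\qcsplogic(\mathcal{K}_{|D|})=\QCSP(\mathcal{K}_{|D|})$ reduces to $\qcsplogicNeq(\mathcal{D})$; since the former is \Pspace-complete for $n\geq 3$, hardness follows. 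The case $|D|=1$ is trivially in \Logspace{} by Proposition~\ref{prop:firstClass:complexity}.

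The heart of the argument is the Boolean case $|D|=2$. Here $\qcsplogicNeq(\mathcal{D})$ coincides with $\qcsplogic(\mathcal{D}_{\neq})$, i.e. the Boolean $\QCSP(\mathcal{D}_{\neq})$, where $\mathcal{D}_{\neq}$ adjoins the relation $\neq$ to $\mathcal{D}$. I would then invoke Schaefer's quantified dichotomy (Theorem~\ref{theorem:SchaeferQCSP}): $\mathcal{D}_{\neq}$ is tractable precisely when all its relations are simultaneously Horn, dual-Horn, bijunctive or affine. The relation $\neq$ is closed under the Boolean majority and minority operations, hence is both bijunctive and affine, but it is closed under neither $\land$ nor $\lor$, hence is neither Horn nor dual-Horn. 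Consequently the only surviving tractable classes for $\mathcal{D}_{\neq}$ are bijunctive and affine, and $\QCSP(\mathcal{D}_{\neq})$ is in \Ptime{} exactly when $\mathcal{D}$ is bijunctive or affine, and \Pspace-complete otherwise.

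The last point to settle is that adjoining $=$ changes nothing. Equality is never used in the hardness argument for $|D|\geq 3$, and for $|D|=1$ it may be allowed without disturbing triviality. The delicate point---and the main obstacle---is the Boolean hardness: Schaefer's quantified theorem is proved through a $\Pol$--$\Inv$ Galois connection that presupposes the presence of $=$. As in Proposition~\ref{prop:csplogicNeq:complexity}, I would circumvent this by observing that the \Pspace-hardness half of Theorem~\ref{theorem:SchaeferQCSP} proceeds via logical reductions from $\qcsplogic(\mathcal{B}_{\mbox{\textsc{nae}}})$ that rely only on the primitive-positive (equality-free) definability of $\mathcal{B}_{\mbox{\textsc{nae}}}$; these reductions transfer verbatim to $\qcsplogicNeq$, so the delineation is identical whether or not $=$ is present. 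The whole proof is thus essentially a transcription of the earlier CSP argument, with \NP{} replaced by \Pspace{} and the two tractable classes $0$-valid and $1$-valid (which were irrelevant there too) absent from the quantified list.
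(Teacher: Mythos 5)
Your proposal is correct and follows essentially the same route as the paper, which proves this proposition by declaring it ``similar to Proposition~\ref{prop:csplogicNeq:complexity}'': disequality simulates $\QCSP(\mathcal{K}_{|D|})$ for $|D|\geq 3$, Theorem~\ref{theorem:SchaeferQCSP} applied to $\qcsplogic(\mathcal{D}_{\neq})$ settles the Boolean case, and the observation that the hardness reductions never use $=$ handles the extended fragment. You have merely written out in full the details that the paper compresses by reference to the CSP analogue, including the correct remark that $0$-valid and $1$-valid drop out of the quantified tractability list.
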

\begin{proof}
  This is similar to Proposition~\ref{prop:csplogicNeq:complexity}.
  When $|D|\geq 3$, we may use $\neq$ to simulate
  $\QCSP(\mathcal{K}_{|D|})$ which is \Pspace-complete. In the Boolean case,
  we apply Theorem~\ref{theorem:SchaeferQCSP} to
  $\qcsplogic(\mathcal{D}_{\neq})$ and the result follows.

  Again equality is not used to prove hardness and the result follows
  for the fragment without $=$.
\end{proof}

The case of \csplogic\ and \csplogicEq\ almost coincide as equality may be
propagated out by substitution, and every sentence of the latter is logically
equivalent to a sentence of the former, with the exception of
sentences using only $=$ as an extensional predicate like $\exists x
\, x=x$ which are tautologies as we only ever consider structures with at
least one element.
In the case of \qcsplogicEq,
some equalities like
$\exists x \exists y \, x=y$ and $\forall x \exists y \, x=y$
may also be propagated out by substitution. However, equalities like 
$\exists x \forall y \, x=y$ and $\forall x \forall y \, x=y$ can not,
but they hold only in structures with a single element. This technical
issue does not really affect the complexity classification, and it
would suffice to consider \csplogic\ and \qcsplogic. 
The  complexity classification for these
four fragments remain open and correspond to the dichotomy conjecture
for \CSP\ and the classification program of the \QCSP.
In practice, we like to pretend that equality is present as it provides a better behaved algebraic
framework, without affecting complexity. 

This leaves the fragment $\mylogic$ from our fourth class, which we
deal with in the remainder of this paper.


\section{Tetrachotomy of $\mylogic(\mathcal{D})$}
\label{sec:tetrachotomy}
The following -- left as a conjecture at the end
of~\cite{DBLP:journals/tocl/MadelaineM12,DBLP:conf/csl/MartinM10} --
is the main contribution of this paper. 
Recall first that a shop $f$ over a set $D$ is an \emph{$A$-shop} if there is an
element $u$ in $D$ such that $f(u)=D$; and, that $f$ is an
\emph{$E$-shop} if there is an element $x$ of $D$ such
that $f^{-1}(x)=D$.
\begin{theorem}
\label{tetrachotomy}
Let $\mathcal{D}$ be any structure. 
\begin{longromannum}
\item[\textup{I}.] If $\mathcal{D}$ is preserved by both an A-shop and an E-shop, then $\mylogic(\mathcal{D})$ is in L.
\item[\textup{II}.] If $\mathcal{D}$ is preserved by an A-shop but
  is not preserved by any E-shop, then $\mylogic(\mathcal{D})$ is NP-complete.
\item[\textup{III}.] If $\mathcal{D}$ is preserved by an E-shop but
  is not preserved by any A-shop, then $\mylogic(\mathcal{D})$ is co-NP-complete.
\item[\textup{IV}.] If $\mathcal{D}$ is preserved neither by an
  A-shop nor by an E-shop, then $\mylogic(\mathcal{D})$ is Pspace-complete.
\end{longromannum}
\end{theorem}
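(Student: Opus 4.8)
The plan is to split each case into an \emph{upper bound} (membership), which follows uniformly from the relativisation machinery, and a \emph{lower bound} (hardness), which is the real content. The first thing to record is the dictionary between shops and the invariants $U,X$ of the $U$-$X$-core: a shop $f$ with $f(u)=D$ is exactly a $\{u\}$-surjective shop, and one with $f^{-1}(x)=D$ is exactly a $\{x\}$-total shop, so by Theorem~\ref{relativization} the structure $\mathcal{D}$ is preserved by an A-shop (resp.\ an E-shop) if and only if its $U$-$X$-core has $|U|=1$ (resp.\ $|X|=1$). Since $\mylogic(\mathcal{D})$ has the same complexity as $\mylogic$ over the $U$-$X$-core of $\mathcal{D}$ (Corollary~\ref{cor:UXretract}, Theorem~\ref{thm:unicity}), I may pass to a $U$-$X$-core whenever convenient. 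Finally, because $\mylogic$ is self-dual and $\mathcal{D}$ has an A-shop iff $\overline{\mathcal{D}}$ has an E-shop (a consequence of Proposition~\ref{prop:dualityOfRelativisation}), case~III is the exact dual of case~II under Proposition~\ref{prop:duality:principle}; it therefore suffices to treat cases~I, II and~IV.

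For the upper bounds I would apply Theorem~\ref{relativization} directly to $\mathcal{D}$. Every shop is $D$-total, so an A-shop is a $\{u\}$-surjective $D$-total hyper-endomorphism and $\mathcal{D}$ has $\forall\{u\}$-$\exists D$-relativisation; after relativising every universal quantifier to the singleton $\{u\}$ the universal branching is trivial, so whether $\mathcal{D}\models\varphi$ can be decided by guessing the existential witnesses, i.e.\ in \NP. This gives membership in \NP\ in cases~I and~II. Dually, an E-shop yields $\forall D$-$\exists\{x\}$-relativisation, whence membership in \coNP\ (guess the universal counter-witnesses), covering cases~I and~III. In case~I both relativisations hold, both quantifier types collapse to singletons, and the sentence reduces to an instance of the Boolean Sentence Value Problem (with constants), decidable in \Logspace. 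Case~IV needs only the routine \Pspace\ bound for evaluating \FO.

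The lower bounds reduce, via the duality above, to two claims: (a) if $\mathcal{D}$ has no E-shop then $\mylogic(\mathcal{D})$ is \NP-hard (covering cases~II and~IV), and (c) if $\mathcal{D}$ has neither shop then it is \Pspace-hard (case~IV); the \coNP-hardness of cases~III and~IV is then the dual of~(a). Both I would prove with the Galois connection for surjective hyper-endomorphisms of \S\ref{sec:GaloisConnectionInvShe}: the relations $\mylogic$-definable over a $U$-$X$-core $\widetilde{\mathcal{D}}$ are precisely $\Inv$ of its DSM $\mathcal{M}$, so, exactly as in Corollary~\ref{cor:EqFreeKrasner-reduction}, it is enough to exhibit a hard template all of whose relations lie in $\Inv(\mathcal{M})$ and to reduce from the associated hard problem. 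The engine is the normal-form analysis promised in \S\ref{sec:char-reduc-dsms}: Proposition~\ref{prop:ShopWithIdentity} together with the permutation Lemmata~\ref{lem:UshopPermutesU} and~\ref{lem:XshopPermutesX} forces every shop of $\mathcal{M}$ into a very rigid shape, which is what lets the required template be shown invariant uniformly in the size of $\mathcal{D}$.

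Concretely, when $|X|\ge 2$ (no E-shop) one finds a completion of $\widetilde{\mathcal{D}}$ that generalises the canonical three-element \NP-hard case and reduces an \NP-complete \CSP\ into it, in the spirit of the $\mathcal{K}_n$ and $\mathcal{B}_{\mbox{\textsc{nae}}}$ arguments used for the second class; this proves~(a), and its dual gives case~III. When moreover $|U|\ge 2$ (no A-shop either) the completion either collapses to a two-element domain, already shown \Pspace-complete in~\cite{DBLP:conf/lics/MadelaineM09}, or generalises one of the explicit four-element cases of~\cite{DBLP:conf/csl/MartinM10}, and one reduces from $\QCSP(\mathcal{B}_{\mbox{\textsc{nae}}})$ or $\QCSP(\mathcal{K}_n)$. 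I expect this last step to be the main obstacle: unlike the upper bounds, \Pspace-hardness cannot be read off a relativisation property, and it requires the full combinatorial control of $\mathcal{M}$ furnished by the canonical $U$-$X$-shop, together with a case analysis verifying that whenever neither $U$ nor $X$ shrinks to a singleton the invariant relations are rich enough to encode a quantified hard problem. The \NP/\coNP-hardness, while also using the normal form, is comparatively mild because only one quantifier side must carry the difficulty.
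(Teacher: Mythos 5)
Your plan reproduces the paper's architecture faithfully: upper bounds read off Theorem~\ref{relativization} (A-shop gives \NP, E-shop gives \coNP, both give the Boolean sentence value problem and \Logspace), reduction to the $U$-$X$-core via Proposition~\ref{prop:ShopWithIdentity} and Corollary~\ref{cor:UXretract}, duality to dispose of Case~III, and lower bounds by exhibiting a hard template whose relations lie in $\Inv(\mathcal{M})$ so that Corollary~\ref{cor:she-reduction} transfers hardness to $\mathcal{D}$. All of that is exactly what the paper does. But what you have written is a roadmap, not a proof: the entire content of the lower bounds is the construction of the templates, and none is constructed. You even identify the disjoint case as ``the main obstacle'' — correctly — and then stop there; that obstacle is precisely the new ingredient this theorem required over the earlier conference papers.

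Concretely, three things are missing. First, the case analysis that makes templates possible at all: Lemma~\ref{lemma:mickeyMouse} shows a reduced DSM admits only three configurations (all of $U\cap X$, $U\setminus X$, $X\setminus U$ nonempty; $U=X$; or $U\cap X=\emptyset$), and in particular $|U|=1$, $|X|\geq 2$ forces $U\cap X=\emptyset$; without this you cannot even say what the completion $\widehat{\mathcal{M}}$ looks like. Second, the Case~II template: the paper takes $\mathcal{K}_{|X|}\uplus\mathcal{K}_1$, shows its DSM is exactly $\widehat{\mathcal{M}}$, and gets \NP-hardness from $\cspDisj(\mathcal{K}_{|X|}\uplus\mathcal{K}_1)$ via Proposition~\ref{prop:cspDisj:complexity} — your ``in the spirit of the $\mathcal{K}_n$ arguments'' does not pin this down. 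Third, and most seriously, Case~IV splits into three subcases, not your two: the mixed case collapses to $\mathcal{K}_2$ through a complete bipartite graph, the case $U=X$ gives a clique handled by $\QCSP(\mathcal{K}_{|U|})$, but the case $U\cap X=\emptyset$ with $|U|,|X|\geq 2$ admits no such collapse (completing further would trivialise the monoid). There the paper must build from scratch a structure $\widehat{\mathcal{D}}$ with a single $4$-ary relation assembled from the gadget graphs $\mathcal{G}^{|U|,|X|}_{u,x}$, prove the two-directional lemma $\shE(\widehat{\mathcal{D}})=\widehat{\mathcal{M}}$ (using the $3$-permuted form of Theorem~\ref{theo:3PermutedForm}), and only then run a reduction from $\QCSP(\mathcal{B}_{\mbox{\textsc{nae}}})$ in which $\forall c_i \in U$ simulates clause conjunction over truth values carried by $X$. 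Nothing in your proposal produces this structure or verifies that its shop monoid is small enough to avoid an accidental A- or E-shop (which would make the template easy and the reduction worthless), so the \Pspace-hardness of Case~IV — and hence the theorem — remains unproved.
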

\begin{proof}
The upper bounds (membership in \Logspace, \NP{} and \coNP) for Cases I, II and
III were known from~\cite{DBLP:journals/tocl/MadelaineM12}, but we
reprove them here as a corollary of Theorem~\ref{relativization} to keep this paper
self-contained. 

Note that an A-shop is simply a $U$-$X$-shop with $U=\{u\}$, for
some $u$ in $D$, and $X\subseteq D$. 
We may therefore replace every universal quantifier by the constant
$u$ and relativise every existential quantifier to $X$  by
Theorem~\ref{relativization}. This means that $\mylogic(\mathcal{D})$
is in \NP\ when it has an A-shop as a surjective hyper-endomorphism.
  
Note that an E-shop is simply a $U$-$X$-shop with
$X=\{x\}$ for some $x$ in $D$, and $U\subseteq D$. So Case III is dual
to Case II and we finally turn to Case I. 
 
With both an A-shop and an E-shop, we have a
$U$-$X$-shop with $U=\{u\}$ and $X =\{x\}$ where $u$ and $x$ are in
$D$.  We may therefore replace
every universal quantifier by the constant $u$ and every existential
quantifier by the constant $x$, by
Theorem~\ref{relativization}. We have reduced
$\mylogic(\mathcal{D})$ to the Boolean sentence value problem, known
to be in \Logspace~\cite{DBLP:journals/jacm/Lynch77}.

Theorem~\ref{LowerBoundtetrachotomy} deals with the lower bounds. NP-hardness for Case II and co-NP-hardness for Case III are proved in Subsection~\ref{sec:NP-and-coNP-hardness}. Pspace-hardness for Case III is proved in Subsection~\ref{sec:Pspace-hardness}.
\end{proof}

\subsection{Methodology : the Galois Connection $\Inv-\shE$}
\label{sec:GaloisConnectionInvShe}
The results of this subsection (\S~\ref{sec:GaloisConnectionInvShe}) appeared
in~\cite{DBLP:journals/tocl/MadelaineM12} and are proved here
to keep the present paper self-contained. 

Let $\shE(\mathcal{B})$ be the set of surjective hyper-endomorphisms of $\mathcal{B}$.
Let $\langle \mathcal{B} \rangle_{\mylogic}$ be the sets of relations
that may be defined on $\mathcal{B}$ in \mylogic. 

\begin{lemma}
\label{lemma:mylogic:connection-by-types}
Let $\tuple{r}:=(r_1,\ldots,r_k)$ be a $k$-tuple of elements of $\mathcal{B}$. There exists
a formula $\theta_{\tuple{r}}^{\mylogic}(u_1,\ldots,u_k) \in \mylogic$ such that
the following are equivalent.
\begin{romannum}
\item 
  $(\mathcal{B}, r'_1,\ldots, r'_k) \models
  \theta_{\tuple{r}}^{\mylogic}(u_1,\ldots,u_k)$.
\item 
  There is a surjective hyper-endomorphism from $(\mathcal{B}, r_1,\ldots, r_k)$ to $(\mathcal{B},
  r'_1,\ldots, r'_k)$.
\end{romannum}
\end{lemma}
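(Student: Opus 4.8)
The plan is to mimic the construction of the canonical sentence $\theta_{\mathcal{A},m}^{\mylogic}$ from Theorem~\ref{theo:containment:mylogic}, but now carrying along free variables $u_1,\ldots,u_k$ that play the role of the distinguished elements $r'_1,\ldots,r'_k$. I would fix an enumeration $\mathbf{s}:=(b_1,\ldots,b_{|B|})$ of the elements of $\mathcal{B}$, set $m:=|B|$, and define
$$\theta_{\mathbf{r}}^{\mylogic}(u_1,\ldots,u_k):= \exists v_1,\ldots,v_{|B|}\Big(\phi_{\mathcal{B}(\mathbf{r},\mathbf{s})}(\mathbf{u},\mathbf{v}) \land \forall w_1,\ldots,w_m \bigvee_{\mathbf{t}\in B^m} \phi_{\mathcal{B}(\mathbf{r},\mathbf{s},\mathbf{t})}(\mathbf{u},\mathbf{v},\mathbf{w})\Big),$$
where the conjunction of positive facts $\phi_{\mathcal{B}(\mathbf{r},\mathbf{s})}$ is responsible for preservation, totality, and the condition that the constants $r_i$ be sent to $r'_i$, while the universal clause encodes surjectivity exactly as in the sentence case. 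Here I read ``surjective hyper-endomorphism from $(\mathcal{B},\mathbf{r})$ to $(\mathcal{B},\mathbf{r}')$'', as in Lemma~\ref{lemma:EqFreeWeakKrasnerconnection-by-types}, as a shop $f$ of $\mathcal{B}$ with $r'_i\in f(r_i)$ for each $i$.

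For the direction (ii)$\Rightarrow$(i), given such an $f$, I would set $u_i:=r'_i$ and pick witnesses $s'_j\in f(b_j)$ (nonempty by totality) for the existential variables $v_j$. Preservation of $f$, together with $r'_i\in f(r_i)$ and $s'_j\in f(b_j)$, shows that $\phi_{\mathcal{B}(\mathbf{r},\mathbf{s})}(\mathbf{r}',\mathbf{s}')$ holds; and for an arbitrary valuation $\mathbf{c}\in B^m$ of the universal variables, surjectivity of $f$ supplies antecedents $t_l$ with $c_l\in f(t_l)$, so the disjunct indexed by $\mathbf{t}=(t_1,\ldots,t_m)$ is satisfied by preservation. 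This is essentially Lemma~\ref{lemma:strategy:transfert:mylogic} applied with constants.

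For (i)$\Rightarrow$(ii), from witnesses $s'_1,\ldots,s'_{|B|}$ for the existential block I would build $f$ as a union $h\cup g$ in the style of Lemma~\ref{lemma:canonical:sentence:mylogic}: the map $h$ sends each $b_j$ to $s'_j$ (and, recording the coincidences $r_i=b_{\iota(i)}$, also adds $r'_i$, which secures totality and the constant condition $r'_i\in f(r_i)$), and the map $g$ achieves surjectivity by instantiating the universal variables to the enumeration $\mathbf{b}$ of $B$, extracting the corresponding witness tuple $\mathbf{t}$, and setting $g(t_l)\ni b_l$. The subtle point, and the step I expect to be the main obstacle, is preservation of $f$ in the mixed case, where in a fact $R(a_1,\ldots,a_r)$ of $\mathcal{B}$ some images are drawn from $h$ and some from $g$: this is exactly why the surjectivity clause must use a single witness tuple $\mathbf{t}$ covering all of $B$ at once (hence $m=|B|$), so that $\phi_{\mathcal{B}(\mathbf{r},\mathbf{s},\mathbf{t})}(\mathbf{r}',\mathbf{s}',\mathbf{b})$ already contains the cross-atoms mixing $\mathbf{u},\mathbf{v}$ with $\mathbf{w}$ and therefore certifies every combination of images simultaneously. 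I would emphasise that, apart from this bookkeeping, the argument is identical to Lemmata~\ref{lemma:strategy:transfert:mylogic} and~\ref{lemma:canonical:sentence:mylogic}; the only genuinely new ingredient relative to the equality-free sentence case is the tracking of the free variables $\mathbf{u}$ and of the coincidences between the $r_i$ and the $b_j$, which the absence of equality renders harmless since $\phi_{\mathcal{B}(\mathbf{r},\mathbf{s})}$ lists only positive relational atoms.
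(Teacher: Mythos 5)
Your proposal is correct and follows essentially the same route as the paper: the same formula $\theta_{\tuple{r}}^{\mylogic}$ (existential block for the enumeration of $B$, universal block of length $|B|$ with a disjunction over all $\tuple{t}\in B^{|B|}$), the same transfer argument for (ii)$\Rightarrow$(i), and the same assembly of the hyper-endomorphism from the witnesses in (i)$\Rightarrow$(ii) — the paper merely splits your $h$ into two partial hyper-operations $f_{\tuple{r}}$ and $f_{\tuple{s}}$ before taking the union with the surjectivity part. Your observation that preservation in the mixed case is certified by the cross-atoms inside the single witnessed disjunct is exactly the point the paper invokes when it says the hyper-endomorphism property follows from the right-hand conjunct of its displayed condition.
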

\begin{proof}
  Let $\tuple{r} \in B^k$, $\tuple{s} := (b_1,\ldots,b_{|B|})$ be an
  enumeration of $B$ and $\tuple{t} \in B^{|B|}$. Recall that
  $\phi_{\mathcal{B}(\tuple{r},\tuple{s})}(u_1,\ldots,u_k,v_1,\ldots,v_{|B|})$
is a conjunction of the positive facts of $(\tuple{r},\tuple{s})$,
where the variables $(\tuple{u},\tuple{v})$ correspond to the elements
$(\tuple{r},\tuple{s})$. \\Similarly,
$\phi_{\mathcal{B}(\tuple{r},\tuple{s},\tuple{t})}(u_1,\ldots,u_k,v_1,\ldots,v_{|B|},w_1,\ldots,w_{|B|})$
is the conjunction of the positive facts of
$(\tuple{r},\tuple{s},\tuple{t})$, where the variables
$(\tuple{u},\tuple{v},\tuple{w})$ correspond to the elements
$(\tuple{r},\tuple{s},\tuple{t})$. 
Set $\theta_\tuple{r}^{\mylogic}(u_1,\ldots,u_k):=$
\begin{multline*}
  \exists v_1,\ldots,v_{|B|} \ \phi_{\mathcal{B}(\tuple{r},\tuple{s})}(u_1,\ldots,u_k,v_1,\ldots,v_{|B|}) \wedge \forall w_1 \ldots w_{|B|} \\
  \bigvee_{\tuple{t} \in B^{|B|}} \phi_{\mathcal{B}(\tuple{r},\tuple{s},\tuple{t})}(u_1,\ldots,u_k,v_1,\ldots,v_{|B|},w_1,\ldots,w_{|B|}).
\end{multline*}

[Backwards.] Suppose $f$ is a surjective hyper-endomorphism from
$(\mathcal{B}, r_1,\ldots, r_k)$ to $(\mathcal{B}', r'_1,\ldots,
r'_k)$, where $\mathcal{B}':=\mathcal{B}$ (we will wish to
differentiate the two occurrences of $\mathcal{B}$). We aim to prove
that $\mathcal{B}'  \models \theta_\tuple{r}^{\mylogic}(r'_1,\ldots,
r'_k)$. Choose arbitrary $s'_1 \in f(b_1),\ldots,s'_{|B|} \in
f(b_{|B|})$ as witnesses for $v_1,\ldots,v_{|B|}$. Let
$\tuple{t}':=(t'_1,\ldots,t'_{|B|}) \in B'^{|B|}$ be any valuation of
$w_1,\ldots,w_{|B|}$ and take arbitrary $t_1,\ldots,t_{|B|}$
s.t. $t'_1 \in f(t_1)$, \ldots, $t'_{|B|} \in f(t_{|B|})$ (here we use
surjectivity). Let $\tuple{t}:=(t_1,\ldots,t_{|B|})$. It follows from
the definition of a surjective hyper-endomorphism that
\[ \mathcal{B}' \ \models \ \phi_{\mathcal{B}(\tuple{r},\tuple{s})}(r'_1,\ldots,r'_k,s'_1,\ldots,s'_{|B|}) \wedge \phi_{\mathcal{B}(\tuple{r},\tuple{s},\tuple{t})}(r'_1,\ldots,r'_k,s'_1,\ldots,s'_{|B|},t'_1,\ldots,t'_{|B|}). \]

[Forwards.] Assume that $\mathcal{B}' \models \theta_\tuple{r}^{\mylogic}(r'_1,\ldots, r'_k)$, where $\mathcal{B}':=\mathcal{B}$. Let $b'_1, \ldots,b'_{|B|}$ be an enumeration of $B':=B$.\footnote{One may imagine $b_1, \ldots,b_{|B|}$ and $b'_1, \ldots,b'_{|B|}$ to be the same enumeration, but this is not essential. In any case, we will wish to keep the dashes on the latter set to remind us they are in $\mathcal{B}'$ and not $\mathcal{B}$.} Choose some witness elements $s'_1,\ldots,s'_{|B|}$ for $v_1,\ldots,v_{|B|}$ and a witness tuple $\tuple{t}:=$ $(t_1,\ldots,t_{|B|}) \in B^{|B|}$ s.t.
\[ (\dagger) \ \mathcal{B}' \models \ \phi_{\mathcal{B}(\tuple{r},\tuple{s})}(r'_1,\ldots,r'_k,s'_1,\ldots,s'_{|B|}) \wedge \phi_{\mathcal{B}(\tuple{r},\tuple{s},\tuple{t})}(r'_1,\ldots,r'_k,s'_1,\ldots,s'_{|B|},b'_1,\ldots,b'_{|B|}). \]
Consider the following partial hyper-functions from $B$ to $B'$.
\begin{itemize}
\item[1.] $f_{\tuple{r}}$ given by $f_{\tuple{r}}(r_i):= \{ r'_i \}$, for $1 \leq i \leq k$.
\item[2.] $f_{\tuple{s}}$ given by $f_{\tuple{s}}(b_i)=\{ s'_i \}$, for $1 \leq i \leq |B|$. \hfill (totality)
\item[3.] $f_{\tuple{t}}$ given by $b'_i \in f_{\tuple{t}}(b_j)$ iff $t_i=b_j$, for $1 \leq i,j \leq |B|$. \hfill (surjectivity) 
\end{itemize}
Let $f:=f_{\tuple{r}} \cup f_{\tuple{s}} \cup f_{\tuple{t}}$; $f$ is a
hyper-operation whose surjectivity is guaranteed by $f_{\tuple{t}}$
(note that totality is guaranteed by $f_{\tuple{s}}$). That $f$ is a
surjective hyper-endomorphism follows from the right-hand conjunct of
$(\dagger)$. 
\end{proof}
\begin{theorem}
\label{thm:galois-connectionMyLogic}
For a finite structure $\mathcal{B}$ we have $\langle \mathcal{B} \rangle_{\mylogic} = \Inv(\shE(\mathcal{B}))$.
\end{theorem}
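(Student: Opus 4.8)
The plan is to follow exactly the template of Theorem~\ref{theorem:galois-connectionEqFreeWeakKrasner}, establishing the two inclusions separately, with the understanding that the passage from $\hE$ to $\shE$ (and from $\cspDisj$ to $\mylogic$) forces us to exploit \emph{surjectivity} and not merely totality of the hyper-endomorphisms.

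For the inclusion $\langle \mathcal{B} \rangle_{\mylogic} \subseteq \Inv(\shE(\mathcal{B}))$, I would argue by induction on the structure of a formula $\varphi(\tuple{v}) \in \mylogic$, fixing an arbitrary $f \in \shE(\mathcal{B})$ and showing that it preserves the relation defined by $\varphi$. The base case (atoms) and the inductive cases for $\land$, $\lor$ and $\exists$ are verbatim those of Theorem~\ref{theorem:galois-connectionEqFreeWeakKrasner}; in particular the $\exists$-case consumes the totality of $f$. The one genuinely new case is $\varphi(\tuple{v}):=\forall w\ \psi(\tuple{v}, w)$, with $\tuple{v}:=(v_1,\ldots,v_l)$. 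Assuming $\mathcal{B} \models \forall w\ \psi(x_1,\ldots,x_l,w)$ and taking any $y_1 \in f(x_1),\ldots,y_l\in f(x_l)$ together with an arbitrary witness $y'$ for $w$, I would use surjectivity of $f$ to pick some $x'$ with $y' \in f(x')$; since $\mathcal{B}\models\psi(x_1,\ldots,x_l,x')$, the induction hypothesis applied to the extended tuple yields $\mathcal{B}\models\psi(y_1,\ldots,y_l,y')$, and as $y'$ was arbitrary we conclude $\mathcal{B}\models\forall w\ \psi(y_1,\ldots,y_l,w)$.

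For the reverse inclusion $\Inv(\shE(\mathcal{B})) \subseteq \langle \mathcal{B} \rangle_{\mylogic}$, I would take a $k$-ary $S \in \Inv(\shE(\mathcal{B}))$ with tuples $\tuple{r}_1,\ldots,\tuple{r}_m$ and set $\theta_S^{\mylogic}(u_1,\ldots,u_k) := \bigvee_{i=1}^{m} \theta_{\tuple{r}_i}^{\mylogic}(u_1,\ldots,u_k)$, using the canonical formulas supplied by Lemma~\ref{lemma:mylogic:connection-by-types}. The claim is that this $\mylogic$-formula defines exactly $S$. For one direction, each $\tuple{r}_i \in S$ together with the identity hyper-endomorphism and the implication (ii)$\Rightarrow$(i) of Lemma~\ref{lemma:mylogic:connection-by-types} gives $\mathcal{B}\models\theta_{\tuple{r}_i}^{\mylogic}(\tuple{r}_i)$, so every tuple of $S$ satisfies $\theta_S^{\mylogic}$. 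For the other direction, if $\mathcal{B}\models\theta_S^{\mylogic}(\tuple{r}')$ then $\mathcal{B}\models\theta_{\tuple{r}_i}^{\mylogic}(\tuple{r}')$ for some $i$, and Lemma~\ref{lemma:mylogic:connection-by-types} yields a surjective hyper-endomorphism $f$ of $\mathcal{B}$ with $r'_j \in f((r_i)_j)$ for all $j$; since $f$ preserves $S$ and $\tuple{r}_i \in S$, the preservation property forces $\tuple{r}' \in S$.

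The only delicate point — and the step I would flag as the crux — is the $\forall$-case above: it is precisely there that totality is insufficient and full surjectivity of the hyper-endomorphisms is needed in order to produce an antecedent $x'$ of each universally chosen witness $y'$. This is the structural reason why the Galois connection for $\mylogic$ is governed by $\shE$ rather than by the plain $\hE$ that sufficed for $\cspDisj$. Everything else is routine bookkeeping inherited from the proofs of Theorem~\ref{theorem:galois-connectionEqFreeWeakKrasner} and Lemma~\ref{lemma:mylogic:connection-by-types}.
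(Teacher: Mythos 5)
Your proof is correct and takes essentially the same route as the paper's: the same structural induction (inheriting the $\land$, $\lor$, $\exists$ cases from Theorem~\ref{theorem:galois-connectionEqFreeWeakKrasner} and invoking surjectivity exactly in the $\forall$ case to find an antecedent of each $y'$), and the same disjunction $\bigvee_i \theta_{\tuple{r}_i}^{\mylogic}$ of the canonical formulas from Lemma~\ref{lemma:mylogic:connection-by-types} for the reverse inclusion. The paper states that reverse direction more tersely, but your spelled-out preservation argument is precisely what it relies on.
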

\begin{proof}
  \begin{longenum}
  \item 
    $\varphi(\tuple{v}) \in \langle \mathcal{B} \rangle_{\mylogic} \
    \Rightarrow \ \varphi(\tuple{v}) \in \Inv(\shE(\mathcal{B}))$. 
    This is proved by induction on the complexity of $\varphi(\tuple{v})$.
    We only have to deal with the case of universal quantification in
    the inductive step, the other cases having been dealt with in the
    proof of the $\Inv-\hE$ Galois Connection.

    (\textbf{Inductive Step} continued from proof of Theorem~\ref{theorem:galois-connectionEqFreeWeakKrasner}.) 
    \begin{enumerate}
    \item[(d)] $\varphi(\tuple{v}):=\forall w \
      \psi(\tuple{v},w)$. Let $\tuple{v}:=(v_1,\ldots,v_l)$. Suppose
      $\mathcal{B} \models \forall w \ \psi(x_1,\ldots,x_l,w)$; then
      for each $x'$, $\mathcal{B} \models \psi(x_1,\ldots,x_l,x')$. By
      IH, for any $y_1\in f(x_1),\ldots,y_l\in f(x_l)$, we have for
      all $y'$ (remember $f$ is surjective), $\mathcal{B} \models
      \psi(y_1,\ldots,y_l,y')$, whereupon $\mathcal{B} \models \forall
      w \ \psi(y_1,\ldots,y_l,w)$. 
    \end{enumerate}

  \item $S \in \Inv(\shE(\mathcal{B})) \ \Rightarrow \ S \in \langle
    \mathcal{B} \rangle_{\mylogic}.$ 
    Consider the $k$-ary relation $S \in \Inv(\shE(\mathcal{B}))$. Let
    $\tuple{r}_1,\ldots,\tuple{r}_m$ be the tuples of $S$. Let
    $\theta_S^{\mylogic}(u_1,\ldots,u_k)$ be the following formula of $\mylogic$: 
    \begin{multline*}
      \theta_{\tuple{r}_1}^{\mylogic}(u_1,\ldots,u_k) \vee \ldots \vee
      \theta_{\tuple{r}_m}^{\mylogic}(u_1,\ldots,u_k).
    \end{multline*}
    For $\tuple{r}_i:=(r_{i1},\ldots,r_{ik})$, note that
    $(\mathcal{B},r_{i1},\ldots,r_{ik}) \models
    \theta_{\tuple{r}_i}^{\mylogic}(u_1,\ldots,u_k)$ 
    (viewing the identity endomorphism as a surjective hyper endomorphism). 
    That
    $\theta_S^{\mylogic}(u_1,\ldots,u_k)=S$ now follows from Part (ii) of
    Lemma~\ref{lemma:mylogic:connection-by-types}, since $S \in
    \Inv(\shE(\mathcal{B}))$. \qed
  \end{longenum}
\end{proof}
\begin{corollary}
  \label{cor:she-reduction}
  \label{thm:she-reduction}
  Let $\mathcal{B}$ and $\mathcal{B}'$ be finite structures over the same domain $B$.
  If $\shE(\mathcal{B}) \subseteq \shE(\mathcal{B}')$ then $\mylogic(\mathcal{B}') \leq_{\Logspace} \mylogic(\mathcal{B})$.
\end{corollary}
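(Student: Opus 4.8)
The plan is to mirror exactly the argument used for the hyper-endomorphism Galois connection in Corollary~\ref{cor:EqFreeKrasner-reduction}, but now invoking the $\Inv$--$\shE$ connection of Theorem~\ref{thm:galois-connectionMyLogic} in place of the $\Inv$--$\hE$ connection. The key observation is that $\Inv$ is inclusion-reversing: if $\shE(\mathcal{B}) \subseteq \shE(\mathcal{B}')$, then any relation preserved by all surjective hyper-endomorphisms in $\shE(\mathcal{B}')$ is \emph{a fortiori} preserved by the (smaller) set $\shE(\mathcal{B})$, so $\Inv(\shE(\mathcal{B}')) \subseteq \Inv(\shE(\mathcal{B}))$.

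First I would apply Theorem~\ref{thm:galois-connectionMyLogic} to both structures, giving $\langle \mathcal{B} \rangle_{\mylogic} = \Inv(\shE(\mathcal{B}))$ and $\langle \mathcal{B}' \rangle_{\mylogic} = \Inv(\shE(\mathcal{B}'))$. Combined with the inclusion-reversal just noted, this yields
$$\langle \mathcal{B}' \rangle_{\mylogic} = \Inv(\shE(\mathcal{B}')) \subseteq \Inv(\shE(\mathcal{B})) = \langle \mathcal{B} \rangle_{\mylogic}.$$
In words, every relation that is $\mylogic$-definable over $\mathcal{B}'$ is already $\mylogic$-definable over $\mathcal{B}$. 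Since $\mathcal{B}'$ carries only finitely many extensional relations (it is a fixed finite relational structure over a finite signature), each of its predicates admits a fixed $\mylogic$-formula defining it over $\mathcal{B}$.

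The reduction itself is then a syntactic substitution: given an input sentence $\varphi$ for $\mylogic(\mathcal{B}')$, replace each occurrence of an extensional predicate $R$ of $\mathcal{B}'$ by its defining $\mylogic$-formula over $\mathcal{B}$ (with the appropriate free variables substituted for the arguments of $R$), after renaming bound variables to avoid capture. Because $\mathcal{B}'$ has finitely many predicates, this is a bounded rewriting performed locally at each atom, and the resulting sentence $\varphi'$ satisfies $\mathcal{B}' \models \varphi$ iff $\mathcal{B} \models \varphi'$. The substitution is clearly computable in logarithmic space (it is a local textual replacement with a fixed, finite lookup table of defining formulae), giving $\mylogic(\mathcal{B}') \leq_{\Logspace} \mylogic(\mathcal{B})$.

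I do not anticipate a genuine obstacle here, as the corollary is a routine transfer of the $\hE$-argument of Corollary~\ref{cor:EqFreeKrasner-reduction}; the only point requiring mild care is that the defining formulae may themselves contain universal quantifiers and disjunctions, so one must confirm that substituting a full $\mylogic$-formula into an atom of another $\mylogic$-sentence keeps the result within $\mylogic$ --- which it does, since $\mylogic$ is closed under conjunction, disjunction, and both quantifiers, and the substitution introduces no equality, disequality, or negation.
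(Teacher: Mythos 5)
Your proposal is correct and follows exactly the paper's own argument: inclusion-reversal of $\Inv$, the Galois connection of Theorem~\ref{thm:galois-connectionMyLogic} to pass to $\langle \mathcal{B}' \rangle_{\mylogic} \subseteq \langle \mathcal{B} \rangle_{\mylogic}$, and a logspace substitution of the finitely many predicates of $\mathcal{B}'$ by their defining formulae over $\mathcal{B}$. Your added remark that substituting $\mylogic$-formulae into atoms stays inside $\mylogic$ is a detail the paper leaves implicit, but the route is the same.
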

\begin{proof}
  If $\shE(\mathcal{B}) \subseteq \shE(\mathcal{B}')$, then
  $\Inv(\shE(\mathcal{B}')) \subseteq \Inv(\shE(\mathcal{B}))$. From
  Theorem~\ref{thm:galois-connectionMyLogic}, it follows that
  $\langle \mathcal{B}' \rangle_{\mylogic}$ $\subseteq \langle
  \mathcal{B} \rangle_{\mylogic}$. Recalling that $\mathcal{B}'$
  contains only a finite number of extensional relations, we may
  therefore effect a Logspace reduction from $\mylogic(\mathcal{B}')$
  to $\mylogic(\mathcal{B})$ by straightforward substitution of
  predicates. 
\end{proof}

Consequently, the complexity of $\mylogic(\mathcal{B})$ is
characterised by $\shE(\mathcal{B})$. 


\subsection{The Boolean case}
\label{sec:boolean-case}
We recall the case $|B|=2$ (from \cite{DBLP:journals/tocl/MadelaineM12}), with the normalised domain $B:=\{0,1\}$
as a warm-up.
It may easily be verified that there are five DSMs in this case,
depicted as a lattice in Figure~\ref{fig:boolean-she-lattice}. The two
elements of this lattice that represent the two subgroups of $S_2$ are
drawn in the middle and bottom. We write $\shee{01}{1}$ for the shop
that sends $0$ to $\{0,1\}$ and $1$ to $\{1\}$.

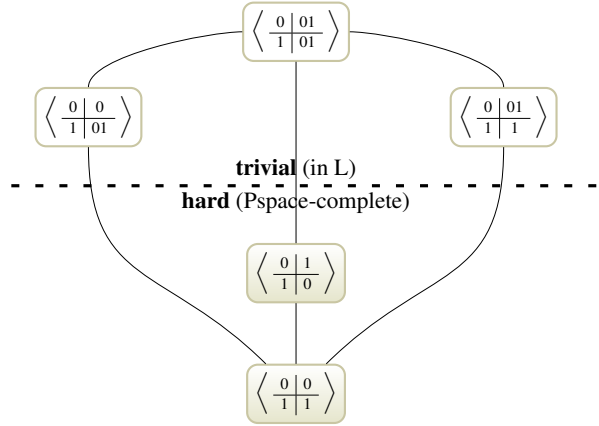
\begin{figure}[h]
\label{fig:boolean-she-lattice}
  \centering
    \begin{tikzpicture}[node distance = 2cm, auto, scale=0.8,every
      node/.style={transform shape}]
      \tikzstyle{logspace} = [rectangle, rounded corners,
      text centered, thick, rounded corners,
      draw=yellow!50!black!50] 
      \tikzstyle{pspacehard} = [rectangle, rounded corners,
      text centered, thick, rounded corners,
      draw=yellow!50!black!50, top color=white, bottom
      color=yellow!50!black!20] 
      \tikzstyle{line} = [draw, very thin] 
      \node[logspace] (top) {$\left\langle \shee{01}{01}
        \right\rangle$};
      \node[pspacehard, below of =top, node distance=4cm] (swap) {$\left\langle
          \shee{1}{0} \right\rangle$};
      \node[logspace, below left of =top,xshift=-2cm] (A1E0) {$\left\langle
          \shee{0}{01} \right\rangle$};
      \node[logspace, below right of =top,xshift=2cm] (A0E1) {$\left\langle
          \shee{01}{1} \right\rangle$}; 
      \node[pspacehard, below of =swap] (bot) {$\left\langle
          \shee{0}{1} \right\rangle$}; 
      \path [line] (bot) -- (swap);
      \path [line] (bot) edge [out=135,in=-90,looseness=1.3](A1E0);
      \path [line] (bot) edge [out=45,in=-90,looseness=1.3] (A0E1);
      \path [line] (A1E0) edge [out=90,in=180,looseness=.5] (top);
      \path [line] (A0E1) edge [out=90,in=0,looseness=.5](top);
      \path [line] (swap) -- (top);
      \node[inner sep=18pt,thin,rectangle,
      fit=(top) (A1E0)(A0E1)] 
      (logspaceCases){};
     \node at (logspaceCases.south) [above, inner sep=1mm]
     {\textcolor{black}{\textbf{trivial} (in \Logspace)}};
     \path[line, very thick, loosely dashed] (logspaceCases.south
     east) -- (logspaceCases.south west);
     \node[inner sep=18pt,thin,rectangle,
     fit=(bot) (swap)] 
     (PspaceCases){};
     \node at (logspaceCases.south) [below, inner sep=1mm]
     {\textcolor{black}{\textbf{hard} (\Pspace-complete)}};
    \end{tikzpicture}
\caption{The boolean lattice of DSMs with their associated complexity.}
\end{figure}

\begin{theorem}[\cite{DBLP:journals/tocl/MadelaineM12}]\label{theorem:dichotomy:mylogic:boolean}
Let $\mathcal{B}$ be a boolean structure.
\begin{longromannum}
\item[\textup{I}.] If either $\shee{01}{1}$ or $\shee{0}{01}$ is a surjective hyper-endomorphism of
  $\mathcal{B}$, then $\mylogic(\mathcal{B})$ is in \Logspace. 
\item[\textup{II}.] Otherwise, $\mylogic(\mathcal{B})$ is \Pspace-complete.
\end{longromannum}
\end{theorem}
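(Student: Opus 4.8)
The plan is to exploit the $\Inv$-$\shE$ Galois connection (Theorem~\ref{thm:galois-connectionMyLogic}) together with the relativisation characterisation (Theorem~\ref{relativization}) and the \Logspace-reduction of Corollary~\ref{cor:she-reduction}. The \Pspace\ upper bound in Case II is free (ordinary first-order model checking), so the real work splits into an \emph{upper bound} for Case I and a \emph{hardness} argument for Case II, the latter resting on a complete enumeration of the seven surjective shops on a two-element domain.

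For Case I, I would first observe that each of the two distinguished shops is a $U$-$X$-shop with singleton $U$ and $X$. Indeed $\shee{01}{1}$ satisfies $f(0)=\{0,1\}$ and $f^{-1}(1)=\{0,1\}$, so it is a $\{0\}$-surjective $\{1\}$-total hyper-endomorphism; dually $\shee{0}{01}$ is a $\{1\}$-surjective $\{0\}$-total one. By Theorem~\ref{relativization}, the presence of either shop yields $\forall U$-$\exists X$-relativisation with $|U|=|X|=1$, so every universal quantifier may be replaced by one fixed constant and every existential quantifier by another. The resulting variable-free sentence is an instance of the Boolean Sentence Value Problem, decidable in \Logspace~\cite{DBLP:journals/jacm/Lynch77}, exactly as in the single-element case of Proposition~\ref{prop:firstClass:complexity}.

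For Case II the key is a short lattice computation. On $\{0,1\}$ there are exactly seven surjective shops: the two permutations $\shee{0}{1}$ and $\shee{1}{0}$, and the five shops that send some element to $\{0,1\}$, namely $\shee{01}{1}$, $\shee{01}{0}$, $\shee{0}{01}$, $\shee{1}{01}$ and $\shee{01}{01}$. I would check that any DSM containing one of these five already contains $\shee{01}{1}$ or $\shee{0}{01}$: for the two distinguished shops this is immediate; for $\shee{01}{0}$ and $\shee{1}{01}$ one computes that the second iterate equals the top shop $\shee{01}{01}$, whose down-closure under sub-shops is \emph{all} seven shops; and $\shee{01}{01}$ itself down-closes to everything. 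Hence a DSM avoiding both distinguished shops contains only permutations, so it is contained in $\{\shee{0}{1},\shee{1}{0}\}=\shE(\mathcal{B}_{\mbox{\textsc{nae}}})$ (one verifies directly that $\mathcal{B}_{\mbox{\textsc{nae}}}$ is preserved by negation but by no shop with a $\{0,1\}$-image). Since $\shE(\mathcal{B})$ is always a DSM and, in Case II, avoids both distinguished shops, we obtain $\shE(\mathcal{B})\subseteq\shE(\mathcal{B}_{\mbox{\textsc{nae}}})$.

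Corollary~\ref{cor:she-reduction} then gives $\mylogic(\mathcal{B}_{\mbox{\textsc{nae}}})\leq_{\Logspace}\mylogic(\mathcal{B})$. As $\qcsplogic$ is a fragment of $\mylogic$ and $\QCSP(\mathcal{B}_{\mbox{\textsc{nae}}})$ is \Pspace-complete by Theorem~\ref{theorem:SchaeferQCSP}, the problem $\mylogic(\mathcal{B}_{\mbox{\textsc{nae}}})$ is \Pspace-hard, and hardness transfers to $\mylogic(\mathcal{B})$, completing the dichotomy. The only genuinely delicate point is the lattice bookkeeping of the previous paragraph -- verifying that closure under composition and sub-shops really collapses every non-permutation DSM onto one of the two distinguished shops; everything else is routine. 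This is also precisely the phenomenon that makes the Boolean case degenerate: a two-element A-shop is automatically an E-shop and vice versa (if $f(u)=\{0,1\}$ then some $f^{-1}(x)=\{0,1\}$ as well), so Cases II and III of Theorem~\ref{tetrachotomy} cannot arise and the tetrachotomy collapses to a dichotomy.
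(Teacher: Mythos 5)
Your proof is correct, and its overall skeleton---relativisation for the \Logspace{} upper bound, the $\Inv$-$\shE$ Galois connection for transferring \Pspace-hardness---is the same as the paper's; the interest lies in where the two hardness arguments diverge. The paper reads the Boolean lattice of DSMs off Figure~\ref{fig:boolean-she-lattice} (stated there as easily verified), notes that in Case II $\shE(\mathcal{B})$ equals $\langle\shee{1}{0}\rangle$ or $\langle\shee{0}{1}\rangle$, and takes $\mylogic(\mathcal{K}_2)$ as the hardness source, whose DSM is $\langle\shee{1}{0}\rangle$; this forces it through the gadget reduction of Proposition~\ref{proposition:posFoNeq:complexity}, replacing each atom $\mathrm{NAE}(x,y,z)$ by $E(x,y)\lor E(y,z)\lor E(x,z)$. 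You instead take $\mathcal{B}_{\mbox{\textsc{nae}}}$ itself as the hardness source: since $\qcsplogic$ is a syntactic sub-fragment of $\mylogic$, \Pspace-hardness of $\mylogic(\mathcal{B}_{\mbox{\textsc{nae}}})$ is immediate from Theorem~\ref{theorem:SchaeferQCSP}, with no gadget at all, and the predicate substitution hidden in Corollary~\ref{cor:she-reduction} (NAE is $\mylogic$-definable over $\mathcal{B}$ because it is invariant under $\shE(\mathcal{B})\subseteq\{\shee{0}{1},\shee{1}{0}\}$) does the remaining work. In exchange you pay explicitly for what the paper gets for free from its figure: the enumeration of the seven Boolean shops and the check that every non-permutation shop generates one of the two distinguished ones---the second iterates of $\shee{01}{0}$ and $\shee{1}{01}$ being $\shee{01}{01}$, which down-closes onto all seven shops---together with the verification that $\shE(\mathcal{B}_{\mbox{\textsc{nae}}})=\{\shee{0}{1},\shee{1}{0}\}$; all of these computations are correct. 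Net effect: your version is more self-contained and eliminates one reduction, while the paper's reuses a gadget it needs anyway for fragments where no Galois connection is available; your closing observation that any Boolean A-shop is an E-shop (and conversely), collapsing the tetrachotomy to a dichotomy, is also a correct and worthwhile sanity check against Theorem~\ref{tetrachotomy}.
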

\begin{proof}
  $\shE(\mathcal{B})$ must be one of the five DSMs depicted in
  Figure~\ref{fig:boolean-she-lattice}. If $\shE(\mathcal{B})$ contains
  $\shee{01}{1}$ then we may relativise every existential quantifier to
  $1$ and every universal quantifier to $0$ by
  Theorem~\ref{lemma:UXrelativisation:To:UXshop} and evaluate in
  \Logspace\ the quantifier-free part. The case of $\shee{0}{01}$ is
  similar with the role of $0$ and $1$ swapped.
  
  We prove that if $\shE(\mathcal{B})=\langle \shee{1}{0} \rangle$ then
  $\mylogic(\mathcal{B})$ is \Pspace-complete.
  The structure $\mathcal{K}_2$ has DSM $\shE(\mathcal{B})=\langle \shee{1}{0} \rangle$.
  It suffices therefore to prove that $\mylogic(\mathcal{K}_2)$ is
  \Pspace-hard, which we did by reduction from $\QCSP(\mathcal{B}_{\mbox{\textsc{nae}}})$ in the
  proof of Proposition~\ref{proposition:posFoNeq:complexity}.
  
  If follows from Corollary~\ref{cor:she-reduction} that when
  $\shE(\mathcal{B})=\langle \shee{0}{1} \rangle$,
  $\mylogic(\mathcal{B})$ is also \Pspace-hard since $\langle
  \shee{0}{1} \rangle \subseteq \langle
  \shee{1}{0} \rangle$.
\end{proof}

\subsection{Proving Hardness}
\label{sec:hardness}
Our aim is to derive the following lower bounds.
\begin{theorem}
\label{LowerBoundtetrachotomy}
\begin{itemize}
\item[\textup{II}.] If $\mathcal{D}$ is preserved by an A-shop but
  is not preserved by any E-shop, then $\mylogic(\mathcal{D})$ is NP-hard.
\item[\textup{III}.] If $\mathcal{D}$ is preserved by an E-shop but
  is not preserved by any A-shop, then $\mylogic(\mathcal{D})$ is co-NP-hard.
\item[\textup{IV}.] If $\mathcal{D}$ is preserved neither by an
  A-shop nor by an E-shop, then $\mylogic(\mathcal{D})$ is Pspace-hard.
\end{itemize}
\end{theorem}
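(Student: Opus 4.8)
The plan is to reduce everything to the $U$-$X$-core, read off the three case hypotheses as size constraints on $U$ and $X$, and then manufacture hard templates whose surjective hyper-endomorphism monoid sits inside $\shE(\mathcal{D})$, so that Corollary~\ref{cor:she-reduction} transports known hardness onto $\mylogic(\mathcal{D})$. First I would reduce to the core: by Corollary~\ref{cor:UXretract}(i) the structure $\mathcal{D}$ is $\mylogic$-equivalent to the substructure $\widetilde{\mathcal{D}}$ induced by a $U$-$X$-core, and $\mylogic$-equivalent structures have literally the same yes-instances, hence identical model-checking problems. So I may assume $\mathcal{D}=\widetilde{\mathcal{D}}$ with domain $D=U\cup X$. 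Next I translate the hypotheses: an A-shop is exactly a $\{u\}$-surjective shop, so $\mathcal{D}$ has an A-shop iff the minimal $U$ has size $1$, and dually $\mathcal{D}$ has an E-shop iff the minimal $X$ has size $1$. Thus Case~II reads $|U|=1,\ |X|\ge 2$; Case~III reads $|U|\ge 2,\ |X|=1$; and Case~IV reads $|U|\ge2,\ |X|\ge2$.

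The engine for all three lower bounds is the reduction $\mylogic(\mathcal{B}')\leq_{\Logspace}\mylogic(\mathcal{D})$ available from Corollary~\ref{cor:she-reduction} whenever $\shE(\mathcal{D})\subseteq\shE(\mathcal{B}')$, together with the very restricted shape of $\shE(\mathcal{D})$: by Proposition~\ref{prop:ShopWithIdentity} there is a canonical $U$-$X$-shop fixing every element of $U\cup X$ on the nose, and by Lemmata~\ref{lem:UshopPermutesU} and~\ref{lem:XshopPermutesX} every shop permutes $U$ and permutes $X$, while Lemma~\ref{fact:UXShopMapsNoProperXToProperU} forbids $X$ from reaching $U\setminus X$. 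Equivalently, via the Galois connection (Theorem~\ref{thm:galois-connectionMyLogic}) I may freely adjoin to $\mathcal{D}$ any relation in $\Inv(\shE(\mathcal{D}))$ without raising the complexity above $\mylogic(\mathcal{D})$. The strategy is then to adjoin gadget relations that pin down a two-element (or, in a residual subcase, four-element) sub-behaviour matching a known hard template.

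For Case~IV I would pick distinct $u_1,u_2\in U$ and distinct $x_1,x_2\in X$. Using the normal form of $\shE(\mathcal{D})$, neither of these two coordinates can be collapsed by a shop (there is no A-shop and no E-shop), which lets me define an NAE-like relation in $\Inv(\shE(\mathcal{D}))$ over $\{u_1,u_2\}$ and $\{x_1,x_2\}$ and reduce from $\QCSP(\mathcal{B}_{\mbox{\textsc{nae}}})$ exactly as in Proposition~\ref{proposition:posFoNeq:complexity}, giving \Pspace-hardness; the analysis naturally splits into a $\mathcal{K}_2$-like subcase settled by Theorem~\ref{theorem:dichotomy:mylogic:boolean} and a genuinely four-element subcase generalizing~\cite{DBLP:conf/csl/MartinM10}. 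For Case~II, relativising all universal quantifiers to the single $u$ (Theorem~\ref{relativization}) turns the problem into a $\cspDisj$ evaluation over the core, which, because $|X|\ge 2$ and there is no E-shop, has a $\cspDisj$-core of size at least two, so Proposition~\ref{prop:cspDisj:complexity} yields \NP-hardness, realised as a generalization of the three-element template of~\cite{DBLP:conf/lics/MadelaineM09}. Case~III is dual to Case~II: applying the relativisation duality (Proposition~\ref{prop:dualityOfRelativisation}) and the complexity duality (Proposition~\ref{prop:duality:principle}) to $\overline{\mathcal{D}}$ converts the \NP-hardness of Case~II into \coNP-hardness.

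The main obstacle is the penultimate step: proving that the restricted normal form of $\shE(\mathcal{D})$ forces the two designated coordinates into exactly one of the finitely many hard Boolean or four-element templates, and then constructing the explicit $\mylogic$-definable NAE-gadget uniformly in the domain size. The delicate point is to show that ``no E-shop'' really does prevent the existential part from collapsing to a one-element $\cspDisj$-core (rather than degenerating in some subtler way), since that non-collapse is precisely what powers the reductions in all three cases.
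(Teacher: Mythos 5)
Your skeleton (pass to the $U$-$X$-core via Proposition~\ref{prop:ShopWithIdentity} and Corollary~\ref{cor:UXretract}, translate the hypotheses into sizes of $U$ and $X$, and transport hardness through Corollary~\ref{cor:she-reduction}) is exactly the paper's, but two of your three cases have genuine gaps.

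For Case~II your pivotal claim is that, on the reduced structure, ``$|X|\geq 2$ and no E-shop'' forces the classical ($\cspDisj$-) core to have at least two elements, so that Proposition~\ref{prop:cspDisj:complexity} already makes the purely existential fragment of $\mathcal{D}$ itself NP-hard. This is false. Take $D=\{u,x_1,x_2\}$ with unary relations $Q=\{x_1\}$, $P=\{x_1,x_2\}$ and binary $E=\{(x_1,x_1),(x_2,x_1),(x_2,x_2),(x_2,u)\}$. From $Q$ and $P$ every surjective hyper-endomorphism $f$ has $f(x_1)=\{x_1\}$ and $f(x_2)\subseteq\{x_1,x_2\}$; the tuple $(x_2,x_2)$ rules out $f(x_2)=\{x_1,x_2\}$ since $(x_1,x_2)\notin E$; and if $f(x_2)=\{x_1\}$ then the tuple $(x_2,u)$ forces $f(u)\subseteq\{x_1\}$, contradicting surjectivity. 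Hence every shop satisfies $f(x_1)=\{x_1\}$, $f(x_2)=\{x_2\}$, $u\in f(u)$: this structure is its own $U$-$X$-core with $U=\{u\}$, $X=\{x_1,x_2\}$, it is preserved by an A-shop (take $f(u)=D$) and by no E-shop, so it satisfies the Case~II hypothesis. Yet the constant map onto $x_1$ is an endomorphism, so its classical core has one element and $\cspDisj$ of it is in \Logspace{} by Proposition~\ref{prop:cspDisj:complexity}. So NP-hardness of $\mylogic$ on such a structure cannot come from its own existential fragment; every hardness proof must use universal quantifiers essentially. This is precisely why the paper proves $\cspDisj$-hardness not for $\mathcal{D}$ but for the completion template $\mathcal{K}_{|X|}\uplus\mathcal{K}_1$, whose classical core genuinely is $\mathcal{K}_{|X|}$, and then transfers it to $\mathcal{D}$ by Corollary~\ref{cor:she-reduction}, a substitution of predicates by $\mylogic$-formulae that are not existential-positive. (Your argument could be repaired by working with the core of the expansion $(\mathcal{D},u)$: a one-element core there would have to collapse onto $u$ itself, and composing that collapse with the A-shop yields an E-shop; but you then need to simulate the constant $u$ by a universal quantifier and invoke Theorem~\ref{relativization}, a step you do not supply. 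Case~III inherits the same gap by duality.)

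For Case~IV you offer a plan where the paper has its main technical contribution, and the plan's key assertion is unjustified and, in the crucial subcase, false as stated. When $U\cap X=\emptyset$ and $|U|=2$, Proposition~\ref{prop:ShopWithIdentity}(iv) forces the canonical shop to send some $u_i$ to $\{u_i\}\cup X_{u_i}$ with $X_{u_i}\neq\emptyset$ while fixing $X$ pointwise; then the not-all-equal relation on $\{u_1,u_2,x_1,x_2\}$ is not preserved, since the not-all-equal tuple $(u_i,x,x)$ with $x\in X_{u_i}$ maps onto the constant tuple $(x,x,x)$. So the gadget relation you propose is simply not in $\Inv(\shE(\mathcal{D}))$, and whether \emph{some} hard invariant relation exists is exactly what has to be proved. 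The paper's proof needs the trichotomy of Lemma~\ref{lemma:mickeyMouse} (mixed, $U=X$, disjoint), handles the first two subcases by the bipartite template $\mathcal{K}_{X\Delta U,X\cap U}$ and the clique $\mathcal{K}_{|U|}$, and in the disjoint subcase constructs the deliberately asymmetric $4$-ary relation $R^{\widehat{\mathcal{D}}}$ from the gadgets $\mathcal{G}^{|U|,|X|}_{u,x}$, verifies the non-trivial identity $\shE(\widehat{\mathcal{D}})=\widehat{\mathcal{M}}$, and only then runs the reduction from $\QCSP(\mathcal{B}_{\mbox{\textsc{nae}}})$ using relativisation. None of this appears in your proposal; what you defer as ``the main obstacle'' is the theorem itself.
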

It follows from Proposition~\ref{prop:ShopWithIdentity} and
Corollary~\ref{cor:UXretract} that the complexity of a 
structure $\mathcal{D}$ is the same as the complexity of its
$U$-$X$-core. Hence in this Section, \emph{we assume w.l.o.g. that
  $U\cup X = D$}. We will say in this case that the DSM $\mathcal{M}$
is \emph{reduced}. This is the critical ingredient, hitherto missing, that is needed to obtain the full classification. In order to prove
Theorem~\ref{LowerBoundtetrachotomy}, we need to establish the following:
\begin{itemize}
\item[II.] If $U$ is of size one and $X$ of size at least two then $\mylogic(\mathcal{D})$ is NP-hard;
\item[III.] If $X$ is of size one and $U$ of size at least two then $\mylogic(\mathcal{D})$ is co-NP-hard; and,
\item[IV.] If both $U$ and $X$ have at least two elements then $\mylogic(\mathcal{D})$ is Pspace-hard. 
\end{itemize}
In the following. we will describe a DSM $\mathcal{M}$ as being (NP-, co-NP-, Pspace-)hard in the case that $\mylogic(\mathcal{D})$ is hard for some $\mathcal{D} \in \Inv(\mathcal{M})$. In order to facilitate the hardness proof, we would like to show
hardness of a monoid $\widehat{\mathcal{M}}$ with a very simple structure of
which $\mathcal{M}$ is in fact a sub-DSM ($\widehat{\mathcal{M}}$ is the \emph{completion} of $\mathcal{M}$). As in general $\widehat{\mathcal{M}}$
preserves fewer relations than $\mathcal{M}$, the hardness of
$\mathcal{M}$ would follow. We would like the structure of
$\widehat{\mathcal{M}}$ to be sufficiently simple for us to build canonically
some gadgets for our hardness proof.
Thus, we wish to better understand the form that elements of $\mathcal{M}$
may take. In order to do so, we first define the \emph{canonical shop} of $\mathcal{M}$ to be the $U$-$X$ shop $h$ in $\mathcal{M}$, guaranteed by Proposition~\ref{prop:ShopWithIdentity}, with the property that $|h(z)|$ is maximal for each $z \in U \setminus X$. Note that this maximal $h$ is unique, as given $h_1$ and $h_2$ of the form in Proposition~\ref{prop:ShopWithIdentity}, $h_1 \circ h_2$ is also of the required form, and further satisfies $|h_1 \circ h_2(z)|\geq |h_1(z)|, |h_2(z)|$, for all $z \in U \setminus X$.

\subsubsection{Characterising reduced DSMs}
\label{sec:char-reduc-dsms}
Any $U$-$X$-shop in $\mathcal{M}$ will be shown to be in the following
special form, reminiscent of the form of the canonical shop.
\begin{definition}
  We say that a shop $f$ is in the \emph{3-permuted form} if there are
  a permutation $\zeta$ of $X\cap U$, a permutation $\chi$ of
  $X\setminus U$ and a permutation $\upsilon$ of $U\setminus X$ such
  that $f$ satisfies:
  \begin{itemize}
  \item for any $y$ in $U\cap X$, $f(y)=\{\zeta(y)\}$;
  \item for any $x$ in $X\setminus U$, $f(x)=\{\chi(x)\}$; and,
  \item for any $u$ in $U\setminus X$, $f(u)=\{\upsilon(u)\}\cup X_u$,
    where $X_u\subseteq X\setminus U$.
  \end{itemize}
\end{definition}
\begin{lemma}  \label{lem:3permutedForm}
  If a shop $f$ satisfies $f(X)\cap (U\setminus X)=\emptyset$ then $f$
  is in the \emph{3-permuted form}.
\end{lemma}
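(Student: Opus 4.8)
The plan is to recover the three permutations directly from the action of $f$ on $U$ and on $X$. Since the $U$-$X$-shops are the only shops we shall feed into this lemma, I take $f$ to be both a $U$-shop and an $X$-shop; Lemma~\ref{lem:UshopPermutesU} then furnishes a permutation $\alpha$ of $U$ with $f(u)\cap U=\{\alpha(u)\}$ for every $u\in U$, and Lemma~\ref{lem:XshopPermutesX} a permutation $\beta$ of $X$ with $f(x)\cap X=\{\beta(x)\}$ for every $x\in X$. Because $D=U\cup X$, each image splits as $f(z)=(f(z)\cap U)\cup(f(z)\cap X)$, so the whole task reduces to understanding how $\alpha$ and $\beta$ behave on $U\cap X$, on $X\setminus U$ and on $U\setminus X$.

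The heart of the argument, and the only place the hypothesis is spent, is the overlap. Fix $y\in U\cap X$; then $f(y)=\{\alpha(y)\}\cup\{\beta(y)\}$. I would show $\alpha(y)=\beta(y)$ by contradiction: if they differ, then $\alpha(y)\notin X$ (otherwise $\alpha(y)\in f(y)\cap X=\{\beta(y)\}$) and $\beta(y)\notin U$ (otherwise $\beta(y)\in f(y)\cap U=\{\alpha(y)\}$), so in particular $\alpha(y)\in U\setminus X$; but $y\in X$ gives $\alpha(y)\in f(X)$, whence $\alpha(y)\in f(X)\cap(U\setminus X)$, contradicting the hypothesis. Hence $f(y)=\{\zeta(y)\}$ is a singleton with $\zeta(y):=\alpha(y)=\beta(y)\in U\cap X$. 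Thus both $\alpha$ and $\beta$ map $U\cap X$ into itself; being injective on the finite set $U\cap X$, each restricts to a permutation of $U\cap X$ (the common restriction being the desired $\zeta$), and consequently each preserves the complementary part as well, so $\alpha$ restricts to a permutation $\upsilon$ of $U\setminus X$ and $\beta$ to a permutation $\chi$ of $X\setminus U$.

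The two remaining cases are then bookkeeping with the hypothesis. For $x\in X\setminus U$, any $w\in f(x)\cap U$ lies in $f(X)$ and hence, by the hypothesis, in $U\cap X\subseteq X$, so $w\in f(x)\cap X=\{\beta(x)\}$; but $\beta(x)\in X\setminus U$ is not in $U$, a contradiction, so $f(x)\cap U=\emptyset$ and $f(x)=\{\beta(x)\}=\{\chi(x)\}$. For $u\in U\setminus X$, we already have $f(u)\cap U=\{\alpha(u)\}=\{\upsilon(u)\}$ with $\upsilon(u)\in U\setminus X$; and any $w\in f(u)\cap X$ cannot lie in $U\cap X$ (else $w\in f(u)\cap U=\{\alpha(u)\}$, impossible since $\alpha(u)\in U\setminus X$), so $X_u:=f(u)\cap X\subseteq X\setminus U$ and $f(u)=\{\upsilon(u)\}\cup X_u$. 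Collecting the three cases exhibits $f$ in the 3-permuted form with permutations $\zeta$, $\chi$ and $\upsilon$.

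I expect the main obstacle to be exactly the identity $\alpha(y)=\beta(y)$ on $U\cap X$: it is what forces the two independently produced permutations $\alpha$ and $\beta$ to be compatible and single-valued on the overlap, and it is the sole step genuinely requiring $f(X)\cap(U\setminus X)=\emptyset$. Once that compatibility is in place, the invariance of $\alpha$ and $\beta$ on the complements $U\setminus X$ and $X\setminus U$, together with the singleton and disjointness conclusions, follows mechanically from Lemmata~\ref{lem:UshopPermutesU} and~\ref{lem:XshopPermutesX}.
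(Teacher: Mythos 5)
Your argument, within its scope, is correct -- but the scope is wrong, and that is a genuine gap. The opening assumption, ``I take $f$ to be both a $U$-shop and an $X$-shop,'' is not part of the lemma, and the premise you use to justify it (that $U$-$X$-shops are the only shops fed into the lemma) is false for this paper. In the proof of Theorem~\ref{theo:3PermutedForm} the lemma is invoked twice: once for $U$-$X$-shops (your case, with the hypothesis supplied by Lemma~\ref{fact:UXShopMapsNoProperXToProperU}), and then again for an \emph{arbitrary} shop $f$ of the reduced DSM $\mathcal{M}$, after the hypothesis $f(X)\cap(U\setminus X)=\emptyset$ has been established for such $f$ by composing with the canonical shop ($f':=h\circ f\circ h$ is a $U$-$X$-shop, hence in 3-permuted form, and $h(z)\ni z$ transfers any violation by $f$ to $f'$). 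An arbitrary shop of $\mathcal{M}$ need be neither $U$-surjective nor $X$-total: for instance the identity shop belongs to every DSM but is $U$-surjective only when $U=D$. For such shops, Lemmata~\ref{lem:UshopPermutesU} and~\ref{lem:XshopPermutesX} -- the sole source of the permutations $\alpha$ and $\beta$ that your entire argument manipulates, including the ``heart'' step $\alpha(y)=\beta(y)$ on $U\cap X$ -- are simply unavailable. So your proposal proves a strictly weaker lemma, one that supports only the first half of Theorem~\ref{theo:3PermutedForm} and not the second, which is precisely the step that extends the 3-permuted form from $U$-$X$-shops to all of $\mathcal{M}$.

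The paper's proof is engineered to avoid this trap: it uses only Lemmata~\ref{lem:AtMostOneUInImage} and~\ref{lem:AtMostOneXInPreImage}, which hold for \emph{every} shop in $\mathcal{M}$. Since $D=U\cup X$ and $f$ is total, the hypothesis forces $f(X)\subseteq X$; Lemma~\ref{lem:AtMostOneXInPreImage} makes images of distinct elements of $X$ disjoint, so finiteness forces $f$ to act on all of $X$ as a permutation $\beta$ (note this yields singleton images on $U\cap X$ without any case analysis). The $U$-side data is then extracted from Lemma~\ref{lem:AtMostOneUInImage} combined with surjectivity of $f$: under the hypothesis, elements of $U\setminus X$ can only be reached from $U\setminus X$, and counting produces the permutation $\upsilon$. (Even this is the genuinely delicate part for a general shop -- which is exactly why your shortcut loses essential content.) If you wished to keep your argument, the fix would be to restructure the surrounding theorem rather than the lemma: prove the 3-permuted form for $U$-$X$-shops as you do, then deduce the form of an arbitrary $f\in\mathcal{M}$ directly from the form of $f'=h\circ f\circ h$, using that $f$ is a sub-shop of $f'$ because $h(z)\ni z$ for every $z$. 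As written, the proposal leaves that bridge unbuilt.
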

\begin{proof}
  The hypothesis forces an element of $X$ to reach an element of $X$ and
  Lemma~\ref{lem:AtMostOneXInPreImage} forces two elements of $X$ to
  have different images. Since $X$ is finite, there exists a
  permutation $\beta$ of $X$ such that for every $x$ in $X$,
  $f(x)=\{\beta(x)\}$. 
  Since Lemma~\ref{lem:AtMostOneUInImage} forces in particular an
  element of $U$ to have at most one element of $U$ in its image and
  since $U$ is finite, it follows that there exists a permutation
  $\alpha$ of $U$ such that for every $u$ in $U$, $f(u)\cap U =
  \{\alpha(U)\}$ and $f^{-1}(u)\cap U = \{\alpha^{-1}(U)\}$.

  It follows that there exists a permutation $\zeta$ of $U\cap X$ such
  that for any $y$ in $U\cap X$, $f(y)=\{\zeta(y)\}$. 

  The existence of a permutation $\chi$ of $X\setminus U$ such
  that $\beta$ is the disjoint union of $\chi$ and $\zeta$
  follows. Hence, for any $x$ in $X\setminus U$, $f(x)=\{\chi(x)\}$.

  Similarly, there must also be a permutation $\upsilon$ of $U\setminus
  X$ such that $\alpha$ is the disjoint union of $\upsilon$ and $\zeta$.
  Hence, for any $u$ in $U\setminus X$, $f(u)\cap U =\{\upsilon(u)\}$.
  Elements of $U\setminus X$ may however have some images in
  $X\setminus U$. So we get finally that for any $u$ in $U\setminus
  X$, there is some $\emptyset\subseteq X_u \subseteq X \setminus U$ such that
  $f(u)=\{\upsilon(u)\}\cup X_u$.
  This proves that $f$ is in the 3-permuted form and we are done.
\end{proof}
\begin{theorem}
  \label{theo:3PermutedForm}
  Let $\mathcal{M}$ be a reduced DSM.
  Every shop in $\mathcal{M}$ is in the 3-permuted form.
  Moreover, every $U$-$X$-shop in $\mathcal{M}$ follows 
  the additional requirement that the elements of $U\setminus X$ cover
  the set $X\setminus U$, more formally that
  \begin{equation*}
    f(U\setminus X)\cap X=\bigcup_{u \in U\setminus X} X_u = X\setminus U.
  \end{equation*}
\end{theorem}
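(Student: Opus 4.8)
The plan is to deduce both assertions from the special-case results already available for $U$-$X$-shops, namely Lemma~\ref{lem:3permutedForm} and Lemma~\ref{fact:UXShopMapsNoProperXToProperU}, and to transport the relevant condition to an arbitrary shop by sandwiching it between two copies of an identity-containing $U$-$X$-shop. By Lemma~\ref{lem:3permutedForm}, showing that a shop is in $3$-permuted form amounts to showing $f(X)\cap(U\setminus X)=\emptyset$, and Lemma~\ref{fact:UXShopMapsNoProperXToProperU} already secures this for $U$-$X$-shops; the task is thus to promote it to every shop of $\mathcal{M}$.

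First I would fix a $U$-$X$-shop $g\in\mathcal{M}$ as produced by Proposition~\ref{prop:ShopWithIdentity}. Because $\mathcal{M}$ is reduced, so $U\cup X=D$, properties (i)--(iii) of that proposition read $g(y)=\{y\}$ on $U\cap X$, $g(x)=\{x\}$ on $X\setminus U$, and $g(u)\supseteq\{u\}$ on $U\setminus X$; in particular $z\in g(z)$ for every $z\in D$, i.e. $g$ contains the identity as a sub-shop. Given any shop $f\in\mathcal{M}$, I would consider the sandwich $g\circ f\circ g$. Applying Lemma~\ref{lem:compositionAndUXshops}(i) to the inner factor $g$ shows $f\circ g$ is $U$-surjective, and a second application of (i) together with (ii) to the outer factor $g$ shows $g\circ f\circ g$ is both $U$-surjective and $X$-total; hence it is a $U$-$X$-shop lying in $\mathcal{M}$ (closure under composition). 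Lemma~\ref{fact:UXShopMapsNoProperXToProperU} then gives $(g\circ f\circ g)(X)\cap(U\setminus X)=\emptyset$. Since $g$ contains the identity, $g\circ f\circ g(z)\supseteq f(z)$ for all $z$, so $f(X)\subseteq(g\circ f\circ g)(X)$ and therefore $f(X)\cap(U\setminus X)=\emptyset$; Lemma~\ref{lem:3permutedForm} finishes the first claim.

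For the ``moreover'' part, let $f$ be a $U$-$X$-shop, now known to be in $3$-permuted form with permutations $\zeta$ of $U\cap X$ and $\upsilon$ of $U\setminus X$ and sets $X_u\subseteq X\setminus U$. The equality $f(U\setminus X)\cap X=\bigcup_{u}X_u$ is immediate, since for $u\in U\setminus X$ we have $f(u)=\{\upsilon(u)\}\cup X_u$ with $\upsilon(u)\in U\setminus X$ disjoint from $X$, whence $f(u)\cap X=X_u$. For the remaining equality I would use $U$-surjectivity: from the $3$-permuted form, $f(U\cap X)=\zeta(U\cap X)=U\cap X$ and $f(U\setminus X)=\upsilon(U\setminus X)\cup\bigcup_u X_u=(U\setminus X)\cup\bigcup_u X_u$, so $f(U)=U\cup\bigcup_u X_u$. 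Since $f(U)=D=U\cup X$ and $\bigcup_u X_u\subseteq X\setminus U$ is disjoint from $U$, this forces $\bigcup_u X_u=X\setminus U$.

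The only genuine difficulty is the first claim, where Lemma~\ref{fact:UXShopMapsNoProperXToProperU} is available only for $U$-$X$-shops and must be carried over to arbitrary shops. The whole weight rests on the sandwich $g\circ f\circ g$: the two points to check with care are that it really is a $U$-$X$-shop (via the composition rules of Lemma~\ref{lem:compositionAndUXshops}) and that $g\supseteq\mathrm{id}$ lets the emptiness of the image on $X$ descend from the sandwich back to $f$. The covering statement is then routine bookkeeping with the permutation structure and surjectivity.
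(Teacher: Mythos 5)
Your proof is correct and takes essentially the same route as the paper's: the paper likewise sandwiches an arbitrary shop $f$ between two copies of the canonical (identity-containing) $U$-$X$-shop $h$, invokes Lemma~\ref{lem:compositionAndUXshops} to see that $h\circ f\circ h$ is a $U$-$X$-shop, descends the condition $f(X)\cap(U\setminus X)=\emptyset$ from the sandwich back to $f$ via the identity sub-shop, and concludes with Lemma~\ref{lem:3permutedForm}. The only difference is that you spell out the covering claim for $U$-$X$-shops via $U$-surjectivity, which the paper leaves implicit, it being the same argument as point (iv) of Proposition~\ref{prop:ShopWithIdentity}.
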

\begin{proof}
  We can now deduce easily from Lemmata~\ref{fact:UXShopMapsNoProperXToProperU}
  and~\ref{lem:3permutedForm} that $U$-$X$-shops in
  $\mathcal{M}$ must take the $3$-permuted form. It remains to prove that an arbitrary shop
  $f$ in $\mathcal{M}$ is in the 3-permuted form.
  Let $h$ be the canonical shop of $\mathcal{M}$.
  It follows from Lemma~\ref{lem:compositionAndUXshops} that
  $f':= h\circ f \circ h$ is a $U$-$X$-shop. Hence,
  $f'$ is in the $3$-permuted form.
  Let $z$ in $X$ and $u$ in $U\setminus X$.
  If $f(z)\ni u$ then $f'(z)\ni u$ and $f'$ would not be in the
  3-permuted form. It follows that $f(X)\cap (U\setminus
  X)=\emptyset$ and appealing to Lemma~\ref{lem:3permutedForm} that $f$ is in
  the 3-permuted form. 
\end{proof}

We do not need the following result in order to prove our main
result. But surprisingly in a reduced DSM, $U$ and $X$ are unique.
This means that we may speak of \emph{the canonical shop of
  $\mathcal{M}$} instead of some canonical $U$-$X$-shop. It also means
that we can define the $U$-$X$-core of a structure $\mathcal{D}$
\emph{without explicitly referring to $U$ or $X$} as the minimal substructure
of $\mathcal{D}$ which satisfy the same \mylogic\ sentences.
\begin{theorem}
  \label{theo:uniquenessUXWhenFullDomain}
  Let $\mathcal{D}$ be a structure that is both a $U$-$X$-core and a
  $U'$-$X'$-core then it follows that $U=U'$ and $X=X'$.
\end{theorem}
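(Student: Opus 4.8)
The plan is to give \emph{intrinsic} characterisations of the sets $U$ and $X$ that refer only to the reduced DSM $\mathcal{M}=\shE(\mathcal{D})$ and not to any particular decomposition, from which uniqueness is immediate. Concretely, I would set
\[
X_0:=\{z\in D \mid |f(z)|=1 \text{ for all } f\in\mathcal{M}\},\quad
U_0:=\{z\in D \mid |f^{-1}(z)|=1 \text{ for all } f\in\mathcal{M}\},
\]
and prove $X=X_0$ and $U=U_0$. Since both $(U,X)$ and $(U',X')$ witness that $\mathcal{D}$ is a $U$-$X$-core, both make $\mathcal{M}$ reduced (indeed $U\cup X=D=U'\cup X'$), so running the same argument for the primed decomposition gives $X'=X_0$ and $U'=U_0$; hence $X=X'$ and $U=U'$.

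First I would establish $X=X_0$. For $X\subseteq X_0$, I invoke Theorem~\ref{theo:3PermutedForm}: every shop of the reduced DSM $\mathcal{M}$ is in $3$-permuted form with respect to $(U,X)$, and in that form each element of $X$ (whether in $U\cap X$, with image $\{\zeta(\cdot)\}$, or in $X\setminus U$, with image $\{\chi(\cdot)\}$) has a singleton image, so $X\subseteq X_0$. For the reverse inclusion I use the canonical shop $h$ of $\mathcal{M}$ from Proposition~\ref{prop:ShopWithIdentity}. Because $\mathcal{D}$ is reduced, $h(u)=\{u\}\cup X_u$ for each $u\in U\setminus X$, with $X_u\subseteq X\setminus U$; and since $h$ is a $U$-$X$-shop it is $X$-total, so $h(u)\cap X\neq\emptyset$. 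As $u\notin X$, this forces $X_u\neq\emptyset$, whence $|h(u)|\geq 2$ and $u\notin X_0$. Thus $X_0\cap(U\setminus X)=\emptyset$; as $D=X\sqcup(U\setminus X)$, we get $X_0\subseteq X$, so $X=X_0$.

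The argument for $U=U_0$ is dual. From the $3$-permuted form one checks that for every $f\in\mathcal{M}$ and every $w\in U$ the fibre $f^{-1}(w)$ is a singleton (if $w\in U\cap X$ it is $\{\zeta^{-1}(w)\}$, and if $w\in U\setminus X$ it is $\{\upsilon^{-1}(w)\}$, using that nothing from $X\setminus U$ and nothing in the $X_u$-parts can land in $U$); hence $U\subseteq U_0$. For the converse, the canonical shop $h$ is $U$-surjective and, by Theorem~\ref{theo:3PermutedForm}, satisfies the covering property $\bigcup_{u\in U\setminus X}X_u=X\setminus U$, while $h(x)=\{x\}$ for $x\in X\setminus U$ by Proposition~\ref{prop:ShopWithIdentity}. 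So every $z\in X\setminus U$ has $h^{-1}(z)\supseteq\{z,u\}$ for some $u\in U\setminus X$ with $z\in X_u$, giving $|h^{-1}(z)|\geq 2$ and $z\notin U_0$. Thus $U_0\cap(X\setminus U)=\emptyset$, and since $D=U\sqcup(X\setminus U)$ we obtain $U_0\subseteq U$, so $U=U_0$.

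The main obstacle is the two reverse inclusions $X_0\subseteq X$ and $U_0\subseteq U$: one must show that elements lying outside $X$ (resp. $U$) genuinely fail the singleton condition. This is exactly where reducedness is essential, because it is the $X$-totality of the canonical shop that forces every $X_u$ to be non-empty, and dually its $U$-surjectivity together with the covering property that forces the fibres over $X\setminus U$ to be large. Once these are in hand, the decomposition-free nature of $X_0$ and $U_0$ makes the uniqueness of $U$ and $X$ fall out at once, with no recourse to the isomorphism-level statement of Theorem~\ref{thm:unicity} nor to any size comparison between $X,X'$ or $U,U'$.
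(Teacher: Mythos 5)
Your proof is correct, and it takes a genuinely different route from the paper's. The paper argues by contradiction, comparing the two decompositions directly: assuming some $x\in U'\setminus U$, it notes $x\in X\setminus U$ (as $D=U\cup X$), uses the covering property of the canonical $U$-$X$-shop $h$ to produce $u\in U\setminus X$ with $h(u)\supseteq\{u,x\}$, and then observes that $h$, which by Theorem~\ref{theo:3PermutedForm} must also be in $3$-permuted form w.r.t.\ $(U',X')$, cannot send any element to two elements one of which lies in $U'$ and the other of which would force $u\in X'\setminus U'$ — a contradiction; the equality $X=X'$ then follows by the dual argument. You instead use the same two ingredients — the $3$-permuted form of \emph{all} shops (Theorem~\ref{theo:3PermutedForm}) and the canonical shop of Proposition~\ref{prop:ShopWithIdentity}, whose $X$-totality forces each $X_u\neq\emptyset$ and whose covering property makes the fibres over $X\setminus U$ large — to produce intrinsic, decomposition-free characterisations $X=X_0$ and $U=U_0$, so that uniqueness is immediate because any valid decomposition must coincide with $(U_0,X_0)$. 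The mathematics underneath is the same, but your packaging is more modular: you never have to view a shop attached to one decomposition through the lens of the other (which is precisely the step the paper performs), and you obtain as a by-product a clean internal description of the two sets — $X$ is exactly the set of elements with singleton image under every shop of $\mathcal{M}$, and $U$ exactly the set of elements with singleton fibre under every shop — which the paper's contradiction argument does not make explicit. The slight cost is that both of your ``easy'' inclusions $X\subseteq X_0$ and $U\subseteq U_0$ invoke the full strength of Theorem~\ref{theo:3PermutedForm} for every shop in $\mathcal{M}$, whereas the paper's argument only ever reasons about the single canonical shop $h$.
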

\begin{proof}
  We do a proof by contradiction.
  Let $h$ and $h'$ be the canonical $U$-$X$-shop and $U'$-$X'$-shop,
  respectively.  
  Assume $U'\neq U$ and let $x$ in $U'\setminus U$.
  Note that since $D = U\cup X$, our notation is consistent as $x$
  does belong to $X\setminus U$. Thus, there exists some $u$ in
  $U\setminus X$ such that $h(u)\supseteq \{u,x\}$ (and necessarily
  $u\neq x$).

  By Theorem~\ref{theo:3PermutedForm}, $h$ has to be in the $3$-permuted
  form w.r.t. $U'$ and $X'$, which means that $h$ can send an element
  to at most one element of $U'$. Since $x$ belongs to $U'$, it
  follows that $u$ belongs to $D\setminus U'=X'\setminus U'$. But the
  three permuted form prohibits an element of $X'$ to reach an element
  of $U'$. A contradiction.
 
  It does not follow yet that $X'=X$ as the pairs of sets  may have
  shifting intersections. However, the dual argument to the above
  applies and yields $X=X'$.
\end{proof}

\begin{corollary}
\label{cor:implicit}
  Let $\mathcal{D}$ be a finite structure.
  The $U$-$X$-core of $\mathcal{D}$ is unique up to isomorphism.
  It is a minimal induced substructure $\widetilde{\mathcal{D}}$ of
  $\mathcal{D}$,  that satisfies the same
  $\{\exists,\forall,\land,\lor\}$-$\FO$
  formulae with free-variables in $\widetilde{D}$. 
  Moreover, once $\widetilde{D}$ is fixed, there are two uniquely determined subsets $U$
  and $X$ such that $U\cup X = \widetilde{\mathcal{D}}\subset D$
  which are minimal within $D$ with respect to the following
  equivalent properties,
  \begin{itemize}
  \item  $\mathcal{D}$ has \emph{$\forall U$-$\exists X$-relativisation}
    w.r.t. $\{\exists,\forall,\land,\lor\}$-$\FO$; or,  
  \item $\mathcal{D}$ has a $U$-$X$-shop that may act as the identity
    over $U\cup X$. 
\end{itemize}
\end{corollary}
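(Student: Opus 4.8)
The plan is to assemble the statement from the machinery already in place, isolating the one genuinely new argument. First I would dispatch the immediate parts. Uniqueness up to isomorphism is exactly Theorem~\ref{thm:unicity}. That $\widetilde{\mathcal{D}}$, the substructure induced by $U\cup X$, is an induced substructure of $\mathcal{D}$ that is $\mylogic$-equivalent to $\mathcal{D}$ is Corollary~\ref{cor:UXretract}(i). The equivalence of the two displayed properties is Theorem~\ref{relativization} (relativisation $\Leftrightarrow$ existence of an $X$-total $U$-surjective hyper-endomorphism), upgraded by Proposition~\ref{prop:ShopWithIdentity} to a $U$-$X$-shop that contains the identity as a sub-shop over $U\cup X$, namely the canonical shop. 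Minimality of $U$ and $X$ within $D$ is built into the definition of the $U$-$X$-core together with the greedy argument of the Remark, while their uniqueness once $\widetilde{D}$ is fixed is Theorem~\ref{theo:uniquenessUXWhenFullDomain}.

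For the refinement that $\widetilde{\mathcal{D}}$ and $\mathcal{D}$ agree on all $\mylogic$-formulae with free variables evaluated in $\widetilde{D}$, I would use the canonical shop $h$, whose range restriction $g$ to $\widetilde{D}=U\cup X$ is a surjective hyper-morphism $\mathcal{D}\to\widetilde{\mathcal{D}}$ with inverse $g^{-1}\colon\widetilde{\mathcal{D}}\to\mathcal{D}$ (Corollary~\ref{cor:UXretract}). Since $h$ contains the identity as a sub-shop, $a\in g(a)$ and $a\in g^{-1}(a)$ for every $a\in\widetilde{D}$. Given $\varphi(\bar v)$ and a tuple $\bar a$ over $\widetilde{D}$, I run the strategy-transfer of Lemma~\ref{lemma:strategy:transfert:mylogic} on the two games with the single extra stipulation that each free variable $v_i$ is always played on $a_i$ itself; this is legitimate because $a_i\in g(a_i)$ (respectively $a_i\in g^{-1}(a_i)$), and then the atoms involving free variables are preserved exactly as the quantified ones are. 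Running this once with $g$ and once with $g^{-1}$ yields both implications.

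The crux is reconciling the defined $U$-$X$-core with a minimal \emph{structure}, not merely a minimal induced substructure of $\mathcal{D}$: I must show that any $\mathcal{C}$ which is $\mylogic$-equivalent to $\mathcal{D}$ satisfies $|C|\ge|U\cup X|$. Since $\widetilde{\mathcal{D}}$ is $\mylogic$-equivalent to $\mathcal{D}$, I may replace $\mathcal{D}$ by $\widetilde{\mathcal{D}}$ and assume the DSM $\mathcal{M}=\shE(\widetilde{\mathcal{D}})$ is reduced with $U\cup X=\widetilde{D}$. By Theorem~\ref{theo:containment:mylogic} there are surjective hyper-morphisms $p\colon\widetilde{\mathcal{D}}\to\mathcal{C}$ and $q\colon\mathcal{C}\to\widetilde{\mathcal{D}}$; their composite $g:=q\circ p$ is a shop in $\mathcal{M}$, hence in $3$-permuted form by Theorem~\ref{theo:3PermutedForm}. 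In particular $g$ restricts to a permutation $\beta$ of $X$, so $g(x)=\{\beta(x)\}$ is a singleton for each $x\in X$. From this I would deduce that the sets $p(x)$, $x\in X$, are pairwise disjoint and nonempty: if $c\in p(x_1)\cap p(x_2)$ then $q(c)\subseteq g(x_1)\cap g(x_2)=\{\beta(x_1)\}\cap\{\beta(x_2)\}=\emptyset$, contradicting totality of $q$. Dually, applying Lemma~\ref{lem:AtMostOneUInImage} (every shop has at most one element of $U$ in each image) to $g$, the sets $q^{-1}(u)$, $u\in U$, are pairwise disjoint and nonempty.

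Finally I would count. Picking one representative from each $p(x)$ gives $|X|$ distinct elements of $C$, and one from each $q^{-1}(u)$ gives $|U|$ distinct elements. A coincidence $c\in p(x)\cap q^{-1}(u)$ forces $u\in q(c)\subseteq g(x)=\{\beta(x)\}$, whence $u=\beta(x)\in U\cap X$; since each family is already disjoint, every such $c$ pins down a unique $u\in U\cap X$, so the number of coincidences is at most $|U\cap X|$. Therefore $|C|\ge|X|+|U|-|U\cap X|=|U\cup X|=|\widetilde{D}|$, so $\widetilde{\mathcal{D}}$ is a minimal $\mylogic$-equivalent structure, and being an induced substructure it coincides up to isomorphism with the $\mylogic$-core of \S~\ref{sec:containmentCore}. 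I expect this counting step, and specifically the bound of $|U\cap X|$ on the overlap of the two disjoint families, to be the main obstacle; the remainder is bookkeeping over the cited results.
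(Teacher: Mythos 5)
Your proposal is correct, and for the heart of the statement it takes a genuinely different route from the paper. The bookkeeping parts coincide: like the paper, you assemble uniqueness up to isomorphism from Theorem~\ref{thm:unicity}, the equivalence of the two bullet properties from Theorem~\ref{relativization} together with Proposition~\ref{prop:ShopWithIdentity}, and uniqueness of $U$ and $X$ inside $\widetilde{D}$ from Theorem~\ref{theo:uniquenessUXWhenFullDomain}; your pinning argument for formulae with free variables (playing each free variable on itself, legitimate since the canonical shop contains the identity over $U\cup X$) is a welcome elaboration of what the paper dispatches by merely citing Corollary~\ref{cor:UXretract}. The divergence is in the minimality claim. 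The paper argues the converse direction through relativisation: an induced substructure agreeing on all free-variable formulae yields a $\widetilde{D}$-$\widetilde{D}$-shop by Theorem~\ref{relativization}, which is then shrunk by the greedy retraction of \S~\ref{sec:u-x-core} to a $U$-$X$-core sitting inside it, so no such substructure can be smaller. You instead prove the stronger statement that \emph{any} structure $\mathcal{C}$ that is $\mylogic$-equivalent to $\mathcal{D}$ satisfies $|C|\geq|U\cup X|$, by composing the two surjective hyper-morphisms into a shop $g=q\circ p$ of the reduced DSM, invoking the $3$-permuted form (Theorem~\ref{theo:3PermutedForm}) to make $g$ single-valued and injective on $X$, and counting the two disjoint families $\{p(x)\}_{x\in X}$ and $\{q^{-1}(u)\}_{u\in U}$ with overlap bounded by $|U\cap X|$. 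The counting is sound (the disjointness claims and the coincidence bound check out, noting that disjointness of the $q^{-1}(u)$ also uses surjectivity of $p$ to place each $c\in C$ inside some $g(z)$ before applying Lemma~\ref{lem:AtMostOneUInImage}), and it buys you, in one stroke, the paper's subsequent proposition that the $U$-$X$-core coincides with the $\mylogic$-core --- a fact the paper proves separately by a different composition-and-minimality argument. One step you share with the paper but should not leave tacit: replacing $\mathcal{D}$ by $\widetilde{\mathcal{D}}$ and declaring $\shE(\widetilde{\mathcal{D}})$ reduced with the \emph{same} $U$ and $X$ requires knowing that minimality of $U$ and $X$ transfers down to the substructure; this follows by conjugating any smaller shop $h$ of $\widetilde{\mathcal{D}}$ as $g^{-1}\circ h\circ g$, where $g$ is the range restriction of the canonical shop (which contains the identity on $\widetilde{D}$), to contradict minimality for $\mathcal{D}$ --- the same silent step the paper takes at the start of \S~\ref{sec:hardness}.
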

\begin{proof}
  The last point follows from our definition of a $U$-$X$-core and 
  from Proposition~\ref{prop:ShopWithIdentity}. It is equivalent to
  the \emph{$\forall U$-$\exists X$-relativisation} property by
  Theorem~\ref{relativization}. 
  It follows that $\mathcal{D}$ and $\widetilde{\mathcal{D}}$  satisfy the same
  $\{\exists,\forall,\land,\lor\}$-$\FO$
  formulae with free-variables in $\widetilde{D}$ (see
  Corollary~\ref{cor:UXretract}). 
  Conversely, if $\mathcal{D}$ and $\widetilde{\mathcal{D}}$  satisfy the same
  $\{\exists,\forall,\land,\lor\}$-$\FO$ formulae with free-variables
  in $\widetilde{D}$, then $\mathcal{D}$ has
  $\widetilde{D}$-$\widetilde{D}$-relativisation. The existence of a
  ``$\widetilde{D}$-$\widetilde{D}$-shop'' follows by
  Theorem~\ref{relativization}.
  Enforcing the minimality criteria, we get some $U$-$X$-shop with
  some $U,X\subseteq \widetilde{D}$ 
  (this is because, we may proceed by retraction, as explained in
  the beginning of Subsection~\ref{sec:u-x-core}). Moreover, by
  minimality of $\widetilde{\mathcal{D}}$, we must have $U\cup X =
  \widetilde{D}$. We have a $U$-$X$-core as in our original definition
  in terms of a $U$-$X$-shop satisfying minimality criteria.
  It follows from Theorem~\ref{theo:uniquenessUXWhenFullDomain} that
  $U$ and $X$ are unique (within $\widetilde{D}$).
\end{proof}
Recall that the \mylogic-core $\mathcal{D}'$ of $\mathcal{D}$ is
  the smallest (w.r.t. domain size) structure that is \mylogic-equivalent to
  $\mathcal{D}$.
\begin{proposition}
  The notion of a $U$-$X$-core and of a \mylogic-core coincide.
\end{proposition}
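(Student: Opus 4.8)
The plan is to establish the two domain‑size inequalities together with an isomorphism. Write $\widetilde{\mathcal{D}}$ for the $U$-$X$-core of $\mathcal{D}$ and let $\mathcal{D}'$ be a \mylogic-core of $\mathcal{D}$, i.e. a structure of least domain size that is \mylogic-equivalent to $\mathcal{D}$. First I would record the easy inequality: by Corollary~\ref{cor:UXretract}(i) the structure $\widetilde{\mathcal{D}}$ is \mylogic-equivalent to $\mathcal{D}$, so by minimality of $\mathcal{D}'$ we have $|D'|\le|\widetilde{D}|$.

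The substance is the reverse inequality, and the central ingredient is an \emph{invariance} statement: if two structures are \mylogic-equivalent then their $U$-$X$-cores are isomorphic. Granting this, $\mathcal{D}'$ is \mylogic-equivalent to $\mathcal{D}$ and hence to $\widetilde{\mathcal{D}}$, which is its own $U$-$X$-core; therefore the $U$-$X$-core $\widehat{\mathcal{D}'}$ of $\mathcal{D}'$ is isomorphic to $\widetilde{\mathcal{D}}$. Since $\widehat{\mathcal{D}'}$ is an induced substructure of $\mathcal{D}'$ we obtain $|\widetilde{D}|=|\widehat{\mathcal{D}'}|\le|D'|$, and combining this with the easy inequality forces $|D'|=|\widetilde{D}|$ and $\widehat{\mathcal{D}'}=\mathcal{D}'$, whence $\mathcal{D}'\cong\widetilde{\mathcal{D}}$. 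This simultaneously shows that the \mylogic-core has exactly the size of the $U$-$X$-core and that the two notions coincide up to isomorphism (uniqueness of the \mylogic-core then also follows from Theorem~\ref{thm:unicity}).

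It remains to prove the invariance statement, which is where the real work lies. Given \mylogic-equivalent $\mathcal{B}$ and $\mathcal{C}$ with $U$-$X$-cores $\widehat{\mathcal{B}}$ and $\widehat{\mathcal{C}}$, transitivity of \mylogic-equivalence together with Corollary~\ref{cor:UXretract}(i) shows that $\widehat{\mathcal{B}}$ and $\widehat{\mathcal{C}}$ are themselves \mylogic-equivalent reduced $U$-$X$-cores, so by Theorem~\ref{theo:containment:mylogic} there are surjective hyper-morphisms $p$ from $\widehat{\mathcal{C}}$ to $\widehat{\mathcal{B}}$ and $q$ from $\widehat{\mathcal{B}}$ to $\widehat{\mathcal{C}}$. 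I would then replay the argument of Theorem~\ref{thm:unicity}, but \emph{across} the two structures rather than within one. Using that in a reduced core the canonical shop acts as the identity on $X$ and on $U\cap X$, together with Lemmata~\ref{lem:XshopPermutesX} and~\ref{lem:UshopPermutesU} and the $3$-permuted form (Theorem~\ref{theo:3PermutedForm}), one checks that $p$ and $q$ restrict to surjective homomorphisms between the respective $X$-parts in both directions; on finite structures these compose to isomorphisms, dually for the $U$-parts, and the two isomorphisms agree on $U\cap X$. This yields $\widehat{\mathcal{B}}\cong\widehat{\mathcal{C}}$.

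The main obstacle is precisely this cross-structure extension of Theorem~\ref{thm:unicity}: the delicate point is to verify that $p$ and $q$ genuinely carry the $X$-part of one core into the $X$-part of the other (and likewise for the $U$-parts), which is the two-structure analogue of Lemma~\ref{fact:UXShopMapsNoProperXToProperU} and relies on the reducedness of both cores. Once this containment of parts is secured, everything else is routine bookkeeping with the permutation structure already available for reduced DSMs.
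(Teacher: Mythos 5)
Your top-level plan is sound, and in substance it is the paper's own: the paper also reduces to comparing two reduced, $\mylogic$-equivalent structures (the $U$-$X$-core on one side, and the $\mylogic$-core $\mathcal{D}'$ on the other, which is reduced because its own $U$-$X$-core would otherwise be a smaller $\mylogic$-equivalent structure), obtains cross surjective hyper-morphisms $g\colon\mathcal{D}\to\mathcal{D}'$ and $f\colon\mathcal{D}'\to\mathcal{D}$ from the equivalence, and turns them into bijective homomorphisms in both directions. Your easy inequality, the reduction to the invariance statement, and the final size bookkeeping are all correct.

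The genuine gap is in the one step you defer, and it is not routine. You propose to "replay" Theorem~\ref{thm:unicity} on $p$ and $q$ using Lemmata~\ref{lem:UshopPermutesU} and~\ref{lem:XshopPermutesX}, Lemma~\ref{fact:UXShopMapsNoProperXToProperU} and Theorem~\ref{theo:3PermutedForm}. But all of those are statements about shops, i.e.\ surjective hyper-\emph{endo}morphisms lying in the DSM of a single structure; the cross maps $p$ and $q$ belong to no DSM, so none of these results applies to them. Worse, the engine of Theorem~\ref{thm:unicity} is that the two maps being composed are $U_i$-surjective and $X_i$-total, whereas the maps supplied by Theorem~\ref{theo:containment:mylogic} are merely total and surjective, with no a priori relation to the distinguished sets -- so there is nothing to replay until this is repaired. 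The missing idea is the device the paper uses to make single-structure facts bear on cross maps: compose them with the canonical shops into genuine endomorphisms and transfer the constraints back. Concretely, the paper picks a minimal $U''\subseteq g^{-1}(U')$ with $g(U'')=U'$; since $f\circ c'\circ g\in\shE(\mathcal{D})$ is $U''$-surjective, minimality of $U$ gives $|U|\leq|U''|\leq|U'|$ (and symmetrically $|U'|\leq|U|$); then, because $c\circ f\circ c'\circ g$ is a $U''$-surjective $X$-total shop of $\mathcal{D}$, Theorem~\ref{theo:uniquenessUXWhenFullDomain} forces $U''=U$, from which one extracts that each $u'\in U'$ has a unique $g$-preimage and that it lies in $U$; dually for $X$. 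That is exactly your "parts go to parts" claim, and it is where all the work of the proposition sits. Without this composition-plus-minimality/uniqueness mechanism (or a substitute), your invariance lemma, and hence the whole proof, remains unproven.
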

\begin{proof}
  Let $\mathcal{D}$ be a structure that is a $U$-$X$-core with
  (unique) subsets $U$ and $X$. Let $c$ be the canonical shop of
  $\mathcal{D}$. 

  Let $\mathcal{D}'$ be a \mylogic-core of $\mathcal{D}$, that is
  a smallest (w.r.t. domain size) structure that is \mylogic-equivalent to
  $\mathcal{D}$.
  Let $U'$ and $X'$ be subsets of $D'$ witnessing that
  $\mathcal{D}'$ is a $U'$-$X'$ core. Note that $U'\cup X'=D'$ by
  minimality of $\mathcal{D}'$ (and consequently, $U'$ and $X'$ are uniquely determined by Theorem~\ref{theo:uniquenessUXWhenFullDomain}). Let $c'$ be the canonical shop of
  $\mathcal{D}'$.

  By Proposition~\ref{prop:EqFreeFoContainment}, since $\mathcal{D}$
  and $\mathcal{D}'$ are \mylogic-equivalent, there exist two
  surjective hyper-morphisms $g$ from $\mathcal{D}$ to $\mathcal{D}'$
  and $f$ from $\mathcal{D}'$ to $\mathcal{D}$.
  
  Let $U''$ be a minimal subset of $(g)^{-1}(U')$  such that $g(U'')=U'$.
  Note that   $f\circ c'\circ g$ is a $U''$-surjective shop of $\mathcal{D}$. By minimality of $U$, it follows that $|U|\leq|U''|\leq|U'|$. A similar argument over $\mathcal{D'}$ gives $|U'|\leq|U|$, and consequently, $|U|=|U'|$.
  Moreover, since $c\circ (f\circ c'\circ g)$ is a $U''$-surjective $X$-total surjective hyperendomorphism of $\mathcal{D}$, By Theorem~\ref{theo:uniquenessUXWhenFullDomain}, it follows that $U=U''$.

This means that there is a bijection $\alpha'$ from $U'$ to $U$ such that, for any $u'$ in $U'$,
$g^{-1}(u')=\{\alpha'(u')\}$.

By duality we obtain similarly that $|X|$=$|X'|$ and that there is a bijection $\beta$ from $X$ to $X'$ such that, for any $x$ in $X$, $g(x)=\{\beta(x)\}$.

Thus, $g$ acts necessarily as a bijection from $U\cap X$ to $U'\cap X'$.

The map $\tilde{g}$ from $D$ to $D'$ defined for any $u$ in $U$ as $\tilde{g}(u):=\alpha'^{-1}(u)$
and $\tilde{g} (x):=\beta(x)$ is a homomorphism from $\mathcal{D}$ to $\mathcal{D'}$ that is both injective and surjective.

A symmetric argument yields a map $\tilde{f}$ that is a bijective homomorphism from $\mathcal{D}'$ to $\mathcal{D}$. Isomorphism of $\mathcal{D}'$ and $\mathcal{D}$ follows.
 

\end{proof}

\begin{remark}
  To simplify the presentation, we defined the $\mathcal{L}$-core 
  as a minimal structure w.r.t. domain size. Considering minimal structures
  w.r.t. inclusion, we would get the same notion for $\mylogic$.
  This is also the case for \CSP, but it is not the case in general. 
  For example, this is not the case for the logic \qcsplogic, which
  corresponds to \QCSP~\cite{DBLP:journals/corr/abs-1204-5981}.
\end{remark}

\begin{lemma}\label{lemma:mickeyMouse}
 Let $\mathcal{M}$ be a reduced DSM with associated sets
 $U$ and $X$.
 There are only three cases possible.
 \begin{enumerate}
 \item $U\cap X\neq \emptyset$, $U\setminus X\neq \emptyset$  and $U\setminus X\neq \emptyset$.
 \item $U=X$.
 \item $U\cap X=\emptyset$.
 \end{enumerate}
\end{lemma}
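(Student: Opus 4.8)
The plan is to observe that, for two nonempty subsets $U,X\subseteq D$, the three configurations listed in the statement together with the two \emph{proper-containment} cases $U\subsetneq X$ and $X\subsetneq U$ exhaust all possibilities according to which of the regions $U\cap X$, $U\setminus X$, $X\setminus U$ are empty. Indeed, if $U\cap X=\emptyset$ we are in case (3); if $U\cap X\neq\emptyset$ and both $U\setminus X$ and $X\setminus U$ are empty we are in case (2); if $U\cap X\neq\emptyset$ and both are nonempty we are in case (1) (note the evident typo: the second conjunct of case (1) should read $X\setminus U\neq\emptyset$). The only remaining configurations are $U\setminus X=\emptyset\neq X\setminus U$, i.e. $U\subsetneq X$, and its mirror $X\subsetneq U$. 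The entire content of the lemma is therefore that these two degenerate cases cannot occur for a reduced DSM.

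First I would recall that, since $\mathcal{M}$ is reduced, $U\cup X=D$, and invoke Proposition~\ref{prop:ShopWithIdentity} to fix the canonical $U$-$X$-shop $h\in\mathcal{M}$. Because $U\cup X=D$, the clauses of that proposition describe $h$ on all of $D$: namely $h(y)=\{y\}$ for $y\in U\cap X$, $h(x)=\{x\}$ for $x\in X\setminus U$, $h(u)=\{u\}\cup X_u$ with $X_u\subseteq X\setminus U$ for $u\in U\setminus X$, and $h(U\setminus X)\cap X=X\setminus U$. The crucial point is that $h$ is simultaneously $U$-surjective and $X$-total, and each of these two properties will be contradicted in one of the two degenerate cases.

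For $U\subsetneq X$ we have $U\setminus X=\emptyset$ and $U\cap X=U$, so clause (i) gives $h(y)=\{y\}$ for every $y\in U$, whence $h(U)=U$. As $U\subsetneq X=U\cup X=D$, this contradicts the $U$-surjectivity $h(U)=D$ of $h$. (Equivalently, clause (iv) degenerates to $\emptyset=h(U\setminus X)\cap X=X\setminus U\neq\emptyset$.) Dually, for $X\subsetneq U$ we pick any $u\in U\setminus X\neq\emptyset$; since $X\setminus U=\emptyset$, clause (iii) forces $X_u\subseteq X\setminus U=\emptyset$, so $h(u)=\{u\}$ with $u\notin X$, giving $h(u)\cap X=\emptyset$ and hence $u\notin h^{-1}(X)$. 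This contradicts $X$-totality, i.e. $h^{-1}(X)=D$. Either way the degenerate case is ruled out, which completes the argument.

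There is no serious analytic obstacle here; the argument is short once the toolkit is in place. The only thing to get right is the bookkeeping of the three regions $U\cap X$, $U\setminus X$, $X\setminus U$, and the recognition that reducedness makes Proposition~\ref{prop:ShopWithIdentity} pin down the canonical shop $h$ completely on $D$, so that $U$-surjectivity and $X$-totality become immediately testable. The main conceptual step is simply spotting that the two excluded configurations are exactly the proper containments, and that each is killed by one of the two defining conditions on the canonical shop.
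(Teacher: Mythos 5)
Your proof is correct. Both you and the paper rest on the same key tool -- the canonical $U$-$X$-shop $h$ of Proposition~\ref{prop:ShopWithIdentity}, pinned down on all of $D$ by reducedness -- and both reduce the lemma to excluding the two proper-containment configurations $U\subsetneq X$ and $X\subsetneq U$ (and you rightly flag the typo in case (1)). But the contradiction you derive is genuinely different from the paper's. The paper, assuming $U\subsetneq X$, uses $U$-surjectivity to find $u\in U$ with $x\in h(u)$ for some $x\in X\setminus U$, observes that $u$ is reached only from itself under $h$, and concludes that $h$ would be $(X\setminus\{u\})$-total, contradicting the \emph{minimality of $X$} in the definition of the $U$-$X$-core; the mirror case is then dispatched by duality. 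You instead read the contradiction directly off the structural clauses of Proposition~\ref{prop:ShopWithIdentity}: under $U\subsetneq X$ clause (i) forces $h$ to be the identity, so $h(U)=U\neq D$ violates $U$-surjectivity (equivalently, clause (iv) degenerates to $\emptyset=X\setminus U$), and under $X\subsetneq U$ clause (iii) forces $h(u)=\{u\}\not\subseteq X$ for $u\in U\setminus X$, violating $X$-totality. Your route is the more economical one: it never re-invokes minimality, since that minimality is already encoded in the form clauses you cite, and each degenerate case is killed in one line by one of the two defining properties of $h$. The paper's route has the mild advantage of not needing the full strength of the form clauses (only the fact that $u$ has no antecedent other than itself), and of illustrating the minimality-based style of argument used repeatedly in \S~\ref{sec:uniqueness-u-x}; but as a proof of this lemma, both are sound and yours is arguably cleaner.
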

\begin{proof}
  We prove that $U\subsetneq X$ is not possible.
  Otherwise, let $x$ in $X\setminus U$ and $h$ be the canonical shop.
  There exists some $u$ in $U\subsetneq X$ such that $h(u)\ni x$ by
  $U$-surjectivity of $h$. Since $u$ does not occur in the image of
  any other element than $u$ under the canonical shop, this would mean
  that $h$ is $X\setminus \{u\}$-total, contradicting the minimality
  of $X$.
  
  By duality $X\subsetneq U$ is not possible either and the result follows.
\end{proof}

\subsubsection{The hard DSM above $\mathcal{M}$}
\label{sec:hard-dsm-above}
Define the completion $\widehat{\mathcal{M}}$ of $\mathcal{M}$ to be the DSM that contains \emph{all} shops in the
3-permuted form of $\mathcal{M}$. 
More precisely, the canonical shop of $\widehat{\mathcal{M}}$ is the shop $\hat{h}$
where every set $X_u$ is the whole set $X\setminus U$, and, for every
permutation $\zeta$ of $X\cap U$, $\chi$ of $X\setminus U$ and
$\upsilon$ of $U\setminus X$, any shop in the 
3-permuted form with these permutations is in $\widehat{\mathcal{M}}$. 
Note that by construction, $\mathcal{M}$ is a sub-DSM of
$\widehat{\mathcal{M}}$. 
Note also that the minimality of $U$ and $X$ still holds in
$\widehat{\mathcal{M}}$.  We will establish hardness for $\widehat{\mathcal{M}}$, whereupon hardness of $\mathcal{M}$ follows from Theorem~\ref{thm:galois-connectionMyLogic}.

\subsubsection{Cases II and III: NP-hardness and co-NP-hardness}
\label{sec:NP-and-coNP-hardness}

We begin with Case II. We note first that $U=\{u\}$ and $|X|\geq 2$ implies $U \cap
X=\emptyset$ by Lemma~\ref{lemma:mickeyMouse}. 
The structure $\mathcal{K}_{|X|}\uplus \mathcal{K}_1$, the disjoint
union of a clique of size $|X|$ with an isolated vertex $u$, has associated DSM
$\widehat{\mathcal{M}}$. 
The problem $\cspDisj(\mathcal{K}_{|X|}\uplus
  \mathcal{K}_1)$ is \NP-hard, since the core of $\mathcal{K}_{|X|}\uplus
\mathcal{K}_1$ is $\mathcal{K}_{|X|}$ by Proposition~\ref{prop:cspDisj:complexity}.

For Case III, we may assume similarly to above that $X=\{x\}$,
$|U|\geq 2$ and $U\cap X=\emptyset$ by Lemma~\ref{lemma:mickeyMouse}. We use the duality principle,
which corresponds to taking the inverse of shops. Since the inverse of an $\{x\}$-total $U$-surjective shop with
$U\geq 2$ is a $\{U\}$-total $\{x\}$-surjective shop, we may use the structure $\overline{\mathcal{K}_{|U|}\uplus \mathcal{K}_1}$ which is \CocspDisj-equivalent to $\overline{\mathcal{K_{|U|}}}$ (and $\CocspDisj(\overline{\mathcal{K_{|U|}}})$ is \coNP-hard).

\subsubsection{case IV: Pspace-hardness}
\label{sec:Pspace-hardness}
We assume that $|U|\geq 2$ and $|X|\geq 2$ and consider the tree
possible cases given by Lemma~\ref{lemma:mickeyMouse}.

\paragraph*{Case 1: when $U\cap X\neq \emptyset$, $U\setminus X\neq
  \emptyset$ and $X\setminus U\neq \emptyset$}
Recall that if $\mathcal{M}$ is a sub-DSM
of a hard DSM $\widehat{\mathcal{M}}$ then $\mathcal{M}$ is also hard (see
Theorem~\ref{thm:galois-connectionMyLogic}).

We write $U\Delta X$ as an abbreviation for $(X\setminus U) \cup (U\setminus X)$.
To build $\widehat{\mathcal{M}}$ from $\mathcal{M}$, we added all permutations,
and chose for each set $X_u=X\setminus U$. 
We carry on with this completion process and consider the super-DSM
$\mathcal{M}'$ which is 
generated by a single shop $g'$ defined as follows:
\begin{itemize}
\item for every $y$ in $X\cap U$, $g'(y):=X\Delta U$; and,
\item for every $z$ in $X\Delta U$, $g'(z):=X\cap U$, where
  $X\Delta U $ denotes $(X\setminus U) \cup (U\setminus X)$.
\end{itemize}

The complete bipartite graph $\mathcal{K}_{X\Delta U,X\cap U}$  has
$\mathcal{M}'$ for DSM. 
Observing that there is a full surjective homomorphism from
$\mathcal{K}_{X\Delta U,X\cap U}$ to $\mathcal{K}_2$, thus by
Proposition~\ref{prop:EqFreeFoContainment} the two structures agree on
all sentences of \EqFreeFo\ and so also on all sentences of \mylogic.
It suffices therefore to prove that $\mylogic(\mathcal{K}_2)$ is
\Pspace-hard, which we did by reduction from $\QCSP(\mathcal{B}_{\mbox{\textsc{nae}}})$ in the
proof of Theorem~\ref{theorem:dichotomy:mylogic:boolean}.

\paragraph*{Case 2: when $U=X$}
The clique $\mathcal{K}_{|U|}$ has DSM $\widehat{\mathcal{M}}$. 
The problem $\mylogic(\mathcal{K}_{|U|})$ is \Pspace-complete by
Theorem~\ref{theorem:dichotomy:mylogic:boolean} in the Boolean case;
and, beyond that, it is also \Pspace-hard as a generalisation of the
\Pspace-complete $\QCSP(\mathcal{K}_{|U|})$. The \Pspace-completeness of
$\mylogic(\mathcal{D})$ follows from Theorem~\ref{thm:galois-connectionMyLogic}.

\paragraph*{Case 3: when $U\cap X = \emptyset$}
We can no longer complete the monoid $\widehat{\mathcal{M}}$ into $\mathcal{M}'$, as we
would end up with a trivial monoid.
The remainder of this section is devoted to a generic hardness proof.  
Assume that $|U|=j\geq 2$ and $|X|=k\geq 2$ and w.l.o.g. let
$U=\{1,2,\ldots j\}$ and $X=\{j+1,j+2,\ldots j+k\}$. Recalling that the symmetric group is generated by a transposition and a cyclic permutation, let $\widehat{\mathcal{M}}$
be the DSM given by
\[
\langle \
\resizebox{!}{.12\textwidth}{
\ensuremath{
\begin{array}{c|c}
1 & 2, j+1, \ldots, j+k \\
\hline
2 & 1, j+1, \ldots, j+k\\
\hline
3 & 3, j+1, \ldots, j+k\\
\hline
\vdots & \vdots \\ 
\hline
j & j, j+1, \ldots, j+k\\
\hline
j+1 & j+1\\
\hline
j+2 & j+2\\
\hline
j+3 & j+3\\
\hline
\vdots & \vdots \\
\hline 
j+k & j+k
\end{array}
}
}
, \
\resizebox{!}{.12\textwidth}{
\ensuremath{
\begin{array}{c|c}
1 & 2, j+1, \ldots, j+k\\
\hline
2 & 3, j+1, \ldots, j+k\\
\hline
3 & 4, j+1, \ldots, j+k\\
\hline
\vdots & \vdots \\ 
\hline
j & 1, j+1, \ldots, j+k\\
\hline
j+1 & j+1\\
\hline
j+2 & j+2\\
\hline
j+3 & j+3\\
\hline
\vdots & \vdots \\ 
\hline
j+k & j+k
\end{array}
}
}
, \
\resizebox{!}{.12\textwidth}{
\ensuremath{
\begin{array}{c|c}
1 & 1, j+1, \ldots, j+k\\
\hline
2 & 2, j+1, \ldots, j+k\\
\hline
3 & 3, j+1, \ldots, j+k\\
\hline
\vdots & \vdots \\ 
\hline
j & j, j+1, \ldots, j+k\\
\hline
j+1 & j+2 \\
\hline
j+2 & j+1\\
\hline
j+3 & j+3\\
\hline
\vdots & \vdots \\ 
\hline
j+k & j+k
\end{array}
}
}
, \
\resizebox{!}{.12\textwidth}{
\ensuremath{
\begin{array}{c|c}
1 & 1, j+1, \ldots, j+k\\
\hline
2 & 2, j+1, \ldots, j+k\\
\hline
3 & 3, j+1, \ldots, j+k\\
\hline
\vdots & \vdots \\ 
\hline
j & j, j+1, \ldots, j+k\\
\hline
j+1 & j+2\\
\hline
j+2 & j+3\\
\hline
j+3 & j+4\\
\hline
\vdots & \vdots \\ 
\hline
j+k & j+1
\end{array}
}
}
\ \rangle
.
\]

We will give a structure $\widehat{\mathcal{D}}$ such that
$\shE(\widehat{\mathcal{D}})=\widehat{\mathcal{M}}$. 
Firstly, though, given some fixed $u$ in $U$ and $x$ in $X$, let
$\mathcal{G}^{|U|,|X|}_{u,x}$ be the symmetric graph with
self-loops with domain $D=U \cup X$ such that
\begin{itemize}
\item $u$ and $x$ are adjacent;
\item The graph induced by $X$ is a reflexive clique
  $\mathcal{K}_X^{\text{ref}}$; and,
\item $U\setminus \{u\}$ and $X\setminus\{x\}$
  are related via a complete bipartite graph
  $\mathcal{K}_{|X\setminus\{x\}|,|U\setminus \{u\}|}$. 
\end{itemize}
The structure $\mathcal{G}^{|U|,|X|}_{u,x}$ and the more specific
$\mathcal{G}^{4,5}_{1,5}$ are drawn in Figure~\ref{fig:PspaceHardnessGadget}. 
\begin{figure}
  \centering
   \subfloat[$\mathcal{G}^{2,2}_{1,3}$]{\label{fig:graphH}
     \input{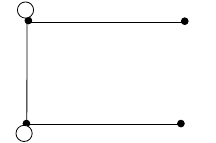_t}}
  \quad \quad \quad
  \subfloat[$\mathcal{G}^{4,5}_{1,5}$]{\label{fig:specGadget}
    \input{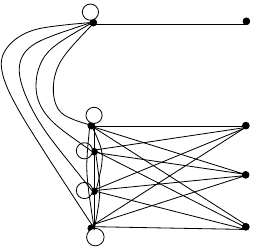_t}}
  \quad \quad \quad
  \subfloat[$\mathcal{G}^{|U|,|X|}_{u,x}$]
  {\label{fig:GenGadget} \input{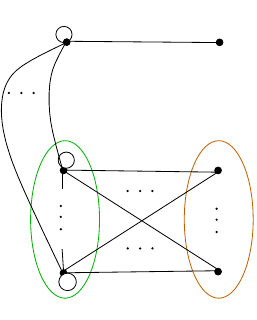_t}}
  \caption{Main Gadget.}
  \label{fig:PspaceHardnessGadget}
\end{figure}
Denote by $E^{|U|,|X|}_{u,x}$ the binary relation of
$\mathcal{G}^{|U|,|X|}_{u,x}$ and let $\widehat{\mathcal{D}}$ be the structure
with a single $4$-ary relation $R^{\widehat{\mathcal{D}}}$ with domain $\widehat{D}=U\cup X$ specified as follows,
\begin{multline*}
R^{\widehat{\mathcal{D}}} := 
\bigcup_{u \in U} 
\Biggl(\,
\biggl(\,
\bigcup_{x\in X} (u,x)\times E^{|U|,|X|}_{u,x} \, 
\biggr)
\cup 
\,
\biggl(
\bigcup_{x_1,x_2,x_3\in X} (x_1,x_2)\times
E^{|U|,|X|}_{u,x_3}
\biggr)
\Biggr).
\end{multline*}
Essentially, when the first argument in a quadruple is from $U$, then
the rest of the structure allows for the unique recovery of some
$\mathcal{G}^{|U|,|X|}_{u,x}$; but if the first argument is from $X$
then all possibilities from $X$ for the remaining arguments are
allowed. In particular, we note from the last big cup that 
$(x_1,x_2,x_3,x_4)$ is a tuple of $R^{\widehat{\mathcal{D}}}$ for all quadruples
$x_1,x_2,x_3,x_4$ in $X$. 
\begin{lemma}
$\shE(\widehat{\mathcal{D}})=\widehat{\mathcal{M}}$.
\end{lemma}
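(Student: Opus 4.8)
The plan is to prove the two inclusions $\widehat{\mathcal{M}} \subseteq \shE(\widehat{\mathcal{D}})$ and $\shE(\widehat{\mathcal{D}}) \subseteq \widehat{\mathcal{M}}$ separately. Throughout I abbreviate $E_{u,x} := E^{|U|,|X|}_{u,x}$ and I write $E^\ast := \bigcup_{u \in U, x \in X} E_{u,x}$. A direct inspection of the gadget shows that $E^\ast$ consists of all pairs of $D \times D$ except those lying in $U \times U$: every $E_{u,x}$ contains the reflexive clique on $X$, and as $(u,x)$ ranges over $U \times X$ the special edges together with the bipartite parts exhaust all $U$-$X$ and $X$-$U$ pairs, while no $E_{u,x}$ ever carries a $U$-$U$ edge or a $U$-self-loop. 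The recovery features I will exploit are that in $E_{u,x}$ the neighbourhood of $u$ is exactly $\{x\}$, that each $u' \in U \setminus \{u\}$ is joined precisely to $X \setminus \{x\}$, and that every edge of $E_{u,x}$ has at least one endpoint in $X$.

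For the inclusion $\widehat{\mathcal{M}} \subseteq \shE(\widehat{\mathcal{D}})$ I would verify preservation of $R^{\widehat{\mathcal{D}}}$ directly for an arbitrary shop $f \in \widehat{\mathcal{M}}$, which by the completion definition acts as a permutation $\chi$ on $X$ (so $f(x) = \{\chi(x)\} \subseteq X$) and sends $u \in U$ to $\{\upsilon(u)\} \cup X_u$ for a permutation $\upsilon$ of $U$ and some $X_u \subseteq X$. Given a tuple of $R^{\widehat{\mathcal{D}}}$ and images $b_i \in f(a_i)$, I split on the first coordinate. If $a_1 = u \in U$ then $a_2 = x \in X$ and $(a_3,a_4) \in E_{u,x}$; when the image $b_1$ is the $U$-witness $\upsilon(u)$, one checks that $f$ carries $E_{u,x}$ into $E_{\upsilon(u),\chi(x)}$ by inspecting the special edge, the reflexive $X$-clique and the bipartite part (any image falling into some $X_{u'} \subseteq X$ always lands inside the reflexive $X$-clique of the target), while when $b_1 \in X_u \subseteq X$ we have $b_1,b_2 \in X$ and only need $(b_3,b_4) \in E^\ast$, which holds because every edge of $E_{u,x}$ has an endpoint in $X$ and $f$ maps $X$ into $X$. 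The case $a_1 \in X$ is immediate: then $a_2 \in X$, the membership is of the permissive ``all of $E^\ast$'' kind, and $f$ preserves $E^\ast$ by the same endpoint argument.

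For the converse inclusion I would take an arbitrary $f \in \shE(\widehat{\mathcal{D}})$ and show it has the $3$-permuted form. The engine here is Theorem~\ref{thm:galois-connectionMyLogic}: any relation $\{\exists,\land\}$-definable from $R^{\widehat{\mathcal{D}}}$ lies in $\langle \widehat{\mathcal{D}} \rangle_{\mylogic} = \Inv(\shE(\widehat{\mathcal{D}}))$ and is therefore preserved by $f$. First, $S(a_3,a_4) := \exists a_1 \exists a_2\, R(a_1,a_2,a_3,a_4)$ computes exactly $E^\ast$; since $(x,x) \in E^\ast$ for $x \in X$ while no $U$-self-loop lies in $E^\ast$, preservation of $S$ forces $f(x) \subseteq X$ for every $x \in X$. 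Second, reusing a variable (no equality needed) I set $N(a_1,a_3) := \exists y\, R(a_1,y,a_3,y)$ and compute that $N(u,\cdot) \cap U = \{u\}$ for each $u \in U$; preservation of $N$ applied to the pair $(u,u)$ then rules out two distinct $U$-elements inside any single $f(u)$, so $|f(u) \cap U| \leq 1$, and surjectivity of $f$ (together with $f(X) \subseteq X$) upgrades $u \mapsto f(u) \cap U$ to a permutation $\upsilon$ of $U$, giving $f(u) = \{\upsilon(u)\} \cup X_u$ with $X_u \subseteq X$.

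It remains to control $f$ on $X$, and this is the step I expect to be the main obstacle, as it is exactly where the ``unique recovery of $(u,x)$'' from the gadget must be cashed out. Using the special-edge tuple $(u,x,u,x) \in R^{\widehat{\mathcal{D}}}$ and the $U$-witness $\upsilon(u)$, preservation forces, for all $\beta,\beta' \in f(x)$, the pair $(\upsilon(u),\beta') \in E_{\upsilon(u),\beta}$; since the neighbourhood of $\upsilon(u)$ there is the singleton $\{\beta\}$, this yields $\beta' = \beta$, so $f(x) = \{\chi(x)\}$ is a singleton. Injectivity of $\chi$ comes from the tuple $(u,x,u',x') \in R^{\widehat{\mathcal{D}}}$ with $u' \in U \setminus \{u\}$ (available since $|U| \geq 2$) and $x' \neq x$: if $\chi(x) = \chi(x') =: \beta$, preservation would require $(\upsilon(u'),\beta) \in E_{\upsilon(u),\beta}$, impossible because $\upsilon(u') \in U \setminus \{\upsilon(u)\}$ is joined only to $X \setminus \{\beta\}$. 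Hence $\chi$ is a permutation of $X$, $f$ is in $3$-permuted form, and therefore $f \in \widehat{\mathcal{M}}$. The delicate points throughout are the bookkeeping of which of the two ``modes'' of $R^{\widehat{\mathcal{D}}}$ a tuple and its image fall into, and verifying that the only obstruction edges are precisely the ones the gadget was engineered to forbid.
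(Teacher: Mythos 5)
Your proof is correct, and its skeleton matches the paper's: one inclusion by checking that every shop in the $3$-permuted form preserves $R^{\widehat{\mathcal{D}}}$ (the paper economises by checking only \emph{maximal} shops, which suffices since sub-shops of a preserving shop preserve), and the converse by forcing an arbitrary preserving shop into the $3$-permuted form. Where you genuinely diverge is in how the converse constraints are extracted. The paper exhibits, for each way $f$ could fail to be in $\widehat{\mathcal{M}}$, an explicit tuple of $R^{\widehat{\mathcal{D}}}$ whose image escapes the relation; you instead route the first two constraints through the $\mylogic$-definable relations $S=E^\ast$ and $N$, invoking Theorem~\ref{thm:galois-connectionMyLogic} to conclude that $f$ preserves them -- a heavier tool than strictly necessary, but clean, and your computations of $E^\ast=(D\times D)\setminus(U\times U)$ and of $N\cap(U\times U)=\{(u,u):u\in U\}$ are accurate. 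More importantly, your last step -- injectivity of $\chi$ on $X$ via the bipartite-edge tuple $(u,x,u',x')$ -- is not a stylistic difference: it fills a genuine lacuna in the paper's own argument. The paper's case analysis rules out a $U$-element in some $f(x)$, two distinct $X$-elements in some $f(x)$, and two distinct $U$-elements in some $f(u)$, and then asserts that ``it follows that $f$ is a permutation $\chi$ on $X$''; but a shop that is singleton-valued on $X$ yet non-injective there (say $f(x_1)=f(x_2)=\{\beta\}$, with the missed element of $X$ recovered inside some $X_u$) escapes all three cases while failing to be a sub-shop of any maximal shop, and surjectivity alone cannot exclude it, precisely because elements of $X$ may be reached from $U$. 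Exactly your argument is needed to close this case: the image tuple $(\upsilon(u),\beta,\upsilon(u'),\beta)$ cannot lie in $R^{\widehat{\mathcal{D}}}$, since in $E_{\upsilon(u),\beta}$ the vertex $\upsilon(u')$ is joined only to $X\setminus\{\beta\}$. So your proposal is correct and, on this point, more complete than the proof in the paper.
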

\begin{proof}
Recall that, according to Theorem~\ref{theo:3PermutedForm} and our assumption on
$U$, $X$ and $\widehat{\mathcal{M}}$, a maximal (w.r.t. sub-shop inclusion) shop
$f'$ is of the following form,
\begin{itemize}
  \item for any $x$ in $X\setminus U=X$, $f(x)=\{\chi(x)\}$; and,
  \item for any $u$ in $U\setminus X=U$, $f(u)=\{\upsilon(u)\}\cup X$.
  \end{itemize}
where $\chi$ and $\upsilon$ are permutations of $X$ and $U$,
respectively.

(Backwards; $\widehat{\mathcal{M}} \subseteq \shE(\widehat{\mathcal{D}})$.) 
It suffices to check that a maximal shop $f'$ in $\widehat{\mathcal{M}}$
preserves $\widehat{\mathcal{D}}$. This holds by construction. 
We consider first tuples from $(x_1,x_2)\times E^{|U|,|X|}_{u,x_3}$. 
\begin{itemize}
\item A tuple with elements from $X$ only will map to a like tuple,
  which must occur, so we can ignore such tuples from now on.
\item 
  A tuple $(x_1,x_2,u,x_3)$ maps either to
  $\bigl(\chi(x_1),\chi(x_2),\upsilon(u),\chi(x_3)\bigr)$ which appears in
  $\bigl(\chi(x_1),\chi(x_2)\bigr)\times
  E^{|U|,|X|}_{\upsilon(u),\chi(x_3)}$, 
  or it maps to a tuple containing only elements from $X$. 
\item A tuple $(x_1,x_2,x_3,u)$  maps either to $\bigl(\chi(x_1),\chi(x_2),\chi(x_3),\upsilon(u)\bigr)$,  which appears in  $\bigl(\chi(x_1),\chi(x_2)\bigr)\times  E^{|U|,|X|}_{\upsilon(u),\chi(x_3)}$, or it maps to a tuple containing only elements from $X$.
\end{itemize}
We consider now tuples from $(u,x)\times E^{|U|,|X|}_{u,x}$. 
\begin{itemize}
\item  If the first coordinate $u$ is mapped to $\upsilon(u)$, then the
  tuple is mapped to different tuples from
  $\bigl(\upsilon(u),\chi(x)\bigr)\times
  E^{|U|,|X|}_{\upsilon(u),\chi(x)}$, depending whether the second $u$
  is mapped to an element from $X$ or to $\upsilon(u)$.
\item Otherwise, the first coordinate $u$ is mapped to an element
  $x_1$ from $X$, and some other element from $u'$ in $U$ occurs (or
  the tuple contains elements from $X$ only) and a tuple is mapped to a
  tuple of the form
  $(x_1,\chi(x),\upsilon(u'),x_3)$ which appears in
  $\bigl(x_1,\chi(x)\bigr)\times E^{|U|,|X|}_{\upsilon(u'),x_3}$.
\end{itemize}

(Forwards; $\shE(\widehat{\mathcal{D}}) \subseteq \widehat{\mathcal{M}}$.)
 We proceed by contraposition, demonstrating that $R^{\widehat{\mathcal{D}}}$ is
 violated by any $f \notin \widehat{\mathcal{M}}$. We consider the different
 ways that $f$ might not be in $\widehat{\mathcal{M}}$.  
 \begin{itemize}
 \item If $f$ is s.t. $u \in f(x)$ for $x\in X$ and $u\in U$ then we,
   e.g., take $(u,x,x,x) \in R^{\widehat{\mathcal{D}}}$ but $(z,u,u,u) \notin
   R^{\widehat{\mathcal{D}}}$ (for any $z \in f(u)$) and we are done. It follows
   that $f(X)=X$.
 \item 
   Assume now that $f$ is s.t. $\{x'_1,x'_2\} \subseteq f(x)$ for
   $x'_1 \neq x'_2$ and $x,x'_1,x'_2 \in X$. Let $u,u' \in U$ be
   s.t. $u' \in f(u)$. Take $(u,x,u,x) \in R^{\widehat{\mathcal{D}}}$;
   $(u',x'_1,u',x'_2) \notin R^{\widehat{\mathcal{D}}}$ and we are done. It
   follows that $f$ is a permutation $\chi$ on $X$.
 \item 
   Assume now that $f$ is s.t. $\{u'_1,u'_2\} \subseteq f(u)$ for
   $u'_1 \neq u'_2$ and $u,u'_1,u'_2 \in U$. Let $x,x' \in X$ be
   s.t. $x' \in f(x)$. Take $(u,x,u,x) \in R^{\widehat{\mathcal{D}}}$;
   $(u'_1,x',u'_2,x') \notin R^{\widehat{\mathcal{D}}}$ and we are done. It
   follows that $f$ restricted to $U$ is a permutation $\upsilon$ on
   $U$.
 \end{itemize}
Hence, $f$ is a sub-shop of a maximal shop $f'$ from the DSM $\widehat{\mathcal{M}}$,
and $f$ belongs to $\widehat{\mathcal{M}}$ (recall that a DSM is closed under
sub-shops). The result follows. 
\end{proof}

\begin{proposition}
$\mylogic(\widehat{\mathcal{D}})$ is Pspace-complete.
\end{proposition}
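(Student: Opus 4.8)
The membership in \Pspace\ is immediate: evaluating any sentence of \mylogic\ (indeed of full \FO) against the fixed finite structure $\widehat{\mathcal{D}}$ can be carried out by the standard recursive procedure through the quantifier prefix, reusing space between branches, and so runs in polynomial space. All the work therefore lies in establishing \Pspace-hardness, and for this the plan is to give a direct reduction from a concrete \Pspace-complete quantified problem, namely $\QCSP(\mathcal{B}_{\mbox{\textsc{nae}}})$ (shown \Pspace-complete and already exploited in the proofs of Proposition~\ref{proposition:posFoNeq:complexity} and Theorem~\ref{theorem:dichotomy:mylogic:boolean}). Note that the clean route used in Case~1 --- exhibiting a full surjective homomorphism onto $\mathcal{K}_2$ --- is unavailable here: since $X$ carries a reflexive clique while $\mathcal{K}_2$ is loopless, no full homomorphism from $\mathcal{G}^{|U|,|X|}_{u,x}$ to $\mathcal{K}_2$ can exist, which is exactly why the explicit gadget $\widehat{\mathcal{D}}$ was built.

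The guiding principle is that, by Theorem~\ref{relativization} and the fact that $\widehat{\mathcal{D}}$ is a $U$-$X$-core with $U\cap X=\emptyset$ and $|U|,|X|\geq 2$, we may assume that in the Hintikka game on $\widehat{\mathcal{D}}$ the universal player moves inside $U$ and the existential player inside $X$; these two sets, each of size at least two, supply the players with a genuine binary choice that cannot be collapsed (there is neither an $A$-shop nor an $E$-shop). I would fix two distinguished elements $u_0,u_1\in U$ and $x_0,x_1\in X$ to serve as the Boolean values for the universal and existential players respectively, and translate a prefix $Q_1 z_1\cdots Q_n z_n$ of $\QCSP(\mathcal{B}_{\mbox{\textsc{nae}}})$ into a prefix of \mylogic\ over $\widehat{\mathcal{D}}$ of the same alternation pattern. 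The heart of the construction is to encode each clause $\mathrm{NAE}(z_a,z_b,z_c)$ by a positive subformula built from $R^{\widehat{\mathcal{D}}}$: one uses the first two coordinates of $R^{\widehat{\mathcal{D}}}$ as a selector that ``queries'' the appropriate graph $\mathcal{G}^{|U|,|X|}_{u,x}$ through its edge relation $E^{|U|,|X|}_{u,x}$, so that satisfiability of the queried edges mirrors the not-all-equal condition on the three literals.

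I would then verify the reduction in both directions. A winning strategy for the existential player in the $(\mathcal{B}_{\mbox{\textsc{nae}}},\Phi)$-game is transported into a winning strategy in the $(\widehat{\mathcal{D}},\varphi)$-game by always answering with the designated $X$-element coding the chosen truth value; conversely, a winning strategy on $\widehat{\mathcal{D}}$ --- which by the $\forall U$-$\exists X$-relativisation property may be taken to play universally in $U$ and existentially in $X$ --- projects back onto a winning \textsc{nae} strategy. Hardness of $\widehat{\mathcal{D}}$ then yields hardness of every $\mathcal{D}\in\Inv(\mathcal{M})$, since $\mathcal{M}$ is a sub-DSM of $\widehat{\mathcal{M}}=\shE(\widehat{\mathcal{D}})$ and Theorem~\ref{thm:galois-connectionMyLogic} (via Corollary~\ref{cor:she-reduction}) propagates the lower bound downward through the lattice.

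The main obstacle is the faithfulness of the clause encoding in the complete absence of equality and negation. Because $R^{\widehat{\mathcal{D}}}$ contains \emph{all} quadruples over $X$ (every branch whose first argument lands in $X$ is vacuously satisfiable) and because $\mathcal{G}^{|U|,|X|}_{u,x}$ puts a full reflexive clique on $X$, the existential player is handed many ``escape'' tuples; the delicate point is to design the selector subformula so that these permissive branches cannot be used to satisfy an unsatisfiable \textsc{nae} clause, while the genuine $U$-versus-$X$ asymmetry --- the single pivot edge $u\sim x$ set against the complete bipartite part $\mathcal{K}_{|X\setminus\{x\}|,|U\setminus \{u\}|}$ --- is what actually carries the not-all-equal semantics. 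A second point requiring care is making the two distinguished values behave as true bits when $|U|$ or $|X|$ exceeds two, i.e.\ checking that the surplus elements of $U$ and $X$ afford neither player an unintended winning option; this is controlled using the $3$-permuted normal form of the shops in $\widehat{\mathcal{M}}$ guaranteed by Theorem~\ref{theo:3PermutedForm}.
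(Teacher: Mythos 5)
Your outline coincides with the paper's own strategy --- hardness by reduction from $\QCSP(\mathcal{B}_{\mbox{\textsc{nae}}})$, relativisation of $\forall$ to $U$ and $\exists$ to $X$ via Theorem~\ref{relativization}, the first two coordinates of $R^{\widehat{\mathcal{D}}}$ used as a selector (the paper prefixes the sentence with $\forall u_0 \in U\ \exists x_0 \in X$ and replaces every edge atom $E(u,v)$ by $R(u_0,x_0,u,v)$, so that play proceeds inside the copy $\mathcal{G}^{|U|,|X|}_{u_0,x_0}$), and downward propagation of hardness through the Galois connection. But the two gadget constructions that carry the entire mathematical content are precisely what you leave open, and you flag them yourself as ``the main obstacle'' and ``the delicate point'' without resolving them. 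Worse, the sketch you do give is unworkable as stated: a universal Boolean variable cannot be represented by distinguished elements $u_0,u_1 \in U$, because in positive equality-free logic the universal player ranges over \emph{all} of the (relativised) set $U$, no positive formula can test which element was played, and two $U$-elements cannot be compared (the gadget has no edges inside $U$, and tuples of $R^{\widehat{\mathcal{D}}}$ with first coordinate in $U$ force the second coordinate into $X$). The paper's solution is to convert each universal Boolean variable into an $X$-value by a companion existential variable: $\forall z$ becomes $\forall u \in U\ \exists v_u \in X\ E(u,v_u)$, where in $\mathcal{G}^{2,2}_{1,3}$ each $U$-vertex has exactly one $X$-neighbour, so the universal choice of $u$ determines $v_u$; all truth values, universal and existential alike, then live in $X$. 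Each clause $\mathrm{NAE}(z_\alpha,z_\beta,z_\gamma)$ becomes $\forall c_i \in U\ \bigl(E(c_i,v_\alpha) \lor E(c_i,v_\beta) \lor E(c_i,v_\gamma)\bigr)$, where the universal quantifier acts as a conjunction: the pivot vertex of $U$ enforces ``at least one of the three is true'' and the other $U$-vertex enforces ``at least one is false''. Without these two constructions there is no reduction to verify, so the proposal has a genuine gap at its core.

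Two smaller points. Your concern about surplus elements when $|U|,|X|>2$ is legitimate, but the $3$-permuted form of Theorem~\ref{theo:3PermutedForm} is the wrong tool for it: that theorem characterises $\shE(\widehat{\mathcal{D}})$, i.e.\ definability, and gives no direct handle on the faithfulness of a game reduction. The paper instead invokes Proposition~\ref{prop:EqFreeFoContainment}: there is a full surjective homomorphism from $\mathcal{G}^{|U|,|X|}_{u,x}$ onto $\mathcal{G}^{2,2}_{1,3}$, so all these gadgets satisfy the same equality-free sentences and it suffices to carry out the \textsc{nae} reduction on the four-element gadget --- you correctly noted that no full homomorphism onto $\mathcal{K}_2$ exists, but overlooked that the same collapse is available with $\mathcal{G}^{2,2}_{1,3}$ as target. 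Finally, the \Pspace\ upper bound you argue in your first paragraph is not at issue; it holds for every finite structure by the standard evaluation procedure already recalled in \S~\ref{sec:methodology} and is subsumed by the upper-bound half of Theorem~\ref{tetrachotomy}.
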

\begin{proof}
We begin with the observation that $\mylogic(\mathcal{G}^{|U|,|X|}_{u,x})$
is \Pspace-complete (for each $u\in U$ and $x\in X)$. This follows
straightforwardly from the Pspace-completeness of
$\mylogic(\mathcal{G}^{2,2}_{1,3})$, the simplest gadget which is
depicted on Figure~\ref{fig:graphH}. These gadgets
$\mathcal{G}^{|U|,|X|}_{u,x}$ agree on all equality-free
sentences -- even ones involving negation -- by
Proposition~\ref{prop:EqFreeFoContainment}, as there is a full surjective
homomorphism from $\mathcal{G}^{|U|,|X|}_{u,x}$ to
$\mathcal{G}^{2,2}_{1,3}$.

We will prove that $\mylogic(\mathcal{G}^{2,2}_{1,3})$ is
\Pspace-hard, by reduction from the \Pspace-complete problem
$\QCSP(\mathcal{B}_{\mbox{\textsc{nae}}})$. Recall that we may assume 
w.l.o.g. that universal variables are relativised to $U$ and that
existential variables are relativised to $X$, by
Theorem~\ref{relativization}. 
Let $\varphi$ be an instance of $\QCSP(\mathcal{B}_{\mbox{\textsc{nae}}})$. We reduce
$\varphi$ to a (relativised) instance $\psi$ of
$\mylogic(\mathcal{G}^{2,2}_{1,3})$. The reduction goes as follows:
\begin{itemize}
\item an existential variable $\exists x$ of $\varphi$ is replaced by
  an existential variable $\exists v_x \in X$ in $\psi$;
\item a universal variable $\forall u$ of $\varphi$ is replaced by
  $\forall u \in U\ \exists v_u \in X,\,\, E(u,v_u)$ in $\psi$; and,
\item every clause $C_i:= R(\alpha,\beta,\gamma)$ in $\varphi$ is
  replaced by the following formula in $\psi$,
  $$\forall c_i \in U,\,\, E(c_i,v_\alpha) \lor E(c_i,v_\beta)
  \lor E(c_i,v_\gamma).$$
\end{itemize}
The truth assignment is read from $\exists$ choices in $X$ for the variables
$v$: we arbitrarily see one value in $X$ as true and the other as
false. It is not relevant which one is which for the problem
not-all-equal satisfiability, we only need to ensure that no three
variables involved in a clause can get the same value. 
The $\forall c_i \in U$ acts as a conjunction, enforcing 
``one of $v_\alpha, v_\beta,v_\gamma$ is true'' and  
``one of $v_\alpha, v_\beta,v_\gamma$ is false''.  
This means that at least one in three has a different value.

Now, we can prove that $\mylogic(\widetilde{\mathcal{D}})$ is Pspace-complete by
substituting $R(u_0,x_0,u,v)$ for each instance of $E(u,v)$ in the
previous proof, and by quantifying the sentence so-produced with the
prefix $\forall u_0 \in U \ \exists x_0 \in X$, once $u_0$ and $x_0$ are
chosen, play proceeds as above but in the copy
$\mathcal{G}^{|U|,|X|}_{u_0,x_0}$, and the result
follows. 
\end{proof}

\subsection{The Complexity of the Meta-Problem}
\label{sec:meta-problem}
The \mylogic($\sigma$) meta-problem takes as input a finite
$\sigma$-structure $\mathcal{D}$ and answers L, NP-complete,
co-NP-complete or Pspace-complete, 
according to the complexity of $\{\exists, \forall, \wedge,\vee \}
\mbox{-}\mathsf{\FO}(\mathcal{D})$. The principle result of this
section is that this problem is \NP-hard even for some fixed and
finite signature $\sigma_0$, which consists of two binary and
  three unary predicates (the unaries are for convenience, but it is not clear whether a single binary suffices).

Note that one may determine if a given shop $f$ is a surjective hyper-endomorphism of a
structure $\mathcal{D}$ in, say, quadratic time in $|D|$ 
Since we are not interested here in distinguishing levels
within P, we will henceforth consider such a test to be a basic
operation. We begin with the most straightforward case. 
\begin{proposition}
\label{prop:easy-L}
On input $\mathcal{D}$, the question ``\emph{is \mylogic-($\mathcal{D}$) in L?}'' is
in P. 
\end{proposition}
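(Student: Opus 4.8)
The plan is to reduce the question to a bounded search over a handful of explicitly described candidate shops. By the tetrachotomy (Theorem~\ref{tetrachotomy}), $\mylogic(\mathcal{D})$ is in \Logspace{} precisely when $\mathcal{D}$ falls into Case~I, that is, precisely when $\mathcal{D}$ is preserved by \emph{both} an A-shop and an E-shop. So it suffices to decide, in polynomial time in $|D|$, whether $\mathcal{D}$ admits both kinds of surjective hyper-endomorphism.

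First I would observe that the two witnessing shops can be merged into a single, very rigid one. Suppose $f$ is an A-shop with $f(u)=D$ and $g$ is an E-shop with $g^{-1}(x)=D$. Since an A-shop is exactly a $\{u\}$-surjective shop and an E-shop is exactly an $\{x\}$-total shop, Lemma~\ref{lem:compositionAndUXshops}(iii) shows that $g\circ f$ is a $\{u\}$-$\{x\}$-shop; conversely, any $\{u\}$-$\{x\}$-shop is simultaneously an A-shop (witness $u$) and an E-shop (witness $x$). Hence Case~I holds if, and only if, $\mathcal{D}$ has a $\{u\}$-$\{x\}$-shop for some pair $(u,x)\in D^2$.

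The decisive simplification is that such a shop may be taken to be completely determined by the pair $(u,x)$. I would define $f_{u,x}$ by $f_{u,x}(u):=D$ and $f_{u,x}(a):=\{x\}$ for every $a\neq u$. If $\mathcal{D}$ has any $\{u\}$-$\{x\}$-shop $h$, then $h(u)=D$ and $x\in h(a)$ for all $a$, so $f_{u,x}$ is a sub-shop of $h$; as preservation is inherited by pointwise-smaller (nonempty) images, and totality and surjectivity of $f_{u,x}$ are immediate from $f_{u,x}(u)=D$, the trimmed map $f_{u,x}$ is itself a $\{u\}$-$\{x\}$-shop preserving $\mathcal{D}$. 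Thus Case~I holds iff $f_{u,x}\in\shE(\mathcal{D})$ for some $(u,x)$. The algorithm is then immediate: for each of the $|D|^2$ pairs $(u,x)$, build $f_{u,x}$ and test whether it is a surjective hyper-endomorphism of $\mathcal{D}$; this test is the basic operation assumed above, so the whole procedure runs in polynomial time. When some $f_{u,x}$ passes, membership in \Logspace{} follows concretely by relativising every universal quantifier to $u$ and every existential quantifier to $x$ via Theorem~\ref{relativization}, reducing $\mylogic(\mathcal{D})$ to the Boolean sentence value problem.

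The point I expect to be the main obstacle — and the reason this case is genuinely easier than the rest of the meta-problem — is precisely the trimming step. Detecting an A-shop in isolation does not obviously reduce to a bounded search: after shrinking images to singletons off $u$, one is left hunting for an arbitrary function $c\colon D\setminus\{u\}\to D$ making the resulting singleton-valued map preserving, which has the flavour of a homomorphism/CSP search and carries no free quadratic bound. What rescues Case~I is that it requires an A-shop and an E-shop \emph{together}: their composition forces the $\{x\}$-totality constraint $x\in f(a)$ for all $a$, and this lower bound collides with down-closure to pin every off-$u$ image to the single value $x$, leaving only $|D|^2$ rigid candidates. I would therefore take care to present the composition-then-trim argument cleanly, as it is the whole content of the result.
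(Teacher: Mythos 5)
Your proposal is correct and follows essentially the same route as the paper: the paper's proof likewise reduces Case~I detection to testing, for each of the $|D|^2$ pairs $(u,x)$, the single rigid shop with $f(u):=D$ and $f^{-1}(x):=D$ (i.e.\ your $f_{u,x}$). Your composition-then-trim argument merely spells out the justification the paper leaves implicit in the phrase ``it suffices to test,'' namely Lemma~\ref{lem:compositionAndUXshops} plus down-closure of $\shE(\mathcal{D})$ under sub-shops.
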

\begin{proof}
By Theorem~\ref{tetrachotomy}, we need to check whether there is both  an A-shop and an
E-shop in $\shE(\mathcal{D})$. In this special case, it suffices to test for each $u,x$ in $D$, if the
following $\{u\}$-$\{x\}$-shop $f$ preserves $\mathcal{D}$:
$f(u):= D$ and $f^{-1}(x):=D$.
\end{proof}
\begin{proposition}
\label{prop:hard}
For some fixed and finite signature $\sigma_0$, on input of a
$\sigma$-structure $\mathcal{D}$, the question ``is $\{\exists,
\forall, \wedge,\vee \} \mbox{-}\mathsf{\FO}(\mathcal{D})$ in NP
(respectively, NP-complete, in co-NP, co-NP-complete)?'' is
NP-complete. 
\end{proposition}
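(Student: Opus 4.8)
The plan is to reduce everything to the existence of A-shops and E-shops, using the tetrachotomy (Theorem~\ref{tetrachotomy}): $\mylogic(\mathcal{D})$ is in \NP{} (resp.\ in \coNP) precisely when $\mathcal{D}$ is preserved by an A-shop (resp.\ an E-shop); it is \NP-complete precisely when it has an A-shop but no E-shop, and \coNP-complete precisely when it has an E-shop but no A-shop. Thus each of the four meta-questions is, up to a Boolean combination, a question about the existence of A-shops and E-shops of the input $\mathcal{D}$.

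I would first settle the upper bounds. The question ``does $\mathcal{D}$ have an A-shop?'' lies in \NP: a shop is a subset of $D\times D$ and so has size polynomial in $|D|$, hence one may guess a shop $f$ and an element $u$, and verify in polynomial time that $f$ is a total surjective hyper-endomorphism of $\mathcal{D}$ -- which we agreed to treat as a basic operation, cf.\ the remark before Proposition~\ref{prop:easy-L} -- and that $f(u)=D$. Guessing instead $x$ and checking $f^{-1}(x)=D$ shows that ``does $\mathcal{D}$ have an E-shop?'' is in \NP{} too. This places the questions ``is it in \NP?'' and ``is it in \coNP?'' in \NP. The questions ``is it \NP-complete?'' and ``is it \coNP-complete?'' are Boolean combinations of the form ``A-shop and no E-shop''; their \NP-hardness is what I establish below, and I obtain it on instances for which the complementary shop is controlled, so that on the image of the reduction those questions collapse to the corresponding membership question.

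For the lower bound I would exhibit a \Logspace{} reduction from an \NP-complete problem -- say $3$-SAT, or equivalently a suitable finite-template \CSP{} -- to ``does $\mathcal{D}$ have an A-shop?'', producing structures over a single fixed signature $\sigma_0$ (two binary, three unary predicates). The idea is to make the A-shop simulate a satisfying assignment: a distinguished element $u$ is the universal witness, forced by the construction to satisfy $f(u)=D$, while the three unary predicates fix the roles of the remaining elements (a pool of truth-value elements, variable-elements, and clause-elements) so that preservation of the two binary relations forces any A-shop to act as a permutation on the truth-value elements and to respect the variable-to-clause incidence. Preservation of the relations involving $u$ then encodes exactly that every clause is satisfied, so that A-shops of $\mathcal{D}_\varphi$ correspond to satisfying assignments of $\varphi$; conversely each satisfying assignment yields an explicit A-shop. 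Building in an asymmetry (between the two binary relations, or enforced by the unary marks), one arranges that $\mathcal{D}_\varphi$ never admits an E-shop, whence on these instances ``has an A-shop'' coincides with ``is \NP-complete'', giving \NP-hardness of that meta-question as well. The remaining two questions follow by the duality between A-shops and E-shops: applying Proposition~\ref{prop:duality:principle}, an A-shop of $\mathcal{D}$ is exactly an E-shop of $\overline{\mathcal{D}}$ and vice versa, so the same reduction applied to $\overline{\mathcal{D}_\varphi}$ transfers \NP-hardness to ``is it in \coNP?'' and ``is it \coNP-complete?''.

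The crux -- and the step I expect to be the real work -- is the soundness of the gadget: proving that preservation of the two binary relations, together with the unary role-markers, is rigid enough to force every A-shop of $\mathcal{D}_\varphi$ to restrict to a permutation on the truth-value elements (so that a genuine assignment can be read off), while still being flexible enough that each satisfying assignment is realised by some preserving $f$ with $f(u)=D$, all within the fixed finite signature $\sigma_0$ and while simultaneously precluding E-shops. Everything else is guess-and-verify or the formal duality of Proposition~\ref{prop:duality:principle}.
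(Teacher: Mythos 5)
Your high-level strategy (guess-and-verify upper bounds, a fixed-signature reduction to A-shop existence, complement duality for the \coNP{} variants) matches the paper's, but there are two genuine gaps. First, you never establish membership in \NP{} for the variants ``is it \NP-complete?'' and ``is it \coNP-complete?''. By Theorem~\ref{tetrachotomy} these ask ``has an A-shop \emph{and} has no E-shop'' (resp.\ dually); guessing the A-shop is fine, but certifying the absence of an E-shop is a universal statement over exponentially many candidate shops, so a priori the question sits in the conjunction of an \NP{} predicate and a \coNP{} predicate, not in \NP. Your remark that these questions ``collapse to the membership question'' on the image of your reduction only serves the hardness direction: membership must hold on \emph{all} inputs. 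The missing ingredient, which the paper uses, is that if $\mathcal{D}$ has both an A-shop $f$ and an E-shop $g$, then $g\circ f$ is both at once, and by down-closure of DSMs $\mathcal{D}$ is then preserved by one of the at most $|D|^2$ canonical $\{u\}$-$\{x\}$-shops ($h(u):=D$ and $h(d):=\{x\}$ for $d\neq u$); conversely such a shop is both an A-shop and an E-shop. Hence, given an A-shop, ``no E-shop'' amounts to the failure of polynomially many deterministic tests (cf.\ Proposition~\ref{prop:easy-L}), and all four variants do lie in \NP.

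Second, your lower bound is a specification rather than a proof: you say yourself that the soundness of the 3-SAT gadget ``is the real work'', and as described it is doubtful it goes through, since preservation of relations by a hyper-operation is a homomorphism-type constraint and does not directly encode clause satisfaction. The paper sidesteps this by reducing from graph $3$-colourability, for which preservation is exactly the right notion: Lemma~\ref{lem:Cs} first constructs, over one binary and three unary predicates, a structure $\mathcal{G}_V$ whose DSM is exactly $\langle f_V\rangle$ for an explicit A-shop $f_V$ fixing the three colour elements $0,1,2$, sending $u$ to everything, and sending each vertex element into $\{0,1,2\}$; a second binary relation is then interpreted as the edge set of the input graph together with a clique $\mathcal{K}_3$ on the colours. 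Every surjective hyper-endomorphism of the resulting structure is a sub-shop of $f_V$, and preservation of the edge relation says precisely that the induced map from vertices to colours is a proper $3$-colouring; moreover no sub-shop of $f_V$ is an E-shop (its images on $0$ and on $1$ are forced to be $\{0\}$ and $\{1\}$, so no element lies in every image), which gives your ``controlled complementary shop'' property for free. Your proposal would need an analogous rigidity lemma for its gadget; without it the argument is incomplete. (A minor point: the duality you invoke is not Proposition~\ref{prop:duality:principle} but the fact that the inverse of a shop preserving $\mathcal{D}$ preserves $\overline{\mathcal{D}}$, and the inverse of an A-shop is an E-shop.)
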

\begin{proof}
The four variants are each in NP. For the first, one guesses and
verifies that $\mathcal{D}$ has an A-shop, for the second, one further
checks that there is no $\{u\}$-$\{x\}$-shop (see the proof of
Proposition~\ref{prop:easy-L}). Similarly for the third, one guesses
and verifies that $\mathcal{D}$ has an E-shop; and, for the fourth, one
further checks that there is no $\{u\}$-$\{x\}$-shop. The result
then follows from Theorem~\ref{tetrachotomy}. 

For NP-hardness we will address the first problem only. The same
proof will work for the second (for the third and fourth, recall that
a structure $\mathcal{D}$ has an A-shop iff its complement
$\overline{\mathcal{D}}$ has an E-shop). We reduce from
\emph{graph $3$-colourability}. Let $\mathcal{G}$ be an undirected graph with
vertices $V:=\{v_1,v_2,\ldots,v_s\}$. We will build a structure $\mathcal{S}_\mathcal{G}$ over
the domain $D$ which consists of the disjoint union of ``three
colours'' $\{0,1,2\}$, $u$, and the ``vertices'' from $V$.

The key observation is that there is a structure $\mathcal{G}_{V}$
whose class of surjective hyper-endomorphisms $\shE(\mathcal{G}_{V})$ is generated by the
following A-shop: 
$$f_{V}:=
\resizebox{!}{1.5cm}{
\ensuremath{
\begin{array}{c|l}
0 & 0 \\
\hline
1 & 1 \\
\hline
2 & 2 \\
\hline
u & 0,1,2,u,v_1,\ldots,v_s \\
\hline
v_1 & 0,1,2 \\
\hline
v_2 & 0,1,2 \\
\hline
\vdots & \vdots \\
\hline
v_s & 0,1,2 \\
\end{array}
}
}$$
The existence of such a $\mathcal{G}_{V}$ is in fact guaranteed by the
Galois connection, fully given in~\cite{DBLP:conf/cp/Martin10}, but
that may require relations of unbounded arity, and we wish to
establish our result for a fixed signature. So we will appeal to
Lemma~\ref{lem:Cs}, below, for a $\sigma_V$-structure $\mathcal{G}_{V}$
with the desired class of surjective hyper-endomorphisms, where the signature $\sigma_V$ consists of
one binary relation and three monadic predicates.
The signature $\sigma_0$ will be $\sigma_V$ together with a binary
relational symbol $E$. 

The structure ${\mathcal{S}_\mathcal{G}}$ is defined as in
$\mathcal{G}_{V}$ for symbols in $\sigma_V$, and for the additional
binary symbol $E$, as the edge relation of the instance $\mathcal{G}$
of $3$-colourability together with a clique $\mathcal{K}_3$ for the
colours $\{0,1,2\}$. By construction, the following holds. 
\begin{itemize}
\item Any surjective hyper-endomorphism $g$ of ${\mathcal{S}_\mathcal{G}}$ will be a sub-shop of
  $f_V$.
\item 
  Restricting such a shop $g$ to $V$ provides a set of mutually consistent
  $3$-colourings: i.e. we may pick arbitrarily a colour from $g(v_i)$ to get
  a $3$-colouring $\tilde g$. 
  If there is an edge between $v_i$ and $v_j$ in
  $\mathcal{G}$, then $E(v_i,v_j)$ holds in
  ${\mathcal{S}_\mathcal{G}}$. Since $g$ is a shop, for any pair of
  colours $c_i,c_j$, where $c_i\in g(v_i)$ and $c_j\in g(v_j)$, we
  must have that $E(c_i,c_j)$ holds in ${\mathcal{S}_\mathcal{G}}$.
  The relation $E$ is defined as $\mathcal{K}_3$ over the colours.
  Hence $c_i \neq c_j$ and we are done.
\item Conversely, a $3$-colouring $\tilde g$ induces a sub-shop $g$ of $f_V$:
  set $g$ as $f_V$ over elements from $\{0,1,2,u\}$ and as $\tilde g$
  over $V$. The detailed argument is similar to the above.
\end{itemize}
This proves that graph $3$-colourability reduces to the meta-question 
``is $\{\exists,\forall, \wedge,\vee \}
\mbox{-}\mathsf{\FO}(\mathcal{D})$ in NP''.
\end{proof}
\noindent Note that it follows from the given proof that the
meta-problem itself is NP-hard. To see this, we take the structure
$\mathcal{S}_{\mathcal{G}}$ from the proof of Proposition~\ref{prop:hard}
and ask which of the four classes L, NP-complete,
co-NP-complete or Pspace-complete the corresponding problem belongs
to. If the answer is NP-complete then $\mathcal{G}$ was $3$-colourable;
otherwise the answer is Pspace-complete and $\mathcal{G}$ was not
$3$-colourable.
\begin{lemma}
\label{lem:Cs}
Let $\sigma_V$ be a signature involving one binary relations $E'$ and
three monadic predicates $\mathrm{Zero},\mathrm{One}$ and $\mathrm{Two}$. 
There is a $\sigma_V$-structure $\mathcal{G}_{V}$ such that
$\shE(\mathcal{G}_{V}) = \langle f_{V} \rangle$.  
\end{lemma}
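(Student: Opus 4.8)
The plan is to realise $\langle f_V\rangle$ as $\shE(\mathcal{G}_V)$ by using the three monadic predicates to isolate the three colours and the single binary relation to govern the apex $u$ and the ``vertices'' $v_1,\dots,v_s$. First I would record the shape of the target. Since $f_V\circ f_V=f_V$, the monoid generated by $f_V$ is $\{\mathrm{id},f_V\}$, and using Lemma~\ref{lem:compositionAndUXshops} one checks that $\langle f_V\rangle$ is exactly $\{\mathrm{id}\}$ together with the sub-shops of $f_V$; the latter are the \emph{colourful} shops that fix $0,1,2$, send each $v_i$ into $\{0,1,2\}$, and (by surjectivity) send $u$ to a superset of $\{u,v_1,\dots,v_s\}$. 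Setting $\mathrm{Zero}:=\{0\}$, $\mathrm{One}:=\{1\}$, $\mathrm{Two}:=\{2\}$ forces every $f\in\shE(\mathcal{G}_V)$ to satisfy $f(0)=\{0\}$, $f(1)=\{1\}$, $f(2)=\{2\}$, so the colours are pinned.

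For the binary relation I would fix a loopless, strongly connected, \emph{rigid core} digraph $G$ on $\{v_1,\dots,v_s\}$ (one whose only endomorphism is the identity; such digraphs exist once $s$ is large, and one may enlarge $\mathcal{G}$ by isolated vertices, which does not affect $3$-colourability, to guarantee this, the few small cases being handled directly), and set
\[ E':=\{(v_i,u):1\le i\le s\}\ \cup\ G\ \cup\ \big(\{0,1,2\}\times D\big). \]
This is exactly the closure of $\{(v_i,u):i\}\cup G$ under the boxes $f_V(a)\times f_V(b)$, so $f_V$ preserves $E'$ (and trivially the monadics). As a sub-shop of a preserving shop is preserving and preservation is closed under composition, every element of $\langle f_V\rangle$ preserves $\mathcal{G}_V$, giving $\langle f_V\rangle\subseteq\shE(\mathcal{G}_V)$.

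The converse is the heart of the argument. Given a shop $f$ preserving $\mathcal{G}_V$ (hence fixing the colours), I would split on whether $f$ is colourful. If every $f(v_i)\subseteq\{0,1,2\}$, then surjectivity forces $\{u,v_1,\dots,v_s\}\subseteq f(u)$, so $f$ is a sub-shop of $f_V$. Otherwise some $f(v_{i_0})$ contains a vertex, and the edges $(v_i,u)$ do the work. Since no pair of $E'$ has first coordinate $u$, preservation of $(v_i,u)$ forbids $u$ from every $f(v_i)$ and, whenever $f(v_i)$ contains a vertex $v_m$, confines $f(u)$ to $\{u\}\cup N^+_G(v_m)$. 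Preservation of the edges of $G$ shows that if $f(v_i)$ contains a vertex and $(v_i,v_j)\in G$ then $f(v_j)$ contains no colour (a vertex cannot be $E'$-related to a colour), so being vertex-valued propagates along $G$; by strong connectivity every $f(v_i)$ is then a nonempty set of vertices with $f(v_i)\times f(v_j)\subseteq G$ on each edge. Any system of representatives is therefore an endomorphism of $G$, hence the identity, forcing $f(v_i)=\{v_i\}$ for all $i$; finally looplessness makes $\bigcap_i N^+_G(v_i)$ empty, so $f(u)=\{u\}$ and $f=\mathrm{id}$. In either case $f\in\langle f_V\rangle$, completing $\shE(\mathcal{G}_V)=\langle f_V\rangle$.

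The main obstacle is precisely this last case: excluding the non-colourful shops that are not the identity (vertex permutations and ``mixed'' shops), because relations impose conjunctive constraints while the desired behaviour is disjunctive (``identity or colourful''). Two ingredients make it go through: (i) linking $u$ to the vertices by the pairs $(v_i,u)$, whose $f_V$-box is the harmless $\{0,1,2\}\times D$ yet which still constrain $f(u)$ through the original pairs; and (ii) the rigidity of $G$, which kills vertex permutations and, via the representative argument, even multivalued vertex maps. The only remaining bookkeeping is the existence of a suitable $G$ for small $s$, dispatched by the padding remarked above.
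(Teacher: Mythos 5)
Your construction is correct in its essentials, and it takes a genuinely different route from the paper's. The paper obtains $\mathcal{G}_V$ by blowing up a three-element digraph $\mathcal{G}'$ (whose DSM is generated by the shop $c\mapsto\{c\}$, $u\mapsto\{c,u,v\}$, $v\mapsto\{c\}$): the colour class $c$ is replaced by $\{0,1,2\}$ and $v$ by $V$ so that the quotient map is a full surjective homomorphism, and the monadic predicates then pin the colours; the equality $\shE(\mathcal{G}_V)=\langle f_V\rangle$ is asserted ``by construction''. You instead place a loopless, strongly connected, rigid core digraph $G$ on $V$ and prove the converse inclusion by the propagation-plus-representatives analysis. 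The comparison is instructive, because the obstacle you isolate (excluding non-colourful shops other than the identity) is not hypothetical: in the paper's blow-up the elements of $V$ are mutual twins, so any permutation of $V$ fixing the colours and $u$ is an automorphism, hence lies in $\shE(\mathcal{G}_V)$, while no non-identity element of $\langle f_V\rangle$ can move a $v_i$ to a vertex (they all map $V$ into $\{0,1,2\}$). Thus the paper's construction really only yields $\langle f_V\rangle\subseteq\shE(\mathcal{G}_V)$ together with the weaker fact, sufficient for Proposition~\ref{prop:hard} via the arcs $(v_i,u)$, that every A-shop of $\mathcal{G}_V$ is a sub-shop of $f_V$; your rigidity device is exactly what secures the equality stated in Lemma~\ref{lem:Cs}. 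The price you pay is the existence of a suitable $G$: for $s=2$ no loopless, strongly connected, rigid digraph exists (strong connectivity forces both arcs, whose swap is an automorphism), and for $s\geq 3$ existence is asserted rather than exhibited (it does hold, e.g.\ a directed $s$-cycle with the chord $v_1\to v_3$ works for $s=3,4$ and similar chorded cycles in general). So your padding remark---enlarging the $3$-colourability instance by isolated vertices---is doing real work and should be stated as part of the reduction, since the deferred small cases are precisely where your gadget cannot exist as specified.
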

\begin{proof}
  We begin with the graph $\mathcal{G}'$ on signature $\langle E' \rangle$, depicted
  on Figure~\ref{fig:WithSpecificAshopQuotient}.
  Note that 
  $$\shE(\mathcal{G}):=\langle 
  \resizebox{!}{.5cm}{
    \ensuremath{
      \begin{array}{c|c}
        c & c \\
        \hline
        u & c,u,v \\
        \hline
        v & c
      \end{array}
    }
  } \rangle.$$
  \begin{figure}[h]
  \centering
   \subfloat[$\mathcal{G}'$]{\label{fig:WithSpecificAshopQuotient} \input{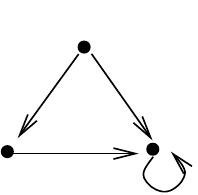_t}}
  \quad \quad \quad
   \subfloat[$\mathcal{G}''$]{\label{fig:WithSpecificAshop}
     \input{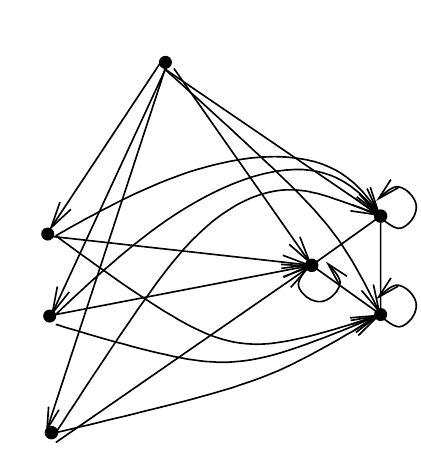_t}
   }
  \caption{Building a structure with $\shE(\mathcal{G}_{V}) = \langle f_{V} \rangle$.}
  \label{fig:toto}
\end{figure}
We now replace $c$ by $\{0,1,2\}$ and $v$ by $V$ to obtain a graph $\mathcal{G}''$.
Formally, this graph is the unique graph with domain
$\{0,1,2,u\}\cup V$ such that the mapping  which maps $\{0,1,2\}$ to
$c$, fixes $u$ and maps $V$ to $v$, is a strong surjective
homomorphism. By construction, 
$$\shE(\mathcal{G}''):=\langle 
\resizebox{!}{1.5cm}{
\ensuremath{
\begin{array}{c|l}
0 & 0,1,2 \\
\hline
1 & 0,1,2 \\
\hline
2 & 0,1,2 \\
\hline
u & 0,1,2,u,v_1,\ldots,v_s \\
\hline
v_1 & 0,1,2 \\
\hline
\vdots & \vdots \\
\hline
v_s & 0,1,2 \\
\end{array}
}
}
\rangle.$$
We now build $\mathcal{G}_{V}$ as the structure with binary relation
$E'$ which is the edge relation from $\mathcal{G}''$ and by setting the unary predicates as follows: $\mathrm{Zero}$ holds only over $0$, $\mathrm{One}$ holds only over
$1$ and $\mathrm{Two}$ holds only over $2$. This effectively fixes surjective hyper-endomorphisms to
act as the identity over the colours $\{0,1,2\}$ as required.
\end{proof}

\begin{sidewaysfigure}
  \centering
  \label{fig:ModelChecking}
    \begin{tikzpicture}[node distance = 2cm, auto,scale=0.65, every
      node/.style={transform shape}]
    \tikzstyle{silly} = [rectangle, rounded corners,
    text centered, thick, rounded corners,
    draw=yellow!50!black!50] 
    \tikzstyle{stuff} = [rectangle, rounded corners,
    text centered, thick, rounded corners,
    draw=yellow!50!black!50, top color=white, bottom
    color=yellow!50!black!20] 
    \tikzstyle{booleanlike} = [rectangle,
    rounded corners,
    text centered, thick, rounded corners,
    draw=green!90!black!50, top color=white, bottom
    color=green!50!black!20]
    \tikzstyle{tetralike} = [rectangle,
    rounded corners,
    text centered, thick, rounded corners,
    draw=orange!90!black!50, top color=white, bottom
    color=orange!50!black!20]
    \tikzstyle{csplike} = [rectangle,
    rounded corners,
    text centered, thick, rounded corners,
    draw=blue!90!black!50, top color=white, bottom
    color=blue!50!black!20]
    \tikzstyle{qcsplike} = [rectangle,
    rounded corners,
    text centered, thick, rounded corners,
    draw=purple!90!black!50, top color=white, bottom
    color=purple!50!black!20]

    \tikzstyle{line} = [draw, very thin, loosely dashed] 

    \node[tetralike] (PosEqFree) {$\mylogic$};
    \node[qcsplike,below left of =PosEqFree, node distance = 5cm] (Qcsp) {$\qcsplogic$};
    \node[csplike,below left of =Qcsp, node distance = 8cm] (Csp) {$\csplogic$};
    \node[stuff,above left of =Csp, node distance = 4cm] 
    (CspDisj) {$\{\exists, \land, \lor \} \mbox{-}\mathrm{FO}$};
    \node[silly, left of = Csp, node distance = 10cm] 
    (SillyCase) {$\{\exists, \lor \} \mbox{-}\mathrm{FO}$};

    \node[stuff,above of =CspDisj, node distance=1.5cm] 
    (CspDisjEq) {$\{\exists, \land, \lor, = \} \mbox{-}\mathrm{FO}$};
    \node[stuff,left of =CspDisjEq, node distance = 3.5cm] 
    (CspDisjNeq) {$\{\exists, \land, \lor, \neq \} \mbox{-}\mathrm{FO}$};

    \node[silly,above of =SillyCase, node distance=1.5cm] 
    (SillyCaseEq) {$\{\exists, \lor, = \} \mbox{-}\mathrm{FO}$};
    \node[silly,left of =SillyCaseEq, node distance = 3.5cm] 
    (SillyCaseNeq)  {$\{\exists, \lor, \neq \} \mbox{-}\mathrm{FO}$};

    \node[csplike,above of =Csp, node distance = 1.5cm]
    (CspEq) {$\{\exists, \land, = \} \mbox{-}\mathrm{FO}$};
    \node[booleanlike,left of =CspEq, node distance=6.4cm] 
    (CspNeq) {$\{\exists, \land, \neq \} \mbox{-}\mathrm{FO}$};

    \node[qcsplike,above of =Qcsp, node distance = 1.5cm] 
    (QcspEq) {$\{\forall, \exists, \land, = \} \mbox{-}\mathrm{FO}$};
    \node[booleanlike,left of =QcspEq, node distance = 3.5cm] 
    (QcspNeq) {$\{\forall, \exists, \land, \neq \} \mbox{-}\mathrm{FO}$};

    \node[stuff,left of =PosEqFree, node distance = 4cm,yshift=5cm] 
    (PosFoEq) {$\{\forall, \exists, \land, \lor, = \} \mbox{-}\mathrm{FO}$};
    \node[stuff,right of =PosEqFree, node distance = 4cm,yshift=5cm] 
    (PosFoNeq){$\{\exists, \forall,  \lor, \land, \neq \} \mbox{-}\mathrm{FO}$};

    \node[stuff,above of =PosEqFree, node distance = 2.5cm] 
    (EqFree){$\{\forall, \exists, \land, \lor, \lnot \} \mbox{-}\mathrm{FO}$};

    \node[stuff,above of =EqFree, node distance = 4cm] (FO)
    {$\mathrm{FO}$};

     \path [line] (SillyCase) -- (SillyCaseEq);
     \path [line] (SillyCase) -- (SillyCaseNeq);
     \path [line] (SillyCase) edge [out=45,in=180,looseness=1.3] (CspDisj);

     \path [line] (Csp) -- (Qcsp);
     \path [line] (Csp) -- (CspEq);
     \path [line] (Csp) -- (CspNeq);
     \path [line] (Csp) -- (CspDisj);

     \path [line] (CspDisj) -- (CspDisjEq);
     \path [line] (CspDisj) -- (CspDisjNeq);
     


     \path [line] (Qcsp) -- (QcspEq);
     \path [line] (Qcsp) -- (QcspNeq);
     \path [line] (Qcsp) -- (PosEqFree); 

     \path [line] (PosEqFree) -- (EqFree); 
     \path [line] (PosEqFree) -- (PosFoEq); 
     \path [line] (PosEqFree) -- (PosFoNeq); 

     \path [line] (EqFree) -- (FO); 
 

     \node[draw=black,inner
     sep=10pt,thick,rectangle,fit=(SillyCase)(SillyCaseEq)(SillyCaseNeq)]
     (Sillycases){};
     \node at (Sillycases.south) [above, inner sep=1mm,xshift=-1.2cm]
     {\textcolor{black}{first class: \textbf{trivial}: \Logspace}};

     \node[draw=blue!50!black!80,inner sep=3pt,thick, rectangle,fit=(Csp) (CspEq)] (cspcases){};
     \node at (cspcases.south) [below, inner sep=1mm]
     {\textcolor{blue!50!black!80}{\textbf{\CSP\ dichotomy?}}};

     \node[draw=purple!50!black!80,inner sep=3pt,thick,
     rectangle,fit=(Qcsp) (QcspEq)] (qcspcases){};
     \node at (qcspcases.south) [below, inner sep=1mm]
     {\textcolor{purple!50!black!80}{\textbf{\QCSP\ trichotomy?}}};
    
     \node at (CspNeq.south) [below, inner sep=1mm]
     {\textcolor{green!50!black!80}{``\textbf{Boolean \CSP}''}};

     \node at (QcspNeq.south) [below, inner sep=1mm]
     {\textcolor{green!50!black!80}{``\textbf{Boolean \QCSP}''}};

     \node at (PosEqFree.south) [below, inner sep=1mm]
     {\textcolor{orange!50!black!80}{\textbf{tetrachotomy}}};

     \node[draw=black, loosely dashed, inner sep=18pt,thin,rectangle,
     fit=(Csp) (CspEq) (CspDisj) (CspDisjEq)] 
     (corecases){};
     \node at (corecases.north) [below, inner sep=1mm]
     {\textcolor{black}{\textbf{$\mathscr{L}$-core}: (classical) core}};

     \node[
     inner sep=18pt,thin,rectangle,
     fit=(FO) (PosFoEq) (PosFoNeq)] 
     (isocases){};
     \node at (isocases.north) [below, inner sep=1mm]
     {\textcolor{black}{\textbf{$\mathscr{L}$-equivalence}: isomorphism}};

     \coordinate (EqFreeWiderTopLeft) at ($(EqFree.north west)+(-2cm,0cm)$);
     \coordinate (EqFreeWiderBorRight) at ($(EqFree.south east)+ (+2cm,0cm)$);
     \node[draw=black, loosely dashed, inner sep=18pt,thin,rectangle,
     fit=(EqFreeWiderTopLeft) (EqFree)(EqFreeWiderBorRight)] 
     (EqFreeCase){};
     \node at (EqFreeCase.north) [below, inner sep=1mm]
     {\textcolor{black}{\textbf{$\mathscr{L}$-equivalence}: full
         surjective hyper-morphism}};

     \coordinate (PosEqFreeWiderTopLeft) at ($(PosEqFree.north west)+(-1cm,0cm)$);
     \coordinate (PosEqFreeWiderBorRight) at ($(PosEqFree.south east)+ (+1cm,0cm)$);
     \node[draw=black, loosely dashed, inner sep=18pt,thin,rectangle,
     fit=(PosEqFreeWiderTopLeft) (PosEqFree)(PosEqFreeWiderBorRight)] 
     (PosEqFreeCase){};
     \node at (PosEqFreeCase.north) [below, inner sep=1mm]
     {\textcolor{black}{\textbf{$\mathscr{L}$-core}: $U$-$X$-core}};


     \coordinate (Legend) at ($(FO)+(-22.5cm,0cm)$);
     \node[below of = Legend, node distance=0cm] (LegendTitle) {\textbf{Legend}};

     \node[silly, below of = LegendTitle, node distance = 1cm] (classOne){first class};
     
     \node[stuff, below of = classOne, node distance=1.5cm]
     (classTwo){second class};

     \node[booleanlike, below of = classTwo, node distance=1.5cm]
     (classThreeBool){third class};
     \node[csplike, below of = classThreeBool, node distance=.8cm]
     (classThreeCsp){third class};
     \node[qcsplike, below of = classThreeCsp, node distance=.8cm]
     (classThreeQcsp){third class};
     
     \node[tetralike, below of = classThreeQcsp, node distance=1.5cm]
     (classFour){fourth class};

     \coordinate (classOnePos) at ($(classOne.east)+(+6cm,0cm)$);
     \node at (classOnePos) [text width=10cm]
     (classOneText){Alway trivial (in \Logspace).};

     \node at ($(classOneText.west)+(5.1cm,-1.5cm)$) [text width=10cm]
     (classTwoText)
     {Trivial complexity delineation: \\
       trivial if the $\mathscr{L}$-core has a single element, hard otherwise.};

     \node at ($(classTwoText.west)+(+5.1cm,-1.5cm)$) 
     [text width=10cm]
     (classThreeBoolText)
      {Non-trivial complexity delineation in the Boolean case,\\ hard with three or more elements.};

     \node at ($(classThreeBoolText.west)+(+5.1cm,-0.9cm)$) 
     [text width=10cm]
     (classThreeCspText)
     {Dichotomy between \Ptime{} and \NP-complete? Does not depend
       on core size.};

     \node at ($(classThreeCspText.west)+(+5.1cm,-0.9cm)$) 
     [text width=10cm]
     (classThreeQcspText)
     {Trichotomy between \Ptime{}, \NP-complete and \Pspace-complete?
       Does not depend on $\qcsplogic$-core size.};

     \node at ($(classThreeQcspText.west)+(+5.1cm,-1.5cm)$) 
     [text width=10cm]
     (classFourText)
     {The complexity follows a tetrachotomy according to the
       $U$-$X$-core and whether one or both of $U$ and $X$ has a
       single element or not.}; 

     \node[draw=black,inner sep=10pt,thick,rectangle,fit=(Legend)(classOne)(classTwo)(classThreeBool)(classThreeCsp)(classThreeQcsp)(classFour)(classOneText)(classTwoText)(classThreeBoolText)(classThreeCspText)(classThreeQcspText)(classFourText)](BoxLegend){};

    \end{tikzpicture}
    \caption{Classification of the complexity of the model-checking
      problem}
\end{sidewaysfigure}

\section{Conclusion}
We have classified the complexity of the model checking problem for
all fragments of \FO\ but those corresponding to the \CSP\ and the \QCSP.
Our results are summarised as Figure~\ref{fig:ModelChecking}. 
The inclusion of fragments is denoted by dashed edges, a larger
fragment being above. Each fragment is classified in two
fashions. Firstly, we have indicated on the figure the notion of
core used to classify fragments, by regrouping them in the same box. 
Secondly, we have organised the fragments in four classes according to the
nature of the complexity classification they follow. The first class
is trivial. Tractability for a fragment $\mathscr{L}$ of the second class corresponds
precisely to having a one element $\mathscr{L}$-core. The third class regroups
fragments which have a non trivial classification viz complexity, in
the sense that it does not always depend on the size of the $\mathscr{L}$-core, and
include the two open cases of \CSP\ and \QCSP\ which we discuss in
some detail below. The fourth class contains the fragment \mylogic\
which exhibits a behaviour intermediate between the third class and
the fourth class: its complexity is fully explained in terms of the
$U$-$X$-core, yet as this notion involves two sets, the fragment exhibits richness in its ensuing tetrachotomy.

For the \CSP, the dichotomy conjecture has been proved in the Boolean
case by Schaefer (see Theorem~\ref{theorem:SchaeferCSP}) and in the case of
undirected graphs.
\begin{theorem}[\cite{HellNesetril}]
  Let $\mathcal{G}$ be an undirected graph.
  If $\mathcal{G}$ is bipartite then $\CSP(\mathcal{G})$ is in
  \Logspace, otherwise $\CSP(\mathcal{G})$ is \NP-complete.\footnotemark{}
\end{theorem}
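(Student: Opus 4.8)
The plan is to reduce $\mathcal{G}$ to its core and then treat the bipartite and non-bipartite cases in opposite ways. By Theorem~\ref{theorem:CSPcontainment} and Corollary~\ref{prop:equivalence:cspDisjEq}, $\mathcal{G}$ and its core $\mathcal{G}'$ map homomorphically into one another, so an instance $\mathcal{H}$ satisfies $\mathcal{H}\to\mathcal{G}$ iff $\mathcal{H}\to\mathcal{G}'$; hence $\CSP(\mathcal{G})$ and $\CSP(\mathcal{G}')$ are the same problem and we may assume throughout that $\mathcal{G}$ is a core.

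For the tractable direction, suppose $\mathcal{G}$ is bipartite. If $\mathcal{G}$ has no edge its core is a single vertex and $\CSP(\mathcal{G})$ merely tests that the input is edgeless, which is in \Logspace. Otherwise the core of a bipartite graph is $\mathcal{K}_2$, so $\CSP(\mathcal{G})$ is exactly the question of whether the input graph is $2$-colourable, equivalently bipartite. This can be decided by checking, within each connected component, that all cycles are even, which is an instance of undirected reachability with a parity bit and so lies in \Logspace\ (appealing to $\mathrm{SL}=\mathrm{L}$).

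For the hard direction, suppose $\mathcal{G}$ is a non-bipartite core; it then contains an odd cycle, and I would argue by induction on its \emph{odd girth} $2k+1$ that $\CSP(\mathcal{G})$ is \NP-hard. The workhorse is the \emph{indicator construction} of Hell and \Nesetril: from a small indicator graph $\mathcal{I}$ with two distinguished roots $i_1,i_2$ one forms $\mathcal{G}^\ast$ on the vertex set of $\mathcal{G}$, declaring $u\sim v$ in $\mathcal{G}^\ast$ precisely when there is a homomorphism $\mathcal{I}\to\mathcal{G}$ sending $i_1\mapsto u$ and $i_2\mapsto v$. A routine substitution of the \csplogic-definition of $\mathcal{G}^\ast$ into instances gives a logspace reduction $\CSP(\mathcal{G}^\ast)\leq_{\Logspace}\CSP(\mathcal{G})$, in the spirit of Corollary~\ref{cor:EqFreeKrasner-reduction}. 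Choosing $\mathcal{I}$ to be an odd path of suitable length lowers the odd girth, producing a non-bipartite $\mathcal{G}^\ast$ of strictly smaller odd girth whose hardness, by the induction hypothesis, transfers back to $\mathcal{G}$. The base case $k=1$, where $\mathcal{G}$ contains a triangle, is settled by a reduction from $\CSP(\mathcal{K}_3)$ (graph $3$-colourability), using the \emph{sub-indicator construction} to carve out, inside the core, a substructure on which the three triangle-vertices behave like genuine colours.

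The main obstacle is precisely the combinatorial bookkeeping of these two constructions: one must verify that the indicator path strictly decreases the odd girth while keeping the graph non-bipartite, and that the sub-indicator construction isolates a triangle-governed subgraph without introducing homomorphisms that would let an input instance ``cheat''. Controlling how taking cores interacts with the indicator operation (since $\mathcal{G}^\ast$ need not itself be a core, and one must re-examine its core at each inductive stage) is the delicate point, and it is where the original Hell--\Nesetril\ argument expends most of its effort. The purely algebraic alternative would instead show that a non-bipartite core admits no Taylor polymorphism and then invoke the hardness criterion of the $\Pol$--$\Inv$ approach, but that route presupposes the full universal-algebraic machinery rather than the self-contained combinatorics sketched here.
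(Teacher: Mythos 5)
You should first note what the paper actually does with this statement: it does not prove it. The theorem is imported wholesale from Hell and \Nesetril~\cite{HellNesetril}, and the paper's only argumentative content is the footnote justifying the \Logspace{} half (a bipartite graph with an edge has core $\mathcal{K}_2$, so $\CSP(\mathcal{G})$ is $2$-colourability, which lies in symmetric logspace, equal to \Logspace{} by Reingold). Your tractable direction — pass to the core via Corollary~\ref{prop:equivalence:cspDisjEq}, handle the edgeless case, then decide $2$-colourability by parity-annotated undirected reachability — reproduces exactly that footnote and is correct.

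The gap is the \NP-hardness direction, which is the theorem. What you have written there is a restatement of the Hell--\Nesetril{} proof strategy, not a proof: the two steps you defer as ``combinatorial bookkeeping'' (verifying the indicator construction and making the sub-indicator construction carve out genuine colours) are precisely where the entire difficulty of the result lives. Concretely: (a) the indicator path must have odd length strictly smaller than the odd girth, since otherwise $\mathcal{G}^\ast$ acquires self-loops, $\CSP(\mathcal{G}^\ast)$ becomes trivial, and the reduction $\CSP(\mathcal{G}^\ast)\leq_{\Logspace}\CSP(\mathcal{G})$ tells you nothing; (b) with the admissible length $g-2$ (where $g$ is the odd girth), a single application already turns the vertex set of a shortest odd cycle into a clique, so your ``induction on odd girth'' collapses in one step to its base case; and (c) that base case — a non-bipartite core containing a triangle — is not ``settled by a reduction from $\CSP(\mathcal{K}_3)$'' in any routine way, because an input instance can map into $\mathcal{G}$ while avoiding the triangle entirely; preventing this is what the sub-indicator machinery, the re-coring at each stage, and most of the original paper's length are for. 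Since you neither carry out these verifications nor cite them as a black box (as the paper does), the proposal as it stands is a proof plan whose essential content is missing.
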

\footnotetext{In the bipartite case, assuming that the graph
  $\mathcal{G}$ has at least one edge, then the core of $\mathcal{G}$
  is $\mathcal{K}_2$. The problem $\CSP(\mathcal{K}_2)$ is
  2-colourability which is in the complexity class \emph{symmetric
    logspace} now known to be equal to \Logspace~\cite{DBLP:journals/jacm/Reingold08}.} 

For \CSP\ in general, it would suffice to settle the dichotomy
conjecture for (certain) directed
graphs~\cite{DBLP:journals/siamcomp/FederV98}. 
The dichotomy conjecture has been settled for smooth digraphs (graphs
with no sources and no sinks)~\cite{DBLP:journals/siamcomp/BartoKN09}.
According to the algebraic reformulation of the dichotomy conjecture, it would
suffice to prove that every structure that has a \emph{Sigger's term} has a
tractable \CSP\ (see~\cite{DBLP:conf/dagstuhl/BulatovV08,DBLP:conf/csr/Bulatov11} for
recent surveys on the algebraic approach to the dichotomy conjecture).

For the \QCSP, much less is known. We have already seen that a
dichotomy between \Ptime\ and \Pspace-complete holds in the Boolean
case (Theorem~\ref{theorem:SchaeferQCSP}). However, the complexity is not even known
for undirected graphs. It is fully classified for graphs with at most
one cycle. 
\begin{theorem}
  [\cite{DBLP:conf/cie/MartinM06}]
  Let $\mathcal{G}$ be an undirected graph.
  \begin{itemize}
  \item If $\mathcal{G}$ is bipartite then $\QCSP(\mathcal{G})$ is in \Logspace;
  \item if $\mathcal{G}$ is not bipartite and not connected then
    $\QCSP(\mathcal{G})$ is \NP-complete; and,
  \item if $\mathcal{G}$ not bipartite, connected and contains at most
    one cycle then $\QCSP(\mathcal{G})$ is \Pspace-complete.
  \end{itemize}
\end{theorem}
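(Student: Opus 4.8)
The plan is to treat the three cases separately, combining cheap complexity upper bounds with the membership/hardness facts the paper has already established for $\QCSP$ over cliques and over $\mathcal{B}_{\mbox{\textsc{nae}}}$. Membership in \Pspace{} is automatic for every $\QCSP(\mathcal{G})$ by the inward evaluation procedure, so the real work is: (i) the \Logspace{} bound for bipartite $\mathcal{G}$; (ii) the \NP{} upper bound together with \NP-hardness when $\mathcal{G}$ is non-bipartite and disconnected; and (iii) \Pspace-hardness when $\mathcal{G}$ is connected, non-bipartite and unicyclic. Throughout I would lean on one transfer principle: a \emph{surjective} homomorphism $s:\mathcal{G}\twoheadrightarrow\mathcal{H}$ carries positive Horn sentences forward, i.e. $\mathcal{G}\models\varphi$ implies $\mathcal{H}\models\varphi$. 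Indeed, one plays the $\mathcal{H}$-game by lifting each universal move along $s$ (possible by surjectivity), answering with the $\mathcal{G}$-strategy, and projecting the existential answer back through $s$; since $s$ preserves edges, all edge atoms remain satisfied.

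Case 1 (bipartite). If $\mathcal{G}$ is edgeless the evaluation collapses to the Boolean sentence value problem with every edge atom set to false, which is in \Logspace. Otherwise fix an edge $\{p,q\}$ and let $c:\mathcal{G}\to\mathcal{K}_2$ be the (component-wise) $2$-colouring; this $c$ is a surjective homomorphism, so $\mathcal{G}\models\varphi$ implies $\mathcal{K}_2\models\varphi$. For the converse I would realise a winning $\mathcal{K}_2$-strategy \emph{by genuine neighbours} in $\mathcal{G}$: whenever the Boolean strategy plays the colour opposite to some already-placed vertex $v$, connectivity gives $v$ a neighbour and bipartiteness puts every such neighbour in the opposite class, so the existential player places an actual adjacent vertex of the prescribed colour. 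The only obstructions are parity obstructions (odd closed walks through the variables), and those already defeat the Boolean strategy; hence $\mathcal{G}\models\varphi$ iff $\mathcal{K}_2\models\varphi$. Finally $\QCSP(\mathcal{K}_2)$ is tractable because the edge relation of $\mathcal{K}_2$ is $\neq$, which is affine and bijunctive, so Theorem~\ref{theorem:SchaeferQCSP} applies; the finer analysis (as for $2$-colourability) places it in \Logspace.

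Case 2 (disconnected, non-bipartite). For hardness, non-bipartiteness makes $\CSP(\mathcal{G})$ \NP-complete by Hell and \Nesetril~\cite{HellNesetril}, and as primitive positive sentences live inside positive Horn we get $\CSP(\mathcal{G})\leq\QCSP(\mathcal{G})$, hence \NP-hardness. The point of interest is \emph{why disconnectedness caps the complexity at} \NP. The key lemma I would prove is that if $\mathcal{G}=\mathcal{G}_1\uplus\cdots\uplus\mathcal{G}_m$ with $m\geq 2$ and $\mathcal{G}\models\varphi$, then each connected block of the \emph{constraint graph} of $\varphi$ (variables linked by edge atoms) contains at most one universal variable: two universals in one block could be sent by the universal player into distinct components $\mathcal{G}_i,\mathcal{G}_j$, whereas the path of edge atoms joining them forces the whole block into a single component. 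Granting this, evaluation decomposes over blocks; a block with no universal is plain $\CSP$-satisfiability, while a block with one universal $u$ reduces to checking, for each of the polynomially many vertices $v$, that the block admits a homomorphism sending $u\mapsto v$. A conjunction of polynomially many \NP{} checks is in \NP. Reconciling this decomposition with the quantifier order is the one delicate point, which I expect to be routine but which needs care.

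Case 3 (connected, non-bipartite, at most one cycle). Such a $\mathcal{G}$ is unicyclic and, being non-bipartite, its unique cycle is an odd cycle $C=C_{2k+1}$; membership in \Pspace{} is automatic. For hardness I would first fold the pendant trees onto $C$: mapping each leaf to a neighbour of its unique neighbour, iterated, yields a retraction $r:\mathcal{G}\to C$, and the transfer principle gives $\mathcal{G}\models\varphi\Rightarrow C\models\varphi$, which I will use to force play onto the cycle. The heart of the matter is to show $\QCSP(C_{2k+1})$ is \Pspace-hard, by reduction from the \Pspace-complete $\QCSP(\mathcal{B}_{\mbox{\textsc{nae}}})$ (equivalently $\QCSP(\mathcal{K}_3)$, \Pspace-complete for $n\geq 3$ by~\cite{DBLP:journals/iandc/BornerBCJK09}). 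The plan is an indicator/gadget construction that exploits the odd girth $2k+1$ of $C$ to encode a Boolean choice at distinguished vertices, to propagate a not-all-equal constraint, and — crucially — to simulate the \emph{alternation} of quantifiers, pinning the active variables to the cycle while the tree vertices serve as avoidable dead ends. This gadget construction, together with proving that it faithfully simulates universal quantification over an odd cycle (and that deviations into the trees cannot rescue the existential player, which is where the retraction $r$ is invoked), is the main obstacle; once $\QCSP(C_{2k+1})$ is \Pspace-hard, the same reduction performed inside the induced copy of $C$ in $\mathcal{G}$ transfers hardness to $\QCSP(\mathcal{G})$.
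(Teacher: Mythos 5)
The theorem you are attempting is only quoted in this paper (it is cited from \cite{DBLP:conf/cie/MartinM06} without proof), and essentially all of its technical weight sits in the third case — which is exactly where your proposal has a genuine gap. First, the gadget construction establishing \Pspace-hardness of $\QCSP(\mathcal{C}_{2k+1})$ is not given: you name it as ``the main obstacle'' and leave it as a plan, so the core of the theorem remains unproven. Second, the one step you do commit to — transferring hardness from the cycle $C$ to $\mathcal{G}$ — would fail as described. The retraction $r:\mathcal{G}\rightarrow C$ is a surjective homomorphism, so it yields only $\mathcal{G}\models\varphi\Rightarrow C\models\varphi$; the converse is false in general. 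For instance, if $\mathcal{G}$ is a triangle with one pendant vertex, the positive Horn sentence $\forall x\,\exists y\,\exists z\; E(x,y)\wedge E(y,z)\wedge E(z,x)$ holds on $\mathcal{C}_3$ but not on $\mathcal{G}$. Hence $\QCSP(C)$ and $\QCSP(\mathcal{G})$ are genuinely different problems, and since positive Horn logic has neither disjunction nor constants, there is no way to ``perform the same reduction inside the induced copy of $C$'': quantifiers cannot be relativised to the cycle. A correct proof must build gadget formulae for $\mathcal{G}$ itself, designed so that universal moves placed on tree vertices are automatically harmless (easily answered by the existential player) while moves on the cycle simulate the intended Boolean alternation; your invocation of $r$ only handles the easy direction (existential deviations into the trees), not the dangerous one (universal deviations into the trees).

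There is also a flaw, though a patchable one, in Case 1: your claimed equivalence $\mathcal{G}\models\varphi\iff\mathcal{K}_2\models\varphi$ requires connectivity (your own argument appeals to it), but the theorem's first case covers \emph{all} bipartite graphs. For $\mathcal{G}=\mathcal{K}_2\uplus\mathcal{K}_1$ the equivalence fails: $\forall x\,\exists y\,E(x,y)$ is true on $\mathcal{K}_2$ and false on $\mathcal{G}$. The \Logspace{} bound still holds, but disconnected bipartite graphs — in particular those with isolated vertices — need a separate analysis (e.g.\ any connected block of the constraint graph that contains a universal variable together with at least one atom is already falsifiable once an isolated vertex exists). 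By contrast, your Case 2 is essentially correct: the block decomposition with at most one universal variable per block is the right idea, and the quantifier-order issue you flag is indeed routine — guess the pre-universal existential assignment of a block once, then for each of the constantly many (as $\mathcal{G}$ is fixed) values of that block's unique universal variable guess a completion, which keeps membership in \NP.
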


The algebraic approach to \QCSP\ uses \emph{surjective polymorphisms}
and has led to a trichotomy in the case where all graphs of permutations are
available. Recall first the definition of some special surjective
operations. A $k$-ary \emph{near-unanimity operation} $f$ satisfies 
$$f(x_1,\ldots,x_k)=
  \begin{cases}
    x & \text{if $\{x_1,\ldots,x_k\}= \{x\}$; and,}\\
    x & \text{if all but one of $x_1,\ldots,x_k$ is equal to $x$.}
  \end{cases}
$$
When $k=3$, we speak of a \emph{majority operation}.
The \emph{$k$-ary near projection operation} is defined as
$$
l_k(x_1, \ldots, x_k )=
\begin{cases}
  x_1 & \text{when $|\{x_1,\ldots,x_k\}|=k$; and,}\\
  x_k & \text{otherwise.}
\end{cases}
$$
The \textsl{ternary switching operation} is defined as
$$
s(x,y,z)=
\begin{cases}
  x & \text{if $y=z$,}\\
  y & \text{if $x=z$,}\\
  z & \text{otherwise.}
\end{cases}
$$
The \emph{dual discriminator operation} is defined as
$$d(x,y,z)=
  \begin{cases}
    y & \text{if $y = z$; and,} \\
    x & \text{otherwise.}
  \end{cases}
$$
When $f(x,y,z)= x - y + z$ w.r.t. some Abelian group structure,
we say that $f$ is an \emph{affine operation}.

\begin{theorem}
  [\cite{DBLP:journals/iandc/BornerBCJK09}]
  Let $\mathcal{D}$ be a structure such that there is an extensional
  binary symbol for each graph of a permutation of $D$. Then the complexity of $\QCSP(\mathcal{D})$
  follows the following trichotomy.
  \begin{itemize}
  \item If $\mathcal{D}$ has a surjective polymorphism which is the
    dual discriminator, the switching operation or an affine operation
    then $\QCSP(\mathcal{D})$ is in \Ptime.
  \item Else, if $l_{|D|}$ is a surjective
    polymorphism of $\mathcal{D}$ then $\QCSP(\mathcal{D})$ is \NP-complete.
  \item Otherwise, $\QCSP(\mathcal{D})$ is \Pspace-complete.
  \end{itemize}
\end{theorem}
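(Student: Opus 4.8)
The plan is to route everything through the Galois connection $\Inv$--$\sPol$ for the \QCSP, the analogue for quantified constraints of the $\Inv$--$\Pol$ connection, which guarantees that the complexity of $\QCSP(\mathcal{D})$ depends only on the set $\sPol(\mathcal{D})$ of \emph{surjective} polymorphisms of $\mathcal{D}$: if $\sPol(\mathcal{D}) \subseteq \sPol(\mathcal{D}')$ then $\QCSP(\mathcal{D}')$ reduces to $\QCSP(\mathcal{D})$. The first step is to exploit the hypothesis that every graph of a permutation of $D$ is an extensional relation. A $k$-ary operation $f$ preserves the graph $\{(a,\pi(a)) : a \in D\}$ of a permutation $\pi$ precisely when $f(\pi(a_1),\ldots,\pi(a_k)) = \pi(f(a_1,\ldots,a_k))$ for all arguments; hence every member of $\sPol(\mathcal{D})$ must commute with the whole symmetric group $\mathrm{Sym}(D)$. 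So it suffices to understand the surjective operations that are equivariant under $\mathrm{Sym}(D)$ and to match each resulting clone to a complexity class.

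The second step is the algebraic classification of these equivariant operations. A standard argument shows that, on tuples of pairwise distinct arguments, an operation commuting with all permutations must return one of its arguments (otherwise a permutation fixing the arguments but moving the output would break equivariance), and that the index selected depends only on the equality pattern of the arguments. Consequently every such operation is an ``index-selection'' map governed by equality types, the lattice of clones of $\mathrm{Sym}(D)$-equivariant surjective operations is finite and explicitly describable, and one can read off its critical generators: the dual discriminator $d$, the switching operation $s$, the affine operations (available only for group structures compatible with all of $\mathrm{Sym}(D)$), the near-unanimity and in particular majority operations, and the near-projections $l_k$. Each clause of the trichotomy then corresponds to the presence or absence of these among $\sPol(\mathcal{D})$.

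The third step splits into the three regimes. For the tractable case one invokes the known polynomial-time \QCSP\ algorithms attached to each listed operation: a majority or dual-discriminator polymorphism yields a bijunctive-style, arc-consistency-solvable instance; an affine polymorphism reduces \QCSP\ to solving systems of linear equations by a Gaussian-elimination procedure that absorbs the universal quantifiers; and the switching operation is the specific tractable pattern isolated by Chen. For the \NP-complete case, the presence of the near-projection $l_{|D|}$ supplies \emph{collapsibility}: the universal quantifiers can be witnessed by a bounded family of substitutions, so the instance collapses to polynomially many ordinary \CSP\ instances, placing $\QCSP(\mathcal{D})$ in \NP; \NP-hardness follows because the embedded \CSP\ (already hard, since all permutation graphs are present) reduces to it. For the \Pspace-complete case, the absence of all the good operations forces $\sPol(\mathcal{D})$ to be degenerate enough that a canonical \Pspace-hard template, such as $\mathcal{K}_n$ or $\mathcal{B}_{\mbox{\textsc{nae}}}$, lies in $\Inv(\sPol(\mathcal{D}))$; \Pspace-hardness then transfers through the Galois connection, while \Pspace\ membership is the generic model-checking upper bound.

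I expect the main obstacle to be two-fold, and to lie at the boundaries of the trichotomy rather than in the setup. First, the exhaustive determination of which clones of $\mathrm{Sym}(D)$-equivariant surjective operations force each complexity --- in particular ruling out intermediate behaviours and proving that no further tractable operation survives once $d$, $s$ and the affine operations are excluded --- is the genuinely hard combinatorial core. Second, establishing \NP-membership via collapsibility in the near-projection regime, and matching it to a clean \NP-hardness reduction, requires the full collapsibility machinery together with careful control of how the universal player can be pinned down to finitely many plays; this is the step where the analysis is most delicate.
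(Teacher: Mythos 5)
You should note first that the paper never proves this statement: it appears in the Conclusion, attributed to \cite{DBLP:journals/iandc/BornerBCJK09}, purely as background on the \QCSP{} classification programme, so there is no internal proof to compare your proposal against. It can only be measured against the strategy of the cited work, and at that level your sketch does reconstruct the right skeleton: the $\Inv$--$\sPol$ Galois connection for \QCSP, the observation that preserving every permutation graph forces each surjective polymorphism to commute with $\mathrm{Sym}(D)$, and a case analysis over the resulting equivariant clones.

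Measured as a proof, however, two gaps are genuine. First, everything rests on the classification of surjective clones commuting with $\mathrm{Sym}(D)$, and your proposal assumes it rather than proves it. The ``standard argument'' you give handles only injective argument tuples, and even there it needs $|D|$ large relative to the arity (one must be able to move the proposed output by a permutation that fixes all the arguments); it gives no control over non-injective tuples, and it does not by itself yield the actual list of clones --- near-projections, dual discriminator, switching, affine, the last existing only for special domain sizes --- which is the classification of homogeneous algebras (Marczewski, Gavrilov, Marchenkov) that the cited paper invokes. Declaring that the critical generators ``can be read off'' is precisely the part that cannot be read off; you correctly flag it as the hard core, but flagging is not proving. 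Second, your justification of \NP-hardness in the near-projection regime is wrong as stated: the parenthetical ``already hard, since all permutation graphs are present'' fails, because the \CSP{} of a structure whose relations are graphs of permutations is trivially tractable (every constraint is propagation along a bijection). What the permutation graphs buy is rigidity --- the structure is its own core --- not hardness. \NP-hardness there must instead be extracted from the poverty of the clone: relations invariant under the near-projection together with all permutations include \NP-hard \CSP{} templates, which then embed into the \QCSP{} via the Galois connection, and this requires an explicit definability argument rather than an appeal to what is ``already'' present. Your attribution of \NP-membership to collapsibility is plausible in spirit but is likewise asserted, not established.
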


In general, it is known that if a structure $\mathcal{D}$ is preserved
by a \emph{near-unanimity operation} then $\QCSP(\mathcal{D})$ is in
\Ptime, because it implies a property of \emph{collapsibility}. This
property means that an instance holds if, and only, if all sentences
induced by keeping only a bounded number of universal quantifiers -- the
so-called \emph{collapsings} --
hold~\cite{DBLP:journals/siamcomp/Chen08}.

For undirected partially reflexive graphs (\textsl{i.e.} with possible self-loops),
we have the following partial classification (reformulated
algebraically).
\begin{theorem}[\cite{DBLP:conf/cp/Martin11}]
  Let $\mathcal{T}$ be a partially reflexive forest.
  \begin{itemize}
  \item If $\mathcal{T}$ is $\qcsplogic$-equivalent to a structure
    that is preserved by a majority operation then
    $\QCSP(\mathcal{T})$ is in \Ptime; and,
  \item otherwise, $\QCSP(\mathcal{T})$ is \NP-hard.
  \end{itemize}
\end{theorem}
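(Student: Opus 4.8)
The plan is to establish the two directions separately, using collapsibility for the upper bound and a gadget reduction for the lower bound, with a combinatorial characterisation of majority-polymorphism existence bridging them.

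For the tractable direction, suppose $\mathcal{T}$ is $\qcsplogic$-equivalent to a structure $\mathcal{T}'$ preserved by a majority operation. Since $\qcsplogic$-equivalent structures satisfy exactly the same positive Horn sentences, $\QCSP(\mathcal{T})$ and $\QCSP(\mathcal{T}')$ are the same decision problem, so it suffices to place $\QCSP(\mathcal{T}')$ in \Ptime. A majority operation is a ternary near-unanimity operation, and as recalled in the excerpt this yields collapsibility: an input sentence holds on $\mathcal{T}'$ if, and only if, all of its collapsings --- the sentences obtained by retaining only a bounded number of universal quantifiers and instantiating the remainder --- hold. Each collapsing is essentially a $\CSP(\mathcal{T}')$ instance, and since $\mathcal{T}'$ is preserved by a majority operation its CSP has bounded width and is decidable in polynomial time by establishing $(2,3)$-consistency. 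As there are only polynomially many collapsings, $\QCSP(\mathcal{T}')$ lies in \Ptime.

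For the hard direction I would first prove a purely combinatorial dichotomy: a partially reflexive forest is $\qcsplogic$-equivalent to a majority structure precisely when its $\qcsplogic$-core avoids a finite list of forbidden loop/non-loop configurations along its paths. The key analytical step is to understand how a majority polymorphism must act on a tree; on a tree it behaves like a median-type retraction, and the placement of self-loops severely constrains which vertices can be merged. I would then, for each forbidden configuration, exhibit a gadget and reduce an \NP-hard problem --- most naturally a not-all-equal satisfiability or a graph-colouring variant --- to $\QCSP(\mathcal{T})$ restricted to that configuration, using the universal quantifiers (ranging over the tree via its edges) to enforce the consistency checks of the source problem and the existential quantifiers to encode the Boolean or colour choices. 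The relativisation apparatus developed earlier in the paper offers a template for controlling how quantifiers range over the gadget.

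The main obstacle will be the hardness half, and specifically pinning down the exact forbidden configurations and building a single uniform gadget reduction that applies to every forest lacking a majority polymorphism. Because $\qcsplogic$-cores are not nearly as robust as ordinary cores --- they need not be unique minimal substructures --- verifying that the combinatorial characterisation is clean and that the reduction respects $\qcsplogic$-equivalence requires care; this is where the bulk of the technical work would lie.
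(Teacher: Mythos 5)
First, a point of orientation: this theorem is not proved in the paper at all --- it appears in the Conclusion as imported background, cited to \cite{DBLP:conf/cp/Martin11} --- so there is no in-paper proof to compare against; your proposal has to stand on its own.

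Your tractable direction is sound, and it is exactly the one-line justification the paper itself sketches: $\qcsplogic$-equivalent structures satisfy literally the same positive Horn sentences, so the two decision problems coincide; a majority operation is a ternary near-unanimity operation, which by the cited result of Chen (\cite{DBLP:journals/siamcomp/Chen08}) yields collapsibility; each collapsing retains boundedly many universal quantifiers and so expands into polynomially many instances of $\CSP(\mathcal{T}')$, which is tractable since a majority polymorphism gives bounded width.

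The hardness direction, however, is a genuine gap --- and you have essentially flagged it yourself. What you offer is a research plan (``prove a combinatorial dichotomy via forbidden loop/non-loop configurations, then build a gadget for each'') rather than a proof: the forbidden configurations are never identified, the claim that a majority polymorphism on a tree must act as a ``median-type retraction'' is asserted without argument, and no gadget or reduction is constructed. That missing content is not routine; it is the entire substance of the cited paper, which first establishes a concrete combinatorial criterion on the forest (in terms of connectivity of the loop-set and accessibility from vertices to reflexive subtrees), proves \NP-hardness by reductions whose form depends on where the loops sit, and only then reformulates the tractability boundary algebraically as $\qcsplogic$-equivalence to a majority-preserved structure. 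In particular, to prove the second bullet as stated you must show that \emph{every} partially reflexive forest failing the algebraic condition supports a hardness reduction, and without the intermediate combinatorial characterisation there is no handle on the hypothesis ``fails to be $\qcsplogic$-equivalent to a majority structure'': as the paper itself warns, $\qcsplogic$-equivalence classes lack the robustness of ordinary cores (no unique minimal representative sitting inside the structure), so one cannot simply inspect a canonical core to extract the gadget. The second bullet therefore remains unproved in your proposal.
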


In the case of structures with all constants,
Hubie Chen has ventured some conjecture regarding the
\NP/\Pspace-hard border: he suggests that the \emph{polynomially generated power
  property (PGP)} -- a property which generalises collapsibility -- explains a drop in complexity to \NP
(see~\cite{DBLP:journals/corr/abs-1201-6306} for details).


\subsection*{Acknowledgment}
The authors thank Jos Martin for his enthusiasm with this project and
his technical help in providing a computer assisted proof in the four
element case~\cite{DBLP:conf/csl/MartinM10}, which was instrumental in deriving the tetrachotomy for
\mylogic\ in the general case.

\bibliographystyle{alpha}
\bibliography{MMtetra}


\end{document}